\title{A LOCAL View of the Polynomial Hierarchy}
\author{%
  Fabian Reiter%
}{%
  LIGM, Université Gustave Eiffel, Marne-la-Vallée, France%
}{%
  fabian.reiter@gmail.com%
}{%
  https://orcid.org/0000-0003-1268-4107%
}{}
\authorrunning{F. Reiter}
\keywords{%
  Distributed decision,
  LOCAL model,
  polynomial hierarchy,
  descriptive complexity theory
}
\begin{document}

\pagenumbering{roman}
\maketitle
\begin{abstract}
  We extend classical methods of computational complexity
  to the realm of distributed computing,
  where they sometimes prove more effective than in their original context.
  Our focus is on decision problems in the \textsc{local} model,
  a setting in which networked computers use synchronous message passing
  to collectively answer questions about their network topology.
  We impose two time constraints on this model:
  the number of \kl{communication rounds} is bounded by a constant,
  and the number of \kl{computation steps} of each computer
  is polynomially bounded in
  the size of its local input and received messages.

  By letting two players alternately
  assign \kl{certificates} to all computers,
  we obtain a distributed generalization of the \kl{polynomial hierarchy}
  (and thus of the complexity classes $\PTIME$~and~$\NP$).
  We then extend key results of complexity theory to this setting,
  including the Cook\nobreakdash--Levin theorem
  (which identifies \kl{Boolean satisfiability}
  as a \kl{complete} problem for~$\NP$)
  and Fagin's~theorem
  (which characterizes~$\NP$ as the class of problems expressible in
  \kl[existential fragment]{existential second-order logic}).
  The original results can be recovered as the special case
  where the network consists of a single computer.

  But perhaps more surprisingly,
  separating complexity classes
  becomes easier in the distributed setting:
  we can show that
  \kl[locally polynomial hierarchy]{our hierarchy} is \emph{infinite},
  while it remains notoriously open
  whether the same holds when restricted to a single computer.
  (By contrast,
  a collapse of \kl[locally polynomial hierarchy]{our hierarchy}
  would have implied
  a collapse of the classical \kl{polynomial hierarchy}.)

  As an application,
  we propose quantifier alternation as a new tool
  for measuring the locality of problems in distributed computing.
\end{abstract}

%%% Local Variables:
%%% mode: latex
%%% TeX-master: "../lph-paper"
%%% End:

\KnowledgeFootnote{}
\newpage
\tableofcontents

\newpage
\pagenumbering{arabic}
\section{Introduction}

In this paper,
we revisit classical computational complexity theory
from the perspective of distributed network computing.
As we will see,
certain standard notions and techniques
not only extend well to the distributed setting,
but in some cases
allow us to achieve more there than in the centralized setting.
We begin by setting the context,
and then present our approach and~results.

% –––––––––––––––––––––––––––––––––––––––––––––––––––––––––––––––––––––––––––––––
\subsection{Background}
\label{ssec:background}

When solving a problem in a computer network using a distributed algorithm,
a major concern is the issue of \emph{locality}.
At its core lies the question of how much information
each computer needs to obtain about the rest of the network
in order to solve the given problem.
The less information needed,
the more local the problem.

\subparagraph*{The LOCAL model.}

In the late 1980s,
Linial~\cite{DBLP:journals/siamcomp/Linial92}
introduced an influential model of distributed computing
that focuses entirely on locality,
while abstracting away many other issues such as
failures, asynchrony, and bandwidth limitations.
In this model,
which Peleg~\cite{Peleg00} later called the \textsc{local} model,
a network consists of several computers
that communicate with their \kl{neighbors}
by exchanging messages through a sequence of fault-free synchronous \kl{rounds}.
The computers,
referred to as \kl{nodes},
are all identical except for possessing \kl{globally unique} \kl{identifiers}.
They have unlimited computational power
to process their local input and the messages they receive in each \kl{round},
and there is no limitation on the message sizes.
The goal in this setting is for the \kl{nodes} to collectively solve
some \kl{graph} problem related to the topology of their network.
That is,
the network serves both
as the communication infrastructure and as the input \kl{graph}.
Typically,
the problem is a construction task such as finding
a (vertex or edge) \kl{coloring},
a maximal matching,
a maximal independent set,
or a spanning tree.
After a finite number of \kl{rounds},
each \kl{node} should produce a local output such as
“my color is blue” or “I~belong to the independent set”,
and the combined output of all \kl{nodes} should yield a valid solution
to the considered problem.

Since the \textsc{local} model imposes no constraints
on computational power and message size,
once the \kl{nodes} have communicated for a number of \kl{rounds}
greater than the \kl{diameter} of the network \kl{graph},
they can in principle know the entire \kl{graph}
and thus solve any problem
that can be solved by a single computer in the centralized setting.
Therefore,
if we equate the complexity of a problem
with the number of \kl{rounds} required to solve it,
call this number the \emph{\kl{round-time} complexity},
and measure it as a function of the number of \kl{nodes},
then all problems have a complexity that lies between
constant \kl{round time}
(the purely local problems)
and linear \kl{round time}
(the inherently global problems).
From this perspective,
investigating the locality of a problem amounts to
determining its \kl{round-time} complexity.

\subparagraph*{Constant round time.}

The role of constant \kl{round time} in the \textsc{local} model
is vaguely analogous to
the role of polynomial time in centralized computing
in that it provides a first approximation
of what constitutes an efficiently solvable problem.
The rigorous study of
the class of problems solvable in constant \kl{round time}
was initiated in the early 1990s
by Naor and Stockmeyer~\cite{DBLP:journals/siamcomp/NaorS95}.
To narrow down the area of investigation,
they focused on construction problems for which the validity of a proposed solution
can at least be \emph{verified} in constant \kl{round time}.
For instance,
a proposed vertex \kl{coloring}
can be easily verified in a single \kl{round} of communication
(each \kl{node} compares its own color with those of its \kl{neighbors}),
whereas a proposed spanning tree cannot be verified locally
(a~sufficiently long cycle is indistinguishable from a line).
Given the analogy with centralized computing,
construction problems verifiable in constant \kl{round time}
are sometimes referred to as the distributed analog of
the complexity class~$\FNP$,
i.e., the function problem variant of~$\NP$~\cite{DBLP:conf/swat/Suomela20}.
For technical reasons,
Naor and Stockmeyer considered only
the subclass of locally verifiable problems
for which there are constant bounds on
the maximum \kl{degree} of the \kl{graphs}
and on the size of the local inputs and outputs.
This subclass,
which they called~$\LCL$ (for \emph{locally checkable labelings}),
became the foundation of
a fruitful research program on locality in distributed computing
(see~“\nameref{par:construction-problems}”
at the end of Section~\ref{ssec:related-work}).

\subparagraph*{Decision problems.}

Although research in distributed computing
has traditionally focused on construction problems,
a newer branch called \emph{distributed decision}~%
\cite{DBLP:journals/eatcs/FeuilloleyF16},
takes more inspiration from classical complexity theory.
In his PODC~2010 keynote talk~\cite{DBLP:conf/podc/Fraigniaud10},
Fraigniaud suggested that
decision problems,
on which standard complexity theory is built,
could also serve as the basis for a complexity theory of distributed computing.
The rationale is that decision problems
are easier to reduce to one another than construction problems,
while still being general enough to express challenges
that arise in a wide variety of models of distributed computing.
Reductions between such problems could therefore reveal
connections between different areas of distributed computing,
or even connections to other~fields.

To make joint decisions in a distributed setting,
the simplest and most widely studied mechanism is acceptance by unanimity.
This requires all computers to \kl{accept} on yes-instances,
and at least one computer to \kl{reject} on no-instances.
When viewed in this context,
$\LCL$ can be reinterpreted as
a class of decision problems on \kl{labeled graphs},
which we will refer to as \emph{\kl{graph properties}}.
The idea is that
a \kl{graph} is a yes-instance of a given \kl{graph property}
if its \kl{labeling} represents
a valid solution to the corresponding $\LCL$ problem
(e.g., a valid \kl{coloring}, or a maximal independent~set).
By generalizing this to arbitrary \kl{graphs}
with \kl{labels} of arbitrary size,
we arrive at the class of \kl{graph properties}
that are decidable in constant \kl{round time} in the \textsc{local} model.
This class was given the name~$\LD$ (for \emph{local decision})
in~\cite{DBLP:journals/jacm/FraigniaudKP13}.
Following the above analogy with centralized computing,
one can think of~$\LD$ as
a distributed analog of the complexity class~$\PTIME$.

\subparagraph*{Nondeterminism.}
\label{par:nondeterminism}

Problems in~$\LD$ are by definition purely local,
and it is easy to come up with \kl{graph properties}
that lie outside this class.
For instance,
the \kl{nodes} of a \kl{graph} cannot locally \kl{decide}
whether the \kl{graph} is a tree
(again, because a sufficiently long cycle is indistinguishable from a line).
However,
a much larger class of \kl(graph){properties} can be \kl{verified}
if we take inspiration from the complexity class~$\NP$
and allow some external entity to nondeterministically assign
each \kl{node} an additional \kl{label}
that acts as a \kl{certificate}.
In fact,
if arbitrary \kl{certificates} are allowed,
then the \kl{nodes} can verify any \kl(graph){property}
decidable by a single computer in the centralized setting,
because each \kl{certificate} can in principle \kl{encode}
the entire \kl{graph} along with the \kl{node}'s \kl{identifiers}
(see, e.g., \cite[\S\,4.1]{DBLP:journals/dmtcs/Feuilloley21}).

To obtain more interesting complexity classes,
two types of restrictions on the \kl{certificates}
have been considered in the literature.
The first is to require the \kl{certificates}
to be independent of the \kl{nodes}' \kl{identifiers}.
Fraigniaud, Korman, and Peleg~\cite{DBLP:journals/jacm/FraigniaudKP13}
explored this restriction
in a model called \emph{nondeterministic local decision}
and showed that it strictly weakens expressiveness,
as some \kl(graph){properties} dependent on
the number of \kl{nodes} in the \kl{graph}
can no longer be locally verified.
(It~was subsequently shown
that the non-verifiable \kl(graph){properties}
are precisely those
that are not closed under lift~\cite{DBLP:conf/opodis/FraigniaudHK12}.)
The full power of unrestricted nondeterminism can only be recovered
in combination with additional resources
such as randomization or an oracle providing the number of \kl{nodes}.

The second type of restriction limits the size of \kl{certificates}
with respect to the total number of \kl{nodes}.
This idea was introduced by
Korman, Kutten, and Peleg~\cite{DBLP:journals/dc/KormanKP10}
in a model called \emph{proof labeling schemes},
and further developed by
Göös and Suomela~\cite{DBLP:journals/toc/GoosS16}
in a more general model called \emph{locally checkable proofs}.
As later argued by Feuilloley~\cite{DBLP:journals/dmtcs/Feuilloley21},
\kl{certificate} size provides, in a sense, an alternative measure of locality:
purely local \kl(graph){properties} do not require any \kl{certificates},
while inherently global \kl(graph){properties}
require quadratic-size \kl{certificates}
(to \kl{encode} an adjacency matrix of the entire \kl{graph}).
The \kl(graph){property} of $3$-\kl{colorability} is almost local,
requiring only constant-size \kl{certificates},
and interestingly,
many natural \kl(graph){properties}
such as non-$2$-\kl{colorability} and \kl{Hamiltonicity}
require logarithmic-size \kl{certificates}.

\subparagraph*{Alternation.}

Since nondeterminism with restrictions on the \kl{certificates}
provides additional power,
but not enough to express all \kl{graph properties},
a natural follow-up is to explore more computational resources
from standard complexity theory
and assess their impact on expressiveness.
One such resource is quantifier alternation,
the key concept underlying the \kl{polynomial hierarchy}
(a hierarchy of complexity classes
that contains $\PTIME$, $\NP$, and~$\coNP$ at its lowest levels).
Adapted to the \textsc{local} model,
alternation can be thought of as a game between two players,
\kl(certificate){Eve} and \kl(certificate){Adam},
who take turns assigning \kl{certificates} to all \kl{nodes}.
Intuitively,
\kl{Eve} (the~existential player) tries to prove that
the input \kl{graph} satisfies a given \kl(graph){property},
while \kl{Adam} (the~universal player) tries to disprove it.
A constant-\kl[round time]{round-time} distributed algorithm
then serves as an \kl{arbiter}
to determine the winner based on the \kl{certificates} provided.

This framework was investigated by
Balliu, D'Angelo, Fraigniaud, and Olivetti~\cite{DBLP:journals/jcss/BalliuDFO18}
for \kl{identifier}-independent \kl{certificates},
and by
Feuilloley, Fraigniaud, and Hirvonen~\cite{DBLP:journals/tcs/FeuilloleyFH21}
for logarithmic-size \kl{certificates}.
The two resulting alternation hierarchies turned out to be radically different.
In the case of \kl{identifier}-independent \kl{certificates},
a single alternation between \kl(certificate){Adam} and \kl(certificate){Eve}
already suffices to \kl{arbitrate} any \kl(graph){property}
decidable in the centralized setting.
This means that the entire hierarchy collapses to its second level.
On the other hand,
for logarithmic-size \kl{certificates},
there are \kl{graph properties} that lie outside the corresponding hierarchy,
and it remains open whether the hierarchy is infinite.
The latter question is believed to be difficult,
as it has been shown to be closely related to
a long-standing open problem in communication complexity~%
\cite{DBLP:conf/wdag/FeuilloleyH18}.

%–––––––––––––––––––––––––––––––––––––––––––––––––––––––––––––––––––––––––––––––
\subsection{Contribution}

This paper is motivated by the following question raised by
Fraigniaud, Korman, and Peleg~\cite[\S\,5.1]{DBLP:journals/jacm/FraigniaudKP13}:
\emph{What are the connections between
  classical computational complexity theory and local complexity theory?}
Rather than viewing the classical theory as merely a source of inspiration,
we aim to extend it directly to the setting of distributed decision.
We approach this from the perspective
that centralized computing corresponds to a special case of the \textsc{local} model,
where the network consists of a single computer.
More specifically,
we introduce the class~$\LP$
(for \emph{locally polynomial time}),
which consists of the \kl{graph properties}
that can be decided
in a constant number of \kl{rounds} in the \textsc{local} model
under the following constraints:
the number of \kl{computation steps} of each computer in each \kl{round}
must be polynomially bounded
in the size of its local input and received messages,
and the algorithm must work correctly
even under \kl{identifier assignments}
that are only \kl{locally unique} within a fixed radius.
This class generalizes both
the complexity class~$\PTIME$
(its~restriction to \kl{graphs}~consisting of a single \kl{labeled} \kl{node}),
and the class of $\LCL$ decision problems
(its~restriction to
\kl{graphs} of bounded maximum \kl{degree} and constant \kl{label} size).

Building on~$\LP$,
we then define the \emph{\kl{locally polynomial hierarchy}}
$\Set{\SigmaLP{\Level}\!, \PiLP{\Level}}_{\Level \in \Naturals}$
analogously to the alternation hierarchies mentioned above,
i.e., as a game between \kl(certificate){Eve} and \kl(certificate){Adam}
who alternately assign \kl{certificates} to all \kl{nodes}.
The size of these \kl{certificates} must be polynomially bounded
with respect to a constant-radius \kl{neighborhood} of the \kl{nodes}.
Hence,
when restricted to \kl{single-node graphs},
our hierarchy coincides exactly with the classical \kl{polynomial hierarchy}.
For example,
the restriction of $\SigmaLP{1} = \NLP$ to \kl{single-node graphs}
coincides with
the complexity class $\SigmaP{1} = \NP$.

Aiming more at a conceptual than a technical contribution,
we make four main~points:
\begin{enumerate}
\item
  \emph{Several key concepts and results from standard complexity theory
    generalize well to the distributed setting.}
  To illustrate this,
  we extend the notion of polynomial-time reductions
  to our model of computation,
  and then use it to establish
  a number of \kl{hardness} and \kl{completeness} results
  for the two lowest levels of the \kl{locally polynomial hierarchy}.
  While some of these are meaningful only in the distributed setting,
  our results also include distributed generalizations of well-known classics,
  such as
  the Cook\nobreakdash--Levin theorem
  (which identifies \kl{Boolean satisfiability}
  as a \kl{complete} problem for~$\NP$),
  and the fact that
  $3$\nobreakdash-\kl{colorability} is $\NP$\nobreakdash-\kl{complete}.
  Similarly,
  we prove a distributed generalization of Fagin's theorem
  (which characterizes~$\NP$ as the class of problems expressible in
  \kl[existential fragment]{existential second-order logic}).
  This gives us a logical, and thus machine-independent,
  characterization of the entire \kl{locally polynomial hierarchy},
  demonstrating the robustness of our definition.
  Moreover,
  whenever we generalize a classical result,
  the original version can be recovered
  by restricting networks to single computers.
\item
  \emph{Sometimes standard techniques
    get us further in the distributed setting
    than they do in the centralized setting.}
  Specifically,
  we are able to show that
  the \kl{locally polynomial hierarchy} is infinite,
  while it remains notoriously open
  whether this is also true when restricted to a single computer.
  (A~collapse of our hierarchy would have implied
  a collapse of the classical \kl{polynomial hierarchy},
  but the converse does not hold.)
  As a consequence,
  our \kl{hardness} and \kl{completeness} results
  for the \kl{locally polynomial hierarchy}
  immediately yield \emph{unconditional} lower bounds
  on the complexity of the \kl{graph properties} in question,
  i.e., lower bounds that do not rely on any complexity-theoretic assumptions.
  In addition,
  the constraints imposed by the distributed setting
  allow us to identify
  natural \kl{graph properties} that lie outside our hierarchy.
\item
  \emph{Descriptive complexity theory,
    the discipline of characterizing complexity classes
    in terms of equivalent logical formalisms,
    is particularly helpful in the distributed setting.}
  \begin{itemize}
  \item
    \emph{On the one hand,
      this approach gives us access
      to a large body of existing results in logic and automata theory.}
    In particular,
    our infiniteness result for the \kl{locally polynomial hierarchy}
    leverages a corresponding result on \kl{monadic second-order logic}
    established by Matz, Schweikardt, and Thomas~%
    \cite{DBLP:journals/iandc/MatzST02},
    as well as a logical characterization of finite automata on \kl{pictures}
    (so-called \kl{tiling systems})
    established by Giammarresi, Restivo, Seibert and Thomas~%
    \cite{DBLP:journals/iandc/GiammarresiRST96}.
    Despite not being explicitly concerned with distributed computing,
    these results rely significantly on a form of locality.
    Moreover,
    to show that some \kl{graph properties} lie outside our hierarchy,
    we make direct use of classical results from automata theory,
    namely the pumping lemma for regular languages
    and the Büchi-Elgot-Trakhtenbrot theorem
    (which provides a logical characterization of finite automata on words).
  \item
    \emph{On the other hand,
      descriptive complexity can offer a fresh perspective on distributed computing
      by introducing unconventional constraints on familiar concepts.}
    For instance,
    the restriction to algorithms
    that work correctly under \kl{locally unique} \kl{identifiers}
    is necessary to prove our generalization of Fagin’s theorem.
    But this restriction is also meaningful
    from a pure distributed computing point of view:
    in a sense,
    $\LP$~fully preserves the locality of~$\LCL$
    (which can be defined without \kl{identifiers}),
    whereas $\LD$ is somewhat less local
    due to its reliance on \kl{globally unique} \kl{identifiers}.
    Similarly,
    proving our generalization of Fagin’s theorem
    requires a polynomial bound on \kl{certificate} sizes
    with respect to the \kl{nodes}' constant-radius \kl{neighborhoods}.
    This again places a strong emphasis on locality
    and contrasts sharply with previous approaches
    to distributed nondeterminism and alternation,
    which allow \kl{certificate} sizes to depend on the whole~\kl{graph}.
  \end{itemize}
\item
  \emph{Quantifier alternation provides a new tool
    for measuring the locality of problems in distributed computing.}
  This perspective emerges naturally from
  the strict local constraints required to generalize Fagin’s theorem,
  yet it extends beyond the specifics of the \kl{locally polynomial hierarchy}.
  The resulting measure of locality
  aligns surprisingly well with
  the previously mentioned \kl{certificate}-size metric,
  despite being based on an entirely different principle.
  We introduce this topic
  in Section~\ref{ssec:measure-locality}
  and further elaborate on it in Section~\ref{sec:discussion}.
\end{enumerate}

%–––––––––––––––––––––––––––––––––––––––––––––––––––––––––––––––––––––––––––––––
\subsection{Related work}
\label{ssec:related-work}

Our base class~$\LP$ generalizes the $\LCL$~problems
of Naor and Stockmeyer~\cite{DBLP:journals/siamcomp/NaorS95}
to arbitrary \kl{labeled graphs}
(when interpreting these problems as decision problems).
However,
it is less general than the class~$\LD$
of Fraigniaud, Korman, and Peleg~\cite{DBLP:journals/jacm/FraigniaudKP13},
since it imposes restrictions
on the individual processing power of the \kl{nodes}
and requires correctness under \kl{locally unique} \kl{identifiers}.
Hence,
we have
$\LCL \subseteq \LP \subseteq \LD$,
and it is easy to check that these inclusions are strict.

\subparagraph*{Alternation hierarchies.}

The work most closely related to this paper
includes the different alternation hierarchies based on~$\LD$.
The relationship is particularly clear
for the previously mentioned \kl{identifier}-independent hierarchy
$\Set{\indSigmaLD{\Level}\!, \indPiLD{\Level}}_{\Level \in \Naturals}$
of Balliu, D'Angelo, Fraigniaud, and Olivetti~\cite{DBLP:journals/jcss/BalliuDFO18}.
Since that hierarchy collapses to its second level~$\indPiLD{2}$
and contains all decidable \kl(graph){properties},
it obviously subsumes our hierarchy
$\Set{\SigmaLP{\Level}\!, \PiLP{\Level}}_{\Level \in \Naturals}$,
which is infinite and excludes some decidable \kl(graph){properties}.
But even on the lower levels,
it is easy to see that
$\SigmaLP{\Level} \subseteq \indSigmaLD{\Level}$ and
$\PiLP{\Level} \subseteq \indPiLD{\Level}$,
essentially because
the \kl{identifier}-independent \kl{certificates} chosen by the first player
can be used to provide each \kl{node} with
a new, \kl{locally unique} \kl{identifier}
whose validity can be verified in a constant number of \kl{communication rounds}.
The inclusion on the nondeterministic level is strict
because $\NOTALLSELECTED$,
the \kl(graph){property} of
a \kl{labeled  graph} having at least one unselected \kl{node},
lies in $\indSigmaLD{1}$ but not~in~$\SigmaLP{1}$.

Recently,
a polynomial-time version
$\Set{\indSigmaLDP{\Level}\!, \indPiLDP{\Level}}_{\Level \in \Naturals}$
of the \kl{identifier}-independent hierarchy
was investigated by Aldema Tshuva and Oshman~\cite{DBLP:conf/podc/TshuvaO22}.
Although at first glance their definition may seem similar to ours,
it differs in a crucial point:
the polynomial bound they impose on the processing time of the \kl{nodes}
is relative to the size of the entire input \kl{graph}
(including \kl{labels}),
rather than
relative to the amount of information that the \kl{nodes} receive locally.
As a result,
from its second level $\indPiLDP{2}$\! onward,
their hierarchy is essentially equivalent to
the centralized \kl{polynomial hierarchy}
(restricted to \kl{encodings} of \kl{graphs}),
and thus it is unknown whether it collapses or not.
Nevertheless,
its relationship to our hierarchy mirrors that of
the original \kl{identifier}-independent hierarchy, i.e.,
$\SigmaLP{\Level} \subseteq \indSigmaLDP{\Level}$ and
$\PiLP{\Level} \subseteq \indPiLDP{\Level}$
for all $\Level \in \Naturals$,
and the \kl(graph){property} $\NOTALLSELECTED$
separates $\indSigmaLDP{1}$ from~$\SigmaLP{1}$.

It is less obvious
how exactly our hierarchy relates to
the previously mentioned logarithmic-size hierarchy
$\Set{\logSigmaLD{\Level}\!, \logPiLD{\Level}}_{\Level \in \Naturals}$
of Feuilloley, Fraigniaud, and Hirvonen~\cite{DBLP:journals/tcs/FeuilloleyFH21}.
But at least when restricted to
\kl{graphs} of bounded maximum \kl{degree} and constant \kl{label} size,
each level of the logarithmic-size hierarchy
contains the corresponding level of our hierarchy,
since our bound on the \kl{certificates} sizes
reduces to a constant bound for such \kl{graphs}.
That is,
$\SigmaLP{\Level}\On{\bGRAPH{\MaxDegree}} \subseteq
 \logSigmaLD{\Level}\On{\bGRAPH{\MaxDegree}}$
and
$\PiLP{\Level}\On{\bGRAPH{\MaxDegree}} \subseteq
 \logPiLD{\Level}\On{\bGRAPH{\MaxDegree}}$,
where
$\Class\On{\bGRAPH{\MaxDegree}}$
denotes the aforementioned restriction of a class~$\Class$.
Moreover,
the nondeterministic classes are again separated
by the \kl(graph){property} $\NOTALLSELECTED$,
which lies in~$\logSigmaLD{1}$ but not in~$\SigmaLP{1}$.

What most fundamentally distinguishes this work
from all three $\LD$-based hierarchies
is that our hierarchy preserves some degree of locality.
This is because
we bound the size of a \kl{node}'s \kl{certificates}
with respect to its constant-radius \kl{neighborhood},
so that each \kl{certificate} can \kl{encode}
only a very limited amount of global information about the input \kl{graph}.
By contrast,
in both \kl{identifier}-independent hierarchies,
“only the first few levels of alternation are needed
to overcome the locality of a distributed algorithm”,
as noted by Aldema Tshuva and Oshman~\cite[\S\,1]{DBLP:conf/podc/TshuvaO22}.
This is particularly evident in the original version
of Balliu, D'Angelo, Fraigniaud, and Olivetti,
where the second level~$\indPiLD{2}$
already contains all global \kl(graph){properties}.
Similarly,
in the logarithmic-size hierarchy
of Feuilloley, Fraigniaud, and Hirvonen,
the third level~$\logSigmaLD{3}$ is already powerful enough
to express the existence of a \kl{nontrivial automorphism},
a \kl(graph){property} that is inherently global
when using \kl{certificate} size as the measure of locality.
(Göös and Suomela~\cite{DBLP:journals/toc/GoosS16}
have shown that it requires quadratic-size \kl{certificates},
which is the highest possible complexity.)

\subparagraph*{Descriptive complexity.}

Another line of research closely related to this paper is the development of
descriptive complexity in the setting of distributed computing.
This was initiated by Hella~et~al.~\cite{DBLP:journals/dc/HellaJKLLLSV15},
who used several variants of modal logic
to characterize synchronous constant-\kl[round time]{round-time} algorithms
for various models of distributed computing in anonymous networks.
Their idea was later extended to
arbitrary-\kl[round time]{round-time} algorithms~\cite{DBLP:conf/csl/Kuusisto13},
asynchronous algorithms~\cite{DBLP:conf/icalp/Reiter17},
and a stronger model with \kl[globally unique]{unique} \kl{identifiers}~%
\cite{DBLP:conf/fossacs/BolligBR19}.
Our generalization of Fagin's theorem to the \textsc{local} model
remains close in spirit to the work of Hella~et~al.
The main difference is that we consider
polynomial-time Turing machines instead of finite-state automata,
and \kl(quantifier){bounded} \kl{first-order quantifiers}
instead of modal operators.
Also,
the result of Hella~et~al. already holds for deterministic models,
whereas our result requires the presence of nondeterminism,
or more generally, alternation.
This parallels the situation in the centralized setting,
where Fagin's theorem characterizes the class~$\NP$,
but it remains a major open question
whether the class~$\PTIME$ admits a similar characterization.

\subparagraph*{Construction problems.}
\label{par:construction-problems}

More distantly related to this paper,
recent years have also seen
significant progress in the study of $\LCL$~problems
as originally defined by Naor and Stockmeyer
(i.e., as construction problems
whose solutions can be verified locally).
Much research has focused on classifying $\LCL$~problems
according to the \kl{round time} required
to construct solutions for them in the \textsc{local} model.
For over two decades,
progress was slow as efforts were driven by individual problems
rather than entire classes of problems.
While there were known examples of $\LCL$~problems with
constant, iterated logarithmic, and linear \kl{round-time} complexities,
it remained unclear
whether problems with other complexities existed
in the spectrum between constant and linear \kl{round time}.
However,
this picture changed drastically in the mid 2010's,
when a large number of positive and negative results
were published within a few years,
proving the existence of problems
in some intermediate regions of the spectrum,
and ruling out the existence of problems in other regions.
Taken together,
those results now provide a nearly complete classification of $\LCL$~problems,
revealing essentially four complexity classes.
In his SWAT~2020 keynote talk~\cite{DBLP:conf/swat/Suomela20},
Suomela interpreted those classes as follows:
purely local problems,
symmetry-breaking problems,
inherently global problems,
and an intriguing class of problems
for which randomness provides a significant speedup.
As the topic is well beyond the scope of this paper,
and the publications are numerous,
the reader is referred to (the transcript of) Suomela's talk,
which summarizes recent progress and provides many references.

%–––––––––––––––––––––––––––––––––––––––––––––––––––––––––––––––––––––––––––––––
\subsection{Organization}

We begin with an informal overview of the paper
in Section~\ref{sec:overview}.
The material covered there
will be repeated later in much greater detail and formality.
This is necessary because descriptive complexity involves
mechanical translations between algorithms and \kl{logical formulas},
forcing us to deal with the low-level aspects of both frameworks.
In Section~\ref{sec:preliminaries},
we give some preliminaries on \kl{graphs} and relational \kl{structures}.
Then, in Section~\ref{sec:turing-machines},
we introduce our model of computation,
which extends standard Turing machines to the distributed setting,
and define the \kl{locally polynomial hierarchy} based on this model.
Section~\ref{sec:logic} introduces the corresponding logical formalism,
along with some examples of
\kl{graph properties} expressed as \kl{logical formulas}.
In Section~\ref{sec:restrictive-arbiters},
we provide a more flexible characterization of our complexity classes
in order to simplify subsequent proofs.
The actual results begin in Section~\ref{sec:fagin},
where we present Fagin’s theorem
and generalize it to the \kl{locally polynomial hierarchy}.
In Section~\ref{sec:reductions},
we introduce the notion of \kl{locally polynomial reductions},
based on which
we establish a number of \kl{hardness} and \kl{completeness} results,
including a generalization of the Cook--Levin theorem and
the $\NLP$\nobreakdash-\kl{completeness} of $3$\nobreakdash-\kl{colorability}.
Section~\ref{sec:hierarchy} constitutes the longest part of the paper,
where we prove that the \kl{locally polynomial hierarchy} is infinite.
This involves a detour through
\kl{tiling systems} and \kl{monadic second-order logic} on \kl{pictures}.
Finally,
in Section~\ref{sec:discussion},
we discuss how the preceding results
may be relevant to the study of locality in distributed~computing.

%%% Local Variables:
%%% mode: latex
%%% TeX-master: "../lph-paper"
%%% End:

\section{Informal overview}
\label{sec:overview}

In this paper,
we investigate the computational complexity of \kl{graph properties}
within a distributed model of computation.
As is common in this type of setting,
we assume that all \kl{graphs} are
finite, simple, undirected, and \kl{connected}.
Additionally,
our \kl{graphs} are equipped with a \kl{labeling} function
that assigns a bit string to each \kl{node}.
The focus is exclusively on
\kl{graph properties} that are invariant under isomorphism.
These \kl(graph){properties} typically depend on the \kl{graph}'s topology
(e.g.,
$3$-\kl{colorability},
\kl{Eulerianness}, or
\kl{Hamiltonicity}),
but may also depend on its \kl{node labels}
(e.g.,
having all \kl{nodes} \kl{labeled} the same, or
having the \kl{labeling} form a valid $3$-\kl{coloring}).

% –––––––––––––––––––––––––––––––––––––––––––––––––––––––––––––––––––––––––––––––
\subsection{Our complexity classes}
\label{ssec:overview-complexity-classes}

To classify \kl{graph properties},
we extend standard complexity classes from strings to \kl{graphs},
treating strings as \kl{graphs} consisting of a single \kl{labeled} \kl{node}.

\subparagraph*{Model of computation.}

We use distributed algorithms as decision procedures for \kl{graph properties}.
Given an input \kl{graph}~$\Graph$
and an \kl[identifier assignment]{assignment}~$\IdMap$
of \kl{identifiers} to the \kl{nodes} of~$\Graph$,
the goal is for the \kl{nodes} to collectively decide whether
$\Graph$~has a certain \kl(graph){property}~$\Property$.
To do so,
they proceed in a sequence of synchronous \kl{communication rounds}.
In each \kl{round},
each \kl{node}
first receives the messages sent by its \kl{neighbors} in the previous \kl{round},
then performs some local computations, and
finally sends new messages to its \kl{neighbors}.
After a finite number of \kl{rounds},
each \kl{node} must have reached an individual \kl{verdict},
and $\Graph$~is \kl{accepted}
if and only if the \kl{nodes} unanimously agree on it.
The collective decision must be independent of
the particular \kl{identifier assignment}~$\IdMap$,
as long as the latter satisfies
a basic requirement of \emph{\kl{local uniqueness}}:
$\IdMap$~must assign different \kl{identifiers} to any two \kl{nodes}
that lie within some fixed \kl{distance} of each other.
This can be seen as a precondition for the algorithm to work~correctly.

In the following,
we restrict our attention to distributed algorithms
that are guaranteed to \kl{terminate}
in a constant number of \kl{communication rounds},
and where
the number of \kl{computation steps} of each \kl{node} in each \kl{round}
is polynomially bounded in
the size of its local input and the messages it receives.
We call such algorithms \emph{\kl{locally polynomial machines}},
as we will formalize them using
a model based on \kl[distributed Turing machines]{Turing machines}
(see Section~\ref{sec:turing-machines}).

\subparagraph*{The locally polynomial hierarchy.}

Our complexity classes are based on a game between two players
who intuitively argue whether
a given \kl{graph}~$\Graph$ has some \kl(graph){property}~$\Property$:
\kl(certificate){Eve} (the~existential player)
tries to prove that $\Graph$~has \kl(graph){property}~$\Property$,
while \kl(certificate){Adam} (the~universal player)
tries to prove the opposite.
Given some \kl{locally unique} \kl{identifier assignment} of~$\Graph$,
the two players take turns assigning additional \kl{labels},
called \kl{certificates},
to the \kl{nodes} of~$\Graph$.
These \kl{certificates} can be thought of as
proofs (in \kl(certificate){Eve}'s case)
and
counterproofs (in~\kl(certificate){Adam}'s case).
They may depend on the provided \kl{identifiers},
but their size must be polynomially bounded
with respect to the amount of information
contained in a \kl{node}'s constant-radius \kl{neighborhood}
(including all \kl{labels} and \kl{identifiers} therein).
After a fixed number~$\Level$ of moves,
the winner is determined by a \kl{locally polynomial machine}
that acts as an \kl{arbiter}.
Ultimately,
the \kl{graph}~$\Graph$ has \kl(graph){property}~$\Property$
if and only if \kl(certificate){Eve} has a winning strategy in this game,
i.e., if she always wins when playing optimally.
Depending on who makes the first move,
$\Property$~is classified
as a~$\SigmaLP{\Level}$-\kl(graph){property}
(if~\kl(certificate){Eve} starts)
or a~$\PiLP{\Level}$-\kl(graph){property}
(if~\kl(certificate){Adam}~starts).

To take a specific case,
$\Property$~belongs to~$\SigmaLP{3}$
if it satisfies the following equivalence
for every \kl{graph}~$\Graph$ and
every admissible \kl{identifier assignment}~$\IdMap$ of~$\Graph$:
\begin{equation*}
  \Graph \in \Property
  \; \iff \;
  \exists \CertifMap_1 \,
  \forall \CertifMap_2 \,
  \exists \CertifMap_3:
  \Result{\Machine}{
    \Graph, \,
    \IdMap, \,
    \CertifMap_1 \CertifConcat
    \CertifMap_2 \CertifConcat
    \CertifMap_3
  } \equiv
  \Accept,
\end{equation*}
where
$\Machine$ is an appropriately chosen \kl{locally polynomial machine},
and all quantifiers range over \kl{certificate assignments}
that satisfy the aforementioned polynomial bound.
(The notation used here
will be formally introduced in Section~\ref{sec:turing-machines}.)

\begin{example}
  \label{ex:3-round-3-colorable}
  One \kl(graph){property} that clearly belongs to~$\SigmaLP{3}$
  is \emph{$3$-round $3$-colorability},
  a variant of classical $3$\nobreakdash-\kl{colorability}
  introduced by
  Ajtai, Fagin, and Stockmeyer~\cite[\S\,11]{DBLP:journals/jcss/AjtaiFS00}.
  A \kl{graph} has this \kl(graph){property}
  if \kl(certificate){Eve} can always force a valid $3$-\kl{coloring}
  in the following game:
  First,
  \kl(certificate){Eve} chooses the colors of \kl{nodes} of \kl{degree}~$1$;
  then \kl(certificate){Adam} chooses the colors of \kl{nodes} of \kl{degree}~$2$;
  and finally,
  \kl(certificate){Eve} chooses the colors of all remaining \kl{nodes}.
  Consider, for instance,
  the \kl{graph} in Figure~\ref{fig:graph-not-3-round-3-colorable}.
  Although it is $3$-\kl{colorable} in the classical sense,
  it is not $3$-round $3$-colorable
  because \kl(certificate){Adam} has a winning strategy:
  After \kl(certificate){Eve} assigns a color
  $i \in \Set{0, 1, 2}$
  to \kl{node}~$\Node[1]$,
  \kl(certificate){Adam} assigns
  the same color~$i$ to \kl{node}~$\Node[2]_1$
  and a different color
  $j \in \Set{0, 1, 2} \setminus \Set{i}$
  to \kl{node}~$\Node[2]_2$.
  No matter which colors in $\Set{0, 1, 2}$ \kl(certificate){Eve} then assigns
  to the remaining \kl{nodes} $\Node[3]_1$, $\Node[3]_2$, $\Node[3]_3$,
  there will always be
  two \kl{adjacent} \kl{nodes} with the same color.
  However,
  if we remove the \kl{edge} between \kl{nodes} $\Node[3]_1$ and~$\Node[3]_3$,
  as shown in Figure~\ref{fig:graph-3-round-3-colorable},
  we obtain a \kl{graph}
  that is both $3$-\kl{colorable} and $3$-round $3$-colorable:
  regardless of the colors assigned to \kl{nodes} $\Node[1]$, $\Node[2]_1$, $\Node[2]_2$,
  \kl(certificate){Eve} can always find
  non-conflicting colors for the remaining~\kl{nodes}.
  \lipicsEnd
\end{example}

\begin{figure}[htb]
  \centering
  \begin{subfigure}[t]{0.4\textwidth}
    \begin{tikzpicture}[%
    semithick,
    every node/.style={draw,circle,minimum size=4ex,inner sep=0},
  ]
  \def\nodeDist{10ex}
  \node (u) {$\Node[1]$};
  \node (w1) at ($(u)+(0:\nodeDist)$) {$\Node[3]_1$};
  \node (w2) at ($(w1)+(0:\nodeDist)$) {$\Node[3]_2$};
  \node (v2) at ($(w1)+(-90:\nodeDist)$) {$\Node[2]_2$};
  \node (w3) at ($(w2)+(-90:\nodeDist)$) {$\Node[3]_3$};
  \node (v1) at ($(w2)+(-30:\nodeDist)$) {$\Node[2]_1$};
  \draw
    (u) edge (w1)
    (w1) edge (w2)
    (w1) edge (v2)
    (w1) edge (w3)
    (w2) edge (w3)
    (w2) edge (v1)
    (v2) edge (w3)
    (w3) edge (v1)
    ;
\end{tikzpicture}

%%% Local Variables:
%%% mode: latex
%%% TeX-master: "../lph-paper"
%%% End:
    \caption{no-instance}
    \label{fig:graph-not-3-round-3-colorable}
  \end{subfigure}
  \qquad
  \begin{subfigure}[t]{0.4\textwidth}
    \begin{tikzpicture}[%
    semithick,
    every node/.style={draw,circle,minimum size=4ex,inner sep=0},
  ]
  \def\nodeDist{10ex}
  \node (u) {$\Node[1]$};
  \node (w1) at ($(u)+(0:\nodeDist)$) {$\Node[3]_1$};
  \node (w2) at ($(w1)+(0:\nodeDist)$) {$\Node[3]_2$};
  \node (v2) at ($(w1)+(-90:\nodeDist)$) {$\Node[2]_2$};
  \node (w3) at ($(w2)+(-90:\nodeDist)$) {$\Node[3]_3$};
  \node (v1) at ($(w2)+(-30:\nodeDist)$) {$\Node[2]_1$};
  \draw
    (u) edge (w1)
    (w1) edge (w2)
    (w1) edge (v2)
    (w1.south east) edge[bend right=30] ($(w1)+(0.45*\nodeDist,-0.3*\nodeDist)$)
    (w2) edge (w3)
    (w2) edge (v1)
    (v2) edge (w3)
    (w3) edge (v1)
    (w3.north west) edge[bend right=30] ($(w3)+(-0.45*\nodeDist,0.3*\nodeDist)$)
    ;
\end{tikzpicture}

%%% Local Variables:
%%% mode: latex
%%% TeX-master: "../lph-paper"
%%% End:
    \caption{yes-instance}
    \label{fig:graph-3-round-3-colorable}
  \end{subfigure}
  \caption{
    A negative and a positive instance of $3$-round $3$-colorability.
  }
\end{figure}

We refer to the family of classes
$\Set{\SigmaLP{\Level}\!, \PiLP{\Level}}_{\Level \in \Naturals}$
as the \emph{\kl{locally polynomial hierarchy}}.
Two classes at the lowest levels are of particular interest:
$\LP = \SigmaLP{0}$
(for \emph{locally polynomial time})
and
$\NLP = \SigmaLP{1}$
(for \emph{nondeterministic locally polynomial time}).
Due to the asymmetric nature of acceptance by unanimity,
classes on the same level of the \kl{locally polynomial hierarchy}
are neither \kl{complement classes} of each other,
nor are they closed under \kl{complementation}
(see Corollary~\ref{cor:complement-classes}
on page~\pageref{cor:complement-classes}).
Therefore,
it makes sense to also consider
the \kl[complement hierarchy]{hierarchy} of \kl{complement classes}
$\Set{\coSigmaLP{\Level}\!, \coPiLP{\Level}}_{\Level \in \Naturals}$.
The two hierarchies are illustrated in Figure~\ref{fig:hierarchy-full-overview},
along with the inclusion and separation results shown in this~paper.

\begin{figure}[tb]
  \centering
  \begin{tikzpicture}[
    semithick,on grid,node distance=7.8ex,
    every node/.style={draw,rounded rectangle,
                       minimum height=4ex,minimum width=10ex},
    extra/.style={draw=none,rectangle,minimum size=0,inner sep=0},
    strict/.style={},
    non strict/.style={thin,dashed},
    heavy/.style={ultra thick},
    pseudo/.style={thin,dotted},
  ]
  \def\nodeDist{7.8ex}
  % Sigma and Pi nodes
  \node[heavy] (s0) {$\LP$};
  \node[heavy] (s1) [above left=\nodeDist and 2/3*\nodeDist of s0] {$\SigmaLP{1}$};
  \node (p1) [above right=\nodeDist and 2/3*\nodeDist of s0] {$\PiLP{1}$};
  \node (s2) [above of=s1]        {$\SigmaLP{2}$};
  \node[heavy] (p2) [above of=p1] {$\PiLP{2}$};
  \node[heavy] (s3) [above of=s2] {$\SigmaLP{3}$};
  \node (p3) [above of=p2]        {$\PiLP{3}$};
  \node (s4) [above of=s3]        {$\SigmaLP{4}$};
  \node[heavy] (p4) [above of=p3] {$\PiLP{4}$};
  \node[extra] at ([yshift=2/3*\nodeDist]$(s4)!0.5!(p4)$) {$\vdots$};
  % co-Sigma and co-Pi nodes
  \node[heavy] (cs0) [right=4.3*\nodeDist of s0] {$\coLP$};
  \node[heavy] (cs1) [above left=\nodeDist and 2/3*\nodeDist of cs0] {$\coSigmaLP{1}$};
  \node (cp1) [above right=\nodeDist and 2/3*\nodeDist of cs0] {$\coPiLP{1}$};
  \node (cs2) [above of=cs1]        {$\coSigmaLP{2}$};
  \node[heavy] (cp2) [above of=cp1] {$\coPiLP{2}$};
  \node[heavy] (cs3) [above of=cs2] {$\coSigmaLP{3}$};
  \node (cp3) [above of=cp2]        {$\coPiLP{3}$};
  \node (cs4) [above of=cs3]        {$\coSigmaLP{4}$};
  \node[heavy] (cp4) [above of=cp3] {$\coPiLP{4}$};
  \node[extra] at ([yshift=2/3*\nodeDist]$(cs4)!0.5!(cp4)$) {$\vdots$};
  % Pseudo Sigma nodes
  \node (s0x) [draw=none,right=4.3*\nodeDist of cs0] {};
  \node (s1x) [pseudo,above left=\nodeDist and 2/3*\nodeDist of s0x] {$\SigmaLP{1}$};
  \node (s2x) [pseudo,above of=s1x] {$\SigmaLP{2}$};
  \node (s3x) [pseudo,above of=s2x] {$\SigmaLP{3}$};
  % Additional labels
  \node [extra,anchor=east,left=0ex of s0.west] {$\SigmaLP{0} \,{=}\,$};
  \node [extra,anchor=west,right=0ex of s0.east] {$\,{=}\, \PiLP{0}$};
  \node [extra,anchor=east,left=0ex of s1.west] {$\NLP \,{=}\,$};
  \node [extra,anchor=east,left=0ex of cs0.west] {$\coSigmaLP{0} \,{=}\,$};
  \node [extra,anchor=west,right=0ex of cs0.east] {$\,{=}\, \coPiLP{0}$};
  \node [extra,anchor=east,left=0ex of cs1.west] {$\coNLP \,{=}\,$};
  % Sigma and Pi edges
  \path[non strict]
    (s3) edge (s4)
    (p2) edge (p3)
    (s1) edge (s2)
    (s0) edge (p1);
  \path[strict]
    (s3) edge (p4)
    (p3) edge (s4)
    (p3) edge (p4)
    (s2) edge (s3)
    (s2) edge (p3)
    (p2) edge (s3)
    (s1) edge (p2)
    (p1) edge (s2)
    (p1) edge (p2)
    (s0) edge (s1);
  % co-Sigma and co-Pi edges
  \path[non strict]
    (cs3) edge (cs4)
    (cp2) edge (cp3)
    (cs1) edge (cs2)
    (cs0) edge (cp1);
  \path[strict]
    (cs3) edge (cp4)
    (cp3) edge (cs4)
    (cp3) edge (cp4)
    (cs2) edge (cs3)
    (cs2) edge (cp3)
    (cp2) edge (cs3)
    (cs1) edge (cp2)
    (cp1) edge (cs2)
    (cp1) edge (cp2)
    (cs0) edge (cs1);
  % Edges connecting non-co to co nodes
  \path[strict]
    (p1)  edge (cs3)
    (cs2) edge (p4);
  \path[strict]
    (cp1) edge (s3x)
    (s2x) edge (cp4);
\end{tikzpicture}

%%% Local Variables:
%%% mode: latex
%%% TeX-master: "../lph-paper"
%%% End:
  \caption{
    The \kl{locally polynomial hierarchy} (left)
    and its \kl{complement hierarchy} (right).
    The dotted classes on the far right
    are the same as those on the far left,
    repeated for the sake of readability.
    Only the lowest five levels are shown,
    but the pattern extends infinitely.
    (See Figure~\ref{fig:hierarchy-full} on page~\pageref{fig:hierarchy-full}
    for an extended version of this figure
    with references to the corresponding proofs.)
    Each line (whether solid or dashed) indicates
    an inclusion of the lower class in the higher class.
    All inclusions represented by solid lines are proved to be strict,
    and classes on the same level (regardless of which hierarchy)
    are proved to be pairwise distinct,
    even when restricted to
    \kl{graphs} of bounded maximum \kl{degree} and constant \kl{label} size.
    The inclusions represented by dashed lines
    are in fact equalities when restricted to the latter class of \kl{graphs},
    but this statement is unlikely to generalize to arbitrary \kl{graphs},
    where it holds if and only if $\PTIME = \coNP$.
    This means that
    from a distributed computing perspective,
    the classes shown with thick borders are the most meaningful.
  }
  \label{fig:hierarchy-full-overview}
\end{figure}

\subparagraph*{Connection to standard complexity classes.}

By restricting the classes
$\Set{\SigmaLP{\Level}\!, \PiLP{\Level}}_{\Level \in \Naturals}$
to strings
(i.e.,~\kl{labeled graphs} consisting of a single \kl{node}),
we obtain the classes
$\Set{\SigmaP{\Level}, \PiP{\Level}}_{\Level \in \Naturals}$
of the original \kl{polynomial hierarchy}
introduced by Meyer and Stockmeyer~\cite{DBLP:conf/focs/MeyerS72}.
Since the same observation holds for the \kl{complement classes},
this means that the \kl{locally polynomial hierarchy}
is identical to its \kl{complement hierarchy} on strings.
In particular,
$\PTIME = \LP\On\NODE = \coLP\On\NODE$
and
$\NP = \NLP\On\NODE = \coPiLP{1}\On\NODE$,
where
$\Class\On\NODE$
denotes the restriction of a complexity class~$\Class$
to \kl{single-node graphs}.
This also means that
any inclusion result for the \kl{locally polynomial hierarchy}
would imply
the corresponding inclusion result for the \kl{polynomial hierarchy}
(e.g.,~$\LP = \NLP$ would imply $\PTIME = \NP$),
but not vice versa.
Conversely,
any separation result for the \kl{polynomial hierarchy}
would imply
the corresponding separation result for the \kl{locally polynomial hierarchy}
(e.g.,~$\PTIME \neq \NP$ would imply $\LP \neq \NLP$),
but not vice versa.
Thus, unfortunately,
our infiniteness result for the \kl{locally polynomial hierarchy}
does not imply
a corresponding result for the original \kl{polynomial hierarchy}.

% –––––––––––––––––––––––––––––––––––––––––––––––––––––––––––––––––––––––––––––––
\subsection{Extending classical reductions}

Aiming to apply standard techniques of complexity theory
to the distributed setting,
we extend
Karp's~\cite{DBLP:conf/coco/Karp72} notion of polynomial-time reduction
to computer networks.

\subparagraph*{Locally polynomial reductions.}

In a nutshell,
if there is a \emph{\kl{locally polynomial reduction}}
from a \kl(graph){property}~$\Property$
to a \kl(graph){property}~$\Property'$,
then this means that there exists
a \kl{locally polynomial machine}~$\Machine_{\Tag{red}}$
that transforms an input \kl{graph}~$\Graph$ into a new \kl{graph}~$\Graph'$
such that
$\Graph$~has \kl(graph){property}~$\Property$
if and only if
$\Graph'$~has \kl(graph){property}~$\Property'$.
Hence,
the existence of such a \kl{reduction} implies that
$\Property'$~is at least as hard as~$\Property$,
since
an efficient \kl{decider}~$\Machine'$ for~$\Property'$
could be converted into an efficient \kl{decider}~$\Machine$ for~$\Property$,
which would first run~$\Machine_{\Tag{red}}$
and then simulate~$\Machine'$ on the resulting~\kl{graph}.

To transform a \kl{graph}~$\Graph$ into a \kl{graph}~$\Graph'$
with a distributed algorithm,
each \kl{node}~$\Node[1]$ of the input \kl{graph}~$\Graph$
computes a string
that \kl{encodes} a \kl{subgraph} of the output \kl{graph}~$\Graph'$,
including the \kl{labels} of all \kl{nodes} therein.
We call this \kl{subgraph}
the \emph{\kl{cluster}} representing~$\Node[1]$ in~$\Graph'$.
\kl{Clusters} of different \kl{nodes} may not overlap,
and \kl{edges} between different \kl{clusters} are only permitted
if the \kl{clusters} represent
\kl{adjacent} \kl{nodes} in the original \kl{graph}~$\Graph$.
This setup allows the \kl{nodes} of~$\Graph$ to simulate
a distributed algorithm running on~$\Graph'$
by simulating the algorithm within their respective \kl{clusters}
and exchanging messages with their \kl{neighbors}
to simulate inter-\kl{cluster} communication.
(For a more formal presentation, see Section~\ref{sec:reductions}.)

\subparagraph*{Hardness and completeness results.}

Given the above notion of \kl{reduction},
our definitions of \kl{hardness} and \kl{completeness}
for different levels of the \kl{locally polynomial hierarchy}
should come as no surprise:
a \kl{graph property}~$\Property$
is \kl{hard} for a complexity class~$\Class$
if there is a \kl{locally polynomial reduction} to~$\Property$
from every \kl{graph property} in~$\Class$,
and $\Property$~is \kl{complete} for~$\Class$
if $\Property$~itself additionally lies in that class.

The basic approach to establishing \kl{reductions}
between specific problems in our setting
is quite similar to that in the centralized setting,
so conventional techniques continue to work well.
Using fairly simple constructions,
we can show that
\kl{Eulerianness} is $\LP$\nobreakdash-\kl{complete},
while \kl{Hamiltonicity} is both $\LP$-\kl{hard} and $\coLP$-\kl{hard}.
Because of the incomparability of~$\LP$ and~$\coLP$
(see~Figure~\ref{fig:hierarchy-full-overview} and its caption),
this immediately tells us that
\kl{Hamiltonicity} is a strictly harder problem than \kl{Eulerianness}
in our model of computation.
What's more,
we can sometimes even build directly on classical \kl{reductions}
by extending them to the distributed setting.
In particular,
we can generalize the Cook--Levin theorem from~$\NP$ to $\NLP$ and,
based on that,
establish the $\NLP$-\kl{completeness} of $3$-\kl{colorability}.
Again,
this has a direct implication:
$3$-\kl{colorability} is neither in~$\LP$ nor in $\coNLP$,
since both classes are separate from~$\NLP$.
We now sketch two of the above \kl{reductions} as examples.
(More details are given in Section~\ref{sec:reductions}.)

\subparagraph*{An LP-hardness proof.}

To show that \kl{Hamiltonicity} is $\LP$-\kl{hard},
we provide a \kl{reduction} to it from $\ALLSELECTED$,
a trivially $\LP$-\kl{complete} \kl{graph property}
that requires all \kl{nodes} to be \kl{labeled} with the bit string~$1$.
This \kl{reduction} is illustrated
in Figure~\ref{fig:allselected-to-hamiltonian-overview}.

Given an arbitrary \kl{graph}~$\Graph$,
we construct a \kl{graph}~$\Graph'$
that has a \kl{Hamiltonian cycle}
if and only if
all \kl{nodes} of~$\Graph$ have \kl{label}~$1$.
The main idea is that
a \kl{Hamiltonian cycle} in~$\Graph'$
represents a depth-first traversal of a \kl{spanning tree} of~$\Graph$,
using a method known as the Euler tour technique.
For this purpose,
each \kl{edge} of~$\Graph$ is represented by two \kl{edges} in~$\Graph'$,
so that it can be traversed twice by a \kl{Hamiltonian cycle} in~$\Graph'$.
If all \kl{nodes} of~$\Graph$ are \kl{labeled} with~$1$,
then any \kl{spanning tree} of~$\Graph$
yields a \kl{Hamiltonian cycle} of~$\Graph'$.
However,
if at least one \kl{node} of~$\Graph$
has a \kl{label} different from~$1$
(such as \kl{node}~$\Node[1]_2$
in Figure~\ref{fig:allselected-to-hamiltonian-overview}),
then our construction includes an additional \kl{node} of \kl{degree}~$1$
to ensure that $\Graph'$ is not \kl{Hamiltonian}.
Note that
the \kl{nodes} of~$\Graph$ can compute~$\Graph'$
in a constant number of \kl{communication rounds}
and a number of \kl{computation steps} polynomial
in the size of their local input and the messages they receive.
(For more details,
see Proposition~\ref{prp:hamiltonian-lp-hard}
on page~\pageref{prp:hamiltonian-lp-hard}.)

\begin{figure}[htb]
  \centering
  \begin{tikzpicture}[
    semithick,>=stealth',on grid,
    vertex/.style={draw,fill=white,circle,minimum size=1.8ex,inner sep=0},
    named vertex/.style={draw,circle,inner sep=0,minimum size=4ex},
    box/.style={draw,dotted,rounded corners=2ex},
    hamilton/.style={ultra thick},
    label/.style={inner sep=0,align=left},
  ]
  \def\unitDist{3ex}
  \def\horizDist{22ex}
  \def\vertDist{18ex}
  \def\boxMargin{7ex}
  \def\cycleRad{5.5ex}
  % u1's gadget
  \coordinate (u1_mid);
  \draw[hamilton] ($(u1_mid)+(60:\cycleRad)$) arc (60:210:\cycleRad);
  \draw[hamilton] ($(u1_mid)+(30:\cycleRad)$) arc (30:-30:\cycleRad);
  \draw[hamilton] ($(u1_mid)+(-120:\cycleRad)$) arc (-120:-60:\cycleRad);
  \draw ($(u1_mid)+(30:\cycleRad)$) arc (30:60:\cycleRad);
  \draw ($(u1_mid)+(-30:\cycleRad)$) arc (-30:-60:\cycleRad);
  \draw ($(u1_mid)+(-150:\cycleRad)$) arc (-150:-120:\cycleRad);
  \node[vertex] (u1_east1) at ($(u1_mid)+(60:\cycleRad)$) {};
  \node[vertex] (u1_east2) at ($(u1_mid)+(30:\cycleRad)$) {};
  \node[vertex] (u1_southeast1) at ($(u1_mid)+(-60:\cycleRad)$) {};
  \node[vertex] (u1_southeast2) at ($(u1_mid)+(-30:\cycleRad)$) {};
  \node[vertex] (u1_south1) at ($(u1_mid)+(-150:\cycleRad)$) {};
  \node[vertex] (u1_south2) at ($(u1_mid)+(-120:\cycleRad)$) {};
  % u2's gadget
  \node[vertex] (u2_mid) at ($(u1_mid)+(0:\horizDist)$) {};
  \draw[hamilton] ($(u2_mid)+(120:\cycleRad)$) arc (120:-210:\cycleRad);
  \draw ($(u2_mid)+(120:\cycleRad)$) arc (120:150:\cycleRad);
  \node[vertex] (u2_west1) at ($(u2_mid)+(120:\cycleRad)$) {};
  \node[vertex] (u2_west2) at ($(u2_mid)+(150:\cycleRad)$) {};
  \node[vertex] (u2_south1) at ($(u2_mid)+(-60:\cycleRad)$) {};
  \node[vertex] (u2_south2) at ($(u2_mid)+(-30:\cycleRad)$) {};
  \path (u2_west1) edge (u2_mid);
  % u3's gadget
  \coordinate (u3_mid) at ($(u1_mid)+(-90:\vertDist)$);
  \draw[hamilton] ($(u3_mid)+(120:\cycleRad)$) arc (120:-210:\cycleRad);
  \draw ($(u3_mid)+(120:\cycleRad)$) arc (120:150:\cycleRad);
  \node[vertex] (u3_east1) at ($(u3_mid)+(-30:\cycleRad)$) {};
  \node[vertex] (u3_east2) at ($(u3_mid)+(-60:\cycleRad)$) {};
  \node[vertex] (u3_north1) at ($(u3_mid)+(150:\cycleRad)$) {};
  \node[vertex] (u3_north2) at ($(u3_mid)+(120:\cycleRad)$) {};
  % u4's gadget
  \coordinate (u4_mid) at ($(u2_mid)+(-90:\vertDist)$);
  \draw[hamilton] ($(u4_mid)+(120:\cycleRad)$) arc (120:-210:\cycleRad);
  \draw ($(u4_mid)+(120:\cycleRad)$) arc (120:150:\cycleRad);
  \node[vertex] (u4_west1) at ($(u4_mid)+(-150:\cycleRad)$) {};
  \node[vertex] (u4_west2) at ($(u4_mid)+(-120:\cycleRad)$) {};
  \node[vertex] (u4_northwest1) at ($(u4_mid)+(150:\cycleRad)$) {};
  \node[vertex] (u4_northwest2) at ($(u4_mid)+(120:\cycleRad)$) {};
  \node[vertex] (u4_north1) at ($(u4_mid)+(60:\cycleRad)$) {};
  \node[vertex] (u4_north2) at ($(u4_mid)+(30:\cycleRad)$) {};
  % Connections between the gadgets
  \path[hamilton]
    (u1_east1) edge (u2_west1)
    (u1_east2) edge (u2_west2)
    (u1_southeast1) edge (u4_northwest1)
    (u1_southeast2) edge (u4_northwest2)
    (u1_south1) edge (u3_north1)
    (u1_south2) edge (u3_north2);
  \path
    (u2_south1) edge (u4_north1)
    (u2_south2) edge (u4_north2)
    (u3_east1) edge (u4_west1)
    (u3_east2) edge (u4_west2);
  % Cluster boxes
  \node[box] (cluster_u1) [fit={($(u1_mid)+(-\boxMargin,\boxMargin)$)
                                ($(u1_mid)+(\boxMargin,-\boxMargin)$)},rounded corners=6ex] {};
  \node[box] (cluster_u2) [fit={($(u2_mid)+(-\boxMargin,\boxMargin)$)
                                ($(u2_mid)+(\boxMargin,-\boxMargin)$)},rounded corners=6ex] {};
  \node[box] (cluster_u3) [fit={($(u3_mid)+(-\boxMargin,\boxMargin)$)
                                ($(u3_mid)+(\boxMargin,-\boxMargin)$)},rounded corners=6ex] {};
  \node[box] (cluster_u4) [fit={($(u4_mid)+(-\boxMargin,\boxMargin)$)
                                ($(u4_mid)+(\boxMargin,-\boxMargin)$)},rounded corners=6ex] {};
  \node[label,anchor=south] at ($(cluster_u1.north)+(0,0.3*\unitDist)$)
       {\kl{cluster} of~$\Node[1]_1$};
  \node[label,anchor=south] at ($(cluster_u2.north)+(0,0.3*\unitDist)$)
       {\kl{cluster} of~$\Node[1]_2$};
  \node[label,anchor=north] at ($(cluster_u3.south)+(0,-0.3*\unitDist)$)
       {\kl{cluster} of~$\Node[1]_3$};
  \node[label,anchor=north] at ($(cluster_u4.south)+(0,-0.3*\unitDist)$)
       {\kl{cluster} of~$\Node[1]_4$};
  \node (graph') [label,above left=0*\unitDist and 0*\unitDist of cluster_u1.north west,anchor=north east]
                 {$\Graph'$:};
  % Original graph
  \node[named vertex] (u1) [left=14*\unitDist of cluster_u1] {$\Node[1]_1$};
  \node[named vertex] (u2) at ($(u1)+(0:\horizDist)$)  {$\Node[1]_2$};
  \node[named vertex] (u3) at ($(u1)+(-90:\vertDist)$) {$\Node[1]_3$};
  \node[named vertex] (u4) at ($(u2)+(-90:\vertDist)$) {$\Node[1]_4$};
  \node[label,above=0.3*\unitDist of u1.north] {$1$};
  \node[label,above=0.3*\unitDist of u2.north] {$0$};
  \node[label,below=0.3*\unitDist of u3.south] {$1$};
  \node[label,below=0.3*\unitDist of u4.south] {$1$};
  \node[label] at ($(u1 |- graph')+(-1.4*\unitDist,0)$) {$\Graph$:};
  \path[hamilton]
    (u1) edge (u2)
         edge (u3)
         edge (u4);
  \path
    (u2) edge (u4)
    (u3) edge (u4);
\end{tikzpicture}

%%% Local Variables:
%%% mode: latex
%%% TeX-master: "../lph-paper"
%%% End:
  \caption{
    Example illustrating
    the \kl{reduction} from $\ALLSELECTED$ to~$\HAMILTONIAN$,
    used to show that the latter is $\LP$-\kl{hard}.
    The \kl{graph}~$\Graph$ has all \kl{node labels} equal to~$1$
    if and only if
    the \kl{graph}~$\Graph'$ has a \kl{Hamiltonian cycle}.
    The thick \kl{edges} in~$\Graph$ form a \kl{spanning tree},
    which is replicated by the thick \kl{edges} in~$\Graph'$.
    In this particular case,
    if \kl{node}~$\Node[1]_2$ of~$\Graph$ had \kl{label}~$1$,
    then its \kl{cluster} in~$\Graph'$ would lack the “central” \kl{node},
    and thus the thick \kl{edges} in~$\Graph'$
    would form a \kl{Hamiltonian cycle}.
  }
  \label{fig:allselected-to-hamiltonian-overview}
\end{figure}

\subparagraph*{An NLP-completeness proof.}

$3$-\kl{colorability} clearly lies in~$\NLP$.
To show that it is also $\NLP$\nobreakdash-\kl{hard},
we build on the classical \kl{reduction}
from $\tSAT$
to the \kl{string-encoded} version~of $3$\nobreakdash-\kl{colorability},
which gives us the desired result almost for free.
Our extension of this construction to the distributed setting
is illustrated in Figure~\ref{fig:3satgraph-to-3colorable-overview}.
Here we generalize $\tSAT$ to \kl{graphs} as follows
to obtain an $\NLP$-\kl{complete} \kl(graph){property}:
each \kl{node} of the input \kl{graph} is \kl{labeled} with a \kl{Boolean formula},
and the \kl{graph} is said to be \kl(graph){satisfiable}
if there exists
an assignment of \kl(Boolean){variable} \kl{valuations} to its \kl{nodes}
such that each \kl{valuation}
satisfies the \kl(Boolean){formula} of the corresponding \kl{node}
while being consistent with the \kl{valuations} of all \kl{adjacent} \kl{nodes}.
(Two \kl{adjacent} \kl{nodes} can have different \kl(Boolean){variables},
but any \kl(Boolean){variables} shared by both must be assigned the same~values.)

Given an input \kl{graph}~$\Graph$,
we construct a \kl{graph}~$\Graph'$
that is $3$-\kl{colorable} if and only if $\Graph$~is \kl(graph){satisfiable}.
For this purpose,
each \kl{node} of~$\Graph$ is represented by a \kl{cluster}
that encodes its \kl(Boolean){formula}
in such a way that
a valid $3$\nobreakdash-\kl{coloring} of the \kl{cluster}
represents a satisfying \kl{valuation} of the \kl(Boolean){formula}.
This is done by directly using the classical construction as it stands.
In addition,
to ensure that
the \kl(Boolean){variable} \kl{valuations} of \kl{adjacent} \kl{nodes}
are consistent for all shared \kl(Boolean){variables},
the corresponding \kl{clusters} are connected
with auxiliary gadgets
that force certain \kl{nodes} to have the same color.
Again,
the \kl{nodes} of~$\Graph$ can compute~$\Graph'$
in a constant number of \kl{communication rounds}
and a polynomial number of \kl{computation steps}.
(For more details,
see Theorem~\ref{thm:three-colorable}
on page~\pageref{thm:three-colorable};
for the classical \kl{reduction},
see, e.g., \cite[Prp.~2.27]{DBLP:books/daglib/0019967}.)

\begin{figure}[htb]
  \centering
  \begin{tikzpicture}[
    semithick,>=stealth',on grid,
    vertex/.style={draw,circle,minimum size=1.8ex,inner sep=0},
    named vertex/.style={draw,circle,inner sep=0,minimum size=4ex},
    box/.style={draw,dotted,rounded corners=2ex},
    label/.style={inner sep=0,align=left},
  ]
  \def\nodeDist{3ex}
  % Nodes of u's gadget
  \node[vertex] (false_u) {};
  \node[vertex] (ground_u) [below=2.0*\nodeDist of false_u] {};
  \node[vertex] (P2_u)    [below right=2.0*\nodeDist and 0.5*\nodeDist of ground_u] {};
  \node[vertex] (notP2_u) [below left =2.0*\nodeDist and 0.5*\nodeDist of ground_u] {};
  \node[vertex] (P1_u)    [left=1.5*\nodeDist of notP2_u] {};
  \node[vertex] (notP1_u) [left=\nodeDist of P1_u] {};
  \node[vertex] (notP3_u) [right=1.5*\nodeDist of P2_u] {};
  \node[vertex] (P3_u)    [right=\nodeDist of notP3_u] {};
  \node[vertex] (clause1_u) [below left=2.4*\nodeDist and 0.5*\nodeDist of notP2_u] {};
  \node[vertex] (clause2_u) [right=\nodeDist of clause1_u] {};
  \node[vertex] (clause3_u) at ($(clause1_u)+(-60:\nodeDist)$) {};
  \node[vertex] (clause4_u) [below=\nodeDist of clause3_u] {};
  \node[vertex] (clause5_u) [right=\nodeDist of clause4_u] {};
  \node[vertex] (clause6_u) at ($(clause4_u)+(-60:\nodeDist)$) {};
  \node[box] (clause_u) [fit=(clause1_u)(clause2_u)(clause3_u)(clause4_u)(clause5_u)(clause6_u)] {};
  \node[box] (formula_u) [fit={($(false_u)+(0,1.4*\nodeDist)$)
                               ($(notP1_u)-(2.4*\nodeDist,0)$)
                               ($(P3_u)+(0.8*\nodeDist,0)$)
                               ($(clause6_u)-(0,1.0*\nodeDist)$)},rounded corners=6ex] {};
  \node[label,above=0.1*\nodeDist of false_u.north] {$\mathit{false}$};
  \node[label,anchor=east,above left=0.15*\nodeDist and 0*\nodeDist of ground_u.west] {$\mathit{ground}$};
  \node[label,anchor=center,below=0.25*\nodeDist of P1_u.south] {$\BoolVar_1$};
  \node[label,anchor=center,below=0.1*\nodeDist of notP1_u.south] {$\Negated{\BoolVar}_1$};
  \node[label,anchor=center,below=0.25*\nodeDist of P2_u.south] {$\BoolVar_2$};
  \node[label,anchor=center,below=0.1*\nodeDist of notP2_u.south] {$\Negated{\BoolVar}_2$};
  \node[label,anchor=center,below=0.25*\nodeDist of P3_u.south] {$\BoolVar_3$};
  \node[label,anchor=center,below=0.1*\nodeDist of notP3_u.south] {$\Negated{\BoolVar}_3$};
  \node[label,rotate=90,anchor=north] at ($(clause_u.east)+(0.15*\nodeDist,0)$) {clause \\[-0.8ex] gadget};
  \node[label,rotate=90,anchor=north] at ($(formula_u.east |- clause_u)+(0.15*\nodeDist,0)$)
       {\kl(Boolean){formula} \\[-0.8ex] gadget};
  % Nodes of v's gadget
  \node[vertex] (false_v) [right=17*\nodeDist of false_u] {};
  \node[vertex] (ground_v) [below=2.0*\nodeDist of false_v] {};
  \node[vertex] (P4_v)    [below left =2.0*\nodeDist and 0.5*\nodeDist of ground_v] {};
  \node[vertex] (notP4_v) [below right=2.0*\nodeDist and 0.5*\nodeDist of ground_v] {};
  \node[vertex] (notP3_v) [left=1.5*\nodeDist of P4_v] {};
  \node[vertex] (P3_v)    [left=\nodeDist of notP3_v] {};
  \node[vertex] (P5_v)    [right=1.5*\nodeDist of notP4_v] {};
  \node[vertex] (notP5_v) [right=\nodeDist of P5_v] {};
  \node[vertex] (clause1_v) [below left=2.4*\nodeDist and 0.5*\nodeDist of P4_v] {};
  \node[vertex] (clause2_v) [right=\nodeDist of clause1_v] {};
  \node[vertex] (clause3_v) at ($(clause1_v)+(-60:\nodeDist)$) {};
  \node[vertex] (clause4_v) [below=\nodeDist of clause3_v] {};
  \node[vertex] (clause5_v) [right=\nodeDist of clause4_v] {};
  \node[vertex] (clause6_v) at ($(clause4_v)+(-60:\nodeDist)$) {};
  \node[box] (clause_v) [fit=(clause1_v)(clause2_v)(clause3_v)(clause4_v)(clause5_v)(clause6_v)] {};
  \node[box] (formula_v) [fit={($(false_v)+(0,1.4*\nodeDist)$)
                               ($(P3_v)-(0.8*\nodeDist,0)$)
                               ($(notP5_v)+(2.4*\nodeDist,0)$)
                               ($(clause6_v)-(0,1.0*\nodeDist)$)},rounded corners=6ex] {};
  \node[label,above=0.1*\nodeDist of false_v.north] {$\mathit{false}$};
  \node[label,anchor=west,above right=0.15*\nodeDist and 0.1*\nodeDist of ground_v.east] {$\mathit{ground}$};
  \node[label,anchor=center,below=0.25*\nodeDist of P3_v.south] {$\BoolVar_3$};
  \node[label,anchor=center,below=0.1*\nodeDist of notP3_v.south] {$\Negated{\BoolVar}_3$};
  \node[label,anchor=center,below=0.25*\nodeDist of P4_v.south] {$\BoolVar_4$};
  \node[label,anchor=center,below=0.1*\nodeDist of notP4_v.south] {$\Negated{\BoolVar}_4$};
  \node[label,anchor=center,below=0.25*\nodeDist of P5_v.south] {$\BoolVar_5$};
  \node[label,anchor=center,below=0.1*\nodeDist of notP5_v.south] {$\Negated{\BoolVar}_5$};
  \node[label,rotate=90,anchor=south] at ($(clause_v.west)-(0.15*\nodeDist,0)$) {clause \\[-0.8ex] gadget};
  \node[label,rotate=90,anchor=south] at ($(formula_v.west |- clause_v)-(0.15*\nodeDist,0)$)
       {\kl(Boolean){formula} \\[-0.8ex] gadget};
  % Nodes of connector gadgets
  \node[vertex] (false_uv) at ($(false_u)!0.4!(false_v)$) {};
  \node[vertex] (false_vu) at ($(false_u)!0.6!(false_v)$) {};
  \node[vertex] (ground_uv) at ($(ground_u)!0.4!(ground_v)$) {};
  \node[vertex] (ground_vu) at ($(ground_u)!0.6!(ground_v)$) {};
  \node[vertex] (P3_uv) [below=2.0*\nodeDist of ground_uv] {};
  \node[vertex] (P3_vu) [below=2.0*\nodeDist of ground_vu] {};
  % Cluster boxes
  \node[box] (cluster_u) [fit={($(false_uv)+(0.8*\nodeDist,2.0*\nodeDist)$)
                               ($(notP1_u)-(3.0*\nodeDist,0)$)
                               ($(clause6_u)-(0,1.6*\nodeDist)$)},rounded corners=6ex] {};
  \node[box] (cluster_v) [fit={($(false_vu)+(-0.8*\nodeDist,2.0*\nodeDist)$)
                               ($(notP5_v)+(3.0*\nodeDist,0)$)
                               ($(clause6_v)-(0,1.6*\nodeDist)$)},rounded corners=6ex] {};
  \node[label,anchor=north] at ($(cluster_u.south)+(0,-0.3*\nodeDist)$)
       {\kl{cluster} representing \kl{node}~$\Node[1]$};
  \node[label,anchor=north] at ($(cluster_v.south)+(0,-0.3*\nodeDist)$)
       {\kl{cluster} representing \kl{node}~$\Node[2]$};
  \node (graph') [label,above left=0.2*\nodeDist and -0.2*\nodeDist of cluster_u.north west,anchor=north east]
                 {$\Graph'$:};
  % Edges of u's gadget
  \path (false_u) edge (ground_u)
        (ground_u) edge (P1_u)
        (ground_u) edge (notP1_u)
        (ground_u) edge (P2_u)
        (ground_u) edge (notP2_u)
        (ground_u) edge (P3_u)
        (ground_u) edge (notP3_u)
        (P1_u) edge (notP1_u)
        (P2_u) edge (notP2_u)
        (P3_u) edge (notP3_u)
        (clause1_u) edge (clause2_u)
        (clause1_u) edge (clause3_u)
        (clause2_u) edge (clause3_u)
        (clause3_u) edge (clause4_u)
        (clause4_u) edge (clause5_u)
        (clause4_u) edge (clause6_u)
        (clause5_u) edge (clause6_u)
        (P1_u)    edge[out=-55,in=90]  (clause1_u)
        (notP2_u) edge[out=-55,in=90]  (clause2_u)
        (notP3_u) edge[out=-130,in=90] (clause5_u)
        (clause6_u) edge[out=170,in=180,looseness=1.8] (false_u)
        (clause6_u) edge[out=160,in=180,looseness=1.9] (ground_u)
        ;
  % Edges of v's gadget
  \path (false_v) edge (ground_v)
        (ground_v) edge (P3_v)
        (ground_v) edge (notP3_v)
        (ground_v) edge (P4_v)
        (ground_v) edge (notP4_v)
        (ground_v) edge (P5_v)
        (ground_v) edge (notP5_v)
        (P3_v) edge (notP3_v)
        (P4_v) edge (notP4_v)
        (P5_v) edge (notP5_v)
        (clause1_v) edge (clause2_v)
        (clause1_v) edge (clause3_v)
        (clause2_v) edge (clause3_v)
        (clause3_v) edge (clause4_v)
        (clause4_v) edge (clause5_v)
        (clause4_v) edge (clause6_v)
        (clause5_v) edge (clause6_v)
        (P3_v.-48)     edge[out=-75,in=175] (clause1_v)
        (P4_v)         edge[out=-55,in=90]  (clause2_v)
        (notP5_v.-130) edge[out=-112,in=16,looseness=1.1] (clause5_v)
        (clause6_v) edge[out=10,in=0,looseness=1.8] (false_v)
        (clause6_v) edge[out=20,in=0,looseness=1.9] (ground_v)
        ;
  % Edges of connector gadgets
  \path (false_u)  edge (false_uv)
        (false_u)  edge[bend left=13] (false_vu)
        (false_v)  edge[bend left=13] (false_uv)
        (false_v)  edge (false_vu)
        (false_uv) edge (false_vu)
        (ground_u)  edge (ground_uv)
        (ground_u)  edge[bend left=13] (ground_vu)
        (ground_v)  edge[bend left=13] (ground_uv)
        (ground_v)  edge (ground_vu)
        (ground_uv) edge (ground_vu)
        (P3_u)  edge (P3_uv)
        (P3_u)  edge[bend left=15] (P3_vu)
        (P3_v)  edge[bend left=15] (P3_uv)
        (P3_v)  edge (P3_vu)
        (P3_uv) edge (P3_vu)
        ;
  % Original graph
  \node[named vertex] (u) [above=1.0*\nodeDist of cluster_u.north] {$\Node[1]$};
  \node[named vertex] (v) [above=1.0*\nodeDist of cluster_v.north] {$\Node[2]$};
  \node[label,left=0.5*\nodeDist of u.west] {
    $\BoolVar_1 \OR \Negated{\BoolVar}_2 \OR \Negated{\BoolVar}_3$};
  \node[label,right=0.5*\nodeDist of v.east] {
    $\BoolVar_3 \OR \BoolVar_4 \OR \Negated{\BoolVar}_5$};
  \node[label] at (graph' |- u) {$\Graph$:};
  \path (u) edge (v);
\end{tikzpicture}

%%% Local Variables:
%%% mode: latex
%%% TeX-master: "../lph-paper"
%%% End:
  \caption{
    Example illustrating
    the \kl{reduction} from $\tSATGRAPH$ to~$\COLORABLE{3}$,
    used to show that the latter is $\NLP$-\kl{complete}.
    The \kl{Boolean graph}~$\Graph$ is \kl(graph){satisfiable}
    if and only if
    the \kl{graph}~$\Graph'$ is $3$-\kl{colorable}.
    The labels in~$\Graph'$ serve explanatory purposes only
    and are not part of the \kl{graph}.
  }
  \label{fig:3satgraph-to-3colorable-overview}
\end{figure}

% –––––––––––––––––––––––––––––––––––––––––––––––––––––––––––––––––––––––––––––––
\subsection{A logical characterization}

A central tool and recurring theme of this paper
is Fagin's theorem~\cite{Fagin74}.
In its original form
(see Theorem~\ref{thm:fagin} on page~\pageref{thm:fagin}),
it states that
a formal language lies in $\NP$
if and only if
it can be \kl{defined} by a \kl{formula} of
\kl[existential fragment]{existential second-order logic}.
Such \kl{formulas} are of the form
$\ExistsRel{\SOVar_1} \dots \ExistsRel{\SOVar_n} \, \Formula[1]$,\,
where
$\SOVar_1, \dots, \SOVar_n$
are \kl{second-order variables}
and $\Formula[1]$~is a \kl{first-order} \kl{formula}.
In the context of this paper,
these \kl{formulas} are evaluated
on bit strings represented as relational \kl{structures}.
More precisely,
the bits of a string are represented by
a sequence of \kl{elements} connected by a binary successor relation,
and their values are determined by a unary relation.
For instance,
the string $010011$ is represented by the \kl{structure}
\begin{center}
  \begin{tikzpicture}[
    baseline,
    semithick,>=stealth',
    bit/.style={draw,circle,minimum size=2ex,inner sep=0},
    set/.style={fill=black},
  ]
  \def\bitDist{4.5ex}
  \node[bit,anchor=base] (u1) {\phantom{x}};
  \node[bit,set] (u2) at ([shift={(0:\bitDist)}]u1) {};
  \node[bit]     (u3) at ([shift={(0:\bitDist)}]u2) {};
  \node[bit]     (u4) at ([shift={(0:\bitDist)}]u3) {};
  \node[bit,set] (u5) at ([shift={(0:\bitDist)}]u4) {};
  \node[bit,set] (u6) at ([shift={(0:\bitDist)}]u5) {};
  \path[->] (u1) edge (u2)
  (u2) edge (u3)
  (u3) edge (u4)
  (u4) edge (u5)
  (u5) edge (u6);
\end{tikzpicture}%
%
%%% Local Variables:
%%% mode: latex
%%% TeX-master: "../lph-paper"
%%% End:\:\!,
\end{center}
where the \kl{elements} belonging to the unary relation are marked in~black.

\subparagraph*{Extension to the distributed setting.}

We show that Fagin's result can be generalized
to obtain a similar logical characterization of the class~$\NLP$
(see Theorem~\ref{thm:local-fagin} on page~\pageref{thm:local-fagin}).
To evaluate \kl{logical formulas} on \kl{labeled graphs},
we use the \kl{structural representation}
illustrated in Figure~\ref{fig:graph}.
This \kl[structural representation]{representation}
contains an \kl{element}
for every \kl{node} and every \kl{labeling bit} of the \kl{graph}.
The \kl{nodes} are connected symmetrically to their \kl{neighbors}
and asymmetrically to their \kl{labeling bits}
by two binary relations.
In turn,
the \kl{labeling bits} of each \kl{node}
are interconnected by a successor relation
and assigned a value by a unary relation,
just as in the string representation described~above.

\begin{figure}[htb]
  \centering
  \begin{tikzpicture}[
    semithick,>=stealth',
    vertex/.style={draw,circle,minimum size=2ex},
    label/.style={},
    bit/.style={vertex},
    set/.style={fill=black},
  ]
  \def\nodeDist{15ex}
  \def\labelDist{3ex}
  \def\bitDist{4.5ex}
  \def\bitNodeDist{6ex}
  \begin{scope}
    \node[vertex] (u) {};
    \node[vertex] (v) at ([shift={(-120:\nodeDist)}]u) {};
    \node[vertex] (w) at ([shift={(-60:\nodeDist)}]u) {};
    \node[vertex] (x) at ([shift={(0:\nodeDist)}]u) {};
    \node[label] at ([shift={(+90:\labelDist)}]u) {010};
    \node[label] at ([shift={(-90:\labelDist)}]v) {10};
    \node[label] at ([shift={(-90:\labelDist)}]w) {1101};
    \node[label] at ([shift={(+90:\labelDist)}]x) {001};
    \node[label] at ([shift={(180:6ex)}]u) {$\Graph$:};
    \path (u) edge (v)
          (u) edge (w)
          (u) edge (x)
          (v) edge (w);
  \end{scope}
  \begin{scope}[xshift=40ex]
    \node[vertex] (u) {};
    \node[vertex] (v) at ([shift={(-120:\nodeDist)}]u) {};
    \node[vertex] (w) at ([shift={(-60:\nodeDist)}]u) {};
    \node[vertex] (x) at ([shift={(0:\nodeDist)}]u) {};
    \node[bit,set] (u2) at ([shift={(90:\bitNodeDist)}]u) {};
    \node[bit]     (u1) at ([shift={(180:\bitDist)}]u2) {};
    \node[bit]     (u3) at ([shift={(0:\bitDist)}]u2) {};
    \node[bit,set] (v1) at ([xshift=-0.5*\bitDist,yshift=-\bitNodeDist]v) {};
    \node[bit]     (v2) at ([xshift=0.5*\bitDist,yshift=-\bitNodeDist]v) {};
    \node[bit,set] (w2) at ([xshift=-0.5*\bitDist,yshift=-\bitNodeDist]w) {};
    \node[bit]     (w3) at ([xshift=0.5*\bitDist,yshift=-\bitNodeDist]w) {};
    \node[bit,set] (w1) at ([shift={(180:\bitDist)}]w2) {};
    \node[bit,set] (w4) at ([shift={(0:\bitDist)}]w3) {};
    \node[bit]     (x2) at ([shift={(90:\bitNodeDist)}]x) {};
    \node[bit]     (x1) at ([shift={(180:\bitDist)}]x2) {};
    \node[bit,set] (x3) at ([shift={(0:\bitDist)}]x2) {};
    \node[label] at ([shift={(180:6ex)}]u) {$\StructRepr{\Graph}$:};
    \path[<->] (u) edge (v)
               (u) edge (w)
               (u) edge (x)
               (v) edge (w);
    \path[->] (u1) edge (u2)
              (u2) edge (u3)
              (v1) edge (v2)
              (w1) edge (w2)
              (w2) edge (w3)
              (w3) edge (w4)
              (x1) edge (x2)
              (x2) edge (x3);
    \path[->,densely dotted]
              (u) edge (u1)
              (u) edge (u2)
              (u) edge (u3)
              (v) edge (v1)
              (v) edge (v2)
              (w) edge (w1)
              (w) edge (w2)
              (w) edge (w3)
              (w) edge (w4)
              (x) edge (x1)
              (x) edge (x2)
              (x) edge (x3);
  \end{scope}
\end{tikzpicture}

%%% Local Variables:
%%% mode: latex
%%% TeX-master: "../lph-paper"
%%% End:
  \caption{
    A \kl{labeled graph}~$\Graph$
    and its \kl{structural representation}~$\StructRepr{\Graph}$\!.
    For later reference:
    the unary relation~$\BitSet{1}{\StructRepr{\Graph}}$\!
    is represented by black \kl{elements},
    and the binary relations
    $\LinkRel{1}{\StructRepr{\Graph}}$\! and~$\LinkRel{2}{\StructRepr{\Graph}}$\!
    by solid and dotted arrows,
    respectively.
  }
  \label{fig:graph}
\end{figure}

Our generalization of Fagin's theorem states that
a \kl{graph property} lies in $\NLP$
if and only if
it can be \kl{defined}
(on \kl{structural representations} as above)
by a \kl{formula} of the following
fragment of \kl[existential fragment]{existential second-order logic}:
\kl{formulas} are of the form
$\ExistsRel{\SOVar_1} \dots \ExistsRel{\SOVar_n} \,
 \ForAll{\FOVar[1]} \, \Formula[1]$,\,
where
$\SOVar_1, \dots, \SOVar_n$
are \kl{second-order variables},
$\FOVar[1]$ is a \kl{first-order variable},
and $\Formula[1]$~is a \kl{first-order} \kl{formula}
in which all \kl{quantifiers} are \emph{\kl(quantifier){bounded}}
to range only over locally accessible \kl{elements}.
For instance,
\kl{existential quantification} must be of the form
$\ExistsNb{\FOVar[3]}{\FOVar[2]} \; \Formula[2]$,
which can be read as
“there exists an \kl{element}~$\FOVar[3]$
connected to a known \kl{element}~$\FOVar[2]$
such that \kl{formula}~$\Formula[2]$~is \kl{satisfied}”.
This means that
\kl{first-order quantification} in~$\Formula[1]$
is always relative to some \kl{element} already fixed at an outer scope,
and thus in effect that
$\Formula[1]$~is~\emph{\kl[bounded fragment]{bounded}}
around the \kl{variable}~$\FOVar[1]$.
(For formal definitions and a series of detailed examples,
see Section~\ref{sec:logic}.)

\begin{example}
  \label{ex:3-colorable-overview}
  We can \kl{define} $3$-\kl{colorability}
  with a \kl{second-order} \kl{formula} of the form
  \begin{equation*}
    \ExistsRel{C_0, C_1, C_2} \;
    \ForAll{\FOVar[1]} \, (\IsNode\Of{\FOVar[1]} \IMP \WellColored\Of{\FOVar[1]}),
  \end{equation*}
  where
  $C_0$, $C_1$ and~$C_2$ are \kl{unary} \kl{second-order variables}
  intended to represent
  the sets of \kl{nodes} with colors $0$, $1$ and~$2$,
  respectively,
  $\FOVar[1]$ is a \kl{first-order variable},
  and
  $\IsNode$ and $\WellColored$
  are \kl{first-order} \kl{formulas}
  \kl[bounded fragment]{bounded} around~$\FOVar[1]$.
  To ensure that $\FOVar[1]$ represents a \kl{node}
  (rather than a \kl{labeling bit}),
  $\IsNode\Of{\FOVar[1]}$ states that
  “$\FOVar[1]$ does not have any dotted arrow pointing to it”
  (see Figure~\ref{fig:graph}).
  To express that
  $\FOVar[1]$~is correctly colored,
  $\WellColored\Of{\FOVar[1]}$ states that
  “$\FOVar[1]$~is assigned exactly one color”
  and that
  “$\FOVar[1]$'s color differs from its \kl{neighbors}'~colors”.
  (We~will describe this more formally in Section~\ref{sec:logic};
  see Example~\ref{ex:3-colorable} on page~\pageref{ex:3-colorable}.)
  \lipicsEnd
\end{example}

\subparagraph*{Extension to higher levels of alternation.}

Stockmeyer~\cite{DBLP:journals/tcs/Stockmeyer76}
showed that Fagin's theorem extends
to the higher levels of the \kl{polynomial hierarchy}.
For example,
the complexity class~$\PiP{2}$ is characterized
by \kl{formulas} of the form
$\ForAllRel{\SOVar[1]_1} \dots \ForAllRel{\SOVar[1]_m} \,
 \ExistsRel{\SOVar[2]_1} \dots \ExistsRel{\SOVar[2]_n} \, \Formula[1]$,
consisting of a block of \kl{universal} \kl{second-order quantifiers},
followed by a block of \kl{existential} \kl{second-order quantifiers},
and then a \kl{first-order} \kl{formula}~$\Formula[1]$.
We similarly extend our generalization of Fagin's theorem
to the higher levels of the \kl{locally polynomial hierarchy}
(see Theorem~\ref{thm:local-hierarchy-equivalence}
on page~\pageref{thm:local-hierarchy-equivalence}).
For instance,
the complexity class~$\PiLP{2}$ is characterized
by \kl{formulas} of the form
$\ForAllRel{\SOVar[1]_1} \dots \ForAllRel{\SOVar[1]_m} \,
 \ExistsRel{\SOVar[2]_1} \dots \ExistsRel{\SOVar[2]_n} \,
 \ForAll{\FOVar[1]} \, \Formula[1]$,
where
the prefix of \kl{second-order quantifiers} is as above,
$\FOVar[1]$ is a \kl{first-order variable},
and $\Formula[1]$~is a \kl{first-order} \kl{formula}
\kl[bounded fragment]{bounded} around~$\FOVar[1]$.
We refer to
this logical characterization of the \kl{locally polynomial hierarchy}
as the \mbox{\emph{\kl{local second-order hierarchy}}}.
All \kl{graph properties} in this hierarchy
can be \kl{defined} by \kl{formulas} consisting of
alternating blocks of
\kl{existential} and \kl{universal} \kl{second-order quantifiers},
followed by a single \kl{universal} \kl{first-order quantifier},
and then a \kl[bounded fragment]{bounded} \kl{first-order} \kl{formula}.

For each alternation level,
we can recover Stockmeyer's result
by restricting our corresponding statement to \kl{single-node graphs}.
Indeed,
when the input \kl{graph} consists of a single \kl{node},
all \kl{elements} of its \kl{structural representation}
lie within distance~$2$ of each other,
so the distinction between
\kl(quantifier){bounded} and \kl(quantifier){unbounded}
\kl{first-order quantification}
becomes~irrelevant.

\subparagraph*{Proof outline.}

Unfortunately,
there does not seem to be a straightforward way
to use Fagin's original result as a black box
to prove our generalization.
So instead,
we give a full proof from scratch,
adapting the ideas of the original proof to the distributed setting.
Here we sketch only the nondeterministic case,
which is the easiest to present,
but the proof extends to arbitrarily high levels of quantifier alternation.

The easy part is
to translate a \kl{formula} of the form
$\ExistsRel{\SOVar_1} \dots \ExistsRel{\SOVar_n} \,
 \ForAll{\FOVar[1]} \, \Formula[1]$
into a \kl{distributed Turing machine}
that \kl{verifies} the same \kl(graph){property}.
In essence,
the \kl{certificates} chosen by the prover~(\kl(certificate){Eve})
are used to \kl{encode} the \kl{existentially quantified} relations
$\SOVar_1, \dots, \SOVar_n$,
so that the \kl{nodes} \kl{executing} the \kl{machine}
just have to run a local algorithm
to evaluate~$\Formula[1]$ in their constant-radius \kl{neighborhood}.
They can do this in a polynomial number of \kl{computation steps}
by simply iterating over all possible \kl{interpretations}
of the \kl{first-order variables} in~$\Formula$.

The reverse translation, however, is more complicated.
It involves encoding the space-time diagram
of every \kl[distributed Turing machine]{Turing machine} in the network
as a collection of relations
over the corresponding \kl{structural representation}.
The key insight that makes this possible is
the same as in Fagin's original proof:
since the number of \kl{computation steps} of each \kl{machine}
is polynomially bounded in the size of its input,
each cell of the corresponding space-time diagram
can be represented by a tuple of nearby \kl{elements}
whose length depends on the degree of the bounding polynomial.
What makes our generalized proof somewhat more cumbersome
are the additional technicalities imposed by the distributed setting,
in particular
the assignment of \kl{locally unique} \kl{identifiers}
(chosen small enough to be representable),
and the exchange of messages between \kl{adjacent}~\kl{nodes}.
The latter requires that
our \kl{formula} keep track of
the tape positions of sent and received messages
for each pair of \kl{adjacent} \kl{machines},
so that
the appropriate section of one \kl{machine}'s \kl{sending tape}
is copied to
the appropriate section of the other \kl{machine}'s \kl{receiving tape}.

\subparagraph*{Implications.}

Our generalization of Fagin's theorem
serves several purposes in this paper:
\begin{enumerate}
\item It provides evidence that
  our definition of the \kl{locally polynomial hierarchy} is robust,
  in that
  the complexity classes defined do not inherently depend on technical details
  such as the chosen model of computation.
\item It gives us a convenient way to prove
  our generalization of the Cook--Levin theorem mentioned above.
  This is analogous to the centralized setting,
  where the Cook--Levin theorem can be obtained as
  a corollary of Fagin's theorem.
\item As we will see next,
  we make extensive use of the provided connection to logic
  to prove that the \kl{locally polynomial hierarchy} is infinite.
\end{enumerate}

% –––––––––––––––––––––––––––––––––––––––––––––––––––––––––––––––––––––––––––––––
\subsection{Infiniteness of our hierarchy}

While in centralized computing
the question of whether $\PTIME$~equals~$\NP$
remains a major open problem,
the corresponding question in distributed computing%
---whether $\LP$~equals~$\NLP$---%
is easily settled with an elementary argument:
nondeterminism provides a means to break symmetry,
which is impossible in a purely deterministic setting
(see Proposition~\ref{prp:lp-vs-nlp} on page~\pageref{prp:lp-vs-nlp}).
What seems less obvious, however,
is how to separate complexity classes
higher up in the \kl{locally polynomial hierarchy},
especially when seeking separations that remain valid for
\kl{graphs} of bounded maximum \kl{degree} and constant \kl{label} size.
This is where our generalization of Fagin's theorem
proves particularly helpful,
as it allows us to reformulate the problem
in the well-studied framework of logic and automata theory.
Our separation proof builds on two results from that area,
both concerning \kl{monadic second-order logic} on \kl{pictures}.
The main ideas are outlined below.

\subparagraph*{Logic on pictures.}

\emph{\kl{Monadic second-order logic}}
is the fragment of \kl{second-order logic}
that can only \kl{quantify} over sets instead of arbitrary relations.
This means, for instance, that
the \kl{formulas} of
\kl[existential fragment]{existential monadic second-order logic}
are of the form
$\ExistsRel{\MSOVar_1} \dots \ExistsRel{\MSOVar_n} \, \Formula[1]$,
where
the \kl{variables}
$\MSOVar_1, \dots, \MSOVar_n$
represent sets of \kl{elements}
and $\Formula[1]$~is a \kl{first-order} \kl{formula}.

Meanwhile,
\emph{\kl{pictures}} are matrices of fixed-length binary strings.
To describe the \kl(picture){properties} of \kl{pictures}
using \kl{logical formulas},
every \kl{picture} is given a \kl(picture){structural representation}
as shown in Figure~\ref{fig:picture-overview}.
Specifically,
the entries of the \kl{picture} are represented by \kl{elements}
connected by a “vertical” and a “horizontal” successor relation,
and the value of each bit is represented by a unary relation.
(Formal definitions are given in Section~\ref{ssec:climb-up}.)

\begin{figure}[htb]
  \centering
  % https://tex.stackexchange.com/a/55604
\makeatletter
\tikzset{circle split part fill/.style  args={#1,#2}{%
    alias=tmp@name, % Jake's idea !!
    postaction={%
      insert path={%
        \pgfextra{%
          \pgfpointdiff{\pgfpointanchor{\pgf@node@name}{center}}%
          {\pgfpointanchor{\pgf@node@name}{east}}%
          \pgfmathsetmacro\insiderad{\pgf@x}
          \fill[#1] (\pgf@node@name.base) ([xshift=-\pgflinewidth]\pgf@node@name.east) arc
          (0:180:\insiderad-\pgflinewidth)--cycle;
          \fill[#2] (\pgf@node@name.base) ([xshift=\pgflinewidth]\pgf@node@name.west)  arc
          (180:360:\insiderad-\pgflinewidth)--cycle;
        }}}}}
\makeatother

\begin{tikzpicture}[
    semithick,>=stealth',on grid,
    inner sep=0,
    pixel/.style={minimum size=7ex},
    vertex/.style={circle split,rotate=90,minimum size=2ex},
    00/.style={draw=white,circle split part fill={white,white}},
    01/.style={draw=black,circle split part fill={white,black}},
    10/.style={draw=black,circle split part fill={black,white}},
    11/.style={draw=black,circle split part fill={black,black}},
  ]
  \def\pixelDist{7ex}
  \begin{scope}
    \matrix (p) [matrix of math nodes,left delimiter={[},right delimiter={]},
                 inner xsep=0ex,inner ysep=0.5ex,
                 row sep={\pixelDist,between origins},
                 column sep={\pixelDist,between origins},
                 matrix anchor=p-1-1.center] {
      00 & 01 & 00 & 01 \\
      10 & 11 & 10 & 11 \\
      00 & 01 & 00 & 01 \\
    };
    \node [left=6ex of p-1-1] {$\Picture$:};
  \end{scope}
  \begin{scope}[xshift=40ex]
    \node[vertex,00] (p11) {};
    \node[vertex,10] (p21) [below=\pixelDist of p11] {};
    \node[vertex,00] (p31) [below=\pixelDist of p21] {};
    \node[vertex,01] (p12) [right=\pixelDist of p11] {};
    \node[vertex,11] (p22) [below=\pixelDist of p12] {};
    \node[vertex,01] (p32) [below=\pixelDist of p22] {};
    \node[vertex,00] (p13) [right=\pixelDist of p12] {};
    \node[vertex,10] (p23) [below=\pixelDist of p13] {};
    \node[vertex,00] (p33) [below=\pixelDist of p23] {};
    \node[vertex,01] (p14) [right=\pixelDist of p13] {};
    \node[vertex,11] (p24) [below=\pixelDist of p14] {};
    \node[vertex,01] (p34) [below=\pixelDist of p24] {};
    \node [left=5ex of p11] {$\StructReprP{\Picture}$:};
    \foreach \n in {p11,p31,p13,p33}
      \node[draw,circle,minimum size=2ex] at (\n) {};
    \path[->] (p11) edge (p21)
              (p21) edge (p31)
              (p12) edge (p22)
              (p22) edge (p32)
              (p13) edge (p23)
              (p23) edge (p33)
              (p14) edge (p24)
              (p24) edge (p34);
    \path[->,densely dotted]
              (p11) edge (p12)
              (p12) edge (p13)
              (p13) edge (p14)
              (p21) edge (p22)
              (p22) edge (p23)
              (p23) edge (p24)
              (p31) edge (p32)
              (p32) edge (p33)
              (p33) edge (p34);
  \end{scope}
\end{tikzpicture}

%%% Local Variables:
%%% mode: latex
%%% TeX-master: "../lph-paper"
%%% End:
  \caption{
    A $2$-bit \kl{picture}~$\Picture$
    and its \kl(picture){structural representation}~$\StructReprP{\Picture}$.
  }
  \label{fig:picture-overview}
\end{figure}

\subparagraph*{Proof outline.}

Our proof of the infiniteness of the \kl{locally polynomial hierarchy}
consists of two main parts,
which remain mostly in the realm of logic:
\begin{enumerate}
\item First,
  we show that the \kl{local second-order hierarchy} is infinite
  when restricted to \kl{pictures},
  and more precisely that
  all levels ending with a block of existential quantifiers are distinct
  (see Section~\ref{sssec:digression-on-pictures}).
  This is obtained by combining the following two results:
  \begin{enumerate}
  \item We show that
    the \kl[local second-order hierarchy]{local} and
    the \kl{monadic second-order hierarchies}
    on \kl{pictures} are levelwise equivalent
    for all levels ending with a block of existential quantifiers.
    The main ingredient to prove this is
    an automata-theoretic characterization of
    \kl[existential fragment]{existential monadic second-order logic}
    on \kl{pictures}
    due to Giammarresi, Restivo, Seibert, and Thomas~%
    \cite{DBLP:journals/iandc/GiammarresiRST96}
    (see Theorem~\ref{thm:equivalence-ts-emso}
    on page~\pageref{thm:equivalence-ts-emso}).
    This characterization,
    which is itself based on a locality property of \kl{first-order logic},
    gives us a convenient way to establish the equivalence
    of the \kl{existential fragments}
    of \kl[local second-order logic]{local} and \kl{monadic second-order logic}
    on \kl{pictures}.
    For the higher levels of the hierarchies,
    the equivalence is then obtained
    by induction on the number of quantifier alternations.
  \item Matz, Schweikardt, and Thomas~\cite{DBLP:journals/iandc/MatzST02}
    have shown that
    the \kl{monadic second-order hierarchy} on \kl{pictures} is infinite
    (see Theorem~\ref{thm:mso-hierarchy} on page~\pageref{thm:mso-hierarchy}).
    Interestingly,
    one way to prove their result
    is based on the automata-theoretic characterization mentioned above,
    and thus ultimately on the same locality property of \kl{first-order logic}.
  \end{enumerate}
\item Second,
  we transfer the previous infiniteness result
  for the \kl{local second-order hierarchy}
  from \kl{pictures} to \kl{graphs}
  (see Section~\ref{sssec:from-pictures-to-graphs}).
  We do this by \kl(graph){encoding} \kl{pictures} as \kl{graphs}
  in such a way that
  \kl{formulas} can be translated from one type of \kl{structure} to the other
  without changing the alternation level of \kl{second-order quantifiers}.
  By our generalization of Fagin's theorem,
  this implies that
  all levels of the \kl{locally polynomial hierarchy}
  ending with a block of existential quantifiers are distinct,
  even when restricted to
  \kl{graphs} of bounded maximum \kl{degree} and constant \kl{label} size
  (see Theorem~\ref{thm:locally-polynomial-hierarchy}
  on page~\pageref{thm:locally-polynomial-hierarchy}).
  We then complete this partial separation result
  with some additional arguments
  to arrive at the fuller separation result
  shown in Figure~\ref{fig:hierarchy-full-overview}
  (see Sections~\ref{ssec:warm-up} and~\ref{ssec:complete-picture}).
\end{enumerate}
While the basic infiniteness result
could also be proved without an explicit detour through \kl{pictures},
a more direct approach would not allow us to easily separate
individual levels of the \kl[locally polynomial hierarchy]{hierarchy}
on \kl{graphs} of bounded maximum \kl{degree} and constant \kl{label} size
(see~Section~\ref{sssec:direct-infiniteness-proof}).

\subparagraph*{Implications.}

Besides the result itself,
we derive two main benefits from
the infiniteness of the \kl{locally polynomial hierarchy}:
\begin{enumerate}
\item \kl{Hardness} results give us unconditional lower bounds.
  That is,
  if we can show that a \kl{graph property} is \kl{hard}
  for some class of the \kl[locally polynomial hierarchy]{hierarchy},
  then we immediately know that it does not lie
  in any class below or incomparable to that class
  (see Corollaries~%
  \ref{cor:three-colorable-not-in-lp},
  \ref{cor:non-three-colorable-not-in-nlp}, and
  \ref{cor:noneulerian-hamiltonian-nonhamiltonian-not-in-nlp}
  starting on page~\pageref{cor:three-colorable-not-in-lp}).
\item Since alternation is the only way
  for \kl{nodes} to obtain global information about their network,
  the infiniteness result also suggests that
  the higher a \kl{graph property} lies
  in the \kl{locally polynomial hierarchy},
  the more global it is.
  We elaborate on this next.
\end{enumerate}

% –––––––––––––––––––––––––––––––––––––––––––––––––––––––––––––––––––––––––––––––
\subsection{Alternation as a measure of locality}
\label{ssec:measure-locality}

Inspired by Feuilloley's observation that
\kl{certificate} size can serve as a measure of locality
in the context of nondeterminism with unbounded \kl{certificates}~%
\cite[\S\,4.4]{DBLP:journals/dmtcs/Feuilloley21},
we propose that
the number of alternations plays a similar role
when considering alternation with \kl(certificate){locally bounded} \kl{certificates}.
To explore this idea in the most general setting possible,
we introduce the \emph{\kl{locally bounded hierarchy}}
$\Set{\SigmaLB{\Level}\!, \PiLB{\Level}}_{\Level \in \Naturals}$,
which is defined analogously to the \kl{locally polynomial hierarchy},
except that polynomial bounds
on local processing time and \kl{certificate} size
are replaced with arbitrary computable bounds.
Most of our previous results carry over directly to this generalized setting,
including all separation results.
In particular,
the hierarchy remains infinite.
This is because our separations already hold
on \kl{graphs} of bounded maximum \kl{degree} and constant \kl{label} size,
where
the \kl[locally bounded hierarchy]{locally bounded} and
\kl{locally polynomial hierarchies}
are equivalent.
On arbitrary \kl{graphs},
the \kl{locally bounded hierarchy} even exhibits a “cleaner” structure:
the inclusions corresponding to the dashed lines
in Figure~\ref{fig:hierarchy-full-overview}
become provable equalities,
yielding a strict linear order of complexity classes.

\begin{figure}[htb]
  \centering
  \begin{tikzpicture}[%
    semithick,on grid,>=stealth',node distance=6.5ex,
    property/.style={rectangle,draw=none,minimum size=0},
    strict/.style={},
    class/.style={draw,rounded rectangle,minimum height=4ex},
    lb/.style={class,minimum width=10ex},
    lcp/.style={class,minimum width=20ex},
    ruler/.style={minimum size=0,inner sep=0,font=\footnotesize,label distance=1.5ex},
    measure/.style={},
    ]
  \def\nodeDist{6.5ex}
  \def\propertySep{2.5ex}
  \def\hierarchyDist{53ex}
  % LB hierarchy
  \node[lb] (s0) {$\LB\vphantom{\SigmaLB{0}}$};
  \node[lb] (s1) [above of=s0] {$\SigmaLB{1}$};
  \node[lb] (p2) [above of=s1] {$\PiLB{2}$};
  \node[lb] (s3) [above of=p2] {$\SigmaLB{3}$};
  \node[lb] (p4) [above of=s3] {$\PiLB{4}$};
  \node (dots) at ([yshift=3/4*\nodeDist]p4) {$\vdots$};
  \coordinate (outside) at ([yshift=2*\nodeDist]p4);
  \path[strict]
    (s3) edge (p4)
    (p2) edge (s3)
    (s1) edge (p2)
    (s0) edge (s1)
    ;
  % LCP hierarchy
  \node[lcp] (lcp0)     [right=\hierarchyDist of s0] {$\LCP{0}$};
  \node[lcp] (lcpConst) [right=\hierarchyDist of s1] {$\LCP{O(1)}$};
  \node[lcp] (lcpLog)   [right=\hierarchyDist of s3] {$\LCP{O(\log n)}$};
  \node[lcp] (lcpPoly)  [right=\hierarchyDist of outside,anchor=center] {$\LCP{\poly(n)}$};
  \path[strict]
    (lcpPoly)  edge (lcpLog)
    (lcpLog)   edge (lcpConst)
    (lcpConst) edge (lcp0)
    ;
  % Graph properties
  \node (eulerian) [property]
    at ($(s0.east)!0.5!(lcp0.west)$) {$\EULERIAN$};
  \node (3-colorable) [property]
    at ($(s1.east)!0.5!(lcpConst.west)$) {$\COLORABLE{3}$};
  \node (odd) [property]
    at ($(s3.east)!0.5!(lcpLog.west)$) {$\ODD$};
  \node (acyclic) [property]
    at ([yshift=-\propertySep]odd) {$\ACYCLIC$};
  \node (hamiltonian) [property]
    at ([yshift=-\propertySep]acyclic) {$\HAMILTONIAN$};
  \node (non-2-colorable) [property]
    at ([yshift=-\propertySep]hamiltonian) {$\NONCOLORABLE{2}$};
  \node (non-3-colorable) [property]
    at ($(p4.east)!0.5!(lcpLog.west|-p4)$) {$\NONCOLORABLE{3}$};
  \node (automorphic) [property]
    at ($(p4.east|-lcpPoly)!0.5!(lcpPoly.west)$) {$\AUTOMORPHIC$};
  \node (prime) [property]
    at ([yshift=-\propertySep]automorphic) {$\PRIME$};
  \path[->,shorten >=1ex]
    (eulerian) edge (s0)
               edge (lcp0)
    (3-colorable) edge (s1)
                  edge (lcpConst)
    (odd.east) edge (lcpLog.west)
    (acyclic.east) edge (lcpLog.184)
    (hamiltonian.east) edge (lcpLog.187)
    (non-2-colorable.east) edge (lcpLog.190)
    (non-3-colorable.east) edge[out=0,in=180] (lcpPoly.186)
    (automorphic) edge (lcpPoly)
    (prime.east) edge[out=0,in=180] (lcpLog.175)
    ;
  \path[->,shorten >=1ex,densely dashed]
    (odd.west) edge (s3.east)
    (acyclic.west) edge (s3.-9)
    (hamiltonian.west) edge (s3.-17)
    (non-2-colorable.west) edge (s3.-24)
    (non-3-colorable) edge (p4)
    ;
  \path[black!30,line width=0.6ex]
    ([xshift=-6ex]$(prime-|dots)!0.55!(dots)$) edge ([xshift=-5ex]$(prime)!0.55!(dots-|prime)$);
  % LB measure of locality
  \node (r4) [ruler,left=3ex of p4.west,label={[ruler]left:4}] {\large$-$};
  \node (r3) [ruler,label={[ruler]left:3}] at (s3-|r4) {\large$-$};
  \node (r2) [ruler,label={[ruler]left:2}] at (p2-|r4) {\large$-$};
  \node (r1) [ruler,label={[ruler]left:1}] at (s1-|r4) {\large$-$};
  \node (r0) [ruler,label={[ruler]left:0}] at (s0-|r4) {\large$-$};
  \path[ruler] ([yshift=-1.5ex]r0.center) edge ([yshift=3ex]r4.center);
  % LCP measure of locality
  \node (global) [measure,anchor=west] at ([xshift=2ex]lcpPoly.east) {global};
  \node (local)  [measure,anchor=west] at ([xshift=2ex]lcp0.east) {local};
  \path[measure,->] (local) edge (global.south-|local);
\end{tikzpicture}

%%% Local Variables:
%%% mode: latex
%%% TeX-master: "../lph-paper"
%%% End:
  \caption{
    An example-based comparison between
    the \kl{locally bounded hierarchy}
    $\Set{\SigmaLB{\Level}\!, \PiLB{\Level}}_{\Level \in \Naturals}$
    introduced in this paper
    and the locally-checkable-proofs hierarchy
    $\Set{\LCP{f}}_{f:\, \Naturals \to \Naturals}$
    introduced by Göös and Suomela~\cite{DBLP:journals/toc/GoosS16}.
    (The interpretation of the latter as a measure of locality
    is due to Feuilloley~\cite[\S\,4.4]{DBLP:journals/dmtcs/Feuilloley21}.)
    A vertical line between two complexity classes
    indicates that
    the lower class is strictly included in the higher one.
    An arrow from a \kl{graph property} to a complexity class
    indicates that
    the \kl(graph){property} is contained in the class.
    If the arrow is solid,
    there is also a matching lower bound,
    i.e., the \kl(graph){property} is not contained in any lower class.
    \kl(graph){Properties} above the thick horizontal line
    lie outside the \kl{locally bounded hierarchy}.
  }
  \label{fig:lb-vs-lcp-overview}
\end{figure}

To test our hypothesis
that alternation provides a meaningful measure of locality,
we compare
the \kl{locally bounded hierarchy}
to a fundamentally different hierarchy based on \kl{certificate} size.
Specifically,
we focus on the locally-checkable-proofs hierarchy
$\Set{\LCP{f}}_{f:\, \Naturals \to \Naturals}$
introduced by Göös and Suomela~\cite{DBLP:journals/toc/GoosS16},
which is the most general nondeterministic model
in the literature on distributed decision.
This hierarchy is defined through a single-round game
in which \kl(certificate){Eve} plays alone
but can choose arbitrarily large \kl{certificates}
that may reference \kl{globally unique} \kl{identifiers}.
Instead of constraining \kl(certificate){Eve}'s \kl{certificates},
one evaluates
how large they must be
to convince the \kl{nodes} of a given \kl{graph property}.

Despite their different foundations,
the \kl[locally bounded hierarchy]{locally bounded}
and locally-checkable-proofs hierarchies
exhibit notable parallels.
Figure~\ref{fig:lb-vs-lcp-overview} informally compares the two
by mapping various \kl{graph properties} onto them.
The measures roughly align
on four levels of locality,
allowing us to classify \kl{graph properties} as
\emph{purely local}
(in $\LB$ and $\LCP{0}$),
\emph{almost local}
(in $\SigmaLB{1}$ and $\LCP{O(1)}$),
\emph{intermediate}
(in $\SigmaLB{3}$ and $\LCP{O(\log n)}$),
or
\emph{inherently global}
(outside $\bigcup_{\Level \in \Naturals} \SigmaLB{\Level}$
but within $\LCP{\poly(n)}$).
This close correspondence suggests
that both hierarchies capture essential aspects
of what makes \kl(graph){properties} local or global.
While mismatches do occur,
particularly for \kl(graph){properties} involving counting or \kl{complementation},
these differences may reflect quirks of the respective formalisms
rather than fundamental disagreements about the nature of locality.
(We will return to this discussion in Section~\ref{sec:discussion}.)

%%% Local Variables:
%%% mode: latex
%%% TeX-master: "../lph-paper"
%%% End:

\section{Preliminaries}
\label{sec:preliminaries}

We now begin the formal development
of the material outlined in the previous section,
starting with basic definitions and notation.

\AP
We denote
the empty set by~$\intro*\EmptySet$,
the set of nonnegative integers by~$\intro*\Naturals$,
the set of positive integers by~$\intro*\Positives$,
and the set of integers by~$\intro*\Integers$.
The cardinality of any set~$A$ is written as
$\intro*\Card{A}$,
its power set as
$\intro*\PowerSet{A}$,
and the set of finite strings over~$A$
as~$\intro*A^\KleeneStar$.
The length of a string~$\String$ is denoted by
$\intro*\Length{\String}$,
and its $i$-th symbol by~%
$\intro*\String\AtPos{i}$.
By a slight abuse of notation,
we sometimes lift functions from elements to sets, i.e.,
given $f \colon X \to Y$ and $A \subseteq X$,
we write~$f(A)$ for $\SetBuilder{f(a)}{a \in A}$.
To denote integer intervals,
we define
$\intro*\Range[m]{n} = \SetBuilder{i \in \Integers}{m \leq i \leq n}$
and
$\reintro*\Range{n} = \Range[0]{n}$,
for any $m, n \in \Integers$.
Angle brackets indicate excluded endpoints, e.g.,
$\reintro*\Range*[m]{n} = \Range[m + 1]{n}$
and
$\reintro*\Range{n}* = \Range[0]{n - 1}$.

\AP
Throughout this paper,
we assume some fixed but unspecified
\intro{encoding} of finite objects
(e.g., integers, \kl{graphs}, or tuples of finite objects)
as binary strings.
Sometimes we also implicitly identify such objects
with their string representations.

\subparagraph*{Graphs.}

\AP
All \kl{graphs} we consider are
finite, simple, undirected, and \kl{connected}.
Formally,
a \intro{labeled graph}, or~simply \reintro{graph},
is represented by a triple
$\Graph = \Tuple{\NodeSet{\Graph}, \EdgeSet{\Graph}, \Labeling{\Graph}}$,
where~$\intro*\NodeSet{\Graph}$ is a finite nonempty set of \intro{nodes};
$\intro*\EdgeSet{\Graph}$ is a set of undirected \intro{edges}
consisting of 2-element subsets of~$\NodeSet{\Graph}$
and containing,
for every partition $\Set{\NodeSet[0]{}, \NodeSet[1]{}}$ of~$\NodeSet{\Graph}$,
at least one \kl{edge}
$\Set{\Node[1], \Node[2]}$
with
$\Node[1] \in \NodeSet[0]{}$
and
$\Node[2] \in \NodeSet[1]{}$;
and
$\intro*\Labeling{\Graph} \colon
 \NodeSet{\Graph} \to \Set{0, 1}^\KleeneStar$
is a \intro{labeling} function
that assigns a bit string to each \kl{node}.
We refer to the string~$\Labeling{\Graph}(\Node[1])$
as the \intro{label} of \kl{node}~$\Node[1]$
and to the symbol~$\Labeling{\Graph}(\Node[1])\AtPos{i}$
as the $i$-th \intro{labeling bit} of~$\Node[1]$,
for $i \in \Range[1]{\Length{\Labeling{\Graph}(\Node[1])}}$.
To simplify notation,
we often write
$\Node[1] \intro*\inG \Graph$
instead of
$\Node[1] \in \NodeSet{\Graph}$,
and we define $\intro*\CardG{\Graph}$,
the \intro(graph){cardinality} of~$\Graph$,
as~$\Card{\NodeSet{\Graph}}$.

\AP
We denote by $\intro*\GRAPH$
the set of all \kl{labeled graphs}
and by $\intro*\NODE$
the set of \intro{single-node graphs},
i.e., \kl{labeled graphs} consisting of a single \kl{node}.
A \intro{graph property}
(sometimes called a “graph language”)
is a set $\Property \subseteq \GRAPH$
that is closed under isomorphism.
If a \kl{graph}~$\Graph$ belongs to~$\Property$,
then we also say that $\Graph$~has
the \kl(graph){property}~$\Property$.

\AP
\phantomintro{adjacent}%
\phantomintro{incident}%
\phantomintro{leaf}%
\phantomintro{cycle}%
\phantomintro{subgraph}%
\phantomintro{spanning subgraph}%
\phantomintro{regular}%
\phantomintro{path}%
\phantomintro{connected}%
\phantomintro{component}%
We follow Diestel~\cite{DBLP:books/daglib/0030488}
for standard graph-theoretic terms such as
\intro{neighbor}, \intro{degree}, \intro{distance},
\intro{diameter}, \intro{induced subgraph}, and so on.
The \kl{diameter} of a \kl{graph}~$\Graph$
is denoted by~$\intro*\Diameter(\Graph)$.
For $\Radius \in \Naturals$ and $\Node \inG \Graph$,
the $\Radius$-\intro{neighborhood}
$\intro*\Neighborhood{\Graph}{\Radius}{\Node}$
of~$\Node$ in~$\Graph$
is the \kl{subgraph} of~$\Graph$ \kl{induced}
by all \kl{nodes} at \kl{distance} at most~$\Radius$ from~$\Node$.
That is,
$\Neighborhood{\Graph}{\Radius}{\Node}$ is
the \kl{graph}~$\Graph'$ that consists of
the \kl{nodes} at \kl{distance} at most~$\Radius$ from~$\Node$
and all \kl{edges} connecting them,
and whose \kl{labeling}~$\Labeling{\Graph'}$
is the restriction of~$\Labeling{\Graph}$ to~$\NodeSet{\Graph'}$.

\subparagraph*{Identifier assignments.}

\AP
An \intro{identifier assignment} of a \kl{graph}~$\Graph$ is a function
$\IdMap \colon
\NodeSet{\Graph} \to \Set{0, 1}^\KleeneStar$
whose purpose is to distinguish between different \kl{nodes} of~$\Graph$.
We refer to~$\IdMap(\Node[1])$
as the \intro{identifier} of \kl{node}~$\Node[1]$ under~$\IdMap$.
\kl{Identifiers} are ordered lexicographically, i.e.,
the \intro{identifier order} is such that
$\IdMap(\Node[1]) \intro*\SmallerThan \IdMap(\Node[2])$
if either $\IdMap(\Node[1])$~is a proper prefix of~$\IdMap(\Node[2])$,
or $\IdMap(\Node[1])\AtPos{i} < \IdMap(\Node[2])\AtPos{i}$
at the first position~$i$ where the two strings differ.

\AP
We say that
$\IdMap$~is $\IdentRadius$-\intro{locally unique}
for some $\IdentRadius \in \Naturals$
if it satisfies
$\IdMap(\Node[1]) \neq \IdMap(\Node[2])$
for all distinct \kl{nodes}~$\Node[1]$ and~$\Node[2]$
that lie in the $\IdentRadius$-\kl{neighborhood}
of a common \kl{node}~$\Node[3]$
(or equivalently, in the $2\IdentRadius$-\kl{neighborhood} of each other).
If $\IdentRadius \geq \Diameter(\Graph) / 2$,
we say that $\IdMap$~is \intro{globally unique}.

\AP
An $\IdentRadius$-\kl{locally unique}
\kl{identifier assignment}~$\IdMap$ of a \kl{graph}~$\Graph$
is called \intro{small}
(with respect to~$\IdentRadius$)
if for every \kl{node} $\Node \inG \Graph$,
the length of~$\IdMap(\Node)$
is at most
$\Ceiling{\log_2\CardG{\Neighborhood{\Graph}{2\IdentRadius}{\Node}}}$,
i.e., logarithmically bounded in the \kl(graph){cardinality}
of $\Node$'s $2\IdentRadius$-\kl{neighborhood} in~$\Graph$.
When we want to emphasize that
an $\IdentRadius$-\kl{locally unique} \kl{identifier assignment}
is not necessarily \kl{small},
we call it \intro{arbitrary-sized}.

\begin{remark}
  \label{rem:small-sized-identifiers}
  For every \kl{graph}~$\Graph$
  and integer $\IdentRadius \in \Naturals$,
  there exists an $\IdentRadius$-\kl{locally unique}
  \kl{identifier assignment}~$\IdMap$ of~$\Graph$
  that is \kl{small}.
\end{remark}

\begin{claimproof}
  By definition,
  an \kl{identifier assignment}~$\IdMap$ of~$\Graph$
  is $\IdentRadius$-\kl{locally unique}
  if the \kl{identifier} $\IdMap(\Node)$
  of every \kl{node} $\Node \inG \Graph$
  is distinct from the \kl{identifiers}
  of all other \kl{nodes} in $\Neighborhood{\Graph}{2\IdentRadius}{\Node}$.
  Such an \kl{identifier assignment} can be easily constructed
  if we may choose among at least
  $\CardG{\Neighborhood{\Graph}{2\IdentRadius}{\Node}}$
  possible values of $\IdMap(\Node)$.
  Hence,
  a bit string of length at most
  $\Ceiling{\log_2\CardG{\Neighborhood{\Graph}{2\IdentRadius}{\Node}}}$
  is sufficient.
  \claimqedhere
\end{claimproof}

\subparagraph*{Certificate assignments.}

\AP
For any \kl{graph}~$\Graph$
and any \kl{identifier assignment}~$\IdMap$ of~$\Graph$,
a \intro{certificate assignment} of
$\Tuple{\Graph, \IdMap}$
is a function
$\CertifMap \colon
\NodeSet{\Graph} \to \Set{0, 1}^\KleeneStar$.
We refer to~$\CertifMap(\Node[1])$
as the \intro{certificate} of \kl{node}~$\Node[1]$ under~$\CertifMap$.
Given $\CertifRadius \in \Naturals$
and $\CertifPolynomial \colon \Naturals \to \Naturals$,
we say that $\CertifMap$ is
$\Tuple{\CertifRadius, \CertifPolynomial}$-\intro(certificate){bounded}
if for every \kl{node} $\Node[1] \inG \Graph$,
the length of $\Node[1]$'s \kl{certificate}
is bounded by~$\CertifPolynomial$ with respect to
the \kl(graph){cardinality} of $\Node[1]$'s $\Radius$-\kl{neighborhood}
and the lengths of all \kl{labels} and \kl{identifiers} therein, i.e.,
\begin{equation*}
  \Length{\CertifMap(\Node[1])}
  \, \leq \,
  \CertifPolynomial
  \Bigl( \,
    \sum_{
      \lalign{
        \Node[2] \inG \Neighborhood{\Graph}{\CertifRadius}{\Node[1]}
      }
    }
    1 +
    \Length{\Labeling{\Graph}(\Node[2])} +
    \Length{\IdMap(\Node[2])} \,
  \Bigr).
\end{equation*}

\AP
We often represent several \kl{certificate assignments}
as a single function
$\CertifListMap \colon
\NodeSet{\Graph} \to \Set{0, 1, \Separator}^\KleeneStar$,
called a \intro{certificate-list assignment},
where the symbol~$\Separator$ is used
to separate the individual \kl{certificates} of each \kl{node}.
Given \kl{certificate assignments}
$\CertifMap_1, \CertifMap_2, \dots, \CertifMap_{\Level}$,
we write
${\CertifMap_1 \intro*\CertifConcat
  \CertifMap_2 \reintro*\CertifConcat \dots \reintro*\CertifConcat
  \CertifMap_{\Level}}$
for the \kl{certificate-list assignment}~$\CertifListMap$
such that
$\CertifListMap(\Node[1]) =
 \CertifMap_1(\Node[1]) \, \Separator \,
 \CertifMap_2(\Node[1]) \, \Separator \dots \Separator \,
 \CertifMap_{\Level}(\Node[1])$
for all $\Node[1] \inG \Graph$.
We say that $\CertifListMap$~is
$\Tuple{\CertifRadius, \CertifPolynomial}$-\intro(certificate-list){bounded}
if $\CertifMap_i$ is
$\Tuple{\CertifRadius, \CertifPolynomial}$-\kl(certificate){bounded}
for every $i \in \Range[1]{\Level}$.

\subparagraph*{Structural representations.}

\AP
We will evaluate \kl{logical formulas}
on relational \intro{structures} of the form
$\Struct =
 \Tuple{
   \Domain{\Struct} \!,
   \BitSet{1}{\Struct}, \dots, \BitSet{m}{\Struct},
   {\LinkRel{1}{\Struct}}, \dots, {\LinkRel{n}{\Struct}}
 }$,
where
$\intro*\Domain{\Struct}$ is a finite nonempty set of \intro{elements},
called the \intro{domain} of~$\Struct$,
each $\intro*\BitSet{i}{\Struct}$
is a subset of~$\Domain{\Struct}$,
for $i \in \Range[1]{m}$,
and
each $\intro*\LinkRel{i}{\Struct}$
is a binary relation over~$\Domain{\Struct}$,
for $i \in \Range[1]{n}$.
We refer to~$\Tuple{m, n}$ as the \intro{signature} of~$\Struct$.
To simplify notation,
we often write
$\Element[1] \intro*\inS \Struct$
instead of
$\Element[1] \in \Domain{\Struct}$,
and we define $\intro*\CardS{\Struct}$,
the \intro(structure){cardinality} of~$\Struct$,
as~$\Card{\Domain{\Struct}}$.
We also write
$\Element[1] \intro*\NeighborRel{\Struct} \Element[2]$
to indicate that
$\Element[1] \LinkRel{i}{\Struct} \Element[2]$
or
$\Element[2] \LinkRel{i}{\Struct} \Element[1]$
for some $i \in \Range[1]{n}$.

\AP
To evaluate \kl{logical formulas} on \kl{graphs},
we identify each \kl{graph}~$\Graph$
with a \kl{structure}
$\intro*\StructRepr{\Graph} \! =
 \Tuple{
   \Domain{\StructRepr{\Graph}} \!,
   \BitSet{1}{\StructRepr{\Graph}} \!,
   {\LinkRel{1}{\StructRepr{\Graph}}} \!, {\LinkRel{2}{\StructRepr{\Graph}}}
 }$
of \kl{signature}~$\Tuple{1, 2}$,
called the \intro{structural representation} of~$\Graph$.
This \kl{structure} contains one \kl{element}~$\Node[1]$
for each \kl{node}~$\Node[1] \in \Graph$
and one \kl{element}~$\Tuple{\Node[1], i}$
for each of $\Node[1]$'s \kl{labeling bits}, i.e.,
\begin{equation*}
  \Domain{\StructRepr{\Graph}} \! = \,
  \NodeSet{\Graph}
  \, \cup \,
  \SetBuilder{
    \Tuple{\Node[1], i}
  }{
    \Node[1] \in \NodeSet{\Graph} \!, \:
    i \in \Range[1]{\Length{\Labeling{\Graph}(\Node[1])}}
  }.
\end{equation*}
The set~$\BitSet{1}{\StructRepr{\Graph}}$\!
corresponds to the \kl{labeling bits} whose value is~$1$, i.e.,
$\Tuple{\Node[1], i} \in \BitSet{1}{\StructRepr{\Graph}}$\!
if and only if
$\Labeling{\Graph}(\Node[1])\AtPos{i} = 1$.
The relation~$\LinkRel{1}{\StructRepr{\Graph}}$\!
represents the edges in~$\EdgeSet{\Graph}$
and the successor relation of the \kl{labeling bits}, i.e.,
$\Node[1] \LinkRel{1}{\StructRepr{\Graph}} \! \Node[2]$
if and only if
$\Set{\Node[1], \Node[2]} \in \EdgeSet{\Graph}$,
and
$\Tuple{\Node[1], i} \LinkRel{1}{\StructRepr{\Graph}} \! \Tuple{\Node[2], j}$
if and only if
$\Node[1] = \Node[2]$
and
$j = i + 1$.
Finally,
the relation~$\LinkRel{2}{\StructRepr{\Graph}}$\!
determines which \kl{node} owns which \kl{labeling bits}, i.e.,
$\Node[1] \LinkRel{2}{\StructRepr{\Graph}} \! \Tuple{\Node[2], i}$
if and only if
$\Node[1] = \Node[2]$.
An example is provided in Figure~\ref{fig:graph} on page~\pageref{fig:graph}.

\AP
For $\Radius \in \Naturals$ and $\Node \inG \Graph$,
the \kl{structural representation}
of $\Node$'s $\Radius$-\kl{neighborhood}
$\Neighborhood{\Graph}{\Radius}{\Node}$
is denoted by
$\intro*\StructNeighborhood{\Graph}{\Radius}{\Node}$.
For instance,
if $\Node$ is the upper right \kl{node}
of the \kl{graph}~$\Graph$ depicted in Figure~\ref{fig:graph},
then
$\CardS{\StructNeighborhood{\Graph}{0}{\Node}} = 4$,\,
$\CardS{\StructNeighborhood{\Graph}{1}{\Node}} = 8$,
and
$\StructNeighborhood{\Graph}{2}{\Node} = \StructRepr{\Graph}$.

\AP
A \intro{structure property}
is a set $\Property$ of \kl{structures}
that is closed under isomorphism.
In particular,
since we identify \kl{graphs} with their \kl{structural representations},
every \kl{graph property} is also a \kl{structure property}.
We will often restrict
a given class~$\Class$ of \kl{structure properties}
(e.g.,~$\NLP$)
to \kl{structures} that have
some presupposed \kl(structure){property}~$\BaseProperty$
(e.g.,~$\NODE$).
In such cases,
we write $\Class\On{\BaseProperty}$
for the restriction of~$\Class$ to~$\BaseProperty$, i.e.,
$\intro*\Class\On{\BaseProperty} =
 \SetBuilder{\Property \cap \BaseProperty}{\Property \in \Class}$.

%%% Local Variables:
%%% mode: latex
%%% TeX-master: "../lph-paper"
%%% End:

\section{Distributed Turing machines}
\label{sec:turing-machines}

\AP
We formalize synchronous distributed algorithms
using the notion of \kl{distributed Turing machines}.
As illustrated in Figure~\ref{fig:turing-machine},
such \kl{machines} are equipped with three one-way infinite \intro{tapes}:
a \intro[receiving tape]{receiving} \kl{tape}
to store incoming messages,
an \intro[internal tape]{internal} \kl{tape}
to store the \kl{machine}'s internal state
and perform local computations, and
a \intro[sending tape]{sending} \kl{tape}
to store outgoing messages.

\begin{figure}[htb]
  \centering
  \begin{tikzpicture}[
    semithick,on grid,
    tape/.style={
      matrix of math nodes,inner sep=0ex,
      column sep={3ex,between origins},
      nodes={draw,minimum size=3ex,inner sep=0ex,anchor=center}
    },
    continuation/.style={
      circle,ultra thick,draw=white,fill=white,xshift=-2ex
    },
    rw head/.style={
      rounded corners,shorten >=0.1ex,
      -{Stealth[round,width=1.4ex,length=0.8ex,inset=0.4ex]}
    },
  ]
  \matrix (receiving tape) [tape] {
    \Endmarker & 0 & 1 & \Separator & 1 & 0 & 0 & 1 &
    0 & 1 & 1 & \Separator & 0 & 0 & 0 & \Separator &
    {} & |[continuation]| \dots \\
  };
  \matrix (internal tape) [tape,below=7ex of receiving tape] {
    \Endmarker & 0 & 0 & 1 & 0 & 1 & 0 & 0 &
    1 & 1 & 0 & 1 & 1 & 1 & 1 & 0 &
    {} & |[continuation]| \dots \\
  };
  \matrix (sending tape) [tape,below=7ex of internal tape] {
    \Endmarker & 1 & 1 & 1 & 0 & 0 & \Separator & 1 &
    0 & 0 & 1 & 1 & 0 & \Separator & 0 & 1 &
    {} & |[continuation]| \dots \\
  };
  \node (controller) [draw,circle,fill=black,text=white]
        at ([shift={(-8ex,-3.5ex)}]internal tape.west) {$\State$};
  \draw[rw head] (controller.20) -|
                 ([shift={(-4ex,-3.5ex)}]receiving tape.west) -|
                 (receiving tape-1-9.south);
  \draw[rw head] (controller.0) -| (internal tape-1-6.south);
  \draw[rw head] (controller.-20) -|
                 ([shift={(-4ex,-3.5ex)}]sending tape.west) -|
                 (sending tape-1-2.south);
  \node[above=0.3ex of controller.north] {\kl{state}};
  \node[right=1ex of receiving tape.east] {\kl{receiving tape}};
  \node[right=1ex of internal tape.east] {\kl{internal tape}};
  \node[right=1ex of sending tape.east] {\kl{sending tape}};
  \draw[decorate,decoration={brace,raise=0.6ex,amplitude=1.2ex}]
      (receiving tape-1-5.north west) --
      node[above=1.4ex] {message from second \kl{neighbor}}
      (receiving tape-1-11.north east);
  \draw[decorate,decoration={brace,mirror,raise=0.6ex,amplitude=1.2ex}]
      (sending tape-1-8.south west) --
      node[below=1.4ex] {message to second \kl{neighbor}}
      (sending tape-1-13.south east);
\end{tikzpicture}

%%% Local Variables:
%%% mode: latex
%%% TeX-master: "../lph-paper"
%%% End:
  \caption{
    Local copy of a \kl{distributed Turing machine}
    being \kl{executed} by a \kl{node}.
  }
  \label{fig:turing-machine}
\end{figure}

\subparagraph*{Formal representation.}

\AP
A \intro{distributed Turing machine}
is represented by a tuple
$\Machine = \Tuple{\StateSet, \TransFunc}$
consisting of
a finite set~$\StateSet$ of \intro{states}
and a \intro{transition function}
$\TransFunc \colon
\StateSet \times \Alphabet^3 \to
\StateSet \times \Alphabet^3 \times \Set{\Left, \Stay, \Right}^3$.
Here,
$\Alphabet$ is the tape alphabet
$\Set{\Endmarker, \Blank, \Separator, 0, 1}$
with
the \intro{left-end marker}\,~$\intro*\Endmarker$,
the \intro{blank symbol}\,~$\intro*\Blank$, and
the \intro{separator}\,~$\intro*\Separator$.
We assume
that~$\StateSet$ always contains
the designated \kl{states}
$\intro*\StartState$,\, $\intro*\PauseState$, and~$\intro*\StopState$.

\AP
When we refer to the \intro[tape content]{content} of a \kl{tape},
we mean the sequence of symbols written on the \kl{tape}
ignoring any leading or trailing occurrences of $\Endmarker$ and~$\Blank$.
In particular,
if the first cell of the \kl{tape} contains~$\Endmarker$
and the remaining cells contain~$\Blank$,
we consider the \kl{tape} to be empty.

\subparagraph*{Execution.}

\AP
A \kl{distributed Turing machine}
$\Machine = \Tuple{\StateSet, \TransFunc}$
can be \kl{executed}
on any \kl{graph}~$\Graph$,
under any \kl{identifier assignment}~$\IdMap$ of~$\Graph$
and any \kl{certificate-list assignment}~$\CertifListMap$
of $\Tuple{\Graph, \IdMap}$,
provided that~$\IdMap$ is at least $1$-\kl{locally unique}.
An \intro{execution} consists of
a sequence of synchronous \intro{communication rounds},
where
all \kl{nodes} start at the same time
and run their own copy of~$\Machine$.
In every \kl{round},
each \kl{node}~$\Node[1] \inG \Graph$
goes through three \intro{phases}:
(\ref{itm:receive})~it receives messages from its \kl{neighbors},
(\ref{itm:compute})~it performs local computations, and
(\ref{itm:send})~it sends messages to its \kl{neighbors}.
We now describe these \kl{phases} in~detail.

\begin{enumerate}
\item \label{itm:receive}
  In the first \kl{phase},
  the messages
  $\Message_1, \dots, \Message_{\Degree}
  \in \Set{0, 1}^\KleeneStar$
  that~$\Node[1]$ receives from its \kl{neighbors}
  $\Node[2]_1, \dots, \Node[2]_{\Degree}$
  are concatenated using the \kl{separator}~$\Separator$
  (including a trailing~$\Separator$)
  and written on~$\Node[1]$'s \kl{receiving tape}.
  Any previous \kl[tape content]{content} is discarded
  so that the new \kl[tape content]{content}
  of the \kl{receiving tape} is the string
  $\Message_1 \, \Separator \cdots \Separator \, \Message_{\Degree} \, \Separator$.
  In particular,
  if we are in the first \kl{round},
  the \kl[tape content]{content} is~$\Separator^{\Degree}$,
  which indicates
  that~$\Node[1]$ has not yet received any (nonempty) messages.
  In later \kl{rounds},
  $\Message_1, \dots, \Message_{\Degree}$
  correspond to the messages
  that were sent by the \kl{neighbors} in the previous \kl{round},
  sorted in ascending \kl{identifier order}.
  That is,
  we assume
  $\IdMap(\Node[2]_1) \SmallerThan \dots \SmallerThan
   \IdMap(\Node[2]_{\Degree})$.
\item \label{itm:compute}
  In the second \kl{phase},
  $\Node[1]$'s copy of~$\Machine$ behaves like
  a standard Turing machine with three \kl{tapes}.
  The \kl{receiving tape} is initialized as stated above,
  while the \kl{sending tape} is initially empty,
  meaning that any \kl[tape content]{content}
  from the previous \kl{round} is erased.
  In case we are in the first \kl{round},
  the \kl{internal tape} is initialized to the string
  $\Labeling{\Graph}(\Node[1]) \, \Separator \,
   \IdMap(\Node[1])   \, \Separator \,
   \CertifListMap(\Node[1])$,
  i.e., the \kl{node} gets a copy of
  its \kl{label}, \kl{identifier}, and \kl{certificates}.
  Otherwise,
  the \kl[tape content]{content} of the \kl{internal tape}
  remains the same as at the end of the previous \kl{round}.
  Now,
  if the \kl{machine} ended up in \kl{state}~$\StopState$
  in the previous \kl{round},
  then it remains in that \kl{state}
  and immediately goes to \kl{phase}~\ref{itm:send}.
  Otherwise,
  it starts its local computation in \kl{state}~$\StartState$
  with all three tape heads on the leftmost cell of the tapes,
  and then goes through a sequence of \intro{computation steps}.
  In each \kl{step},
  depending on the current \kl{state}
  and the symbols currently scanned on the three \kl{tapes},
  the \kl{transition function}~$\TransFunc$
  tells~$\Machine$ how to update its \kl{state}
  and the symbols on the \kl{tapes},
  and also in which directions to move the three tape heads.
  The local computation halts
  as soon as the \kl{machine} reaches one of the \kl{states}
  $\PauseState$ or~$\StopState$.
\item \label{itm:send}
  In the third \kl{phase},
  the messages
  $\Message'_1, \dots, \Message'_{\Degree}
  \in \Set{0, 1}^\KleeneStar$
  sent to the \kl{neighbors}
  $\Node[2]_1, \dots, \Node[2]_{\Degree}$
  correspond to the first~$\Degree$ bit strings
  stored on the \kl{sending tape},
  using the symbol~$\Separator$ as a separator
  and ignoring any~$\Blank$'s.
  The order of the \kl{neighbors}
  is the same as in \kl{phase}~\ref{itm:receive},
  i.e., the ascending order of \kl{identifiers}.
  In case there are not enough messages on the \kl{sending tape},
  the missing ones default to the empty string.
  In particular,
  if~$\Node[1]$ has already reached \kl{state}~$\StopState$
  in the previous \kl{round},
  then its \kl{sending tape} remains empty,
  so all \kl{neighbors} receive an empty message from~$\Node[1]$.
\end{enumerate}

\AP
The \kl{execution} \intro{terminates}
in $\Radius \in \Positives$ \kl{rounds}
if all \kl{nodes} have reached \kl{state}~$\StopState$
by the end of the~$\Radius$-th \kl{round}.
Note that this implies that
the local computations of all \kl{nodes} halt in every \kl{round}.
Throughout this paper,
we will restrict ourselves to \kl{distributed Turing machines}
whose \kl{executions} \kl{terminate} on every \kl{graph}
under all \kl[identifier assignments]{identifier}
and \kl{certificate-list assignments}.

\subparagraph*{Result and decision.}

\AP
The \intro{result}
$\intro*\Result{\Machine}{\Graph, \IdMap, \CertifListMap}$
computed by~$\Machine$
on \kl{graph}~$\Graph$
under \kl{identifier assignment}~$\IdMap$
and \kl{certificate-list assignment}~$\CertifListMap$
is the \kl{graph}~$\Graph'$
whose \kl{nodes} and \kl{edges} are the same as those of~$\Graph$,
and whose \kl{labeling} function~$\Labeling{\Graph'}$
assigns to each \kl{node}~$\Node$
the bit string written on $\Node$'s \kl{internal tape}
after $\Machine$'s \kl{execution} has \kl{terminated}.
To guarantee that this is indeed a bit string,
any symbols other than $0$ and $1$ are ignored.
In case we do not need any \kl{certificate assignments},
we simply write
$\reintro*\Result{\Machine}{\Graph, \IdMap}$
to denote the \kl{result}
computed by~$\Machine$
on~$\Graph$ under~$\IdMap$
and the trivial \kl{certificate-list assignment}
that assigns the empty string to every \kl{node} of~$\Graph$.

\AP
A \kl{distributed Turing machine} can act as
a consensus-based decision procedure,
where all \kl{nodes} must agree
in order for a given input to be accepted.
More precisely,
when \kl{executing}~$\Machine$
on~$\Graph$ under~$\IdMap$ and~$\CertifListMap$,
the individual \intro{verdict} of \kl{node} $\Node \in \Graph$
is the string~$\String$
with which $\Node$ is \kl{labeled}
in the \kl{result}~$\Result{\Machine}{\Graph, \IdMap, \CertifListMap}$.
We say that
$\Node$~\intro{accepts}
in~$\Result{\Machine}{\Graph, \IdMap, \CertifListMap}$
if $\String = 1$,
and that $\Node$~\intro{rejects}
in~$\Result{\Machine}{\Graph, \IdMap, \CertifListMap}$
otherwise.
Based on that,
$\Graph$ is \reintro{accepted} by~$\Machine$
under~$\IdMap$ and~$\CertifListMap$,
written
$\Result{\Machine}{\Graph, \IdMap, \CertifListMap} \equiv
 \intro*\Accept$,
if every \kl{node} \kl{accepts}
in~$\Result{\Machine}{\Graph, \IdMap, \CertifListMap}$.
Conversely,
$\Graph$ is \reintro{rejected} by~$\Machine$
under~$\IdMap$ and~$\CertifListMap$
if at least one \kl{node} \kl{rejects}
in~$\Result{\Machine}{\Graph, \IdMap, \CertifListMap}$.

\subparagraph*{Running time.}

In order to measure the running time of \kl{distributed Turing machines},
we will use two different metrics:
\kl{round time},
which corresponds to the number of \kl{communication rounds}
in an \kl{execution},
and \kl{step time},
which gives the number of \kl{computation steps}
made by a single \kl{node} in one \kl{round}.

\AP
More precisely,
for any \kl{execution} of a \kl{distributed Turing machine}~$\Machine$,
the \intro{round running time}
is the number of \kl{rounds}
until all \kl{nodes} have reached \kl{state}~$\StopState$.
Given some function $f \colon \Naturals \to \Naturals$,
we say that~$\Machine$ runs in \kl{round time}~$f$
if the \kl{round running time} is bounded
by~$f$ with respect to the \kl(graph){cardinality} of the \kl{graph}
on which~$\Machine$ is \kl{executed}.
This means that
for every \kl{graph}~$\Graph$,
every \kl{identifier assignment}~$\IdMap$ of~$\Graph$,
and every \kl{certificate-list assignment}~$\CertifListMap$
of $\Tuple{\Graph, \IdMap}$,
all \kl{nodes} of~$\Graph$ reach \kl{state}~$\StopState$
after at most~$f(\CardG{\Graph})$ \kl{rounds}
in the corresponding \kl{execution} of~$\Machine$.
Accordingly,
$\Machine$ runs in \intro{constant round time}
if this holds for some constant function~$f$.

\AP
On the other hand,
the \intro{step running time} of \kl{node}~$\Node$
in \kl{round} $\Round \in \Positives$
of an \kl{execution} of~$\Machine$
is the number of local \kl{computation steps}
that~$\Node$ makes during (\kl{phase}~\ref{itm:compute} of) \kl{round}~$\Round$.
For $f \colon \Naturals \to \Naturals$,
we say that~$\Machine$
runs in \kl{step time}~$f$
if in every \kl{execution},
the \kl{step running time}
of every \kl{node}~$\Node$ in every \kl{round}~$\Round$
is bounded
by~$f$ with respect to
the length of $\Node$'s initial \kl{tape contents} in \kl{round}~$\Round$.
This means that
if~$\Machine$ starts in \kl{state}~$\StartState$
with some arbitrary strings
$\String[1] \in \Set{0, 1, \Separator}^\KleeneStar$
and
$\String[2] \in \Set{0, 1, \Separator, \Blank}^\KleeneStar$
written on its \kl[receiving tape]{receiving} and \kl{internal tapes},
then $\Machine$ reaches $\PauseState$~or~$\StopState$
after at most
$f(\Length{\String[1]} + \Length{\String[2]})$
\kl{steps}.
Accordingly,
$\Machine$~runs in \intro{polynomial step time}
if this holds for some polynomial function~$f$.

\AP
A \intro{locally polynomial machine} is a \kl{distributed Turing machine}
that runs in \kl{constant round time} and \kl{polynomial step time}.

\subparagraph*{Arbiters and the locally polynomial hierarchy.}

\AP
As explained in Section~\ref{ssec:overview-complexity-classes},
each \kl{graph property}~$\Property$
in the \kl{locally polynomial hierarchy}
corresponds to a game between two players:
\intro(certificate){Eve},
who tries to prove that
a given \kl{graph}~$\Graph$~has \kl(graph){property}~$\Property$,
and \intro(certificate){Adam},
who tries to prove the opposite.
The players take turns
labeling the \kl{nodes} of~$\Graph$ with \kl{certificates},
which serve as
proofs (in \kl(certificate){Eve}'s case)
and
counterproofs (in~\kl(certificate){Adam}'s case).
After a fixed number~$\Level$ of moves,
the winner is determined by a \kl{locally polynomial machine}~$\Machine$,
and the \kl{graph}~$\Graph$ has \kl(graph){property}~$\Property$
if and only if \kl(certificate){Eve} has a winning strategy in this game.
Depending on who makes the first move,
$\Property$~is classified
as a~$\SigmaLP{\Level}$-\kl(graph){property}
(if~\kl(certificate){Eve} starts)
or a~$\PiLP{\Level}$-\kl(graph){property}
(if~\kl(certificate){Adam}~starts).

\AP
Formally,
\label{def:locally-polynomial-hierarchy}%
we represent
\kl(certificate){Eve}'s and \kl(certificate){Adam}'s choices
by quantifying existentially and universally,
respectively,
over the \kl{certificate assignments} chosen by each player.
More precisely,
for $\Level \in \Naturals$,
a \kl{graph property}~$\Property$ belongs to~$\intro*\SigmaLP{\Level}$
if there exists
a \kl{locally polynomial machine}~$\Machine$,
constants $\IdentRadius, \CertifRadius \in \Positives$,
and a polynomial function~$\CertifPolynomial$
such that the following equivalence holds
for every \kl{graph}~$\Graph$
and every $\IdentRadius$-\kl{locally unique}
\kl{identifier assignment}~$\IdMap$~of~$\Graph$:
\begin{equation*}
  \Graph \in \Property
  \; \iff \;
  \exists \CertifMap_1 \,
  \forall \CertifMap_2
  \dots
  \Quantifier \CertifMap_{\Level}:
  \Result{\Machine}{
    \Graph, \,
    \IdMap, \,
    \CertifMap_1 \CertifConcat
    \CertifMap_2 \CertifConcat \dots \CertifConcat
    \CertifMap_{\Level}
  } \equiv
  \Accept,
\end{equation*}
where
$\Quantifier$ is~$\forall$ if $\Level$ is even
and $\exists$ otherwise,
and all quantifiers range over
$\Tuple{\CertifRadius, \CertifPolynomial}$-\kl{bounded certificate assignments}
of $\Tuple{\Graph, \IdMap}$.
We say that $\Machine$~\intro{arbitrates}~$\Property$
with respect to~$\SigmaLP{\Level}$
and call it
a $\SigmaLP{\Level}$-\intro{arbiter} for~$\Property$
under $\IdentRadius$-\kl{locally unique} \kl{identifiers}
and $\Tuple{\CertifRadius, \CertifPolynomial}$-\kl{bounded certificates}.

\AP
The class~$\intro*\PiLP{\Level}$
and the notion of $\PiLP{\Level}$-\reintro{arbiters}
are defined analogously,
with the only difference that
quantifier alternation starts with a universal quantifier
instead of an existential one.
That is,
for $\PiLP{\Level}$,
we modify the above equivalence
to read
“$\forall \CertifMap_1 \,
  \exists \CertifMap_2
  \dots
  \Quantifier \CertifMap_{\Level}$”,
where
$\Quantifier$ is~$\exists$ if $\Level$ is even
and $\forall$ otherwise.
We refer to the family of classes
$\Set{\SigmaLP{\Level}\!, \PiLP{\Level}}_{\Level \in \Naturals}$
as the \intro{locally polynomial hierarchy}.

Note that the \kl{certificate assignments}
$\CertifMap_1, \dots, \CertifMap_{\Level}$
may depend on the \kl{identifier assignment}~$\IdMap$.
Moreover,
the individual \kl{verdict} of a single \kl{node}
may vary depending on
the \kl{identifiers} and \kl{certificates} in its \kl{neighborhood}.
However,
$\Graph$'s membership in~$\Property$
(and thus whether \kl(certificate){Eve} has a winning strategy)
must be independent of the particular \kl{identifier assignment}.

\AP
Two classes at the lowest levels of the hierarchy
are of particular interest:
$\intro*\LP = \SigmaLP{0}$
(for \emph{locally polynomial time})
is the class of \kl{graph properties}
that can be \intro{decided} by a \kl{locally polynomial machine},
and
$\intro*\NLP = \SigmaLP{1}$
(for \emph{nondeterministic locally polynomial time})
is the class of \kl{graph properties}
that can be \intro{verified} by a \kl{locally polynomial machine}.
Accordingly,
$\SigmaLP{0}$-\kl{arbiters} and $\SigmaLP{1}$-\kl{arbiters}
are also called
$\LP$-\intro{deciders} and $\NLP$-\intro{verifiers},
respectively.

\subparagraph*{Complement hierarchy.}

\AP
The \intro{complement class}
of a class~$\Class$ of \kl{graph properties}
is the class
$\SetBuilder{\Complement{\Property}}{\Property \in \Class}$,
where
$\Complement{\Property}$~denotes
the \intro{complement} of a \kl{graph property}~$\Property$, i.e.,
$\intro*\Complement{\Property} = \GRAPH \setminus \Property$.
For $\Level \in \Naturals$,
we denote the \kl{complement classes} of~$\SigmaLP{\Level}$ and~$\PiLP{\Level}$
by~$\intro*\coSigmaLP{\Level}$ and~$\intro*\coPiLP{\Level}$,
and also often denote
$\coSigmaLP{0}$ and~$\coSigmaLP{1}$
by~$\intro*\coLP$ and~$\intro*\coNLP$,
respectively.
As we shall see in Corollary~\ref{cor:complement-classes},
classes on the same level of the \kl{locally polynomial hierarchy}
are neither \kl{complement classes} of each other,
nor are they closed under \kl{complementation},
so it makes sense to consider their \kl{complement classes} in their own right.
We will refer to the family of classes
$\Set{\coSigmaLP{\Level}\!, \coPiLP{\Level}}_{\Level \in \Naturals}$
as the \intro{complement hierarchy} of the \kl{locally polynomial hierarchy}.

\subparagraph*{Connection to standard complexity classes.}

\AP
On \kl{single-node graphs},
\kl{distributed Turing machines}
are equivalent to standard Turing machines
that take as input
the \kl{label} and \kl{certificates} of the unique \kl{node}.
The \kl{node}'s \kl{identifier} is irrelevant
and can therefore be assumed empty,
so the condition of the \kl{certificates}
being $\Tuple{\CertifRadius, \CertifPolynomial}$-\kl(certificate){bounded}
reduces to them
being polynomially bounded in the length of the \kl{label}.
Hence,
by restricting the classes $\SigmaLP{\Level}$~and~$\PiLP{\Level}$ to~$\NODE$
and identifying \kl{single-node graphs} with strings,
we obtain the corresponding classes
$\intro*\SigmaP{\Level}$~and~$\intro*\PiP{\Level}$
of the classical \intro{polynomial hierarchy}
introduced by Meyer and Stockmeyer~\cite{DBLP:conf/focs/MeyerS72}
(see, e.g.,~\cite[\S\,5.2]{DBLP:books/daglib/0023084}).
In particular,
$\intro*\PTIME = \LP\On\NODE = \coLP\On\NODE$
and
\phantomintro{\coNP}%
$\intro*\NP = \NLP\On\NODE = \coPiLP{1}\On\NODE$.

%%% Local Variables:
%%% mode: latex
%%% TeX-master: "../lph-paper"
%%% End:

\section{Logic with bounded quantifiers}
\label{sec:logic}

We now introduce a logical formalism
that will provide a purely syntactic characterization
of most of the complexity classes defined in the previous section.
This characterization will be presented in Section~\ref{sec:fagin}.

%–––––––––––––––––––––––––––––––––––––––––––––––––––––––––––––––––––––––––––––––
\subsection{Definitions}
\label{sec:logic-definitions}

We begin with the necessary formal definitions,
and then illustrate them with a series of examples
in Section~\ref{ssec:example-formulas},
using standard \kl{graph properties}
such as $3$-\kl{colorability} and \kl{Hamiltonicity}.
The reader may wish to skip ahead to the examples
and refer back to this subsection as needed.

\subparagraph*{Variables and interpretations.}

\AP
Let~$\intro*\FOVarSet$
be an infinite supply of \intro{first-order variables}
and
$\intro*\SOVarSet = {\bigcup_{\Arity \geq 1}}\SOVarSet[\Arity]$
be an infinite supply of \intro{second-order variables},
also called \reintro{relation variables},
where~$\reintro*\SOVarSet[\Arity]$ contains
the \kl{second-order variables} of \intro{arity}~$\Arity$
and $\SOVarSet[\Arity] \cap \SOVarSet[\Arity'] = \EmptySet$
for~$\Arity \neq \Arity'$.
We collectively refer to
the elements of~$\FOVarSet$ and~$\SOVarSet$
as \intro{variables}.

\AP
A \intro{variable assignment}~$\Assignment$
of a set of \kl{variables}
$\VarSet \subseteq \FOVarSet \cup \SOVarSet$
on a \kl{structure}~$\Struct$
is a function that maps
each \kl{first-order variable} of~$\VarSet$
to an \kl{element} of~$\Domain{\Struct}$
and
each \kl{second-order variable} of~$\VarSet$
to a relation of matching \kl{arity} over~$\Domain{\Struct}$.
The value~$\Assignment(\SOVar)$
assigned to a \kl{variable}~$\SOVar$
is called the \intro{interpretation} of~$\SOVar$ under~$\Assignment$.
We sometimes write
$\intro*\Version{\Assignment}{\SOVar}{\Relation}$
to denote the \kl{variable assignment}
that is identical to~$\Assignment$
except for mapping~$\SOVar$ to~$\Relation$.
Moreover,
if $\Assignment$~is irrelevant or clear from context,
we may also omit it to simplify the exposition,
and refer directly to~$\SOVar$ as an \kl{element} or a relation
when we really mean~$\Assignment(\SOVar)$.

\subparagraph*{Syntax and semantics.}

To avoid repetitions,
we first define the syntax and semantics
of a generalized class of \kl{logical formulas},
and then specify which particular subclasses we are interested~in.

\AP
Table~\ref{tab:syntax-semantics} shows how
\intro{logical formulas}, or simply \reintro{formulas},
are built up inductively
(in the first column),
and what they mean
(in~the third column).
It also specifies the set $\intro*\Free(\Formula)$
of \kl{variables} that occur \intro{freely}
in a given \kl{formula}~$\Formula$,
i.e., outside the scope of any \kl{quantifier}.
When we need to distinguish between
\kl[first-order variables]{first-order} and \kl{second-order variables},
we use the notations
$\intro*\FreeFO(\Formula) = \Free(\Formula) \cap \FOVarSet$
and
$\intro*\FreeSO(\Formula) = \Free(\Formula) \cap \SOVarSet$.
If $\Free(\Formula) = \EmptySet$,
then $\Formula$~is called a~\intro{sentence}.

\newcounter{rowcount}
\NewCommand{\RowLabel}[m]{%
  \refstepcounter{rowcount}\textit{\therowcount.}\label{#1}}
\begin{table}[htb]
  \phantomintro{\BitTrue}%
  \phantomintro{\Linked}%
  \phantomintro{\Equal}%
  \phantomintro{\InRel}%
  \phantomintro{\NOT}%
  \phantomintro{\OR}%
  \phantomintro{\Exists}%
  \phantomintro{\ExistsNb}%
  \phantomintro{\ExistsRel}%
  \centering
  \caption{Syntax and semantics of all logics considered in this paper.}
  \label{tab:syntax-semantics}
  \begin{tabular}{llll}
    \toprule
    & \textit{Syntax} & \textit{\kl{Free} \kl{variables}} & \textit{Semantics}
      \\
    & \kl{Formula} $\Formula[2]$
      & Set $\Free(\Formula[2])$
      & Necessary and sufficient condition for
        $\Struct, \Assignment \Satisfies \Formula[2]$
      \\
    \midrule \addlinespace
    \RowLabel{row:bit-set}
      & $\reintro*\BitTrue{i}{\FOVar[1]}$
      & $\Set{\FOVar[1]}$
      & $\Assignment(\FOVar[1]) \in \BitSet{i}{\Struct}$
      \\ \addlinespace
    \RowLabel{row:adjacency}
      & $\FOVar[1] \reintro*\Linked{i} \FOVar[2]$
      & $\Set{\FOVar[1], \FOVar[2]}$
      & $\Assignment(\FOVar[1]) \LinkRel{i}{\Struct} \Assignment(\FOVar[2])$
      \\ \addlinespace
    \RowLabel{row:equality}
      & $\FOVar[1] \reintro*\Equal \FOVar[2]$
      & $\Set{\FOVar[1], \FOVar[2]}$
      & $\Assignment(\FOVar[1]) = \Assignment(\FOVar[2])$
      \\ \addlinespace
    \RowLabel{row:so-variable}
      & $\reintro*\InRel{\SOVar}{\FOVar[1]_1,\dots,\FOVar[1]_{\Arity}}$
      & $\Set{\SOVar, \FOVar[1]_1, \dots, \FOVar[1]_{\Arity}}$
      & $\bigTuple{\Assignment(\FOVar[1]_1), \dots, \Assignment(\FOVar[1]_{\Arity})}
         \in \Assignment(\SOVar)$
      \\ \addlinespace
    \RowLabel{row:negation}
      & $\reintro*\NOT \Formula[1]$
      & $\Free(\Formula[1])$
      & not\, $\Struct, \Assignment \Satisfies \Formula[1]$
      \\ \addlinespace
    \RowLabel{row:disjunction}
      & $\Formula[1]_1 \reintro*\OR \Formula[1]_2$
      & $\Free(\Formula[1]_1) \cup \Free(\Formula[1]_2)$
      & $\Struct, \Assignment \Satisfies \Formula[1]_1$
        \:or\:
        $\Struct, \Assignment \Satisfies \Formula[1]_2$
      \\ \addlinespace
    \RowLabel{row:fo-quantification}
      & $\reintro*\Exists{\FOVar[1]} \, \Formula[1]$
      & $\Free(\Formula[1]) \setminus \Set{\FOVar[1]}$
      & $\Struct, \Version{\Assignment}{\FOVar[1]}{\Element}
         \Satisfies \Formula[1]$\:
        for some $\Element \in \Domain{\Struct}$
      \\ \addlinespace
    \RowLabel{row:bf-quantification}
      & $\underbrace{%
          \reintro*\ExistsNb{\FOVar[1]}{\FOVar[2]} \; \Formula[1]
        }_{\mathclap{\text{where $\FOVar[1] \neq \FOVar[2]$}}}$
      & $\Set{\FOVar[2]} \cup \Free(\Formula[1]) \setminus \Set{\FOVar[1]}$
      & $\Struct, \Version{\Assignment}{\FOVar[1]}{\Element}
         \Satisfies \Formula[1]$\:
        for some $\Element \in \Domain{\Struct}$
        s.t.\ $\Element \NeighborRel{\Struct} \! \Assignment(\FOVar[2])$
      \\ \addlinespace
    \RowLabel{row:so-quantification}
      & $\reintro*\ExistsRel{\SOVar} \, \Formula[1]$
      & $\Free(\Formula[1]) \setminus \Set{\SOVar}$
      & $\Struct, \Version{\Assignment}{\SOVar}{\Relation}
         \Satisfies \Formula[1]$\:
        for some $\Relation \subseteq (\Domain{\Struct})^{\Arity}$
      \\ \addlinespace
    \midrule
    \multicolumn{4}{l}{
      Here,\,
      $i, \Arity \in \Positives$,\;
      $\FOVar[1], \FOVar[1]_1, \dots, \FOVar[1]_{\Arity}, \FOVar[2]
      \in \FOVarSet$,\;
      $\SOVar \in \SOVarSet[\Arity]$,\,
      and\,
      $\Formula[1], \Formula[1]_1, \Formula[1]_2$ are \kl{formulas}.}
      \\
    \bottomrule
  \end{tabular}
\end{table}

\AP
The truth of a \kl{formula}~$\Formula$ can be evaluated
on a \kl{structure}~$\Struct$ of \kl{signature} $\Tuple{m, n}$
under a \kl{variable assignment}~$\Assignment$ of $\Free(\Formula)$ on~$\Struct$,
provided that
$\Formula$~does not contain any expressions of the form
$\BitTrue{i}{\FOVar[1]}$ or
$\FOVar[1] \Linked{j} \FOVar[2]$
for $i > m$ and~$j > n$.
Assuming this basic requirement is met,
the third column of Table~\ref{tab:syntax-semantics}
specifies in which cases
$\Struct$~\intro{satisfies} $\Formula$ under~$\Assignment$,
written $\Struct, \Assignment \intro*\Satisfies \Formula$.
If $\Formula$~is a \kl{sentence},
$\Assignment$~is irrelevant,
so we simply say that
$\Struct$~\reintro{satisfies} $\Formula$
and write $\Struct \reintro*\Satisfies \Formula$.
The \kl(structure){property}
\intro{defined} by a \kl{sentence}~$\Formula$
on a class of \kl{structures}~$\BaseProperty$
is the set
$\SetBuilder{\Struct \in \BaseProperty}{\Struct \Satisfies \Formula}$.

\AP
Lines~\ref{row:bit-set} to~\ref{row:so-variable}
of Table~\ref{tab:syntax-semantics}
correspond to \intro{atomic} \kl{formulas}.
An \kl{atomic} \kl{formula} of the form
$\BitTrue{i}{\FOVar[1]}$ or
$\FOVar[1] \Linked{i} \FOVar[2]$
refers to the corresponding
set~$\BitSet{i}{\Struct}$ or binary relation~$\LinkRel{i}{\Struct}$
given by the \kl{structure}~$\Struct$,
while an \kl{atomic} \kl{formula} of the form
$\InRel{\SOVar}{\FOVar[1]_1,\dots,\FOVar[1]_{\Arity}}$
refers to an additional relation~$\Assignment(\SOVar)$
given by the \kl{variable assignment}~$\Assignment$.
Lines~\ref{row:negation} and~\ref{row:disjunction}
describe the usual \intro{Boolean connectives},
and the remaining lines correspond to \intro{quantifiers} over different scopes:
\intro{first-order quantification}
on lines~\ref{row:fo-quantification} and~\ref{row:bf-quantification}
ranges over \kl{elements},
and
\intro{second-order quantification}
on line~\ref{row:so-quantification}
ranges over relations.

\AP
Of particular interest for this paper is
the \intro(quantifier){bounded} version of \kl{first-order quantification}
shown on line~\ref{row:bf-quantification}.
Intuitively,
$\ExistsNb{\FOVar[1]}{\FOVar[2]} \; \Formula[1]$
can be read as
“there exists an \kl{element}~$\FOVar[1]$ connected to~$\FOVar[2]$
such that $\Formula[1]$~is \kl{satisfied}”.
Here,
“connected” means that
the \kl{elements} of~$\Struct$ represented by~$\FOVar[1]$ and~$\FOVar[2]$
are related by some relation~$\LinkRel{i}{\Struct}$ or its inverse.
Thus,
\kl(quantifier){bounded} \kl{first-order quantification}
is relative to an already fixed \kl{element},
represented here by the \kl{free} \kl{variable}~$\FOVar[2]$.

\subparagraph*{Syntactic sugar.}

\AP
By nesting \kl(quantifier){bounded} \kl{first-order quantifiers},
we can \kl{quantify} over \kl{elements}
that lie within a given distance $\Radius \in \Naturals$
from the fixed \kl{element}.
To simplify this,
we introduce the shorthand notation
$\intro*\ExistsLoc{\FOVar[1]}{\Radius}{\FOVar[2]} \; \Formula[1]$,
which is defined inductively as follows,
for any $\FOVar[1], \FOVar[2] \in \FOVarSet$ and \kl{formula}~$\Formula[1]$:
\begin{align*}
  \ExistsLoc{\FOVar[1]}{0}{\FOVar[2]} \; \Formula[1]
  &\quad \text{is equivalent to} \quad
  \Substitution{\Formula[1]}{\FOVar[1]}{\FOVar[2]},
  \quad \text{and} \\
  \ExistsLoc{\FOVar[1]}{\Radius + 1}{\FOVar[2]} \; \Formula[1]
  &\quad \text{is equivalent to} \quad
  \ExistsLoc{\FOVar[1]}{\Radius}{\FOVar[2]} \,
  \bigl(
    \Formula[1] \,\OR\,
    \ExistsNb{\FOVar[1]'}{\FOVar[1]} \;
    \Substitution{\Formula[1]}{\FOVar[1]}{\FOVar[1]'}
  \bigr),
\end{align*}
\AP
where
$\intro*\Substitution{\Formula[1]}{\FOVar[1]}{\FOVar[2]}$
denotes the \kl{formula} obtained from~$\Formula[1]$
by substituting every \kl{free} occurrence of~$\FOVar[1]$ by~$\FOVar[2]$,
and~$\FOVar[1]'$ is a fresh \kl{first-order variable}
that does not occur in~$\Formula[1]$.

\AP
For additional convenience,
we will make liberal use of
truth constants
(i.e., $\intro*\False$, $\intro*\True$)
and the remaining operators of predicate logic
\phantomintro{\ForAllLoc}%
\phantomintro{\ForAllRel}%
(i.e., $\intro*\AND$, $\intro*\IMP$, $\intro*\IFF$, $\intro*\ForAll{}$),
use shorthand notations such as
$\FOVar[1] \intro*\NotEqual \FOVar[2]$,
and we may leave out some parentheses,
assuming that~$\OR$ and $\AND$
take precedence over~$\IMP$ and~$\IFF$.
Moreover,
a sequence consisting
solely of \intro{existential}~($\Exists{}$) or
solely of \intro{universal}~($\ForAll{}$) \kl{quantifiers}
may be combined into
a single \kl{quantifier} that binds a tuple of \kl{variables}.
For instance,
we may write
$\ForAllRel{\Vector{\SOVar}} \, \Formula$
instead of
$\ForAllRel{\SOVar_1} \dots \ForAllRel{\SOVar_n} \, \Formula$,
where
$\Vector{\SOVar} = \Tuple{\SOVar_1, \dots, \SOVar_n}$.

\subparagraph*{Formulas expressing relations.}

\AP
Given a \kl{formula}~$\Formula$
with
$\FreeFO(\Formula) = \Set{\FOVar[1]_1, \dots, \FOVar[1]_n}$,
we often write
$\Formula\intro*\Of{\FOVar[1]_1, \dots, \FOVar[1]_n}$
instead of simply~$\Formula$
to convey the intention that $\Formula$~expresses
some relation between the \kl{elements} represented by
$\FOVar[1]_1, \dots, \FOVar[1]_n$.
If we then want to express that the same relation holds
between some other \kl{variables}
$\FOVar[2]_1, \dots, \FOVar[2]_n$
that do not occur in the scope of any \kl{quantifier} in~$\Formula$,
we write
$\Formula\reintro*\Of{\FOVar[2]_1, \dots, \FOVar[2]_n}$
to denote the \kl{formula} obtained from~$\Formula$
by simultaneously replacing all \kl{free} occurrences of
$\FOVar[1]_1, \dots, \FOVar[1]_n$
by
$\FOVar[2]_1, \dots, \FOVar[2]_n$,
respectively.

\subparagraph*{Fragments of first-order logic.}

\AP
For our purposes,
the class~$\intro*\FOL$ of \kl{formulas} of \intro{first-order logic}
is generated by the grammar%
\begin{equation}
  \Formula \Coloneqq \BitTrue{i}{\FOVar[1]}
                \mid \FOVar[1] \Linked{i} \FOVar[2]
                \mid \FOVar[1] \Equal \FOVar[2]
                \mid \InRel{\SOVar}{\FOVar[1]_1,\dots,\FOVar[1]_{\Arity}}
                \mid \NOT \Formula
                \mid \Formula \OR \Formula
                \mid \Exists{\FOVar[1]} \, \Formula,
  \label{eq:first-order-logic} \tag{$\FOL$}
\end{equation}
where
$i, \Arity \in \Positives$,\,
$\FOVar[1], \FOVar[1]_1, \dots, \FOVar[1]_{\Arity}, \FOVar[2]
 \in \FOVarSet$,
and $\SOVar \in \SOVarSet[\Arity]$.

\AP
The class~$\intro*\BFL$ of \kl{formulas} of
the \intro{bounded fragment} of \kl{first-order logic}
is defined by a similar grammar,
the only difference being
that \kl{first-order quantification} is \kl(quantifier){bounded}:
\begin{equation}
  \Formula \Coloneqq \BitTrue{i}{\FOVar[1]}
                \mid \FOVar[1] \Linked{i} \FOVar[2]
                \mid \FOVar[1] \Equal \FOVar[2]
                \mid \InRel{\SOVar}{\FOVar[1]_1,\dots,\FOVar[1]_{\Arity}}
                \mid \NOT \Formula
                \mid \Formula \OR \Formula
                \mid \underbrace{%
                       \ExistsNb{\FOVar[1]}{\FOVar[2]} \; \Formula
                     }_{\text{where $\FOVar[1] \neq \FOVar[2]$}}
  \label{eq:bounded-fragment} \tag{$\BFL$}
\end{equation}

\AP
To give some basic examples,
when evaluated on (the \kl{structural representation} of) a \kl{graph},
the following $\BFL$-\kl{formulas} state that
the \kl{element} represented by the \kl{first-order variable}~$\FOVar[1]$
corresponds
to a \kl{node},
to a \kl{labeling bit} of value $0$, and
to a \kl{labeling bit} of value $1$,
respectively:
\phantomintro{\IsNode}%
\phantomintro{\IsBit}%
\begin{align*}
  \reintro*\IsNode\Of{\FOVar[1]} =
  \NOT \ExistsNb{\FOVar[2]}{\FOVar[1]} \, (\FOVar[2] \Linked{2} \FOVar[1])
  \qquad
  \reintro*\IsBit{0}\Of{\FOVar[1]} &=
  \NOT \IsNode\Of{\FOVar[1]} \AND \NOT \BitTrue{1}{\FOVar[1]}
  \\
  \reintro*\IsBit{1}\Of{\FOVar[1]} &=
  \NOT \IsNode\Of{\FOVar[1]} \AND \BitTrue{1}{\FOVar[1]}
\end{align*}
\AP
The first \kl{formula} is particularly useful,
as we will often restrict \kl{quantification} to \kl{nodes}.
To simplify this,
we introduce the notation
\phantomintro{\ForAllNode}%
$\intro*\ExistsNode{\FOVar[1]} \; \Formula$
to abbreviate
$\Exists{\FOVar[1]} \, (\IsNode\Of{\FOVar[1]}  \AND \Formula)$,
and
\phantomintro{\ExistsNbNode}%
\phantomintro{\ForAllNbNode}%
\phantomintro{\ForAllLocNode}%
$\intro*\ExistsLocNode{\FOVar[1]}{\Radius}{\FOVar[2]} \; \Formula$
to abbreviate
$\ExistsLoc{\FOVar[1]}{\Radius}{\FOVar[2]} \,
 (\IsNode\Of{\FOVar[1]}  \AND \Formula)$,
and similarly for \kl{universal quantifiers}.

\AP
Since every $\BFL$-\kl{formula} contains
at least one \kl{free} \kl{first-order variable},
evaluating such a \kl{formula}
always requires a \kl{variable assignment}
that provides an \kl{element} as a “starting point”.
To consider \kl{structures} without \kl{variable assignments},
we introduce~$\intro*\LFOL$,
the class of \kl{formulas} of \intro{local first-order logic},
which are $\BFL$-\kl{formulas} prefixed by a single
\kl{universal} \kl{first-order quantifier}.
That is,
$\LFOL$ consists of \kl{formulas} of the form
$\ForAll{\FOVar} \, \Formula$,
where $\FOVar \in \FOVarSet$ and $\Formula \in \BFL$.

\subparagraph*{Second-order hierarchies.}

\AP
$\FOL$ and $\LFOL$ form the basis
of two hierarchies of alternating \kl{second-order quantifiers}.
The first,
called the \intro{second-order hierarchy},
starts with the base class
$\intro*\SigmaFOL{0} = \intro*\PiFOL{0} = \FOL$,
and continues for $\Level > 0$
with the classes
$\reintro*\SigmaFOL{\Level}$ and $\reintro*\PiFOL{\Level}$
that are obtained by prepending
blocks of \kl{existential} and \kl{universal} \kl{second-order quantifiers}
to \kl{formulas} of $\PiFOL{\Level - 1}$ and $\SigmaFOL{\Level - 1}$,
respectively.
That is,
$\SigmaFOL{\Level}$
consists of \kl{formulas} of the form
$\ExistsRel{\SOVar_1} \dots \ExistsRel{\SOVar_n} \, \Formula$,
where $\SOVar_1, \dots, \SOVar_n \in \SOVarSet$
and $\Formula \in \PiFOL{\Level - 1}$,
whereas
$\PiFOL{\Level}$
consists of \kl{formulas} of the form
$\ForAllRel{\SOVar_1} \dots \ForAllRel{\SOVar_n} \, \Formula$,
where~$\Formula \in \SigmaFOL{\Level - 1}$.

\AP
The other hierarchy,
called the \intro{local second-order hierarchy},
is defined the same way,
except that it starts with the base class~$\LFOL$ instead of~$\FOL$.
That is,
$\intro*\SigmaLFOL{0} = \intro*\PiLFOL{0} = \LFOL$,
and for $\Level > 0$,
the classes
$\reintro*\SigmaLFOL{\Level}$ and $\reintro*\PiLFOL{\Level}$
are obtained by prepending
blocks of \kl{existential} and \kl{universal} \kl{second-order quantifiers}
to \kl{formulas} of $\PiLFOL{\Level - 1}$ and $\SigmaLFOL{\Level - 1}$,
respectively.
Notice that it would \emph{not} be equivalent to define~$\PiLFOL{\Level}$ as
the set of negations of \kl{formulas} in~$\SigmaLFOL{\Level}$
because $\LFOL$~is not closed under negation.

\AP
As with the \kl{locally polynomial hierarchy},
it can be helpful to think of \kl{formulas}
of the (\kl[local second-order hierarchy]{local}) \kl{second-order hierarchy}
as a two-player game between \intro{Eve} and \intro{Adam},
who choose
the \kl{existentially} and \kl{universally} \kl{quantified} relations,
respectively.
From this point of view,
the referee of the game corresponds to the $\FOL$- or $\LFOL$-\kl{subformula}
nested inside the \kl{second-order quantifications},
and the whole \kl{formula} is \kl{satisfied} by the input \kl{structure}
precisely if \kl{Eve} has a winning strategy.

\AP
\intro{Second-order logic} is
the union of all classes of the \kl{second-order hierarchy},
and similarly
\intro{local second-order logic} is
the union of all classes of the \kl{local second-order hierarchy}.
The classes $\SigmaFOL{1}$ and~$\SigmaLFOL{1}$ will be referred to
as the the \intro{existential fragments}
of \kl{second-order logic} and \kl{local second-order logic},
respectively.

\subparagraph*{Classes of definable properties.}

\AP
For any class of \kl{formulas}
$\ClassL \in
 \Set{
   \SigmaFOL{\Level},
   \PiFOL{\Level},
   \SigmaLFOL{\Level},
   \PiLFOL{\Level}
 }$
with $\Level \in \Naturals$,
we use the corresponding boldface notation
$\Class \in
 \Set{
   \intro*\SigmaFO{\Level},
   \intro*\PiFO{\Level},
   \intro*\SigmaLFO{\Level},
   \intro*\PiLFO{\Level}
 }$
to denote the class of \kl{structure properties}
that can be \kl{defined} by a \kl{formula} of~$\ClassL$.
It is worth noting that
$\SigmaFO{\Level}\On{\NODE} = \SigmaLFO{\Level}\On{\NODE}$
and
$\PiFO{\Level}\On{\NODE} = \PiLFO{\Level}\On{\NODE}$,
since the distinction between
\kl(quantifier){bounded} and \kl(quantifier){unbounded} \kl{quantification}
is irrelevant on \kl{single-node graphs}.

%–––––––––––––––––––––––––––––––––––––––––––––––––––––––––––––––––––––––––––––––
\subsection{Example formulas}
\label{ssec:example-formulas}

\AP
We now show how to express a number of \kl{graph properties}
in \kl{local second-order logic},
starting with a very simple \kl(graph){property}
that the \kl{nodes} can check locally:
$\intro*\ALLSELECTED$,
the set of \kl{labeled graphs}
in which all \kl{nodes} are assigned the \kl{label}~$1$
(i.e., they are all “selected”).

\begin{example}
  \label{ex:all-selected}
  We can easily \kl{define}~$\ALLSELECTED$ on~$\GRAPH$
  with the $\LFOL$-\kl{formula}
  $\ForAllNode{\FOVar[1]} \, \IsSelected\Of{\FOVar[1]}$,
  where
  $\FOVar[1]$ is a \kl{first-order variable},
  and
  \begin{equation*}
    \intro*\IsSelected\Of{\FOVar[1]}
    \, = \,
    \ExistsNb{\FOVar[2]}{\FOVar[1]} \,
    \bigl(
      \IsBit{1}\Of{\FOVar[2]}
      \, \AND \,
      \NOT \ExistsNb{\FOVar[3]}{\FOVar[2]} \,
      (
        \FOVar[3] \Linked{1} \FOVar[2]
        \, \OR \,
        \FOVar[2] \Linked{1} \FOVar[3]
      )
    \bigr)
  \end{equation*}
  is a $\BFL$-\kl{formula} that states that
  the \kl{node} represented by~$\FOVar[1]$ is \kl{labeled} with the string~$1$.
  Here,
  the \kl{first-order variable}~$\FOVar[2]$
  is used to represent $\FOVar[1]$'s unique \kl{labeling bit}.
  \lipicsEnd
\end{example}

\AP
Next,
we more formally revisit $3$-\intro{colorability},
a \kl(graph){property}
for which we already sketched a \kl{formula}
in Example~\ref{ex:3-colorable-overview}.
For $k \in \Positives$,
the set $\intro*\COLORABLE{k}$ consists of
all \kl{graphs}~$\Graph$ for which there exists a function
$f \colon \NodeSet{\Graph} \to \Range{k}*$
such that
$f(\Node[1]) \neq f(\Node[2])$
for all $\Set{\Node[1], \Node[2]} \in \EdgeSet{\Graph}$.

\begin{example}
  \label{ex:3-colorable}
  We can \kl{define} $\COLORABLE{3}$ on~$\GRAPH$
  with the $\SigmaLFOL{1}$-\kl{formula}
  \begin{equation*}
    \ExistsRel{C_0, C_1, C_2} \;
    \ForAllNode{\FOVar[1]} \, \WellColored\Of{\FOVar[1]},
  \end{equation*}
  where
  $C_0$, $C_1$ and~$C_2$ are \kl{unary} \kl{second-order variables}
  intended to represent
  the sets of \kl{nodes} colored with $0$, $1$ and~$2$,
  respectively,
  $\FOVar[1]$ is a \kl{first-order variable},
  and
  \begin{equation*}
    \intro*\WellColored\Of{\FOVar[1]} =
    \Bigl( \,
      \BigOR_{\lalign{i \in \Range{3}*}} \InRel{C_i}{\FOVar[1]}
    \, \Bigr)
    \, \AND \,
    \Bigl( \,
      \BigAND_{\lalign{i,j \in \Range{3}*:\, i \neq j}}
      \NOT
      \bigl(
        \InRel{C_i}{\FOVar[1]}
        \AND
        \InRel{C_j}{\FOVar[1]}
      \bigr)
    \, \Bigr)
    \, \AND \:
    \ForAllNbNode{\FOVar[2]}{\FOVar[1]}
    \Bigl( \,
      \BigAND_{\lalign{i \in \Range{3}*}}
      \NOT
      \bigl(
        \InRel{C_i}{\FOVar[1]}
        \AND
        \InRel{C_i}{\FOVar[2]}
      \bigr)
    \, \Bigr)
  \end{equation*}
  states that
  the \kl{node} represented by~$\FOVar[1]$ is correctly colored.
  More precisely,
  the first two conjuncts express that
  $\FOVar[1]$~is assigned one color and one color only,
  while the third conjunct expresses that
  $\FOVar[1]$'s color is different from its \kl{neighbors}' colors.
  \lipicsEnd
\end{example}

\AP
To make things a little more challenging,
let us now consider the \kl{complement} of $\ALLSELECTED$,
which we denote by $\intro*\NOTALLSELECTED$.
This \kl(graph){property} is more difficult
to express in \kl{local second-order logic}.
In fact,
as we will see in the proof of Proposition~\ref{prp:colp-vs-nlp},
it is not $\SigmaLFOL{1}$-\kl{definable}.

\begin{example}
  \label{ex:not-all-selected}
  A straightforward way to \kl{define}~$\NOTALLSELECTED$ on~$\GRAPH$
  would be to negate the \kl{formula} from Example~\ref{ex:all-selected},
  yielding the $\FOL$-\kl{formula}
  $\ExistsNode{\FOVar[1]} \, \NOT \IsSelected\Of{\FOVar[1]}$.
  But this \kl{formula} does not belong to \kl{local second-order logic}
  because of the
  \kl(quantifier){unbounded} \kl{existential} \kl{first-order quantification}
  over~$\FOVar[1]$.
  To remedy this,
  we can rewrite it as an equivalent
  $\SigmaLFOL{3}$\nobreakdash-\kl{formula} $\Fml{ExistsUnselectedNode}$,
  which intuitively describes the following game:
  First,
  \kl{Eve} tries to cover the input \kl{graph}
  with a spanning forest
  whose roots include only unselected \kl{nodes}.
  She represents this forest by
  a \kl{binary} \kl{relation variable}~$P$,
  where
  $\InRel{P}{\FOVar[1], \FOVar[2]}$
  is intended to mean
  “the parent of~$\FOVar[1]$ is~$\FOVar[2]$”.
  If she succeeds,
  one should thus always reach an unselected \kl{node}
  by following parent pointers.
  Then,
  \kl{Adam} tries to disprove \kl{Eve}'s claim that $P$~represents a forest
  by showing that the relation contains a directed cycle.
  To do so,
  he chooses a subset~$X$ of \kl{nodes},
  and then asks \kl{Eve}
  to assign a charge (positive or negative) to each \kl{node}
  such that
  roots are positive,
  children outside~$X$ have the same charge as their parent, and
  children in~$X$ have the opposite charge of their parent.
  Now,
  if $P$~is cycle-free,
  then \kl{Eve} can charge the \kl{nodes} as requested
  by simply traversing the \kl{paths} of each tree from top to bottom,
  starting with a positive charge at the root,
  and inverting the charge every time she encounters a \kl{node} in~$X$.
  However,
  if $P$~contains a cycle,
  then \kl{Adam} can choose~$X$ to be a singleton set
  containing exactly one \kl{node} of the cycle.
  By doing so,
  he prevents \kl{Eve} from winning
  because she will either have to
  charge the \kl{node} in~$X$ like its parent,
  or charge another \kl{node} of the cycle differently than its parent.

  Formally,
  we represent the positive and negative charges
  by a \kl{unary} \kl{relation variable}~$Y$
  (interpreted as the set of positive \kl{nodes}),
  and write
  \begin{equation*}
    \Fml{ExistsUnselectedNode}
    \, = \,
    \ExistsRel{P} \:
    \ForAllRel{X} \:
    \ExistsRel{Y} \:
    \ForAllNode{\FOVar[1]} \,
    \bigl(
      \PointsTo{\NOT \IsSelected}\Of{\FOVar[1]}
    \bigr).
  \end{equation*}
  \AP
  Here,
  the \kl{subformula} $\PointsTo{\NOT \IsSelected}\Of{\FOVar[1]}$
  basically states that $\FOVar[1]$'s parent pointer
  points in the direction of an unselected \kl{node},
  assuming that both players play optimally
  and that \kl{Eve} wins the game described above.
  Since the same idea will be useful later
  for conditions other than $\NOT \IsSelected\Of{\FOVar[1]}$,
  we present this \kl{subformula} as a \kl{formula} schema
  that can be instantiated with
  any $\BFL$-\kl{formula}~$\Formula[3]\Of{\FOVar[1]}$:
  \begin{equation*}
    \intro*\PointsTo{\Formula[3]}\Of{\FOVar[1]}
    \, = \,
    \Fml{UniqueParent}\Of{\FOVar[1]}
    \, \AND \,
    \RootCase{\Formula[3]}\Of{\FOVar[1]}
    \, \AND \,
    \Fml{ChildCase}\Of{\FOVar[1]},
  \end{equation*}
  where
  \begin{equation*}
    \Fml{UniqueParent}\Of{\FOVar[1]}
    \, = \,
    \ExistsLocNode{\FOVar[2]}{1}{\FOVar[1]} \:\!
    \Bigl(
      \InRel{P}{\FOVar[1], \FOVar[2]}
      \, \AND \,
      \ForAllLocNode{\FOVar[3]}{1}{\FOVar[1]} \:\!
      \bigl(
        \InRel{P}{\FOVar[1], \FOVar[3]}
        \IMP
        \FOVar[3] \Equal \FOVar[2]
      \bigr)
    \Bigr)
  \end{equation*}
  \AP
  states that $\FOVar[1]$~has exactly one parent
  (possibly itself, in which case it is a root);
  using the alias~%
  $\intro*\Root\Of{\FOVar[1]} = \InRel{P}{\FOVar[1], \FOVar[1]}$,
  \begin{equation*}
    \RootCase{\Formula[3]}\Of{\FOVar[1]}
    \, = \,
    \Root\Of{\FOVar[1]}
    \, \IMP \,
    \bigl( \,
      \Formula[3]\Of{\FOVar[1]}
      \AND
      \InRel{Y}{\FOVar[1]}
    \, \bigr)
  \end{equation*}
  states that if $\FOVar[1]$~is a root,
  then it \kl{satisfies} the target condition~$\Formula[3]$
  and is positively charged;
  and
  \begin{equation*}
    \Fml{ChildCase}\Of{\FOVar[1]}
    \, = \,
    \NOT \Root\Of{\FOVar[1]}
    \, \IMP \,
    \ExistsNbNode{\FOVar[2]}{\FOVar[1]} \:\!
    \Bigl(
      \InRel{P}{\FOVar[1], \FOVar[2]}
      \, \AND \,
      \bigl( \,
        \InRel{Y}{\FOVar[1]}
        \IFF
        \NOT (\InRel{Y}{\FOVar[2]} \IFF \InRel{X}{\FOVar[1]})
      \, \bigr)
    \Bigr)
  \end{equation*}
  states that if $\FOVar[1]$~is a child,
  then
  it has the same charge as its parent
  if it lies outside~$X$,
  and the opposite charge of its parent
  if it belongs to~$X$.
  \lipicsEnd
\end{example}

\AP
The spanning-forest construction described in Example~\ref{ex:not-all-selected}
can be generalized to express the \kl{complement} of any \kl{graph property}
\kl{definable} in \kl{local second-order logic}.
We now demonstrate this with the \kl{complement} of $\COLORABLE{3}$,
which we denote by $\intro*\NONCOLORABLE{3}$.
As we will show in Corollary~\ref{cor:non-three-colorable-not-in-nlp},
this \kl(graph){property} is not $\SigmaLFOL{1}$-\kl{definable} either.

\begin{example}
  \label{ex:not-3-colorable}
  To \kl{define} $\NONCOLORABLE{3}$,
  we could simply negate the \kl{formula} from Example~\ref{ex:3-colorable},
  yielding the $\PiFOL{1}$-\kl{formula}
  $\ForAllRel{C_0, C_1, C_2} \,
   \ExistsNode{\FOVar[1]} \: \NOT \WellColored\Of{\FOVar[1]}$.
  But just as in Example~\ref{ex:not-all-selected},
  this \kl{formula} does not belong to \kl{local second-order logic}
  because of the
  \kl(quantifier){unbounded} \kl{existential} \kl{first-order quantification}
  over~$\FOVar[1]$.
  Fortunately,
  the solution is also very similar:
  we can rewrite our initial attempt
  as the equivalent $\PiLFOL{4}$\nobreakdash-\kl{formula}
  $\ForAllRel{C_0, C_1, C_2} \, \Fml{ExistsBadNode}$,
  using the \kl{subformula}
  \begin{equation*}
    \Fml{ExistsBadNode}
    \, = \,
    \ExistsRel{P} \:
    \ForAllRel{X} \:
    \ExistsRel{Y} \:
    \ForAllNode{\FOVar[1]} \,
    \bigl(
      \PointsTo{\NOT \WellColored}\Of{\FOVar[1]}
    \bigr),
  \end{equation*}
  where
  $P$~is a \kl{binary} relation,
  $X$~and~$Y$ are sets,
  and the \kl{subformula} $\PointsTo{\NOT \WellColored}\Of{\FOVar[1]}$
  is an instantiation of
  the \kl{formula} schema from Example~\ref{ex:not-all-selected}.
  \lipicsEnd
\end{example}

\AP
In the previous two examples,
we needed to express the existence of a particular \kl{node}.
But what if we also want to make sure that the \kl{node} is unique?
It turns out that
the spanning-forest construction from Example~\ref{ex:not-all-selected}
can be easily extended to enforce that
\kl{Eve}'s spanning forest is actually a spanning tree.
Let us illustrate this with
the \kl(graph){property} $\intro*\ONESELECTED$,
which requires that
\emph{exactly} one \kl{node} has the \kl{label}~$1$.

\begin{example}
  \label{ex:one-selected}
  To \kl{define} $\ONESELECTED$ on~$\GRAPH$,
  we intuitively let \kl{Eve} and \kl{Adam} play
  a similar $3$-move game as in Example~\ref{ex:not-all-selected},
  but now we tighten the rules for \kl{Eve}
  so that her spanning forest~$P$
  must be a spanning tree rooted at the unique selected \kl{node}.
  Again,
  \kl{Adam} tries to refute her choice,
  and this time he has two possible lines of attack:
  either find a directed cycle in~$P$
  (as before),
  or show that there are at least two selected \kl{nodes}.
  He launches both attacks simultaneously
  by choosing a single subset~$X$ of \kl{nodes},
  and then presenting \kl{Eve} with two challenges:
  to charge each \kl{node} positively or negatively according to~$X$
  (as~described in Example~\ref{ex:not-all-selected}),
  and to tell each \kl{node}
  whether the selected \kl{node} belongs to~$X$.
  As before,
  if $P$~contains a directed cycle,
  he can make \kl{Eve} fail the first challenge.
  But now,
  if there are at least two selected \kl{nodes},
  he can make her fail the second challenge
  by choosing a set~$X$ that contains only one selected \kl{node}.
  In this way,
  no matter what \kl{Eve} tells each \kl{node}
  about the supposedly unique selected \kl{node}'s membership in~$X$,
  there will necessarily be an inconsistency
  between two adjacent \kl{nodes}
  that receive conflicting information.

  Formally,
  we represent \kl{Eve}'s responses to \kl{Adam}'s challenges
  by two \kl{unary} \kl{relation variables}~$Y$ and~$Z$,
  the former corresponding to the set of positive \kl{nodes}
  (as before),
  and the latter intended as a Boolean predicate indicating
  whether the selected \kl{node} belongs to~$X$.
  We then \kl{define} $\ONESELECTED$
  with the $\SigmaLFOL{3}$-\kl{formula}
  \begin{equation*}
    \ExistsRel{P} \:
    \ForAllRel{X} \:
    \ExistsRel{Y, Z} \:
    \ForAllNode{\FOVar[1]} \,
    \bigl(
      \PointsToUnique{\IsSelected}\Of{\FOVar[1]}
    \bigr),
  \end{equation*}
  where
  $\PointsToUnique{\Formula[3]}\Of{\FOVar[1]}$,
  for $\Formula[3] \in \BFL$,
  is an extension of the \kl{formula} schema
  $\PointsTo{\Formula[3]}\Of{\FOVar[1]}$
  from Example~\ref{ex:not-all-selected}.
  It states that
  in addition to $\FOVar[1]$~pointing towards
  a \kl{node} that \kl{satisfies} the target condition~$\Formula[3]$,
  that \kl{node} is also unique
  (again,
  assuming that both players play optimally
  and that \kl{Eve} wins the game).
  We can write this as
  \AP
  \begin{equation*}
    \intro*\PointsToUnique{\Formula[3]}\Of{\FOVar[1]}
    \, = \,
    \PointsTo{\Formula[3]}\Of{\FOVar[1]}
    \, \AND \,
    \BelievesInOne{\Formula[3]}\Of{\FOVar[1]},
  \end{equation*}
  where
  \begin{equation*}
    \BelievesInOne{\Formula[3]}\Of{\FOVar[1]}
    \, = \,
    \Bigl(
      \ForAllNbNode{\FOVar[2]}{\FOVar[1]} \:\!
      \bigl(
        \InRel{Z}{\FOVar[1]} \IFF \InRel{Z}{\FOVar[2]}
      \bigr)
    \Bigr)
    \, \AND \,
    \Bigl(
      \Formula[3]\Of{\FOVar[1]}
      \, \IMP \,
      \bigl(
        \InRel{Z}{\FOVar[1]} \IFF \InRel{X}{\FOVar[1]}
      \bigr)
    \Bigr)
  \end{equation*}
  states that
  the information available to~$\FOVar[1]$
  does not contradict
  the uniqueness of the \kl{node} \kl{satisfying}~$\Formula[3]$.
  The first conjunct ensures that
  all \kl{nodes} agree on the value of the Boolean predicate~$Z$,
  while the second conjunct ensures that
  any \kl{node} \kl{satisfying}~$\Formula[3]$
  sets $Z$ to true
  if and only if
  the \kl{node} itself belongs to~$X$.
  \lipicsEnd
\end{example}

\AP
The refined spanning-tree construction shown in Example~\ref{ex:one-selected}
is a useful building block
that allows us to express
many natural \kl{graph properties} in \kl{local second-order logic}.
We demonstrate this first for $\intro*\HAMILTONIAN$,
the \kl(graph){property} of containing a \intro{Hamiltonian cycle},
i.e., a \kl{cycle} that passes through each \kl{node} exactly once.

\begin{example}
  \label{ex:hamiltonian}
  A \kl{Hamiltonian cycle} can be seen as
  a Hamiltonian path extended with an additional \kl{edge}
  connecting the endpoints of the path.
  In turn,
  a Hamiltonian path
  can be seen as a special case of a spanning tree,
  where each \kl{node} has at most one child.
  Using the previous spanning-tree construction,
  this allows us to \kl{define} $\HAMILTONIAN$ on~$\GRAPH$
  with the following $\SigmaLFOL{3}$\nobreakdash-\kl{formula}:
  \begin{equation*}
    \ExistsRel{P} \:
    \ForAllRel{X} \:
    \ExistsRel{Y, Z} \:
    \ForAllNode{\FOVar[1]} \,
    \bigl(
      \PointsToUnique{\Root}\Of{\FOVar[1]}
      \, \AND \,
      \Fml{MaxOneChild}\Of{\FOVar[1]}
      \, \AND \,
      \Fml{SeesLeafIfRoot}\Of{\FOVar[1]}
    \bigr)
  \end{equation*}
  Here,
  $\PointsToUnique{\Root}\Of{\FOVar[1]}$
  is an instantiation of
  the \kl{formula} schema from Example~\ref{ex:one-selected}
  with the \kl{subformula} $\Root$ from Example~\ref{ex:not-all-selected},
  which ensures that $P$~represents a spanning tree;
  \begin{equation*}
    \Fml{MaxOneChild}\Of{\FOVar[1]}
    \, = \,
    \ForAllNbNode{\FOVar[2],\FOVar[3]}{\FOVar[1]} \:\!
    \bigl(
      \InRel{P}{\FOVar[2], \FOVar[1]}
      \AND
      \InRel{P}{\FOVar[3], \FOVar[1]}
      \, \IMP \,
      \FOVar[2] \Equal \FOVar[3]
    \bigr)
  \end{equation*}
  states that $\FOVar[1]$~has at most one child in the spanning tree;
  and
  \begin{equation*}
    \Fml{SeesLeafIfRoot}\Of{\FOVar[1]}
    \, = \,
    \Root\Of{\FOVar[1]}
    \, \IMP \,
    \ExistsNbNode{\FOVar[2]}{\FOVar[1]} \,
    \bigl(
      \NOT \InRel{P}{\FOVar[2], \FOVar[1]}
      \, \AND \,
      \ForAllNbNode{\FOVar[3]}{\FOVar[2]} \:\!
      (
        \NOT \InRel{P}{\FOVar[3], \FOVar[2]}
      )
    \bigr)
  \end{equation*}
  states that if $\FOVar[1]$~is the root,
  then it is \kl{adjacent} to the (unique) leaf~$\FOVar[2]$;
  this leaf cannot be the root's child,
  since a \kl{Hamiltonian cycle} requires at least three \kl{nodes}.
  \lipicsEnd
\end{example}

As noted by Göös and Suomela~\cite[\S\,5.1]{DBLP:journals/toc/GoosS16},
“a spanning tree equipped with a locally checkable proof is a versatile tool”.
Adapting their examples to our framework,
we can use
the spanning-tree construction from Example~\ref{ex:one-selected}
to also express the following \kl{graph properties}
as $\SigmaLFOL{3}$-\kl{formulas}:
\begin{itemize}
  \item \AP
        \phantomintro{acyclic}%
        $\intro*\ACYCLIC$,
        the \kl(graph){property} of not containing any cycles.
        To convince the \kl{nodes} of this \kl(graph){property},
        \kl{Eve} provides a spanning tree
        (which \kl{Adam} tries to refute as before),
        and each \kl{node} checks that
        all its \kl{incident} \kl{edges} belong to this tree.
  \item \AP
        $\intro*\ODD$,
        the \kl(graph){property} of having an odd number of \kl{nodes}.
        Again,
        \kl{Eve} provides a spanning tree,
        and this time she aggregates modulo-two counters
        from the leaves to the root.
        Each \kl{node} then checks that its own counter value
        equals one plus the sum of its children's values,
        modulo two,
        and the root also checks that its own value is one.
        On an intuitive level,
        the modulo-two summation at each \kl{node}
        can be implemented by a finite automaton
        that traverses the values of the \kl{node}'s children
        in an arbitrary order.
        More formally,
        we can represent the counter values and the states of the automaton
        by unary relations,
        and the order of a \kl{node}'s children
        by a binary relation.
        All of these are chosen by \kl{Eve} in her first move,
        and subsequently verified by the \kl{nodes}.
  \item $\NONCOLORABLE{2}$,
        the \kl(graph){property} of not being $2$-\kl{colorable}.
        A \kl{graph} is non-$2$-\kl{colorable}
        if and only if
        it contains a cycle of odd length
        (see, e.g., \cite[Prp.~1.6.1]{DBLP:books/daglib/0030488}).
        To prove the existence of such a cycle,
        \kl{Eve} retraces it with a binary relation~$\SOVar$
        (choosing an arbitrary but consistent orientation),
        constructs a spanning tree
        whose root belongs to that cycle,
        and propagates a modulo-two counter around the cycle.
        Again,
        the counter values are represented by a unary relation.
        The root checks that it has
        the same value as its $\SOVar$-predecessor,
        while every other \kl{node} on the cycle checks
        that it has the opposite value of its $\SOVar$-predecessor.
        Since the root is guaranteed to be unique
        by the spanning-tree construction,
        this implies that the cycle has odd length.
        (Notice that $\SOVar$~may retrace several disjoint cycles,
        but only the one containing the root will be odd.)
\end{itemize}

\AP
To finish with a slightly more involved example,
let us consider $\intro*\NONHAMILTONIAN$,
the \kl{complement} \kl(graph){property} of $\HAMILTONIAN$.
Since we already have
the $\SigmaLFOL{3}$-\kl{formula} for $\HAMILTONIAN$
from Example~\ref{ex:hamiltonian},
we could simply use
the complementation technique from Example~\ref{ex:not-3-colorable}
to get a $\PiLFOL{6}$-\kl{formula} for $\NONHAMILTONIAN$.
But we can do better;
below we provide a $\PiLFOL{4}$-\kl{formula}.

\begin{example}
  \label{ex:non-hamiltonian}
  We \kl{define} $\NONHAMILTONIAN$
  based on the following characterization:
  a \kl{graph} is \kl{Hamiltonian}
  if and only if
  it contains a \kl{spanning subgraph}
  (i.e., a \kl{subgraph} containing all \kl{nodes})
  that is $2$-\kl{regular}
  (i.e., all \kl{nodes} have \kl{degree}~$2$)
  and \kl{connected}
  (i.e., any two \kl{nodes} are linked by a \kl{path}).

  Intuitively,
  the \kl{complement} of this \kl(graph){property}
  can be tested through the following game,
  where
  \kl{Adam} starts by proposing a supposed \kl{Hamiltonian cycle},
  and then \kl{Eve} tries to prove that his proposition is incorrect:
  First,
  \kl{Adam} chooses a $2$-\kl{regular} \kl{spanning subgraph},
  which he represents by a \kl{binary} relation~$H$.
  The intended meaning of $\InRel{H}{\FOVar[1], \FOVar[2]}$ is
  “the \kl{edge} $\Set{\FOVar[1], \FOVar[2]}$ belongs to the \kl{subgraph}”.
  He claims that the chosen \kl{subgraph} is a \kl{Hamiltonian cycle}.
  Next,
  \kl{Eve} tries to disprove this claim.
  If she is right,
  there are two possible cases:
  either \kl{Adam} did not really propose a $2$-\kl{regular} \kl{spanning subgraph},
  or his \kl{subgraph} is not \kl{connected}
  i.e., it consists of multiple disjoint \kl{cycles}.
  To tell the \kl{nodes} which of the two cases applies
  (say, giving priority to the first if both apply),
  \kl{Eve} assigns each \kl{node} a bit,
  which she represents by a \kl{unary} relation~$C$.
  In the first case,
  represented by~$\NOT \InRel{C}{\FOVar[1]}$,
  she constructs a spanning forest~$P$ whose roots are \kl{nodes}
  at which the $2$-\kl{regularity} condition is violated.
  In the second case,
  represented by~$\InRel{C}{\FOVar[1]}$,
  she chooses a subset~$S$ of \kl{nodes} containing
  some but not all \kl{components} of \kl{Adam}'s \kl{subgraph},
  without cutting any \kl{components},
  and then constructs a spanning forest~$P$ whose roots
  witness that~$S$ does indeed partition the \kl{subgraph} into two nonempty parts.
  (There must be at least one witness,
  since we require by definition that the input \kl{graph} is \kl{connected}.)
  In both cases,
  the spanning forest is validated
  using the technique from Example~\ref{ex:not-all-selected}.
  This adds two more steps to the game,
  where \kl{Adam} challenges her with a set~$X$
  and she responds with a set~$Y$.

  Formally,
  we \kl{define} $\NONHAMILTONIAN$ on~$\GRAPH$
  with the $\PiLFOL{4}$\nobreakdash-\kl{formula}
  \begin{equation*}
    \ForAllRel{H} \:
    \ExistsRel{C, S, P} \:
    \ForAllRel{X} \:
    \ExistsRel{Y} \:
    \ForAllNode{\FOVar[1]} \,
    \bigl(
      \InAgreementOn{C}\Of{\FOVar[1]}
      \, \AND \,
      \Fml{InvalidCase}\Of{\FOVar[1]}
      \, \AND \,
      \Fml{DisjointCase}\Of{\FOVar[1]}
    \bigr),
  \end{equation*}
  where
  \begin{equation*}
    \InAgreementOn{C}\Of{\FOVar[1]}
    \, = \,
    \ForAllNbNode{\FOVar[2]}{\FOVar[1]} \,
    \bigl(
      \InRel{C}{\FOVar[1]} \IFF \InRel{C}{\FOVar[2]}
    \bigr)
  \end{equation*}
  ensures that all \kl{nodes} agree on the type of mistake \kl{Adam} has made,
  $\Fml{InvalidCase}\Of{\FOVar[1]}$
  covers the case where \kl{Adam}
  has violated the $2$-\kl{regularity} condition,
  and
  $\Fml{DisjointCase}\Of{\FOVar[1]}$
  covers the case where his \kl{subgraph} consists of multiple \kl{components}.

  The \kl{subformula} for the first case can be written as
  \begin{equation*}
    \Fml{InvalidCase}\Of{\FOVar[1]}
    \, = \,
    \NOT \InRel{C}{\FOVar[1]}
    \, \IMP \,
    \PointsTo{\NOT \Fml{DegreeTwo}}\Of{\FOVar[1]},
  \end{equation*}
  where
  $\PointsTo{\NOT \Fml{DegreeTwo}}\Of{\FOVar[1]}$
  is an instantiation of
  the \kl{formula} schema from Example~\ref{ex:not-all-selected}
  with the negation of the $\BFL$-\kl{formula}
  \begin{equation*}
    \Fml{DegreeTwo}\Of{\FOVar[1]}
    \, = \,
    \ExistsNbNode{\FOVar[2]_1, \FOVar[2]_2}{\FOVar[1]}
    \begin{pmatrix*}[l]
      \,
      \FOVar[2]_1 \NotEqual \FOVar[2]_2
      \: \AND \:
      \BigAND_{i \in \Set{1,2}}
      \bigl(
        \InRel{H}{\FOVar[1], \FOVar[2]_i}
        \AND
        \InRel{H}{\FOVar[2]_i, \FOVar[1]}
      \bigr)
      \: \AND {} \\[0.5ex]
      \,
      \ForAllNbNode{\FOVar[3]}{\FOVar[1]} \,
      \bigl(
        \InRel{H}{\FOVar[1], \FOVar[3]} \OR \InRel{H}{\FOVar[3], \FOVar[1]}
        \, \IMP \,
        \BigOR_{i \in \Set{1,2}} (\FOVar[3] \Equal \FOVar[2]_i)
      \bigr)
    \end{pmatrix*}\!.
  \end{equation*}
  This \kl{formula} states that $\FOVar[1]$~has exactly two \kl{neighbors}
  in the \kl{spanning subgraph} represented by~$H$,
  which must be a symmetric relation.

  For the second case,
  we can write
  \begin{equation*}
    \Fml{DisjointCase}\Of{\FOVar[1]}
    \, = \,
    \InRel{C}{\FOVar[1]}
    \, \IMP \,
    \NOT \Fml{CutAt}\Of{\FOVar[1]}
    \, \AND \,
    \PointsTo{\Fml{SeparationAt}}\Of{\FOVar[1]},
  \end{equation*}
  where
  the first conjunct ensures that
  \kl{Eve}'s partition does not cut any \kl{component},
  while the second conjunct ensures that her partition is nontrivial.
  This makes use of the \kl{subformulas}
  \begin{equation*}
    \Fml{CutAt}\Of{\FOVar[1]}
    \, = \,
    \ExistsNbNode{\FOVar[2]}{\FOVar[1]} \,
    \bigl(
      \InRel{H}{\FOVar[1], \FOVar[2]}
      \, \AND \,
      (\InRel{S}{\FOVar[1]} \IFF \NOT \InRel{S}{\FOVar[2]})
    \bigr),
  \end{equation*}
  which states that
  $\FOVar[1]$ and one of its \kl{component} \kl{neighbors}
  are on opposite sides of the partition,
  and
  \begin{equation*}
    \Fml{SeparationAt}\Of{\FOVar[1]}
    \, = \,
    \NOT \InAgreementOn{S}\Of{\FOVar[1]},
  \end{equation*}
  which states that $\FOVar[1]$~sees \kl{nodes} both inside and outside of~$S$.
  \lipicsEnd
\end{example}

%%% Local Variables:
%%% mode: latex
%%% TeX-master: "../lph-paper"
%%% End:

\section{Restrictive arbiters}
\label{sec:restrictive-arbiters}

The notion of \kl{arbiters} defined in Section~\ref{sec:turing-machines}
was kept simple for the sake of presentation,
but it can be cumbersome
when constructing \kl{arbiters} for specific \kl{graph properties}.
In this section,
we provide a more flexible definition
that allows us to impose additional restrictions on
the input \kl{graphs} and \kl{certificates}.
We then prove its equivalence to the original definition.

\subparagraph*{Certificate restrictors.}

\AP
Let $\IdentRadius$ and~$\CertifRadius$ be positive integers,
and $\CertifPolynomial$~be a polynomial function.
A \intro{certificate restrictor}
for $\Tuple{\CertifRadius, \CertifPolynomial}$-\kl{bounded certificates}
under $\IdentRadius$-\kl{locally unique} \kl{identifiers}
is a \kl{locally polynomial machine}~$\Machine$
that satisfies the following property
for every \kl{graph}~$\Graph$,
every $\IdentRadius$\nobreakdash-\kl{locally unique}
\kl{identifier assignment}~$\IdMap$ of~$\Graph$,
every
$\Tuple{\CertifRadius, \CertifPolynomial}$-\kl(certificate-list){bounded}
\kl{certificate-list assignment}~$\CertifListMap$
of $\Tuple{\Graph, \IdMap}$,
and every
$\Tuple{\CertifRadius, \CertifPolynomial}$-\kl(certificate){bounded}
\kl{certificate assignment}~$\CertifMap$
of $\Tuple{\Graph, \IdMap}$:
if some \kl{node} $\Node[1] \inG \Graph$ \kl{rejects} in~%
$\Result{\Machine}{\Graph, \IdMap, \CertifListMap \CertifConcat \CertifMap}$,
then there exists an
$\Tuple{\CertifRadius, \CertifPolynomial}$-\kl(certificate){bounded}
\kl{certificate assignment}~$\CertifMap'$
differing from~$\CertifMap$ only in the \kl{certificate} assigned to~$\Node[1]$
such that
$\Node[1]$~\kl{accepts} in~%
$\Result{\Machine}{\Graph, \IdMap, \CertifListMap \CertifConcat \CertifMap'}$
while the \kl{verdict} of all other \kl{nodes} remains the same as in~%
$\Result{\Machine}{\Graph, \IdMap, \CertifListMap \CertifConcat \CertifMap}$.
We refer to this property as \intro{local repairability}.
Moreover,
we say that $\Machine$~is \intro{trivial} if
$\Result{\Machine}{\Graph, \IdMap, \CertifListMap \CertifConcat \CertifMap}
 \equiv \Accept$
for all choices of
$\Graph$, $\IdMap$, $\CertifListMap$,~$\CertifMap$.

\subparagraph*{Restrictive arbiters.}

\AP
\label{def:restrictive-arbiter}%
Let $\Level$ be a nonnegative integer,
$\IdentRadius$ and~$\CertifRadius$ be positive integers,
$\CertifPolynomial$~be a polynomial function,
$\BaseProperty$~be an $\LP$-\kl(graph){property},
and
$\Machine_1, \dots, \Machine_{\Level}$
be \kl{certificate restrictors}
for $\Tuple{\CertifRadius, \CertifPolynomial}$\nobreakdash-\kl{bounded certificates}
under $\IdentRadius$-\kl{locally unique} \kl{identifiers}.
A \intro{restrictive} $\SigmaLP{\Level}$\nobreakdash-\reintro{arbiter}
for a \kl{graph property}~$\Property$
on~$\BaseProperty$
under
$\IdentRadius$-\kl{locally unique} \kl{identifiers}
and
$\Tuple{\CertifRadius, \CertifPolynomial}$\nobreakdash-\kl{bounded certificates}
restricted by
$\Machine_1, \dots, \Machine_{\Level}$
is a \kl{locally polynomial machine}~$\Machine$
that satisfies the following equivalence
for every \kl{graph}~$\Graph \in \BaseProperty$
and every
$\IdentRadius$-\kl{locally unique}
\kl{identifier assignment}~$\IdMap$ of~$\Graph$:
\begin{equation*}
  \Graph \in \Property
  \; \iff \;
  \exists \CertifMap_1 \,
  \forall \CertifMap_2
  \dots
  \Quantifier \CertifMap_{\Level}:
  \Result{\Machine}{
    \Graph, \,
    \IdMap, \,
    \CertifMap_1 \CertifConcat
    \CertifMap_2 \CertifConcat \dots \CertifConcat
    \CertifMap_{\Level}
  } \equiv
  \Accept,
\end{equation*}
where
$\Quantifier$ is~$\forall$ if $\Level$ is even
and $\exists$ otherwise,
and all quantifiers range over
$\Tuple{\CertifRadius, \CertifPolynomial}$-\kl{bounded certificate assignments}
of $\Tuple{\Graph, \IdMap}$
with the additional restriction that
$\Result{\Machine_i}{
   \Graph, \,
   \IdMap, \,
   \CertifMap_1 \CertifConcat \dots \CertifConcat
   \CertifMap_i
 } \equiv
\Accept$
for all $i \in \Range[1]{\Level}$.
If all \kl{certificate restrictors} are \kl{trivial},
we say that $\Machine$~operates under \intro{unrestricted}
$\Tuple{\CertifRadius, \CertifPolynomial}$\nobreakdash-\kl{bounded certificates}.
We analogously define
\reintro{restrictive} $\PiLP{\Level}$\nobreakdash-\reintro{arbiters}.

\AP
Notice that the notion of
$\SigmaLP{\Level}$- and $\PiLP{\Level}$-\kl{arbiters} for~$\Property$
introduced on page~\pageref{def:locally-polynomial-hierarchy}
coincides with the notion of
\kl{restrictive} $\SigmaLP{\Level}$- and $\PiLP{\Level}$-\kl{arbiters}
for~$\Property$ on~$\GRAPH$
under \kl{unrestricted} \kl{certificates}.
We will refer to
such \kl{arbiters} as \intro{permissive arbiters}
when we want to emphasize the distinction from other \kl{restrictive arbiters}.
Although not every \kl{restrictive arbiter} is \kl{permissive},
we can prove the following lemma,
which allows us to use arbitrary \kl{restrictive arbiters}
whenever it is more convenient.

\begin{lemma}
  \label{lem:restrictive-arbiters}
  Let $\Level \in \Naturals$
  and $\BaseProperty, \Property \subseteq \GRAPH$
  with $\BaseProperty \in \LP$.
  The \kl{graph property} $\Property \cap \BaseProperty$
  belongs to~$\SigmaLP{\Level}\On{\BaseProperty}$
  if and only if
  $\Property$~has
  a \kl{restrictive} $\SigmaLP{\Level}$-\kl{arbiter} on~$\BaseProperty$.
  The analogous statement holds for~$\PiLP{\Level}\On{\BaseProperty}$.
\end{lemma}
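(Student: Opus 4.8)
The statement is an equivalence between two notions, so the plan is to prove both directions by converting one kind of arbiter into the other. The ``only if'' direction is the easy one: a permissive $\SigmaLP{\Level}$-arbiter for $\Property\cap\BaseProperty$ (viewed as a member of $\SigmaLP{\Level}\On{\BaseProperty}$, hence witnessed by some genuine $\SigmaLP{\Level}$-property $\Property'$ with $\Property'\cap\BaseProperty = \Property\cap\BaseProperty$) is already a restrictive arbiter in the trivial special case where all certificate restrictors $\Machine_1,\dots,\Machine_\Level$ are trivial and we simply ignore inputs outside $\BaseProperty$. So here I would just unwind the definitions and observe that the defining equivalence of a $\SigmaLP{\Level}$-arbiter for $\Property'$, restricted to graphs $\Graph\in\BaseProperty$, is exactly the defining equivalence of a restrictive $\SigmaLP{\Level}$-arbiter for $\Property$ on $\BaseProperty$ under unrestricted certificates. (One minor point: I need the arbiter's correctness to hold under $\IdentRadius$-locally unique identifiers for a uniform $\IdentRadius$; this is built into both definitions.)

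The substantive direction is ``if'': given a restrictive $\SigmaLP{\Level}$-arbiter $\Machine$ for $\Property$ on $\BaseProperty$, with identifier radius $\IdentRadius$, certificate parameters $\Tuple{\CertifRadius,\CertifPolynomial}$, and certificate restrictors $\Machine_1,\dots,\Machine_\Level$, I must build a genuine $\SigmaLP{\Level}$-property $\Property'$ with $\Property'\cap\BaseProperty=\Property\cap\BaseProperty$, i.e.\ a permissive $\SigmaLP{\Level}$-arbiter over all of $\GRAPH$. The plan is to construct a new local-polynomial machine $\Machine^\ast$ that, on input a graph $\Graph$ with identifiers $\IdMap$ and a certificate list $\CertifMap_1\CertifConcat\dots\CertifConcat\CertifMap_\Level$, does the following in each round: (i) runs, in parallel, $\BaseProperty$'s $\LP$-decider to check membership in $\BaseProperty$ --- if a node detects $\Graph\notin\BaseProperty$ it should cause the overall machine to make the appropriate verdict so that the quantified equivalence reduces to ``$\Graph\notin\Property\cap\BaseProperty$'', which it can by letting the outer-most player lose; (ii) for each $i$, simulates $\Machine_i$ on $\Graph,\IdMap,\CertifMap_1\CertifConcat\dots\CertifConcat\CertifMap_i$ to check the restriction $\Result{\Machine_i}{\cdots}\equiv\Accept$; (iii) simulates $\Machine$ itself. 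The key mechanism is \kl{local repairability}: if some $\Machine_i$ rejects on the given certificates because $\CertifMap_i$ violates a restriction at a node, that node can be locally ``repaired'' --- so I want $\Machine^\ast$ to behave, whenever a restriction is violated, as if the violating certificate had been replaced by a repaired one, thereby neutralising bad certificate choices by the player who was supposed to respect the restriction. Concretely: if $i$ is odd (Eve's move) and $\Machine_i$ rejects, Eve gets blamed, so $\Machine^\ast$ rejects; if $i$ is even (Adam's move) and $\Machine_i$ rejects, Adam's move is discarded and $\Machine^\ast$ proceeds as if Adam had played a repaired (restriction-satisfying) certificate. Because repairability lets a single node fix its own certificate without changing any other node's verdict, iterating the repair over all rejecting nodes yields a globally restriction-satisfying certificate assignment on which $\Machine_i$ accepts, and the simulation of $\Machine$ on the repaired certificates computes the ``right'' answer.

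The main obstacle I expect is making the ``repair'' simulation actually \emph{local} and polynomial-time, since local repairability is stated as an existential claim about the existence of a repaired certificate, not as something a node can compute. The resolution I have in mind: a node does not need to know the repaired certificates; it only needs to know what verdict $\Machine$ would output on the repaired assignment \emph{within its own constant-radius view}. Since repairing node $\Node$'s certificate changes only $\Node$'s certificate (and hence only the computation of $\Machine$ at nodes within the relevant radius of $\Node$), and since the repaired certificate, being $\Tuple{\CertifRadius,\CertifPolynomial}$-bounded, is short enough to be found by local brute-force search over all bit strings of the allowed length, each node can in polynomial time enumerate candidate repaired certificates for each rejecting node in its neighborhood, recompute $\Machine$'s verdict accordingly, and thus simulate $\Machine$-on-the-repaired-input locally. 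I would need to check carefully that the repairs at different nodes don't interfere (they don't, by the ``verdict of all other nodes remains the same'' clause, so repairs commute) and that the round complexity stays constant (it does, as a sum of the constant round complexities of $\Machine$, the $\Machine_i$, and $\BaseProperty$'s decider, plus a constant blow-up from the local simulations). Finally I would verify the quantified equivalence for $\Property'$ defined via $\Machine^\ast$ matches that for $\Property$ via $\Machine$ on $\BaseProperty$, and note that membership outside $\BaseProperty$ is handled so that $\Property'\cap\BaseProperty=\Property\cap\BaseProperty$ as required; the $\PiLP{\Level}$ case is identical with the roles of the two players (and the parity convention for which $i$ causes a rejection versus a discard) swapped.
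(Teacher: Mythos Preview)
Your forward direction and your two-step decomposition of the backward direction (first handle $\BaseProperty$, then handle the certificate restrictions) match the paper's proof. The gap is in how you handle certificate restrictions.

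Your proposed machine $\Machine^\ast$ is not a local-polynomial machine. You write that the repaired certificate, being $\Tuple{\CertifRadius,\CertifPolynomial}$-bounded, ``is short enough to be found by local brute-force search over all bit strings of the allowed length'' in polynomial time. But $\Tuple{\CertifRadius,\CertifPolynomial}$-boundedness means the certificate has length at most $\CertifPolynomial(n)$, where $n$ is the total size of the $\CertifRadius$-neighbourhood data; enumerating all such strings costs $2^{\CertifPolynomial(n)}$ steps, which is exponential, not polynomial. So the repair-by-simulation strategy does not yield a machine in $\LP$, and the argument breaks down precisely at the point you yourself flagged as the main obstacle.

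The paper circumvents this by \emph{not} computing or simulating any repair. Its machine runs all the $\Machine_i$ and stores flags $\mathtt{ok}_i$, then simulates $\Machine$ while propagating those flags from neighbours; afterwards each node scans $\mathtt{ok}_1,\dots,\mathtt{ok}_\Level$ in order and, at the first false flag, outputs $0$ if that level is existential and $1$ if universal (and otherwise outputs $\Machine$'s verdict). Local repairability enters only in the \emph{correctness} argument: if the first violated level $i$ is universal and some node $\Node$ nevertheless rejects, then $\Node$ cannot have seen the violation (else it would have accepted at step $i$), so all offending certificates lie outside $\Node$'s view; iterated local repair then yields a restriction-satisfying $\CertifMap_i'$ that leaves $\Node$'s entire computation, and hence its rejection, unchanged. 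Thus any rejection Adam forces with an invalid certificate he could equally force with a valid one, and the unrestricted and restricted games coincide. The key point you missed is that the existence of the repair suffices; it never needs to be found.
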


\begin{proof}
  We prove only the first statement,
  since the proof for~$\PiLP{\Level}\On{\BaseProperty}$ is completely analogous.
  By definition,
  if $\Property \cap \BaseProperty$ belongs to~$\SigmaLP{\Level}\On{\BaseProperty}$,
  then there exists
  a \kl{permissive} $\SigmaLP{\Level}$\nobreakdash-\kl{arbiter}~$\Machine$
  for a \kl{graph property}~$\Property'$
  such that $\Property' \cap \BaseProperty = \Property \cap \BaseProperty$,
  and thus
  $\Machine$~is also a \kl{restrictive} $\SigmaLP{\Level}$-\kl{arbiter}
  for~$\Property$ on~$\BaseProperty$.

  For the converse,
  we have to convert
  an arbitrary \kl{restrictive arbiter} for~$\Property$ on~$\BaseProperty$
  into
  a \kl{permissive arbiter} for some \kl{graph property}~$\Property'$
  such that $\Property' \cap \BaseProperty = \Property \cap \BaseProperty$.
  We do this for $\Property' = \Property \cap \BaseProperty$,
  proceeding in two steps
  by first removing the restrictions on the input \kl{graphs}
  and then on the \kl{certificates}.

  Let $\Machine^a$ be
  a \kl{restrictive} $\SigmaLP{\Level}$\nobreakdash-\kl{arbiter}
  for~$\Property$ on~$\BaseProperty$
  under $\IdentRadius^a$-\kl{locally unique} \kl{identifiers}
  and $\Tuple{\CertifRadius^a, \CertifPolynomial^a}$-\kl{bounded certificates}
  restricted by
  $\Machine_1^a, \dots, \Machine_{\Level}^a$.
  \begin{enumerate}
  \item We start by converting~$\Machine^a$ into
    a \kl{restrictive} $\SigmaLP{\Level}$\nobreakdash-\kl{arbiter}~$\Machine^b$
    for $\Property \cap \BaseProperty$ on arbitrary \kl{graphs}.
    Since $\BaseProperty$ is in~$\LP$,
    there exists an $\LP$-\kl{decider}~$\Machine^{\BaseProperty}$
    for that \kl(graph){property}
    and an integer $\IdentRadius^{\BaseProperty} \in \Positives$
    such that
    $\Machine^{\BaseProperty}$~operates under
    $\IdentRadius^{\BaseProperty}$\nobreakdash-\kl{locally unique} \kl{identifiers}.
    When \kl{executing}~$\Machine^b$
    on a \kl{graph}~$\Graph$
    under an \kl{identifier assignment}~$\IdMap$
    and a \kl{certificate-list assignment}~$\CertifListMap$,
    the \kl{nodes} first simulate~$\Machine^{\BaseProperty}$
    to check whether $\Graph$ belongs to~$\BaseProperty$.
    Any \kl{node} that \kl{rejects} in the simulation
    also immediately \kl{rejects} in
    $\Result{\Machine^b}{\Graph, \IdMap, \CertifListMap}$,
    so $\Graph$~can only be \kl{accepted} if it belongs to~$\BaseProperty$.
    Then,
    the \kl{nodes} that have not \kl{rejected} simulate~$\Machine^a$
    and return the \kl{verdict} reached in that second simulation
    (unless they learn about some \kl{node} that has previously \kl{rejected},
    in which case they also \kl{reject}).
    The \kl{machine}~$\Machine^b$ obtained this way
    operates
    on arbitrary \kl{graphs}
    under $\IdentRadius^b$-\kl{locally unique} \kl{identifiers}
    and $\Tuple{\CertifRadius^b, \CertifPolynomial^b}$-\kl{bounded certificates}
    restricted by
    $\Machine_1^b, \dots, \Machine_{\Level}^b$,
    where
    $\IdentRadius^b = \max \Set{\IdentRadius^a,\, \IdentRadius^{\BaseProperty}}$,\:
    $\CertifRadius^b = \CertifRadius^a$,\:
    $\CertifPolynomial^b = \CertifPolynomial^a$,\,
    and
    $\Machine_i^b = \Machine_i^a$
    for $i \in \Range[1]{\Level}$.
  \item
    Now we convert~$\Machine^b$ into
    a \kl{permissive} $\SigmaLP{\Level}$\nobreakdash-\kl{arbiter}~$\Machine^c$
    for $\Property \cap \BaseProperty$.
    By definition,
    for every \kl{graph}~$\Graph$
    and every $\IdentRadius^b$-\kl{locally unique}
    \kl{identifier assignment}~$\IdMap$ of~$\Graph$,
    \begin{equation*}
      \Graph \in \Property \cap \BaseProperty
      \; \iff \;
      \exists \CertifMap_1 \,
      \forall \CertifMap_2
      \dots
      \Quantifier \CertifMap_{\Level}:
      \Result{\Machine^b}{
        \Graph, \,
        \IdMap, \,
        \CertifMap_1 \CertifConcat
        \CertifMap_2 \CertifConcat \dots \CertifConcat
        \CertifMap_{\Level}
      } \equiv
      \Accept,
    \end{equation*}
    where all quantifiers range over
    $\Tuple{\CertifRadius^b, \CertifPolynomial^b}$-\kl{bounded certificate assignments}
    of $\Tuple{\Graph, \IdMap}$
    with the additional restriction that
    $\Result{\Machine_i^b}{
       \Graph, \,
       \IdMap, \,
       \CertifMap_1 \CertifConcat \dots \CertifConcat
       \CertifMap_i
     } \equiv
    \Accept$
    for all $i \in \Range[1]{\Level}$.
    The new \kl{machine}~$\Machine^c$
    has to satisfy the analogous equivalence
    without the additional restriction
    on the \kl{certificate assignments}.

    When \kl{executing}~$\Machine^c$
    on~$\Graph$ under~$\IdMap$ and
    $\CertifMap_1 \CertifConcat \dots \CertifConcat
     \CertifMap_{\Level}$,
    the \kl{nodes} first simulate
    $\Machine_1^b, \dots, \Machine_{\Level}^b$
    to check if the given \kl{certificates} satisfy
    the imposed restrictions.
    As a result of this first phase,
    each \kl{node}~$\Node[1]$ stores a flag~$\Tag{ok}_i$
    for each $i \in \Range[1]{\Level}$
    to indicate whether $\Node[1]$~\kl{accepts} in
    $\Result{\Machine_i^b}{
       \Graph, \,
       \IdMap, \,
       \CertifMap_1 \CertifConcat \dots \CertifConcat
       \CertifMap_i
     }$.
    Then,
    the \kl{nodes} simulate~$\Machine^b$
    while simultaneously updating their flags to propagate errors.
    That is,
    if a \kl{node} sees
    that the $\Tag{ok}_i$~flag of one of its \kl{neighbors} is false,
    then it also sets its own $\Tag{ok}_i$~flag to false.
    Once the simulation of~$\Machine^b$ has terminated,
    each \kl{node}~$\Node[1]$ goes \emph{sequentially} through its flags
    $\Tag{ok}_1, \dots, \Tag{ok}_{\Level}$
    to verify that they are all true.
    If it encounters an $\Tag{ok}_i$~flag that is false,
    $\Node[1]$~aborts, writes a \kl{verdict} on its \kl{internal tape},
    and enters \kl{state}~$\StopState$.
    The \kl{verdict} depends on whether
    the \kl{certificate assignment}~$\CertifMap_i$
    is quantified existentially or universally:
    in the first case,
    the \kl{verdict} is~$0$ (\kl{reject}),
    whereas in the second
    it is~$1$ (\kl{accept}).
    Finally,
    if it did not stop before,
    $\Node[1]$~returns the same \kl{verdict} it would have returned
    when \kl{executing}~$\Machine^b$.

    Note that
    the sequential verification and early termination described above
    ensures that quantifications are relativized
    in the same way as for~$\Machine^b$.
    More precisely,
    if the first \kl{certificate assignment}
    violating the restrictions
    is quantified existentially,
    then the input \kl{graph} is \kl{rejected}
    because
    all \kl{nodes} that know about the violation
    return~$0$.
    If instead the first \kl{certificate assignment}~$\CertifMap_i$
    violating the restrictions of the corresponding \kl{machine}~$\Machine_i^b$
    is quantified universally,
    then there are two possibilities:
    either the input \kl{graph} is \kl{accepted}
    (the desired outcome),
    or it is \kl{rejected}
    because of some \kl{node}~$\Node[1]$
    that does not know about the violation.
    However,
    in the latter case,
    there exists an
    $\Tuple{\CertifRadius^b, \CertifPolynomial^b}$%
    -\kl{bounded certificate assignment}~$\CertifMap'_i$
    that does not violate the restrictions of~$\Machine_i^b$
    but for which $\Node[1]$~still \kl{rejects}.
    This is because
    $\Machine_i^b$~satisfies \kl{local repairability},
    so all defects in~$\CertifMap_i$ can be fixed
    without affecting~$\Node[1]$.
    Intuitively speaking,
    $\Node[1]$'s \kl{verdict} is legitimate
    since it is independent of the violation.
    \qedhere
  \end{enumerate}
\end{proof}

%%% Local Variables:
%%% mode: latex
%%% TeX-master: "../lph-paper"
%%% End:
\section{A generalization of Fagin's theorem}
\label{sec:fagin}

The founding result of descriptive complexity theory is Fagin's theorem,
which provides a logical, and thus machine-independent,
characterization of the complexity class~$\NP$
(see, e.g., \cite[Thm.~3.2.4]{DBLP:series/txtcs/GradelKLMSVVW07}).
In the context of this paper,
we can state it as follows.%
\footnote{%
  In the literature,
  Fagin's theorem is usually stated
  in terms of arbitrary \kl{graphs} (or even arbitrary \kl{structures})
  instead of \kl[single-node graphs]{labeled single-node graphs}.
  More specifically,
  a \kl{graph property} can be \kl{verified} by a (centralized) Turing machine
  operating in polynomial time on \kl{encodings} of \kl{graphs}
  if and only if
  it can be \kl{defined} by a \kl{formula} of
  the \kl{existential fragment} of \kl{second-order logic}.
  In symbols,
  $\NP\On{\Encoding(\GRAPH)} =
  \SetBuilder{
    \Encoding(\Property)
  }{
    \Property \in \SigmaFO{1}\On{\GRAPH}
  }$,
  where $\Encoding \colon \GRAPH \to \NODE$
  is some \kl{encoding} of \kl{graphs} as binary strings.
  However,
  the statement presented here is equivalent,
  since it is immaterial
  whether we \kl{encode} \kl{graphs} as strings or vice versa
  (see, e.g., Problem~8.4.12
  in Papadimitriou's book \cite{DBLP:books/daglib/0072413}).
}

\begin{theorem}[Fagin~\cite{Fagin74}]
  \label{thm:fagin}
  On \kl{single-node graphs},
  a \kl(graph){property} can be \kl{verified}
  by a \kl{locally polynomial machine}
  if and only if
  it can be \kl{defined} by a \kl{formula} of
  the \kl{existential fragment} of \kl{local second-order logic}.
  In symbols,
  $\NLP\On{\NODE} = \SigmaLFO{1}\On{\NODE}$,
  or equivalently,
  $\NP = \SigmaFO{1}\On{\NODE}$.
\end{theorem}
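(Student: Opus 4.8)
The plan is to obtain the statement from the classical Fagin theorem (a string language lies in nondeterministic polynomial time if and only if it is \kl{defined} by a \kl{formula} of the \kl{existential fragment} of \kl{second-order logic}) by setting up a dictionary between the objects of that theorem and ours on \kl{single-node graphs}. So fix a \kl{single-node graph}~$\Graph$ with unique \kl{node}~$v$ and \kl{label} $w = \Labeling{\Graph}(v)$. \emph{On the computational side:} an \kl{execution} of a \kl{distributed Turing machine} on~$\Graph$ involves no \kl{neighbors}, hence no messages, so the \kl{receiving tape} and the \kl{sending tape} stay inert and the \kl{internal tape} is initialized with~$w$ followed by~$v$'s (empty, hence irrelevant) \kl{identifier} and its \kl{certificates}; the \kl{machine} therefore behaves like an ordinary Turing machine. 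Since $v$'s $\CertifRadius$-\kl{neighborhood} is $v$ alone, being $\Tuple{\CertifRadius, \CertifPolynomial}$-\kl(certificate){bounded} means exactly having length polynomially bounded in~$\Length{w}$. Hence an $\NLP$-\kl{verifier} on~$\NODE$ is precisely a nondeterministic polynomial-time Turing machine whose nondeterministic guess is the \kl{certificate}, and identifying \kl{single-node graphs} with strings gives $\NLP\On\NODE = \NP$ (the very identification already underlying $\LP\On\NODE = \PTIME$).

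\emph{On the logical side:} the \kl{structural representation}~$\StructRepr{\Graph}$ has \kl{diameter} at most~$2$, since the \kl{element}~$v$ is $\LinkRel{2}{\StructRepr{\Graph}}$-adjacent to every \kl{element} representing a \kl{labeling bit} of~$v$. Consequently every unrestricted \kl{first-order} quantifier becomes equivalent to a \kl(quantifier){bounded} one once it is relativized to the distance-$2$ \kl{neighborhood} of the \kl{variable}~$\FOVar[1]$ introduced by the leading $\ForAll{\FOVar[1]}$, while conversely every $\BFL$-\kl{formula} is in particular a \kl{first-order} \kl{formula}; this yields $\SigmaLFO{1}\On\NODE = \SigmaFO{1}\On\NODE$, as already noted in Section~\ref{sec:logic-definitions}. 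Moreover $\StructRepr{\Graph}$ differs from the string representation of~$w$ only by the extra \kl{element}~$v$ and the relation $\LinkRel{2}{\StructRepr{\Graph}}$ attached to it, and these are \kl{first-order} definable from the successor relation on \kl{labeling bits}; hence a \kl(graph){property} of \kl{single-node graphs} is $\SigmaFO{1}$-\kl{definable} on \kl{structural representations} exactly when the corresponding string language is, the translation being a \kl{first-order} interpretation that leaves the second-order quantifier prefix untouched.

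Chaining these dictionaries with the classical theorem gives $\NLP\On\NODE = \NP = \SigmaFO{1}\On\NODE = \SigmaLFO{1}\On\NODE$, which is the claim. The only step that is not pure translation is the second half of the logical dictionary---verifying that adjoining or deleting the \kl{element}~$v$ and the relation $\LinkRel{2}{\StructRepr{\Graph}}$ can be carried out while staying inside $\SigmaFO{1}$---and this (entirely routine) point is where I expect the only real care to be needed; I would settle it by writing out an explicit \kl{first-order} interpretation in each direction, which leaves the second-order prefix unchanged.

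If one prefers an argument that does not cite the classical proof, one can run the usual tableau construction directly on $\StructRepr{\Graph}$: existentially quantify relations encoding the \kl{certificate}, a linear order on the \kl{elements}, and the cells of the polynomial-size space-time diagram of the \kl{verifier} (each cell indexed by a tuple of \kl{elements} long enough that their number exceeds the running time), and then assert with a single $\ForAll{\FOVar[1]}$ followed by a $\BFL$-\kl{formula} that the first row is the correct initial configuration, that consecutive rows are consistent with the \kl{transition function} in the window around every cell, and that the last row is accepting. On~$\NODE$ the window checks are automatically \kl(quantifier){bounded} by the diameter-$2$ observation, so the real work reduces to the bookkeeping of defining tuple-successor and the window conditions, exactly as in Fagin's original proof---this would be the principal obstacle along that route.
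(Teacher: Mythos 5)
Your proposal is correct and matches the paper's treatment: the paper does not reprove Fagin's theorem but cites \cite{Fagin74}, having already observed in Section~\ref{sec:turing-machines} that $\NP = \NLP\On\NODE$ (distributed Turing machines collapse to ordinary ones on a single node with locally-bounded certificates becoming polynomially-bounded ones) and in Section~\ref{sec:logic-definitions} that $\SigmaFO{1}\On\NODE = \SigmaLFO{1}\On\NODE$ (bounded vs.\ unbounded quantification is immaterial at diameter~$\leq 2$), while the footnote to Theorem~\ref{thm:fagin} dispatches the remaining representational point (structural representation of a single-node graph vs.\ classical string encoding) by appeal to the standard fact that encoding graphs as strings or strings as single-node graphs is immaterial. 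You spell out those same three observations more explicitly, including the slightly delicate first-order interpretation between $\StructRepr{\Graph}$ and the classical string structure; just note that the degenerate case of an empty label (a one-element $\StructRepr{\Graph}$) has no nonempty string structure on the other side, a corner case the paper also has to special-case later (see Footnote~\ref{ftn:local-fagin-forward}). Your alternative ``direct tableau'' sketch is essentially the route the paper takes for the genuine generalization in Theorem~\ref{thm:local-fagin}, where the black-box reduction to the classical theorem is no longer available.
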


The inclusion from right to left is straightforward
because any $\SigmaFOL{1}$-\kl{formula}
$\ExistsRel{\SOVar_1} \dots \ExistsRel{\SOVar_n} \, \Formula$
can be evaluated in polynomial time
by a Turing machine
that is given some \kl{interpretation} of
$\SOVar_1, \dots, \SOVar_n$
(\kl{encoded} in the \kl{certificate} chosen by \kl(certificate){Eve}).
The machine can use brute force to check whether
the \kl{first-order} \kl{formula}~$\Formula$ is \kl{satisfied}
under the given \kl{interpretation} of
$\SOVar_1, \dots, \SOVar_n$
by simply iterating over all possible \kl{interpretations}
of the \kl{first-order variables} in~$\Formula$.
The reverse inclusion, however, is more intricate,
as it involves encoding
the space-time diagram of a Turing machine
using a collection of relations over the input \kl{structure}.
The key insight that makes this possible is the following:
since the machine's running time is polynomially bounded
in the \kl{structure}'s \kl(structure){cardinality},
each cell of the space-time diagram
can be represented by a tuple of \kl{elements}
whose length depends on the degree of the bounding~polynomial.

In this section,
we generalize Theorem~\ref{thm:fagin}
from \kl{single-node graphs} to arbitrary \kl{graphs},
thereby providing a logical characterization of the complexity class~$\NLP$.
We then further generalize this result to obtain similar characterizations
of the higher levels of the \kl{locally polynomial hierarchy}.

\AP
Our proofs make use of the following lemma,
which basically states that
in the \kl{execution} of a \kl{locally polynomial machine},
the space-time diagram of each \kl{node} in each \kl{round}
is polynomially bounded in the \kl(structure){cardinality}
of a constant-radius \kl{neighborhood} of the \kl{node}.
For a given \kl{execution} of a \kl{machine}~$\Machine$,
the \intro{space usage}
of \kl{node}~$\Node$ in \kl{round}~$\Round$
is the maximum number of \kl{tape} cells
that~$\Node$ occupies in \kl{round}~$\Round$.
More precisely,
if we denote by~$t$
the \kl{step running time} of~$\Node$ in \kl{round}~$\Round$
and by~$\ell_j$
the total length of~$\Node$'s \kl{tape contents}
after its $j$-th \kl{computation step},
then $\Node$'s \kl{space usage} in \kl{round}~$\Round$ is
$\max \SetBuilder{\ell_j}{0 \leq j \leq t}$.

\begin{lemma}
  \label{lem:polynomial-space-time}
  Let
  $\Level$ and~$\Radius$ be positive integers,
  $\Polynomial$ be a polynomial function,
  and $\Machine$ be a \kl{locally polynomial machine}
  running in \kl{round time}~$\Radius$
  and \kl{step time}~$\Polynomial$.
  There exists a polynomial function~$f$
  such that the following holds for
  every \kl{labeled graph}~$\Graph$,
  every \kl{small} $\Radius$-\kl{locally unique}
  \kl{identifier assignment}~$\IdMap$ of~$\Graph$,
  and all
  $\Tuple{\Radius, \Polynomial}$-\kl{bounded certificate assignments}
  $\CertifMap_1, \dots, \CertifMap_{\Level}$
  of $\Tuple{\Graph, \IdMap}$:
  in the \kl{execution} of~$\Machine$
  on~$\Graph$ under~$\IdMap$ and
  $\CertifMap_1 \CertifConcat \dots \CertifConcat \CertifMap_{\Level}$,
  the \kl{step running time} and \kl{space usage}
  of each \kl{node} $\Node \inG \Graph$
  in each \kl{round} $\Round \in \Range[1]{\Radius}$
  are at most
  $f \bigl(
    \CardS{\StructNeighborhood{\Graph}{4\Radius}{\Node}}
  \bigr)$,
  i.e., $f$~applied to
  the number of \kl{nodes} and \kl{labeling bits}
  of~$\Node$'s $4\Radius$-\kl{neighborhood} in~$\Graph$.
\end{lemma}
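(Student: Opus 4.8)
The plan is to prove the lemma by induction on the round number, tracking for each node $\Node \inG \Graph$ and each round $\Round$ the quantity $L(\Node, \Round)$, defined as the combined length of the content of $\Node$'s receiving and internal tapes at the very start of round $\Round$ (immediately after messages have been delivered, before any computation step is made). Bounding $L(\Node,\Round)$ suffices: by the definition of step time, $\Node$ makes at most $\Polynomial(L(\Node,\Round))$ computation steps in round $\Round$, and since a Turing machine extends each of its three tapes by at most one cell per step, $\Node$'s space usage in round $\Round$ is at most $L(\Node,\Round) + 3\Polynomial(L(\Node,\Round))$. Both quantities are thus bounded by a fixed polynomial in $L(\Node,\Round)$, so it is enough to show that $L(\Node,\Round)$ is polynomial in $\CardS{\StructNeighborhood{\Graph}{4\Radius}{\Node}}$ for every $\Round \in \Range[1]{\Radius}$.

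For the base case $\Round = 1$, the receiving tape of $\Node$ holds $\Separator^{\Degree}$ (with $\Degree$ the degree of $\Node$), and the internal tape holds $\Labeling{\Graph}(\Node) \, \Separator \, \IdMap(\Node) \, \Separator \, \CertifMap_1(\Node) \, \Separator \cdots \Separator \, \CertifMap_{\Level}(\Node)$. I would bound each part by a polynomial in $\CardS{\StructNeighborhood{\Graph}{3\Radius}{\Node}}$: the degree $\Degree$ is at most $\CardS{\StructNeighborhood{\Graph}{1}{\Node}}$; the label $\Labeling{\Graph}(\Node)$ has length $\CardS{\StructNeighborhood{\Graph}{0}{\Node}} - 1$; every identifier $\IdMap(\Node[2])$ of a node $\Node[2]$ in $\Node$'s $\Radius$-neighborhood has length at most $\Ceiling{\log_2 \CardG{\Neighborhood{\Graph}{2\Radius}{\Node[2]}}} \leq 1 + \log_2 \CardS{\StructNeighborhood{\Graph}{3\Radius}{\Node}}$ by smallness (the $2\Radius$-neighborhood of such a $\Node[2]$ is contained in $\Node$'s $3\Radius$-neighborhood); and, since each $\CertifMap_i$ is $\Tuple{\Radius, \Polynomial}$-bounded, $\Length{\CertifMap_i(\Node)}$ is at most $\Polynomial$ applied to $\sum_{\Node[2] \inG \Neighborhood{\Graph}{\Radius}{\Node}} \bigl(1 + \Length{\Labeling{\Graph}(\Node[2])} + \Length{\IdMap(\Node[2])}\bigr)$, which by the preceding estimates is itself polynomial in $\CardS{\StructNeighborhood{\Graph}{3\Radius}{\Node}}$. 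Summing over the $\Level$ certificates and separators ($\Level$ a fixed constant) gives $L(\Node,1) \leq g_1\bigl(\CardS{\StructNeighborhood{\Graph}{3\Radius}{\Node}}\bigr)$ for some polynomial $g_1$, hence (after adjusting the polynomial) the same bound for $\Node$'s step running time and space usage in round $1$.

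For the inductive step, assume that in round $\Round$ every node's space usage is at most $g_\Round\bigl(\CardS{\StructNeighborhood{\Graph}{(3\Radius + \Round - 1)}{\Node}}\bigr)$ for some polynomial $g_\Round$. At the start of round $\Round + 1$, $\Node$'s internal tape still holds its end-of-round-$\Round$ content (of length at most $\Node$'s space usage in round $\Round$), its sending tape is empty, and its receiving tape holds the neighbors' messages separated by $\Separator$'s, where each message has length at most the sending-tape content of the corresponding neighbor at the end of round $\Round$, hence at most that neighbor's space usage in round $\Round$. Since a neighbor $\Node[2]$ of $\Node$ satisfies $\Neighborhood{\Graph}{r}{\Node[2]} \subseteq \Neighborhood{\Graph}{r+1}{\Node}$ for all $r$, and $\Node$ has at most $\CardS{\StructNeighborhood{\Graph}{1}{\Node}}$ neighbors, writing $N = \CardS{\StructNeighborhood{\Graph}{(3\Radius + \Round)}{\Node}}$ we obtain $L(\Node,\Round+1) \leq (1 + N)\,g_\Round(N) + N$, which is polynomial in $N$; this yields a polynomial $g_{\Round+1}$ bounding $\Node$'s round-$(\Round+1)$ step running time and space usage in terms of $\CardS{\StructNeighborhood{\Graph}{(3\Radius + \Round)}{\Node}}$. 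Finally, since $3\Radius + \Round - 1 \leq 4\Radius$ for every $\Round \in \Range[1]{\Radius}$ (using $\Radius \geq 1$) and $\CardS{\StructNeighborhood{\Graph}{\cdot}{\Node}}$ is nondecreasing in the radius, the polynomial $f = g_1 + \dots + g_\Radius$ (well-defined since $\Radius$ is a fixed constant) witnesses the lemma. I expect the only delicate part to be exactly this bookkeeping: checking that the relevant neighborhood radius grows by precisely one per round and stays below $4\Radius$, and that the smallness of the identifiers (together with the certificate bound, which refers to their lengths) keeps the round-$1$ tape contents polynomially bounded. Neither point is hard, but both must be tracked explicitly.
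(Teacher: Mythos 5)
Your proposal is correct and follows essentially the same approach as the paper: an induction on the round number that tracks the initial tape-content length, bounding the step time and space usage per round, with the neighborhood radius growing by one per round (starting at $3\Radius$ for round one and thus never exceeding $4\Radius-1 \leq 4\Radius$), and using the smallness of identifiers plus the certificate-boundedness condition to anchor the base case. The only stylistic difference is that you isolate the quantity $L(\Node,\Round)$ (combined receiving- and internal-tape length) as the induction variable and derive both step time and space usage from it, whereas the paper directly inducts on $b_\Round(\Node) = \max$ of step time and space usage; your space bound $\ell_0 + 3\Polynomial(\ell_0)$ is in fact slightly more careful than the paper's $3\Polynomial(\ell_0)$, which tacitly assumes $\Polynomial(n) \geq n$.
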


\begin{proof}
  By definition,
  if $\IdMap$~is a \kl{small} $\Radius$-\kl{locally unique}
  \kl{identifier assignment}~$\IdMap$ of~$\Graph$,
  then
  $\Length{\IdMap(\Node[1])} \leq
   \Ceiling{\log_2\CardG{\Neighborhood{\Graph}{2\Radius}{\Node[1]}}} \leq
   \Ceiling{\log_2\CardS{\StructNeighborhood{\Graph}{2\Radius}{\Node[1]}}}$
  for every \kl{node} $\Node \inG \Graph$.
  Consider arbitrary
  $\Tuple{\Radius, \Polynomial}$-\kl{bounded certificate assignments}
  $\CertifMap_1, \dots, \CertifMap_{\Level}$
  of $\Tuple{\Graph, \IdMap}$,
  let
  $\CertifListMap =
   \CertifMap_1 \CertifConcat \dots \CertifConcat \CertifMap_{\Level}$,
  and let us denote by~$b_{\Round}(\Node[1])$ the maximum
  of~$\Node[1]$'s \kl{step running time} and \kl{space usage}
  in \kl{round} $\Round \in \Range[1]{\Radius}$
  of $\Machine$'s~\kl{execution}
  on~$\Graph$ under~$\IdMap$ and~$\CertifListMap$.
  Furthermore,
  for $k \in \Range[3\Radius]{4\Radius - 1}$,
  let us write $n_k(\Node[1])$ as a shorthand for
  $\CardS{\StructNeighborhood{\Graph}{k}{\Node[1]}}$.
  We now show by induction
  that for each \kl{round} $\Round \in \Range[1]{\Radius}$,
  there is a polynomial~$f_{\Round}$
  (independent of~$\Graph$, $\IdMap$, and~$\CertifListMap$)
  such that
  $b_{\Round}(\Node[1]) \leq f_{\Round}(n_{3\Radius + \Round - 1}(\Node[1]))$.

  Assuming \kl{node}~$\Node[1]$ has~$\Degree$ \kl{neighbors}
  $\Node[2]_1, \dots, \Node[2]_{\Degree}$,
  its initial \kl{tape contents} in \kl{round}~$1$ consist of
  the string~$\Separator^{\Degree}$ on its \kl{receiving tape}
  and the string
  $\Labeling{\Graph}(\Node[1]) \, \Separator \,
   \IdMap(\Node[1])   \, \Separator \,
   \CertifListMap(\Node[1])$
  on its \kl{internal tape}.
  Since $\Machine$~runs in \kl{step time}~$\Polynomial$,
  \kl{node}~$\Node[1]$'s \kl{step running time} in \kl{round}~$1$
  cannot exceed
  $\Polynomial \bigl(
    \Degree +
    \Length{
      \Labeling{\Graph}(\Node[1]) \, \Separator \,
      \IdMap(\Node[1])   \, \Separator \,
      \CertifListMap(\Node[1])}
  \bigr)$.
  This also implies that
  $\Node[1]$'s \kl{space usage} in \kl{round}~$1$,
  and thus~$b_1(\Node[1])$,
  cannot exceed
  $3 \cdot \Polynomial \bigl(
    \Degree +
    \Length{
      \Labeling{\Graph}(\Node[1]) \, \Separator \,
      \IdMap(\Node[1])   \, \Separator \,
      \CertifListMap(\Node[1])}
  \bigr)$,
  since all three \kl{tape} heads
  start on the leftmost cell of their respective \kl{tape}
  and can advance by at most one cell in each \kl{computation step}.
  Moreover,
  we know that
  $\Degree \leq n_1(\Node[1])$,\;
  $\Length{\Labeling{\Graph}(\Node[1])} \leq n_0(\Node[1])$,\;
  $\Length{\IdMap(\Node[1])} \leq \Ceiling{\log_2 n_{2\Radius}(\Node[1])}$,\,
  and
  (since each $\CertifMap_i$ is
  $\Tuple{\Radius, \Polynomial}$-\kl(certificate){bounded}),
  \begin{equation*}
    \Length{\CertifListMap(\Node[1])}
    \, \leq \,
    \Level +
    \Level \cdot
    \Polynomial
    \Bigl( \,
      \smashoperator{
        \sum_{
          \hspace{4ex}
          \Node[2] \inG \Neighborhood{\Graph}{\Radius}{\Node[1]}
        }
      }
      \Length{
        \Labeling{\Graph}(\Node[2])
        \, \Separator \,
        \IdMap(\Node[2])
      } \,
    \Bigr)
    \, \leq \,
    \Polynomial'(n_{3\Radius}(\Node[1])),
  \end{equation*}
  where
  $\Polynomial'$ is a polynomial
  that depends only on $\Level$ and~$\Polynomial$.
  The last inequality stems from the fact that
  the $2\Radius$-\kl{neighborhood} of any \kl{node}
  $\Node[2] \inG \Neighborhood{\Graph}{\Radius}{\Node[1]}$
  is included in $\Node[1]$'s $3\Radius$-\kl{neighborhood}.
  We can therefore conclude that
  $b_1(\Node[1]) \leq f_1(n_{3\Radius}(\Node[1]))$
  for some polynomial~$f_1$ that can be easily derived from~$\Polynomial'$.

  Now,
  let us assume by induction
  that there exists a polynomial~$f_{\Round}$
  such that
  $b_{\Round}(\Node[1]) \leq f_{\Round}(n_{3\Radius + \Round - 1}(\Node[1]))$
  for every \kl{node}~$\Node[1] \inG \Graph$.
  At the beginning of \kl{round}~${\Round + 1}$,
  $\Node[1]$'s \kl{internal tape} contains a string~$\String[1]$
  of length less than~$b_{\Round}(\Node[1])$
  and its \kl{receiving tape} contains a string of the form
  $\Message_1 \, \Separator \dots \Separator \, \Message_{\Degree} \, \Separator$,
  where~$\Message_j$ is a message
  of length less than~$b_{\Round}(\Node[2]_j)$
  that was sent by \kl{neighbor}~$\Node[2]_j$ in \kl{round}~$\Round$.
  Again,
  since we know that~$\Machine$ runs in \kl{step time}~$\Polynomial$,
  this gives us an upper bound on~$\Node[1]$'s
  \kl{step running time} and \kl{space usage}:
  $b_{\Round + 1}(\Node[1]) \leq
  3 \cdot \Polynomial \bigl(
    \Length{\String[1]} +
    \Length{
      \Message_1 \, \Separator \dots \Separator \, \Message_{\Degree} \, \Separator
    }
  \bigr)$.
  From the induction hypothesis we obtain that
  $\Length{\String[1]} +
  \Length{
    \Message_1 \, \Separator \dots \Separator \, \Message_{\Degree} \, \Separator
  }
  \leq \Degree +
   \sum_{\Node[3] \in \Set{\Node[1], \Node[2]_1, \dots \Node[2]_{\Degree}}}
     f_{\Round}(n_{3\Radius + \Round - 1}(\Node[3]))$,
  which cannot exceed
  $n_1(\Node[1]) \cdot
   \bigl( f_{\Round}(n_{3\Radius + \Round}(\Node[1])) + 1 \bigr)$,
  given that the $(3\Radius + \Round - 1)$-\kl{neighborhood}
  of every \kl{neighbor}~$\Node[2]_j$ of~$\Node[1]$
  is included in $\Node[1]$'s $(3\Radius + \Round)$-\kl{neighborhood}.
  We thus obtain the bound
  $b_{\Round+1}(\Node[1]) \leq f_{\Round+1}(n_{3\Radius + \Round}(\Node[1]))$,
  where $f_{\Round+1}$ is a polynomial
  that can be easily derived from~$\Polynomial$ and~$f_{\Round}$.

  Ultimately,
  we have
  $b_{\Round}(\Node[1]) \leq f_{\Radius}(n_{4\Radius - 1}(\Node[1]))$
  for all $\Round \in \Range[1]{\Radius}$,
  which implies our claim.
\end{proof}

We now generalize Fagin's theorem from~$\NP$ to $\NLP$,
resulting in the following statement.
Notice that the original result (Theorem~\ref{thm:fagin}) can be recovered
by restricting both sides of the equivalence to \kl{single-node graphs}.

\begin{theorem}
  \label{thm:local-fagin}
  On arbitrary \kl{graphs},
  a \kl(graph){property} can be \kl{verified}
  by a \kl{locally polynomial machine}
  if and only if
  it can be \kl{defined} by a \kl{formula} of
  the \kl{existential fragment} of \kl{local second-order logic}.
  In symbols,
  $\NLP = \SigmaLFO{1}\On{\GRAPH}$.
\end{theorem}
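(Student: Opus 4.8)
The plan is to prove the two inclusions separately, localizing each step of the classical proof of Fagin's theorem. For $\SigmaLFO{1}\On{\GRAPH} \subseteq \NLP$ (from formulas to machines) I would start from a sentence $\ExistsRel{\SOVar_1} \dots \ExistsRel{\SOVar_n} \, \ForAll{\FOVar[1]} \, \Formula$ with $\Formula \in \BFL$ bounded around $\FOVar[1]$, and let $\rho$ be the maximal nesting depth of bounded quantifiers in $\Formula$. The key observation is that whether $\StructRepr{\Graph} \Satisfies \Formula$ holds with $\FOVar[1]$ interpreted as an element $e$ depends only on the radius-$\rho$ neighborhood of $e$ in $\StructRepr{\Graph}$, together with the restrictions of the interpretations of $\SOVar_1, \dots, \SOVar_n$ to tuples of elements inside that neighborhood. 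Since every labeling bit sits at distance $1$ from its node, each node $\Node$ can be made responsible for the elements $\Node$ and $\Tuple{\Node, i}$, and all the data it needs then lies inside $\StructNeighborhood{\Graph}{\rho+1}{\Node}$. I would therefore let Eve's certificate at $\Node$ encode, for each $\SOVar_j$, the tuples witnessing $\SOVar_j$ that lie within radius $\rho+1$ of $\Node$; a counting argument shows this is $\Tuple{\rho+1, \CertifPolynomial}$-bounded for a suitable polynomial $\CertifPolynomial$. The verifier then (i)~exchanges certificates over a constant radius and rejects if two nodes whose encoded tuples overlap disagree, and (ii)~otherwise brute-forces the evaluation of $\Formula$ at each element it is responsible for, iterating over all interpretations of the finitely many first-order variables of $\Formula$ inside the bounded neighborhood --- polynomially many --- and accepts if and only if every such evaluation succeeds. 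On a yes-instance Eve submits the local restrictions of a genuine global assignment, which are automatically consistent and satisfy $\Formula$ everywhere; on a no-instance, any consistent certificate family glues into a global assignment that must fail $\Formula$ at some element, whose responsible node rejects. (Via Lemma~\ref{lem:restrictive-arbiters} the consistency test could instead be delegated to a certificate restrictor, but this is not needed.)

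The harder inclusion $\NLP \subseteq \SigmaLFO{1}\On{\GRAPH}$ (from machines to formulas) is where I expect the real work. Fix an $\NLP$-verifier $\Machine$ for $\Property$ with round time $\Radius$ and step time a polynomial $\Polynomial$, operating under $\IdentRadius$-locally unique identifiers (we may assume $\IdentRadius \geq \Radius$) and $\Tuple{\CertifRadius, \CertifPolynomial}$-bounded certificates. Since membership in $\Property$ does not depend on the identifier assignment and a small one always exists by Remark~\ref{rem:small-sized-identifiers}, Lemma~\ref{lem:polynomial-space-time} supplies a polynomial $f$ bounding, for small identifiers, the step running time and space usage of every node in every round by $f$ applied to the cardinality of its $4\Radius$-neighborhood. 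I would then construct a $\SigmaLFOL{1}$-sentence of the form $\ExistsRel{\Vector{I}} \, \ExistsRel{\Vector{K}} \, \ExistsRel{\Vector{D}} \, \ForAll{\FOVar[1]} \, \Formula$, where the relations $\Vector{I}$ encode a small $\IdentRadius$-locally unique identifier assignment (say via a relation giving the bits of each node's identifier), $\Vector{K}$ encode the certificate assignment chosen by Eve, and $\Vector{D}$ encode, for every node and every round $\Round \in \{1, \dots, \Radius\}$, the entire space-time diagram of that node's round-$\Round$ computation. As in Fagin's proof, a cell of a node $\Node$'s round-$\Round$ diagram --- a grid with polynomially many cells, by Lemma~\ref{lem:polynomial-space-time} --- is addressed by a constant-arity tuple of elements drawn from a constant-radius neighborhood of $\Node$ and anchored at $\Node$, the tuple being read as a number in base roughly $\CardS{\StructNeighborhood{\Graph}{4\Radius}{\Node}}$ via the linear order on that neighborhood's elements induced by the guessed identifiers (and the successor relation on labeling bits).

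The bounded first-order kernel $\Formula$, evaluated at $\FOVar[1]$, then checks --- everything within a constant radius of $\FOVar[1]$ --- that: the guessed identifiers are small and $\IdentRadius$-locally unique; the guessed certificates are $\Tuple{\CertifRadius, \CertifPolynomial}$-bounded; the round-$1$ tapes of $\FOVar[1]$ are initialized from its label, identifier, and certificate list; consecutive rows of each diagram respect the transition function $\TransFunc$; the local computation halts in $\PauseState$ or $\StopState$; at each round transition, the round-$(\Round+1)$ receiving tape of $\FOVar[1]$ is the identifier-ordered concatenation of the appropriate segments of the round-$\Round$ sending tapes of its neighbors (whose diagrams lie within a bounded radius of $\FOVar[1]$'s); and finally that $\FOVar[1]$'s internal-tape bits after round $\Radius$ form the string $1$, i.e., $\FOVar[1]$ accepts. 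Correctness mirrors the other direction: for a yes-instance, guess small identifiers, an accepting certificate assignment, and the (unique, hence well-defined) correct diagrams, so that $\Formula$ holds at every element; for a no-instance, any guess either violates one of the structural constraints --- so $\Formula$ fails somewhere --- or is faithful to an actual execution, in which case some node rejects and $\Formula$ fails there.

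I expect the main obstacle to be the low-level bookkeeping of this second direction: representing cells by bounded-radius tuples of elements; expressing the necessary position arithmetic (successor and comparison of cell addresses, extraction of the $i$-th identifier bit, counting separators to locate the $j$-th neighbor's message) as bounded first-order subformulas over that encoding; and threading the inter-node message-copying relations between adjacent diagrams --- all while keeping $\Formula$ bounded around $\FOVar[1]$ and making the various radii ($\IdentRadius$, $\CertifRadius$, $\Radius$, the $4\Radius$ of Lemma~\ref{lem:polynomial-space-time}, plus slack for message passing) fit together. Finally, restricting the entire construction to single-node graphs collapses all bounded quantifiers to unbounded ones and recovers Theorem~\ref{thm:fagin}.
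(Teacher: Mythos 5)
Your overall plan matches the paper's: the easy direction has Eve's certificates encode local fragments of the guessed relations, with the node brute-forcing the bounded first-order kernel; the hard direction invokes Lemma~\ref{lem:polynomial-space-time} to bound each node's per-round diagram polynomially in $\CardS{\StructNeighborhood{\Graph}{4\Radius}{\Node}}$, addresses diagram cells by $k$-tuples of nearby elements, guesses relations for identifiers, certificates, and diagrams, and has the kernel locally check transitions, message passing, and acceptance. The paper actually proves only the forward direction of this theorem explicitly (the backward direction is deferred to Theorem~\ref{thm:local-hierarchy-equivalence}), but your treatment of both directions is in the same spirit.

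There is, however, one point where your sketch would run into real trouble: you propose to read a $k$-tuple as a number ``via the linear order on [the $4\Radius$-neighborhood's] elements induced by the guessed identifiers.'' Deriving the order from the identifiers is circular. The identifiers are themselves encoded by relations $\SOVarIdent{\Symbol}(\FOVar[1],\FOVarPos)$ whose position argument $\FOVarPos$ is a $k$-tuple, so to compare two identifiers lexicographically you must align and walk through their bit positions, which already presupposes the very order on $k$-tuples you are trying to define. The paper resolves this not by deriving the order but by existentially quantifying an arbitrary linear order $\SOVarTotOrd$ on $k$-tuples together with correspondence relations $\SOVarCorrInter$ (aligning two nodes' orders) and $\SOVarCorrIntra$ (aligning a node's order with its labeling-bit successor chain), and then merely \emph{verifying} that these guesses are internally consistent via $\BFL$-formulas such as $\LinearTupleOrder$ and $\Correspondences$. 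The identifier order relation $\SOVarIdOrd$ used to sort neighbors is likewise guessed and verified ($\NeighborOrder$) rather than computed. Once you make that move, the rest of your bookkeeping plan goes through.

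Two smaller remarks. In the easy direction you let overlapping certificates redundantly encode the same tuples and add a consistency test; the paper instead has each tuple owned by the node corresponding to its first component, which avoids overlaps and hence consistency checking. Both work. In the hard direction you write that the kernel ``checks that the guessed certificates are $\Tuple{\CertifRadius,\CertifPolynomial}$-bounded,'' but evaluating a polynomial inside a $\BFL$-formula is not directly available; the paper instead observes (via Lemma~\ref{lem:restrictive-arbiters}) that the verifier may be assumed to reject oversized certificates outright, and the arity $k$ then gives the needed implicit bound, so no explicit size check is required.
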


As this result will be further generalized below,
we do not explicitly prove the backward direction,
which is a simple special case of the backward direction
of Theorem~\ref{thm:local-hierarchy-equivalence}
on page~\pageref{prf:local-hierarchy-equivalence-backward}.
However,
we do explicitly prove the forward direction
to provide a more accessible introduction to the general case
presented on page~\pageref{prf:local-hierarchy-equivalence-forward}.
The key idea of encoding a space-time diagram as a collection of relations
remains the same as in Fagin's original proof,
but we have to deal with additional issues such as
\kl{locally unique} \kl{identifiers} and
the exchange of messages between \kl{adjacent} \kl{nodes}.

\begin{proof}[Proof of Theorem~\ref{thm:local-fagin} -- Forward direction]
  Let $\Property$ be a \kl{graph property} in~$\NLP$,
  and let
  $\Machine = \Tuple{\StateSet, \TransFunc}$
  be an $\NLP$-\kl{verifier} for~$\Property$
  that operates under $\IdentRadius$-\kl{locally unique} \kl{identifiers}
  and $\Tuple{\CertifRadius_1, \CertifPolynomial_1}$\nobreakdash-\kl{bounded certificates},
  and runs in \kl{round time}~$\Radius_2$ and \kl{step time}~$\Polynomial_2$.
  Moreover,
  let $\Radius = \max\Set{\IdentRadius, \CertifRadius_1, \Radius_2}$,
  and let $\Polynomial$ be a polynomial
  that bounds both $\CertifPolynomial_1$ and~$\Polynomial_2$.
  By the proof of Lemma~\ref{lem:restrictive-arbiters},
  we may assume without loss of generality
  that $\Machine$ \kl{rejects}
  under any \kl{certificate assignment} violating the
  $\Tuple{\CertifRadius_1, \CertifPolynomial_1}$\nobreakdash-\kl(certificate){boundedness}
  condition,
  so it does not matter if
  \kl(certificate){Eve} chooses \kl{certificates} that are too large.
  Now,
  we fix $k \in \Naturals$ such that
  the polynomial~$f$ described in Lemma~\ref{lem:polynomial-space-time}
  (for $\Level = 1$,
  and our choices of $\Radius$, $\Polynomial$, and $\Machine$)
  satisfies $f(n) < n^k$ for $n > 1$.
  By~Lemma~\ref{lem:polynomial-space-time},
  for every \kl{graph}~$\Graph$
  whose \kl{structural representation} has at least two \kl{elements},%
  \footnote{\label{ftn:local-fagin-forward}%
    For the \kl{single-node graph}~$\Graph$
    whose \kl{node}~$\Node$ is \kl{labeled} with the empty string,
    we have
    $(\CardS{\StructNeighborhood{\Graph}{4\Radius}{\Node}})^k = 1$
    for all $k \in \Naturals$,
    since the \kl{structural representation}~$\StructRepr{\Graph}$
    has only one \kl{element}.
    If $\Graph$ belongs to~$\Property$,
    we can easily treat it as a special case
    in the \kl{formula}~$\Formula[1]_{\Machine}$ described~here.}
  every \kl{small} $\Radius$-\kl{locally unique}
  \kl{identifier assignment}~$\IdMap$ of~$\Graph$,
  and every
  $\Tuple{\Radius, \Polynomial}$-\kl{bounded certificate assignment}~$\CertifMap$
  of $\Tuple{\Graph, \IdMap}$,
  the \kl{step running time} and \kl{space usage}
  of each \kl{node} $\Node \inG \Graph$
  are bounded by~%
  $\bigl(\CardS{\StructNeighborhood{\Graph}{4\Radius}{\Node}}\bigr)^k$
  in each \kl{round} $\Round \in \Range[1]{\Radius}$
  of the corresponding \kl{execution} of~$\Machine$.
  Intuitively,
  this gives us a bound on the amount of information required to describe
  \kl(certificate){Eve}'s choice of \kl{certificates}
  and the subsequent \kl{execution} of the \kl{verifier}~$\Machine$,
  assuming that the \kl{nodes} of the input \kl{graph}
  are assigned \kl{small} \kl{identifiers}.
  Although our description below does not explicitly state this assumption,
  it may run out of “space” if the provided \kl{identifiers} are too large.
  However,
  this is not a problem because incomplete \kl{executions} are simply ignored.

  We convert~$\Machine$ into a $\SigmaLFOL{1}$-\kl{sentence}
  \kl{defining}~$\Property$
  that is of the form
  $\Formula[1]_{\Machine} =
  \ExistsRel{\Vector{\SOVar}} \:
  \ForAllNode{\FOVar[1]} \:
  \Formula[2]\Of{\FOVar[1]}$,
  where
  $\Vector{\SOVar} =
   \Tuple{
     \SOVarTotOrd,
     \SOVarCorrInter,
     \SOVarCorrIntra,
     \SOVarIdent{0},
     \SOVarIdent{1},
     \dots
   }$
  is a collection of \kl{second-order variables}
  intended to represent an \kl{accepting} \kl{execution} of~$\Machine$,
  and~$\Formula[2]\Of{\FOVar[1]}$ is a $\BFL$-\kl{formula}
  stating that,
  from the point of view of \kl{node}~$\FOVar[1]$,
  the \kl{variables} in~$\Vector{\SOVar}$ do indeed represent
  a valid \kl{execution} of~$\Machine$
  in which~$\FOVar[1]$ \kl{accepts}.
  We start by introducing the \kl{relation variables} in~$\Vector{\SOVar}$,
  together with their intended \kl{interpretations}.

  \begin{itemize}
  \item\AP $\intro*\SOVarTotOrd$:
    a $(2k + 1)$-\kl{ary} relation
    that associates with each \kl{node}~$\FOVar[1]$
    a linear order on the $k$-tuples of \kl{elements}
    in $\FOVar[1]$'s $4\Radius$-\kl{neighborhood}.
    The intended meaning of
    $\InRel{\SOVarTotOrd}{\FOVar[1], \FOVarPos_1, \FOVarPos_2}$
    is:
    “from $\FOVar[1]$'s point of view,
    tuple~$\FOVarPos_1$ is strictly smaller than tuple~$\FOVarPos_2$”.
  \item\AP $\intro*\SOVarCorrInter$:
    a $(4k + 2)$-\kl{ary} relation
    that establishes a correspondence
    between the linear orders
    of two \kl{nodes}~$\FOVar[1]_1$ and~$\FOVar[1]_2$
    that lie at \kl{distance} at most~$2\Radius$ of each other.
    The intended meaning of
    $\InRel{\SOVarCorrInter}{
      \FOVar[1]_1, \FOVarPos_1, \FOVarPos_1',
      \FOVar[1]_2, \FOVarPos_2, \FOVarPos_2'}$
    is:
    “The number of steps from~$\FOVarPos_1$ to~$\FOVarPos_1'$
    in the linear order of~$\FOVar[1]_1$
    is the same as
    the number of steps from~$\FOVarPos_2$ to~$\FOVarPos_2'$
    in the linear order of~$\FOVar[1]_2$”.
    Or, to put it more loosely,
    “$\FOVarPos_1' - \FOVarPos_1$ for~$\FOVar[1]_1$
    is the same as
    $\FOVarPos_2' - \FOVarPos_2$ for~$\FOVar[1]_2$”.
  \item\AP $\intro*\SOVarCorrIntra$:
    a $(k + 2)$-\kl{ary} relation
    that establishes a correspondence
    between the linear order of \kl{node}~$\FOVar[1]$
    (defined by~$\SOVarTotOrd$)
    and the order of the bits in $\FOVar[1]$'s \kl{label}
    (represented by \kl{elements} of the input \kl{structure}).
    The intended meaning of
    $\InRel{\SOVarCorrIntra}{\FOVar[1], \FOVar[3], \FOVarPos}$
    is:
    “the position of bit~$\FOVar[3]$ in~$\FOVar[1]$'s \kl{label}
    corresponds to the position of~$\FOVarPos$ in~$\FOVar[1]$'s linear order”.
  \item\AP $\intro*\SOVarIdent{\Symbol}$:
    a family of $(k + 1)$-\kl{ary} relations,
    for $\Symbol \in \Set{0, 1}$,
    whose purpose is to represent
    the $\Radius$-\kl{locally unique} \kl{identifier}
    of each \kl{node}~$\FOVar[1]$.
    The intended meaning of
    $\InRel{\SOVarIdent{\Symbol}}{\FOVar[1], \FOVarPos}$
    is:
    “the~$\FOVarPos$-th bit of $\FOVar[1]$'s \kl{identifier}
    is an~$\Symbol$”.
  \item\AP $\intro*\SOVarIdOrd$:
    a binary order relation
    that compares, with respect to their \kl{identifiers},
    two \kl{nodes}~$\FOVar[1]_1$ and~$\FOVar[1]_2$
    that have some common \kl{neighbor}.
    The intended meaning of
    $\InRel{\SOVarIdOrd}{\FOVar[1]_1, \FOVar[1]_2}$
    is:
    “$\FOVar[1]_1$'s \kl{identifier} is smaller than
    $\FOVar[1]_2$'s \kl{identifier}”.
  \item\AP $\intro*\SOVarCertif{\Symbol}$:
    a family of $(k + 1)$-\kl{ary} relations,
    for $\Symbol \in \Set{0, 1}$,
    whose purpose is to represent
    the $\Tuple{\Radius, \Polynomial}$-\kl{bounded certificate}
    of each \kl{node}~$\FOVar[1]$.
    The intended meaning of
    $\InRel{\SOVarCertif{\Symbol}}{\FOVar[1], \FOVarPos}$
    is:
    “the~$\FOVarPos$-th bit of $\FOVar[1]$'s \kl{certificate}
    is an~$\Symbol$”.
  \item\AP $\intro*\SOVarState{\Round}{\State}$:
    a family of $(k + 1)$-\kl{ary} relations,
    for $\Round \in \Range[1]{\Radius}$ and $\State \in \StateSet$,
    that indicate the \kl{state} of each \kl{node}
    in every \kl{communication round} and \kl{computation step}.
    The intended meaning of~%
    $\InRel{\SOVarState{\Round}{\State}}{\FOVar[1], \FOVarTime}$
    is:
    “in \kl{round}~$\Round$, at \kl{step}~$\FOVarTime$,
    \kl{node}~$\FOVar[1]$ is in \kl{state}~$\State$”.
  \item\AP $\intro*\SOVarHead{\Round}{\TapeIdx}$:
    a family of $(2k + 1)$-\kl{ary} relations,
    for $\Round \in \Range[1]{\Radius}$
    and $\TapeIdx \in \Set{\Receiving, \Internal, \Sending}$,
    that indicate the positions of the three tape heads
    of each \kl{node}
    in every \kl{communication round} and \kl{computation step}.
    The intended meaning of
    $\InRel{\SOVarHead{\Round}{\TapeIdx}}{\FOVar[1], \FOVarTime, \FOVarPos}$
    is:
    “in \kl{round}~$\Round$, at \kl{step}~$\FOVarTime$,
    \kl{node}~$\FOVar[1]$'s head on \kl{tape}~$\TapeIdx$ is at position~$\FOVarPos$”.
    Here,
    we adopt the convention that $\Receiving$, $\Internal$, and $\Sending$
    refer to the
    \kl[receiving tape]{receiving}, \kl[internal tape]{internal}, and \kl{sending tapes},
    respectively.
  \item\AP $\intro*\SOVarTape{\Round}{\TapeIdx}{\Symbol}$:
    a family of $(2k + 1)$-\kl{ary} relations,
    for $\Round \in \Range[1]{\Radius}$,\,
    $\TapeIdx \in \Set{\Receiving, \Internal, \Sending}$, and
    $\Symbol \in \Set{\Endmarker, \Blank, \Separator, 0, 1}$,
    that indicate the \kl{tape} contents
    of each \kl{node}
    in every \kl{communication round} and \kl{computation step}.
    The intended meaning of
    $\InRel{\SOVarTape{\Round}{\TapeIdx}{\Symbol}}{\FOVar[1], \FOVarTime, \FOVarPos}$
    is:
    “in \kl{round}~$\Round$, at \kl{step}~$\FOVarTime$,
    \kl{node}~$\FOVar[1]$'s \kl{tape}~$\TapeIdx$ contains an~$\Symbol$
    at position~$\FOVarPos$”.
  \item\AP $\intro*\SOVarExch{\Round}{\TapeIdx}$:
    a family of $(k + 2)$-\kl{ary} relations,
    for $\Round \in \Range[1]{\Radius}$ and $\TapeIdx \in \Set{\Receiving, \Sending}$,
    that indicate the tape positions of the incoming and outgoing messages
    that each \kl{node} exchanges with its \kl{neighbors}
    in every \kl{communication round}.
    The intended meaning of
    $\InRel{\SOVarExch{\Round}{\TapeIdx}}{\FOVar[1], \FOVar[2], \FOVarPos}$
    is:
    “in~\kl{round}~$\Round$,
    the message exchanged between~$\FOVar[1]$ and~$\FOVar[2]$
    is written immediately after position~$\FOVarPos$
    on $\FOVar[1]$'s \kl{tape}~$\TapeIdx$”
    (which implies that the symbol at position~$\FOVarPos$
    is either a~$\Endmarker$ or a~$\Separator$).
    For $\TapeIdx = \Receiving$,
    this corresponds to the incoming message
    that~$\FOVar[1]$ receives from~$\FOVar[2]$
    at the beginning of \kl{round}~$\Round$,
    whereas for $\TapeIdx = \Sending$,
    this corresponds to the outgoing message
    that~$\FOVar[1]$ sends to~$\FOVar[2]$
    at the end of \kl{round}~$\Round$.
  \end{itemize}

  It remains to specify
  the $\BFL$-\kl{formula}~$\Formula[2]\Of{\FOVar[1]}$,
  which is evaluated from the point of view of the \kl{node}
  represented by the \kl{first-order variable}~$\FOVar[1]$.
  We define~$\Formula[2]$ as the conjunction
  of the following $\BFL$-\kl{formulas},
  most of which are described only on an intuitive level
  to keep the exposition readable.

  \begin{itemize}
  \item\AP $\intro*\LinearTupleOrder\Of{\FOVar[1]}$
    states that the relation~$\SOVarTotOrd$ does indeed yield a linear order over
    the $k$-tuples in~$\FOVar[1]$'s $4\Radius$-\kl{neighborhood},
    as described above.
    We can write this \kl{formula} as follows:
    \begin{equation*}
      \ForAllLoc{\FOVarPos_1, \FOVarPos_2, \FOVarPos_3}{4\Radius + 1}{\FOVar[1]}
      \begin{pmatrix*}[l]
        \NOT \InRel{\SOVarTotOrd}{\FOVar[1], \FOVarPos_1, \FOVarPos_1}
        \; \AND \;
        \Bigl(
          \FOVarPos_1 \NotEqual \FOVarPos_2 \,\IMP\,
          \InRel{\SOVarTotOrd}{\FOVar[1], \FOVarPos_1, \FOVarPos_2} \OR
          \InRel{\SOVarTotOrd}{\FOVar[1], \FOVarPos_2, \FOVarPos_1}
        \Bigr) \,\, \\[0.5ex]
        {} \, \AND \;
        \Bigl(
          \InRel{\SOVarTotOrd}{\FOVar[1], \FOVarPos_1, \FOVarPos_2} \AND
          \InRel{\SOVarTotOrd}{\FOVar[1], \FOVarPos_2, \FOVarPos_3} \,\IMP\,
          \InRel{\SOVarTotOrd}{\FOVar[1], \FOVarPos_1, \FOVarPos_3}
        \Bigr)
      \end{pmatrix*}
    \end{equation*}
    The \kl{formula} “looks” up to distance $4\Radius + 1$
    because
    the \kl{labeling bits} of a \kl{node} at \kl{distance}~$4\Radius$
    lie $4\Radius + 1$ steps away
    in the \kl{structural representation} of the input \kl{graph}.
  \item\AP $\intro*\Correspondences\Of{\FOVar[1]}$
    enforces that the relation~$\SOVarCorrInter$ establishes the desired correspondence
    between the linear order of~$\FOVar[1]$
    and the linear order of every other \kl{node}
    in $\FOVar[1]$'s $2\Radius$-\kl{neighborhood},
    and that the relation~$\SOVarCorrIntra$ establishes the desired correspondence
    between the \kl{label} and linear order of~$\FOVar[1]$.
    Both properties can be easily specified inductively.
    For~$\SOVarCorrInter$,
    the base case states that
    $\InRel{\SOVarCorrInter}{
      \FOVar[1], \FOVarPos_1, \FOVarPos_1,
      \FOVar[2], \FOVarPos_2, \FOVarPos_2}$
    must hold for
    every \kl{node}~$\FOVar[2]$
    in $\FOVar[1]$'s $2\Radius$-\kl{neighborhood}
    and all $k$-tuples~$\FOVarPos_1$ and~$\FOVarPos_2$
    in the $4\Radius$-\kl{neighborhoods} of~$\FOVar[1]$ and~$\FOVar[2]$,
    respectively.
    The induction step then states that
    $\InRel{\SOVarCorrInter}{
      \FOVar[1], \FOVarPos_1, \FOVarPos_1',
      \FOVar[2], \FOVarPos_2, \FOVarPos_2'}$
    implies
    $\InRel{\SOVarCorrInter}{
      \FOVar[1], \FOVarPos_1, \text{“$\FOVarPos_1' + 1$”},
      \FOVar[2], \FOVarPos_2, \text{“$\FOVarPos_2' + 1$”}}$,
    where “$\FOVarPos_1' + 1$” and “$\FOVarPos_2' + 1$”
    represent the direct successors of~$\FOVarPos_1'$ and~$\FOVarPos_2'$
    with respect to~$\FOVar[1]$ and~$\FOVar[2]$,
    respectively.
    For~$\SOVarCorrIntra$,
    the specification is very similar.
  \item\AP $\intro*\UniqueIdentifier\Of{\FOVar[1]}$
    ensures that the relations~$\SOVarIdent{0}$ and~$\SOVarIdent{1}$
    represent a binary string for~$\FOVar[1]$
    and that this string constitutes
    an $\Radius$-\kl{locally unique} \kl{identifier}.
    In detail,
    this means that, with respect to~$\FOVar[1]$,
    each position~$\FOVarPos$
    can be either
    unlabeled, labeled with~$0$, or labeled with~$1$
    (but not both).
    If~$\FOVarPos$ is labeled,
    then so must be its predecessor “$\FOVarPos - 1$”
    in the order defined by~$\SOVarTotOrd$.
    Furthermore,
    for every \kl{node}~$\FOVar[2]$
    in~$\FOVar[1]$'s $2\Radius$-\kl{neighborhood},
    there exists a position~$\FOVarPos$
    at which the \kl{labelings} of~$\FOVar[1]$ and~$\FOVar[2]$ differ,
    entailing that~$\FOVar[1]$'s \kl{identifier} is $\Radius$-\kl{locally unique}.
    In order to refer to
    “the \kl{labeling} of~$\FOVar[2]$ at position~$\FOVarPos$”,
    we use~$\SOVarCorrInter$ to relate~$\FOVarPos$ to another $k$-tuple~$\FOVarPos'$
    that represents the same position as~$\FOVarPos$
    from the point of view of \kl{node}~$\FOVar[2]$.
  \item\AP $\intro*\NeighborOrder\Of{\FOVar[1]}$
    states that in~$\FOVar[1]$'s $1$-\kl{neighborhood},
    the relation~$\SOVarIdOrd$ agrees with
    the \kl{identifier order} of the \kl{nodes}.
    That is,
    for all \kl{neighbors}~$\FOVar[2]$ and~$\FOVar[3]$ of~$\FOVar[1]$,
    we have $\InRel{\SOVarIdOrd}{\FOVar[2], \FOVar[3]}$
    precisely if the \kl{identifier} of~$\FOVar[2]$
    is smaller than the \kl{identifier} of~$\FOVar[3]$
    with respect to the \kl{identifier order}.
    This is the case if
    at the first position~$\FOVarPos$
    where the \kl{identifiers} of~$\FOVar[2]$ and~$\FOVar[3]$ differ,
    either $\FOVar[2]$'s bit is smaller than $\FOVar[3]$'s bit
    or we have reached the end of $\FOVar[2]$'s \kl{identifier}
    while $\FOVar[3]$'s \kl{identifier} still goes on.
    Again,
    we make use of the correspondence relation~$\SOVarCorrInter$
    to relate matching positions of~$\FOVar[2]$ and~$\FOVar[3]$.
  \item\AP $\intro*\Certificate\Of{\FOVar[1]}$
    is very similar to $\UniqueIdentifier\Of{\FOVar[1]}$.
    It ensures that the relations~$\SOVarCertif{0}$ and~$\SOVarCertif{1}$
    represent a \kl{certificate} of~$\FOVar[1]$
    consisting of $0$'s and $1$'s.
  \item\AP $\intro*\ExecGroundRules\Of{\FOVar[1]}$
    formalizes some basic properties that
    any \kl{execution} of~$\Machine$
    must satisfy at \kl{node}~$\FOVar[1]$.
    In particular,
    in each~\kl{round} $\Round \in \Range[1]{\Radius}$,
    at every \kl{step}~$\FOVarTime$,
    the \kl{machine} must be
    in exactly one \kl{state} $\State \in \StateSet$,
    there must be exactly one symbol
    $\Symbol \in \Set{\Endmarker, \Blank, \Separator, 0, 1}$
    written at each position~$\FOVarPos$ of each of the three \kl{tapes}
    (the symbol of the first position always being~$\Endmarker$),
    and each tape head must be located at exactly one position.
    Moreover,
    in each~\kl{round} there must be some \kl{step}~$\FOVarTime$
    at which the \kl{machine} halts by reaching one of the \kl{states}
    $\PauseState$ or~$\StopState$.
    Intuitively speaking,
    $\FOVar[1]$~must respect the basic “mechanics” of Turing machines
    and may not run out of space or time
    (both of which are bounded by the number of $k$-tuples
    in $\FOVar[1]$'s $4\Radius$-\kl{neighborhood}).
  \item\AP $\intro*\OwnInput\Of{\FOVar[1]}$
    ensures that at \kl{step}~$0$ of \kl{round}~$1$,
    \kl{node}~$\FOVar[1]$'s \kl{internal tape}
    contains the string
    $\Labeling{\Graph}(\FOVar[1]) \, \Separator \,
     \IdMap(\FOVar[1])   \, \Separator \,
     \CertifMap(\FOVar[1])$,
    where $\Labeling{\Graph}(\FOVar[1])$
    is the \kl{label} of $\FOVar[1]$
    that is represented by
    the unary relation~$\BitSet{1}{\StructRepr{\Graph}}$
    of the input \kl{structure}~$\StructRepr{\Graph}$,
    $\IdMap(\FOVar[1])$ is the \kl{identifier} of~$\FOVar[1]$
    that is represented by
    the relations~$\SOVarIdent{0}$ and~$\SOVarIdent{1}$,
    and $\CertifMap(\FOVar[1])$ is the \kl{certificate} of~$\FOVar[1]$
    that is represented by
    the relations~$\SOVarCertif{0}$ and~$\SOVarCertif{1}$.
    To check that $\Labeling{\Graph}(\FOVar[1])$
    is written at the beginning of the \kl{internal tape},
    i.e., just after the \kl{left-end marker}~$\Endmarker$,
    we make use of the relation~$\SOVarCorrIntra$ as follows:
    for every \kl{labeling bit}~$\FOVar[3]$
    and every position~$\FOVarPos$,
    if
    $\InRel{\SOVarCorrIntra}{\FOVar[1], \FOVar[3], \FOVarPos}$
    and
    $\IsBit{\Symbol}\Of{\FOVar[3]}$,
    for $\Symbol \in \Set{0, 1}$,
    then we must have
    $\InRel{\SOVarTape{1}{\Internal}{\Symbol}}
     {\FOVar[1], \FOVarTime_0, \text{“$\FOVarPos + 1$”}}$,
    where~$\FOVarTime_0$ is the tuple representing \kl{computation step}~$0$
    and “$\FOVarPos + 1$” denotes the direct successor of~$\FOVarPos$
    (according to the order defined by~$\SOVarTotOrd$ with respect to~$\FOVar[1]$).
    Based on that,
    the position of the first \kl{separator}~$\Separator$
    must be $\FOVar[1]$'s smallest position~$\FOVarPos[2]$
    such that
    $\InRel{\SOVarCorrIntra}{\FOVar[1], \FOVar[3], \text{“$\FOVarPos[2] - 1$”}}$
    does not hold for any \kl{labeling bit}~$\FOVar[3]$.
    Next,
    we check that
    the first \kl{separator} is followed by $\IdMap(\FOVar[1])$,
    using the relation~$\SOVarCorrInter$
    to express that~$\IdMap(\FOVar[1])$
    is shifted by “$\FOVarPos[2] + 1$” positions to the right
    of the initial position~$\FOVarPos_0$:
    for all positions~$\FOVarPos$ and~$\FOVarPos'$,
    if
    $\InRel{\SOVarCorrInter}{
      \FOVar[1], \FOVarPos_0, \FOVarPos,
      \FOVar[1], \text{“$\FOVarPos[2] + 1$”}, \FOVarPos'}$
    and
    $\InRel{\SOVarIdent{\Symbol}}{\FOVar[1], \FOVarPos}$,
    for $\Symbol \in \Set{0, 1}$,
    then we must have
    $\InRel{\SOVarTape{1}{\Internal}{\Symbol}}{\FOVar[1], \FOVarTime_0, \FOVarPos'}$.
    Finally,
    we check that the second \kl{separator}~$\Separator$
    is followed by $\CertifMap(\FOVar[1])$,
    proceeding completely analogously with~$\SOVarCertif{\Symbol}$
    instead of~$\SOVarIdent{\Symbol}$.
  \item\AP $\intro*\ReceiveMessages{\Round}\Of{\FOVar[1]}$,
    for $\Round \in \Range[1]{\Radius}$,
    states that the messages received by~$\FOVar[1]$
    at the beginning of \kl{round}~$\Round$
    are written on~$\FOVar[1]$'s \kl{receiving tape},
    each followed by the \kl{separator}~$\Separator$,
    and sorted according to
    the \kl{identifier order} of the senders.
    To accomplish this,
    the \kl{formula} also guarantees by induction
    that the relation~$\SOVarExch{\Round}{\Receiving}$
    correctly represents the starting positions
    of the messages that~$\FOVar[1]$ receives
    from each \kl{neighbor}.
    \begin{itemize}
    \item The base case of the definition of~$\SOVarExch{\Round}{\Receiving}$
      is straightforward,
      since the message from~$\FOVar[1]$'s first \kl{neighbor}~$\FOVar[2]_1$
      (with respect to the order relation~$\SOVarIdOrd$)
      must start immediately after the \kl{left-end marker}~$\Endmarker$
      on~$\FOVar[1]$'s \kl{receiving tape}.
      That is,
      $\InRel{\SOVarExch{\Round}{\Receiving}}{\FOVar[1], \FOVar[2]_1, \FOVarPos_1}$,
      where~$\FOVarPos_1$ is~$\FOVar[1]$'s first position
      (with respect to the relation~$\SOVarTotOrd$).
    \item Now,
      we need to distinguish two cases.
      First,
      suppose that for a given \kl{neighbor}~$\FOVar[2]$,
      we have
      $\InRel{\SOVarExch{\Round - 1}{\Sending}}{\FOVar[2], \FOVar[1], \FOVarPos'}$
      and
      $\InRel{\SOVarExch{\Round}{\Receiving}}{\FOVar[1], \FOVar[2], \FOVarPos}$,
      i.e.,
      the message that~$\FOVar[2]$ sends to~$\FOVar[1]$
      at the end of \kl{round}~$\Round - 1$
      is stored right after position~$\FOVarPos'$
      on $\FOVar[2]$'s \kl{sending tape},
      and the message that~$\FOVar[1]$ receives from~$\FOVar[2]$
      at the beginning of \kl{round}~$\Round$
      is stored right after position~$\FOVarPos$
      on $\FOVar[1]$'s \kl{receiving tape}.
      Based on this information,
      the \kl{formula}~$\ReceiveMessages{\Round}\Of{\FOVar[1]}$
      ensures that the two messages are indeed the same
      by stating that on $\FOVar[2]$'s \kl{sending tape},
      at the end of \kl{round}~$\Round - 1$,
      every position~$\FOVarPos[2]'$ located between~$\FOVarPos'$
      and the next occurrence of~$\Separator$
      (including the latter position)
      contains the same symbol as
      the corresponding position~$\FOVarPos[2]$ on $\FOVar[1]$'s \kl{receiving tape}
      at the beginning of \kl{round}~$\Round$.
      (Without loss of generality,
      we may assume that the messages on $\FOVar[2]$'s \kl{sending tape}
      are always followed by a~$\Separator$
      and do not contain any useless~$\Blank$'s.)
      The fact that position~$\FOVarPos[2]$
      corresponds to position~$\FOVarPos[2]'$
      is expressed by the relation
      $\InRel{\SOVarCorrInter}{
        \FOVar[1], \FOVarPos, \FOVarPos[2],
        \FOVar[2], \FOVarPos', \FOVarPos[2]'}$,
      which states that
      the distance from~$\FOVarPos$ to~$\FOVarPos[2]$
      on $\FOVar[1]$'s \kl{receiving tape}
      is the same as
      the distance from~$\FOVarPos'$ to~$\FOVarPos[2]'$
      on $\FOVar[2]$'s \kl{sending tape}.

      The second case is when we have
      $\InRel{\SOVarExch{\Round}{\Receiving}}{\FOVar[1], \FOVar[2], \FOVarPos}$
      but there is no~$\FOVarPos'$ such that
      $\InRel{\SOVarExch{\Round - 1}{\Sending}}{\FOVar[2], \FOVar[1], \FOVarPos'}$.
      This means that
      $\FOVar[2]$ has not written any message for~$\FOVar[1]$
      on its \kl{sending tape}
      at the end of \kl{round} $\Round - 1$,
      and therefore that
      $\FOVar[1]$ receives the empty string from~$\FOVar[2]$
      in \kl{round}~$\Round$.
      (Note that this happens in particular for $\Round = 1$.)
      In this case,
      our \kl{formula} simply states
      that at the beginning of \kl{round}~$\Round$,
      the \kl{receiving tape} of~$\FOVar[1]$
      contains the \kl{separator}~$\Separator$
      at position~“$\FOVarPos + 1$”
      (which, as before,
      represents the direct successor of~$\FOVarPos$ with respect to~$\SOVarTotOrd$).
    \item Finally,
      to complete the inductive definition of~$\SOVarExch{\Round}{\Receiving}$,
      we state that
      $\InRel{\SOVarExch{\Round}{\Receiving}}{\FOVar[1], \FOVar[2], \FOVarPos}$
      implies
      $\InRel{\SOVarExch{\Round}{\Receiving}}{\FOVar[1], \FOVar[3], \FOVarPos[2]}$
      if $\FOVar[3]$ is the smallest \kl{neighbor} of~$\FOVar[1]$
      strictly greater than~$\FOVar[2]$
      (with~respect to~$\SOVarIdOrd$)
      and~$\FOVarPos[2]$ is the smallest position of~$\FOVar[1]$
      that is strictly greater than~$\FOVarPos$
      (with respect to~$\SOVarTotOrd$)
      and contains the symbol~$\Separator$.
      To make sure that the relation $\SOVarExch{\Round}{\Receiving}$ is minimal,
      we also require that
      for every \kl{neighbor}~$\FOVar[2]$ of~$\FOVar[1]$,
      there is only one position~$\FOVarPos$ such that
      $\InRel{\SOVarExch{\Round}{\Receiving}}{\FOVar[1], \FOVar[2], \FOVarPos}$.
    \end{itemize}
  \item\AP $\intro*\InitLocalConfig{\Round}\Of{\FOVar[1]}$,
    for $\Round \in \Range[1]{\Radius}$,
    provides the missing parts of
    the description of~$\FOVar[1]$'s local configuration
    at time~$0$ in \kl{round}~$\Round$:
    First,
    if~$\Round \geq 2$,
    the \kl{formula} stipulates that the \kl{node}'s \kl{internal tape}
    contains the same string as at the end of \kl{round}~$\Round - 1$;
    for $\Round = 1$,
    the initial content of the \kl{internal tape} in \kl{round}~$1$
    has already been specified by
    the above \kl{formula}~$\OwnInput\Of{\FOVar[1]}$.
    Second,
    the \kl{sending tape} must initially be completely empty.
    Third,
    the \kl{machine}'s \kl{state} must be reset to~$\StartState$,
    unless it has reached~$\StopState$ in \kl{round}~$\Round - 1$,
    in which case the \kl{state} remains unchanged.
  \item\AP $\intro*\ComputeLocally{\Round}\Of{\FOVar[1]}$,
    for $\Round \in \Range[1]{\Radius}$,
    describes how~$\Machine$ performs
    a single \kl{computation step}
    at \kl{node}~$\FOVar[1]$ in \kl{round}~$\Round$.
    The \kl{formula} essentially states the following:
    On the one hand,
    for the \kl{receiving tape},
    the cell contents remain the same at all \kl{steps}~$\FOVarTime$,
    while for the \kl[internal tape]{internal} and \kl{sending tapes},
    the symbol at position~$\FOVarPos$ remains unchanged
    between \kl{steps}~$\FOVarTime$ and~“$\FOVarTime + 1$”
    if the corresponding tape head is not located at position~$\FOVarPos$
    at \kl{step}~$\FOVarTime$.
    On the other hand,
    if at \kl{step}~$\FOVarTime$
    the \kl{machine} is in \kl{state}~$\State$
    and reads the symbols $\Symbol_1$,~$\Symbol_2$,~$\Symbol_3$
    at positions $\FOVarPos_1$,~$\FOVarPos_2$,~$\FOVarPos_3$
    of the
    \kl[receiving tape]{receiving}, \kl[internal tape]{internal}, and \kl{sending tapes},
    respectively,
    then the configuration at time~“$\FOVarTime + 1$”
    must take into account the updates specified by
    the \kl{transition function}~$\TransFunc$.
    That is,
    if
    $\TransFunc(\State, \Symbol_1, \Symbol_2, \Symbol_3)
    = \Tuple{\State', \Symbol'_2, \Symbol'_3, \Move_1, \Move_2, \Move_3}$,
    then at time “$\FOVarTime + 1$”,,
    the \kl{machine} is in \kl{state}~$\State'$,
    position~$\FOVarPos_2$ of the \kl{internal tape}
    contains the symbol~$\Symbol'_2$,
    position~$\FOVarPos_3$ of the \kl{sending tape}
    contains the symbol~$\Symbol'_3$,
    and the three tape heads are located at positions
    “$\FOVarPos_1 + \Move_1$”,
    “$\FOVarPos_2 + \Move_2$”, and
    “$\FOVarPos_3 + \Move_3$”.
  \item\AP $\intro*\SendMessages{\Round}\Of{\FOVar[1]}$,
    for $\Round \in \Range[1]{\Radius}$,
    guarantees that the relation~$\SOVarExch{\Round}{\Sending}$
    correctly represents the starting positions
    of the messages that~$\FOVar[1]$ sends to each \kl{neighbor}
    at the end of \kl{round}~$\Round$.
    Since the inductive definition of~$\SOVarExch{\Round}{\Sending}$
    is very similar to that of~$\SOVarExch{\Round}{\Receiving}$
    given in~$\ReceiveMessages{\Round}\Of{\FOVar[1]}$,
    we do not further elaborate on it.
    Let us only note that in contrast to $\SOVarExch{\Round}{\Receiving}$,
    for some \kl{neighbors}~$\FOVar[2]$
    there might be no~$\FOVarPos$
    such that
    $\InRel{\SOVarExch{\Round}{\Sending}}{\FOVar[1], \FOVar[2], \FOVarPos}$.
    This happens when~$\FOVar[1]$ does not write enough messages
    on its \kl{sending tape} for all its \kl{neighbors}
    (in which case the missing messages default to the empty string).
  \item\AP $\intro*\AcceptFml\Of{\FOVar[1]}$
    states that
    $\FOVar[1]$ must eventually reach \kl{state}~$\StopState$
    in some \kl{round} $\Round \in \Range[1]{\Radius}$,
    with the string~$1$ written on its \kl{internal tape}.
    \qedhere
  \end{itemize}
\end{proof}

Fagin's theorem was extended
by Stockmeyer~\cite{DBLP:journals/tcs/Stockmeyer76}
to the higher levels of the \kl{polynomial hierarchy},
thus establishing a levelwise correspondence with the \kl{second-order hierarchy}
(see, e.g., \cite[Cor.~9.9]{DBLP:books/sp/Libkin04}).
In the next theorem,
we show that this extension also carries over to the distributed setting.
Again,
the original result can be recovered
by restricting both sides of the equivalences to \kl{single-node graphs}.

\begin{theorem}
  \label{thm:local-hierarchy-equivalence}
  The \kl{locally polynomial hierarchy}
  and the \kl{local second-order hierarchy} on \kl{graphs}
  are levelwise equivalent from level~$1$ onwards.
  More precisely,
  $\SigmaLP{\Level} = \SigmaLFO{\Level}\On{\GRAPH}$
  and
  $\PiLP{\Level} = \PiLFO{\Level}\On{\GRAPH}$
  for all $\Level \in \Positives$.
\end{theorem}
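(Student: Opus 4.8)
The plan is to prove both inclusions of each equality directly, uniformly in $\Level \geq 1$; no genuine induction on~$\Level$ is needed, and the $\Level = 1$ instance of the backward direction is precisely the backward direction of Theorem~\ref{thm:local-fagin} that was left open above. The two workhorses are Lemma~\ref{lem:polynomial-space-time}, which lets me fix once and for all an arity~$k$ such that --- under \emph{small} \kl{identifier} assignments --- every \kl{node}'s whole space-time diagram in every \kl{round} can be addressed by $k$-tuples of \kl{elements} drawn from a constant-radius \kl{neighborhood}; and Lemma~\ref{lem:restrictive-arbiters} together with Remark~\ref{rem:small-sized-identifiers}, which let me move freely between \emph{permissive} and \emph{restrictive} \kl{arbiters} and assume \kl{identifiers} are small. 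I also use the normalization from the proof of Lemma~\ref{lem:restrictive-arbiters}: an \kl{arbiter} may be assumed to treat out-of-bound \kl{certificates} correctly according to their position in the certificate-list, rejecting if the first offending block belongs to the existential player and accepting if it belongs to the universal player.

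\textbf{Forward direction} ($\SigmaLP{\Level} \subseteq \SigmaLFO{\Level}\On{\GRAPH}$, and symmetrically for $\PiLP{}$). I would generalize the forward construction of Theorem~\ref{thm:local-fagin}. From a $\SigmaLP{\Level}$-\kl{arbiter}~$\Machine$ for~$\Property$ I build a \kl{sentence} whose $i$-th block of second-order quantifiers mirrors the $i$-th certificate-move of the game and introduces, just like the corresponding relations in that proof, a family of relations encoding that move's \kl{certificates}. The relations encoding \kl{identifiers}, the per-\kl{node} linear orders on $k$-tuples, and the two correspondence relations are placed in the \emph{first existential block} --- the outermost one for $\SigmaLP{\Level}$, and for $\PiLP{\Level}$ with $\Level \geq 2$ the second one. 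The relations encoding the \kl{execution} (states, head positions, tape contents, message positions, over all \kl{rounds}) go into the \emph{last block}. What keeps the alternation depth at exactly~$\Level$ is that the \kl{execution} is a deterministic function of the \kl{graph}, the \kl{identifiers}, and the \kl{certificates}: if the last block is existential we ask, Fagin-style, for \emph{some} valid accepting \kl{execution}; if it is universal we instead require that \emph{every} putative \kl{execution} which is valid be accepting. Both are equivalent to ``the unique \kl{execution} is accepting'', using that a valid one exists because~$\Machine$ terminates and, under small \kl{identifiers}, fits into $k$-tuples. The trailing $\ForAllNode{\FOVar[1]}\,\Formula[1]\Of{\FOVar[1]}$ is the $\BFL$-\kl{formula} from the proof of Theorem~\ref{thm:local-fagin}, checking from \kl{node}~$\FOVar[1]$'s local viewpoint that these relations describe a valid \kl{execution} in which~$\FOVar[1]$ accepts; the only adjustment is that, for a \emph{universally} quantified block, the relevant conjuncts become the hypothesis of an implication with conclusion ``$\FOVar[1]$ accepts'', so that the universal player's illegal moves --- oversized \kl{certificates}, or \kl{identifier} relations that fail to encode a small locally unique assignment --- vacuously satisfy the \kl{formula}. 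Since $\Graph \in \Property$ does not depend on the \kl{identifier} assignment and small ones exist, quantifying over the \kl{identifier} relations is harmless. For the degenerate case $\PiLP{1}$ there is no existential block, so the \kl{identifier} and \kl{execution} relations are folded into the universal player's sole block under the same implication trick.

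\textbf{Backward direction} ($\SigmaLFO{\Level}\On{\GRAPH} \subseteq \SigmaLP{\Level}$, symmetrically for $\PiLP{}$). Start from a \kl{sentence} $\ExistsRel{\Vector{\SOVar}_1}\,\ForAllRel{\Vector{\SOVar}_2}\cdots\ForAll{\FOVar[1]}\,\Formula[1]\Of{\FOVar[1]}$ with $\Formula[1] \in \BFL$. Since $\Formula[1]$ is bounded around~$\FOVar[1]$, each of its atomic subformulas over a \kl{relation variable} only involves tuples all of whose components lie within a fixed \kl{distance}~$d$ of~$\FOVar[1]$; hence tuples of mutually far \kl{elements} are irrelevant, and we may assume each quantified relation is \emph{local}, containing only tuples of mutually close \kl{elements}. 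I would then build a \emph{restrictive} $\SigmaLP{\Level}$-\kl{arbiter} and invoke Lemma~\ref{lem:restrictive-arbiters}: the $i$-th \kl{certificate} assignment~$\CertifMap_i$ encodes, at each \kl{node}~$\Node$, the membership status of every local tuple inside~$\Node$'s $d$-\kl{neighborhood}; a certificate restrictor~$\Machine_i$ enforces \emph{consistency}, namely that all \kl{nodes} seeing a given local tuple agree on its status by deferring to the smallest-\kl{identifier} \kl{node} that sees it. This restrictor is locally repairable: a rejecting \kl{node}~$\Node$ is repaired by overwriting only its opinions on tuples for which it is \emph{not} the reference \kl{node}, which leaves every reference opinion --- and hence every other \kl{node}'s \kl{verdict} --- unchanged. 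The \kl{arbiter} proper has each \kl{node} collect its $d$-\kl{neighborhood} in a constant number of \kl{rounds}, read off the relation memberships from its \kl{certificates}, and evaluate $\Formula[1]\Of{\Node}$ by brute force over all \kl{interpretations} of its \kl{first-order variables} --- polynomially many, over a bounded-size \kl{structure}, hence in polynomial \kl{step time} --- accepting iff $\Formula[1]\Of{\Node}$ holds. Unanimous acceptance then coincides with $\ForAll{\FOVar[1]}\,\Formula[1]$, and the \kl{certificate} game mirrors the second-order-quantifier game move for move; the $\PiLP{}$-case is identical, with the universal player owning the outermost block.

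\textbf{Main obstacle.} I expect the two delicate points to be: (i) in the forward direction, arranging the \kl{quantifier} blocks so that encoding the \kl{execution}, and handling the universal player's illegal moves and the degenerate $\PiLP{1}$ case, costs no extra alternation --- this rests entirely on the determinism observation above; and (ii) in the backward direction, designing the consistency restrictor so that it is genuinely locally repairable, which is exactly what makes Lemma~\ref{lem:restrictive-arbiters} applicable and promotes local evaluation of a bounded \kl{formula} to a bona fide \kl{arbiter}. Everything else is the bookkeeping of Theorem~\ref{thm:local-fagin} carried out with one extra parameter.
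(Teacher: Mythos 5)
Your overall plan tracks the paper's proof closely: Fagin-style space-time encoding with per-node $k$-tuple addressing in the forward direction, certificate-encoded relations with local $\BFL$-evaluation in the backward direction, and the conjunction/implication discipline for checking validity of existentially versus universally quantified blocks. The backward direction is correct; your consistency restrictor (defer to the smallest-identifier owner of each tuple, repair by overwriting non-reference opinions) is locally repairable and does the job, though it is somewhat heavier than the paper's device of unique ownership: there, node~$\Node[1]$'s certificate encodes only tuples whose \emph{first} component is $\Node[1]$ or one of its labeling bits, so no cross-node agreement needs to be checked or repaired at all.

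The forward direction has a real bug in the $\PiLP{\Level}$ case for $\Level \geq 2$: you place the auxiliary relations (linear orders, correspondences, identifier relations, identifier order) in the \emph{second} block, after the universal player's certificate block~$\Vector{\SOVar}_1$. This cannot work, because the relations in $\Vector{\SOVar}_1$ have no determinate meaning until the addressing scheme (the linear order~$\SOVarTotOrd$) and the identifier assignment are fixed. Concretely, take a $\PiLP{2}$-arbiter that, ignoring~$\CertifMap_2$, accepts at a node~$\Node$ iff $\CertifMap_1(\Node) \neq \IdMap(\Node)$; this arbitrates $\Property = \EmptySet$, since Adam can always play $\CertifMap_1 = \IdMap$. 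But with your quantifier order $\ForAllRel{\Vector{\SOVar}_1}\,\ExistsRel{\Vector{\SOVar}_{\Tag{aux}}}\dots$, Eve chooses the identifier assignment \emph{after} seeing Adam's relations, and she can always pick identifiers and a linear order under which whatever $\Vector{\SOVar}_1$ decodes to avoids her chosen $\IdMap$ everywhere, making the arbiter accept at every node; the resulting sentence would then be satisfied on every graph, contradicting $\Property = \EmptySet$. The fix is exactly the move you already make for $\PiLP{1}$: fold the auxiliary relations into the \emph{first, universal} block and guard them with the implication trick. Identifier-independence of~$\Property$ does make universal quantification over identifiers harmless, as your last observation notes, but this requires the auxiliary relations to come \emph{before} every certificate block, not merely before the first existential one.
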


\begin{proof}[Proof of Theorem~\ref{thm:local-hierarchy-equivalence} -- Backward direction]
  \phantomsection\label{prf:local-hierarchy-equivalence-backward}%
  Let us first assume that
  we are given a $\SigmaLFOL{\Level}$-\kl{sentence}
  $\Formula[1] =
   \ExistsRel{\Vector{\SOVar}_1}
   \ForAllRel{\Vector{\SOVar}_2} \dots
   \QuantifierRel{\Vector{\SOVar}_{\Level}} \,
   \ForAll{\FOVar[1]} \, \Formula[2]\Of{\FOVar[1]}$,
  where each $\Vector{\SOVar}_i$
  is a tuple of \kl{second-order variables},
  and $\QuantifierRel{}$ is~$\ForAllRel{}$ if $\Level$ is even
  and $\ExistsRel{}$ otherwise.
  Let $\Property$ be
  the \kl(graph){property} \kl{defined} by~$\Formula[1]$ on \kl{graphs},
  and let $\Radius$ be the maximum nesting depth
  of \kl(quantifier){bounded} \kl{first-order quantifiers}
  in the $\BFL$-\kl{formula}~$\Formula[2]$
  (intuitively,
  the distance up to which $\Formula[2]$ can “see”).
  We construct
  a \kl{restrictive} $\SigmaLP{\Level}$-\kl{arbiter}~$\Machine_{\Formula[1]}$
  for~$\Property$
  under $\Radius$-\kl{locally unique} \kl{identifiers}
  and $\Tuple{\Radius, \CertifPolynomial}$\nobreakdash-\kl{bounded certificates}
  restricted by \kl{machines}
  $\Machine_1, \dots, \Machine_{\Level}$.
  Here,
  $\CertifPolynomial$~is a polynomial chosen based on
  the number and \kl{arities} of the \kl{variables} in
  $\Vector{\SOVar}_1, \dots, \Vector{\SOVar}_{\Level}$
  such that
  the \kl{certificates} described below satisfy the
  $\Tuple{\Radius, \CertifPolynomial}$\nobreakdash-\kl(certificate){boundedness}
  condition.
  The intention is that
  for each $i \in \Range[1]{\Level}$
  and each \kl{node}~$\Node[1]$ of the input \kl{graph}~$\Graph$,
  the \kl{certificate}~$\CertifMap_i(\Node[1])$ \kl{encodes}
  part of a \kl{variable assignment} on~$\StructRepr{\Graph}$
  that assigns \kl{interpretations}
  to the \kl{relation variables} in~$\Vector{\SOVar}_i$.
  More precisely,
  each \kl{machine}~$\Machine_i$
  restricts quantification over~$\CertifMap_i$
  such that
  for each \kl{relation variable}~$\SOVar$ in~$\Vector{\SOVar}_i$
  of \kl{arity}~$\Arity$,
  the \kl{certificate}~$\CertifMap_i(\Node[1])$ must \kl{encode}
  a set of $\Arity$-tuples
  whose first \kl{element} represents
  either~$\Node[1]$ or one of its \kl{labeling bits},
  and whose remaining \kl{elements} all represent \kl{nodes} or \kl{labeling bits}
  that lie in the $2\Radius$-\kl{neighborhood} of~$\Node[1]$.
  In order to refer to the \kl{elements}
  corresponding to a particular \kl{node}~$\Node[2]$,
  the \kl{certificate} makes use of
  $\Node[2]$'s \kl{locally unique} \kl{identifier}~$\IdMap(\Node[2])$.
  Since these additional restrictions on the \kl{certificates}
  clearly satisfy \kl{local repairability},
  Lemma~\ref{lem:restrictive-arbiters}
  allows us to subsequently convert~$\Machine_{\Formula[1]}$
  into an equivalent $\SigmaLP{\Level}$-\kl{arbiter} without restrictions.

  The \kl{machine}~$\Machine_{\Formula[1]}$
  proceeds in $\Radius + 1$ \kl{rounds}.
  In the first~$\Radius$ \kl{rounds},
  each \kl{node}~$\Node[1]$ collects information
  about its $\Radius$-\kl{neighborhood},
  which allows it to reconstruct
  $\Neighborhood{\Graph}{\Radius}{\Node[1]}$
  and all the \kl{identifiers} and \kl{certificates}
  in that \kl{subgraph}.
  Then,
  in the last \kl{round},
  each \kl{node}~$\Node[1]$ evaluates~$\Formula[2]$ locally
  on~$\StructNeighborhood{\Graph}{\Radius}{\Node[1]}$
  and \kl{accepts} if and only if
  $\Formula[2]$ is \kl{satisfied} at
  the \kl{elements} representing~$\Node[1]$ and its \kl{labeling bits}.
  Since~$\Formula[2]$ only makes use of \kl{first-order quantification},
  this can be done in polynomial~time
  (simply by exhaustive search).

  Note that the \kl{certificate assignments}
  $\CertifMap_1, \dots, \CertifMap_{\Level}$
  \kl{encode} relations
  $\Vector{\SOVar}_1, \dots, \Vector{\SOVar}_{\Level}$
  on~$\StructRepr{\Graph}$
  that relate only \kl{elements}
  whose associated \kl{nodes} lie
  at \kl{distance} at most~$2\Radius$ from each other.
  However,
  this does not entail any loss of generality
  because the \kl{formula}~$\Formula[2]$
  (which belongs to~$\BFL$)
  can only make statements about \kl{elements}
  that lie this close together anyway.

  The construction is completely analogous
  if we are given a $\PiLFOL{\Level}$-\kl{sentence}
  instead of a $\SigmaLFOL{\Level}$-\kl{sentence}.
\end{proof}

\begin{proof}[Proof of Theorem~\ref{thm:local-hierarchy-equivalence} -- Forward direction]
  \phantomsection\label{prf:local-hierarchy-equivalence-forward}%
  We start by showing that
  $\SigmaLP{\Level} \subseteq \SigmaLFO{\Level}\On{\GRAPH}$.
  Let $\Property$ be a \kl{graph property} in~$\SigmaLP{\Level}$,
  and let $\Machine$ be a $\SigmaLP{\Level}$-\kl{arbiter} for~$\Property$
  that operates under
  $\IdentRadius$\nobreakdash-\kl{locally unique} \kl{identifiers}
  and $\Tuple{\CertifRadius_1, \CertifPolynomial_1}$-\kl{bounded certificates},
  and runs in \kl{round time}~$\Radius_2$ and \kl{step time}~$\Polynomial_2$.
  Moreover,
  let $\Radius = \max\Set{\IdentRadius, \CertifRadius_1, \Radius_2}$,
  and let $\Polynomial$ be a polynomial
  that bounds both $\CertifPolynomial_1$ and~$\Polynomial_2$.
  By the proof of Lemma~\ref{lem:restrictive-arbiters},
  we may assume without loss of generality
  that $\Machine$ relativizes quantification
  to \kl{certificate assignments} that satisfy
  the $\Tuple{\CertifRadius_1, \CertifPolynomial_1}$-\kl(certificate){boundedness} condition,
  so it does not matter if
  \kl(certificate){Eve} and \kl(certificate){Adam} choose
  \kl{certificates} that are too large.
  Now,
  we fix $k \in \Naturals$ such that
  the polynomial~$f$ described in Lemma~\ref{lem:polynomial-space-time}
  (for our choices of $\Level$, $\Radius$, $\Polynomial$, and $\Machine$)
  satisfies $f(n) < n^k$ for $n > 1$.
  By~Lemma~\ref{lem:polynomial-space-time},
  for every \kl{graph}~$\Graph$
  whose \kl{structural representation} has at least two \kl{elements},%
  \footnote{See Footnote~\ref{ftn:local-fagin-forward}
    in the proof of Theorem~\ref{thm:local-fagin} (forward direction)
    on page~\pageref{ftn:local-fagin-forward}.}
  every \kl{small} $\Radius$-\kl{locally unique}
  \kl{identifier assignment}~$\IdMap$ of~$\Graph$,
  and all
  $\Tuple{\Radius, \Polynomial}$-\kl{bounded certificate assignments}
  $\CertifMap_1, \dots, \CertifMap_{\Level}$
  of $\Tuple{\Graph, \IdMap}$,
  the \kl{step running time} and \kl{space usage}
  of each \kl{node} $\Node \inG \Graph$
  are bounded by~%
  $\bigl(\CardS{\StructNeighborhood{\Graph}{4\Radius}{\Node}}\bigr)^k$
  in each \kl{round} $\Round \in \Range[1]{\Radius}$
  of the corresponding \kl{execution} of~$\Machine$.
  Intuitively,
  this gives us a bound on the amount of information
  required to describe a game between
  \kl(certificate){Eve} and \kl(certificate){Adam}
  and the subsequent \kl{execution} of the \kl{arbiter}~$\Machine$,
  assuming that the \kl{nodes} of the input \kl{graph}
  are assigned \kl{small} \kl{identifiers}.

  \AP
  To convert~$\Machine$ into a
  $\SigmaLFOL{\Level}$-\kl{sentence}~$\Formula[1]_{\Machine}$
  \kl{defining}~$\Property$,
  we use the same \kl{relation variables}
  as in the proof of Theorem~\ref{thm:local-fagin}
  (with the same intended \kl{interpretations}),
  except that instead of $\Tuple{\SOVarCertif{0}, \SOVarCertif{1}}$
  we now use
  $\Tuple{
     \intro*\SOVarCertifAlt{\LevelIdx}{0},
     \reintro*\SOVarCertifAlt{\LevelIdx}{1}
   }_{\LevelIdx \in \Range[1]{\Level}}$
  to represent the \kl{certificate assignments}
  $\CertifMap_1, \dots, \CertifMap_{\Level}$.
  More precisely,
  our \kl{formula}~$\Formula[1]_{\Machine}$ is of the form
  \begin{equation*}
    \ExistsRel{\Vector{\SOVar}_{\Tag{aux}}} \,
    \ExistsRel{\Vector{\SOVar}_1} \:
    \ForAllRel{\Vector{\SOVar}_2}
    \dots
    \QuantifierRel{\Vector{\SOVar}_{\Level}} \;
    \QuantifierRel{\Vector{\SOVar}_{\Tag{exe}}} \:
    \ForAllNode{\FOVar[1]} \:
    \Formula[2]\Of{\FOVar[1]},
  \end{equation*}
  where
  $\QuantifierRel{}$ is~$\ForAllRel{}$ if $\Level$ is even
  and $\ExistsRel{}$ otherwise,
  $\Vector{\SOVar}_{\Tag{aux}} =
   \Tuple{
     \SOVarTotOrd,
     \SOVarCorrInter,
     \SOVarCorrIntra,
     \SOVarIdent{0},
     \SOVarIdent{1},
     \SOVarIdOrd
   }$
  is a collection of auxiliary \kl{relation variables}
  that help us specify the remaining relations,
  $\SOVarIdent{0}$ and $\SOVarIdent{1}$ are intended to represent an
  $\Radius$-\kl{locally unique} \kl{identifier assignment}~$\IdMap$,
  each~$\Vector{\SOVar}_{\LevelIdx}$
  is a pair $\Tuple{\SOVarCertifAlt{\LevelIdx}{0}, \SOVarCertifAlt{\LevelIdx}{1}}$
  of \kl{variables} intended to represent
  a \kl{certificate assignment}~$\CertifMap_{\LevelIdx}$
  (for $\LevelIdx \in \Range[1]{\Level}$),
  and
  $\Vector{\SOVar}_{\Tag{exe}} =
   \Tuple{
     \dots,
     \SOVarState{\Round}{\State}, \dots,
     \SOVarHead{\Round}{\TapeIdx}, \dots,
     \SOVarTape{\Round}{\TapeIdx}{\Symbol}, \dots,
     \SOVarExch{\Round}{\TapeIdx}, \dots
   }$
  is a collection of \kl{variables} intended to represent
  the \kl{execution} of~$\Machine$ on the input \kl{graph}
  under~$\IdMap$ and
  $\CertifMap_1
   \CertifConcat \dots \CertifConcat
   \CertifMap_{\Level}$.

  Similarly to the proof of Theorem~\ref{thm:local-fagin},
  we would like to state in the $\BFL$-\kl{formula}~$\Formula[2]\Of{\FOVar[1]}$
  that from the point of view of \kl{node}~$\FOVar[1]$,
  the relations in
  $\Vector{\SOVar}_{\Tag{aux}}$,
  $\Vector{\SOVar}_1$,
  \dots,
  $\Vector{\SOVar}_{\Tag{exe}}$
  are valid
  in the sense that they correspond to their intended \kl{interpretations}.
  However,
  in order to relativize all quantifications to valid relations,
  we now have to take into account
  whether a relation is quantified existentially or universally:
  $\Formula[2]\Of{\FOVar[1]}$ must be
  false if the first invalid relation known to~$\FOVar[1]$
  is chosen existentially,
  but true if it is chosen universally.
  Intuitively speaking,
  a relation is invalid from $\FOVar[1]$'s point of view
  if it does not correctly represent the part of the \kl{execution}
  that affects~$\FOVar[1]$
  (in particular,
  $\FOVar[1]$'s local computations
  and the information $\FOVar[1]$ receives from its \kl{neighbors}).
  Unfortunately,
  this also means that for relations chosen universally,
  $\FOVar[1]$ cannot simply rely on the fact
  that other \kl{nodes} will detect invalidities;
  if $\FOVar[1]$~is affected by an invalidity within its \kl{neighborhood},
  then it must be “aware” of this itself
  to ensure that $\Formula[2]\Of{\FOVar[1]}$ holds true.
  Therefore,
  unlike in the proof of Theorem~\ref{thm:local-fagin},
  it is not sufficient for $\Formula[2]\Of{\FOVar[1]}$ to only check
  $\FOVar[1]$'s local computations and message exchanges;
  it must also check those of the \kl{nodes}
  that have a direct or indirect influence on~$\FOVar[1]$.
  This is very similar to the quantifier relativization algorithm described in
  the second part of the proof of Lemma~\ref{lem:restrictive-arbiters}.

  We now give a formal definition of~$\Formula[2]\Of{\FOVar[1]}$
  using the helper \kl{formulas}
  $\LinearTupleOrder\Of{\FOVar[1]}$,
  $\Correspondences\Of{\FOVar[1]}$, and so on,
  introduced in the proof of Theorem~\ref{thm:local-fagin}.
  The structure of~$\Formula[2]\Of{\FOVar[1]}$ depends on
  the prefix of \kl{second-order quantifiers} of~$\Formula[1]_{\Machine}$.
  At the outermost level,
  $\Formula[2]\Of{\FOVar[1]}$ checks that
  the relations in~$\Vector{\SOVar}_{\Tag{aux}}$ are valid
  from the point of view of all \kl{nodes}
  in the $\Radius$-\kl{neighborhood} of~$\FOVar[1]$,
  using the \kl{subformula}
  \begin{equation*}
    \Formula[2]_{\Tag{aux}}\Of{\FOVar[1]} =
    \ForAllLocNode{\FOVar[2]}{\Radius}{\FOVar[1]}
    \left(
      \begin{aligned}
        &\LinearTupleOrder\Of{\FOVar[2]}
        \, \AND \,
        \Correspondences\Of{\FOVar[2]}
        \, \AND {} \\
        &\UniqueIdentifier\Of{\FOVar[2]}
        \, \AND \,
        \NeighborOrder\Of{\FOVar[2]} \,
      \end{aligned}
    \right)\!.
  \end{equation*}
  Since $\Vector{\SOVar}_{\Tag{aux}}$ is quantified existentially,
  $\Formula[2]_{\Tag{aux}}\Of{\FOVar[1]}$
  is used as a conjunct in~$\Formula[2]\Of{\FOVar[1]}$, i.e.,
  \begin{equation*}
    \Formula[2]\Of{\FOVar[1]} = \,
    \Formula[2]_{\Tag{aux}}\Of{\FOVar[1]}
    \AND
    \Formula[2]_1\Of{\FOVar[1]}.
  \end{equation*}
  The second conjunct~$\Formula[2]_1\Of{\FOVar[1]}$
  checks the validity of the remaining relations,
  starting with those in~$\Vector{\SOVar}_1$.
  Depending on whether a relation is quantified existentially or universally,
  its invalidity makes the remainder of the \kl{formula} hold false or true,
  respectively.
  This leads to the inductive definition
  \begin{equation*}
    \Formula[2]_{\LevelIdx}\Of{\FOVar[1]} =
    \begin{cases*}
      \bigl(
        \ForAllLocNode{\FOVar[2]}{\Radius}{\FOVar[1]} \,
        \CertificateAlt{\LevelIdx}\Of{\FOVar[2]}
      \bigr)
      \IMP
      \Formula[2]_{\LevelIdx + 1}\Of{\FOVar[1]}
      & if $\LevelIdx$ is even,
      \\
      \bigl(
        \ForAllLocNode{\FOVar[2]}{\Radius}{\FOVar[1]} \,
        \CertificateAlt{\LevelIdx}\Of{\FOVar[2]}
      \bigr)
      \, \AND \,
      \Formula[2]_{\LevelIdx + 1}\Of{\FOVar[1]}
      & if $\LevelIdx$ is odd,
    \end{cases*}
  \end{equation*}
  for $\LevelIdx \in \Range[1]{\Level}$,
  where the helper formula
  \AP
  $\intro*\CertificateAlt{\LevelIdx}\Of{\FOVar[2]}$
  is a variant of
  $\Certificate\Of{\FOVar[2]}$
  that ensures that the
  relations~$\SOVarCertifAlt{\LevelIdx}{0}$ and~$\SOVarCertifAlt{\LevelIdx}{1}$
  correctly represent a \kl{certificate} of~$\FOVar[2]$
  consisting of $0$'s and $1$'s.
  For the last \kl{formula} $\Formula[2]_{\Level}\Of{\FOVar[1]}$,
  the \kl{subformula}~$\Formula[2]_{\Level + 1}\Of{\FOVar[1]}$
  is defined below.
  It checks that
  $\Vector{\SOVar}_{\Tag{exe}}$ is valid from $\FOVar[1]$'s point of view
  and that
  $\FOVar[1]$ eventually \kl{accepts}:
  \begin{equation*}
    \Formula[2]_{\Level + 1}\Of{\FOVar[1]} =
    \begin{cases*}
      \Formula[2]_{\Tag{exe}}\Of{\FOVar[1]}
      \IMP
      \AcceptFml\Of{\FOVar[1]}
      & if $\Level$ is even,
      \\
      \Formula[2]_{\Tag{exe}}\Of{\FOVar[1]}
      \, \AND \,
      \AcceptFml\Of{\FOVar[1]}
      & if $\Level$ is odd,
    \end{cases*}
  \end{equation*}
  where the \kl{formula}
  \begin{align*}
    \Formula[2]_{\Tag{exe}}\Of{\FOVar[1]} = \;\,
    &\ForAllLocNode{\FOVar[2]}{\Radius}{\FOVar[1]}
    \Bigl(
      \ExecGroundRules\Of{\FOVar[2]}
      \, \AND \,
      \OwnInputAlt\Of{\FOVar[2]}
    \Bigr)
    \; \AND {} \\
    &\BigAND_{\lalign{\Round \in \Range[1]{\Radius}}} \,
    \ForAllLocNode{\FOVar[2]}{\Radius - \Round}{\FOVar[1]}
    \left(
      \begin{aligned}
        &\ReceiveMessages{\Round}\Of{\FOVar[2]}
        \, \AND \,
        \InitLocalConfig{\Round}\Of{\FOVar[2]}
        \, \AND {} \\
        &\ComputeLocally{\Round}\Of{\FOVar[2]}
        \, \AND \,
        \SendMessages{\Round}\Of{\FOVar[2]}
      \end{aligned}
    \right)
  \end{align*}
  states that
  the part of the \kl{execution} that has an influence on~$\FOVar[1]$
  is correctly represented by the appropriate relations.
  (The helper \kl{formula}
  \AP
  $\intro*\OwnInputAlt\Of{\FOVar[2]}$ is a variant of $\OwnInput\Of{\FOVar[2]}$
  that takes into account the \kl{certificate assignments}
  $\CertifMap_1, \dots, \CertifMap_{\Level}$.)
  Since $\FOVar[1]$ is not influenced by the local computations
  that \kl{nodes} at \kl{distance}~$\Round$ make
  after \kl{round} $\Radius - \Round$,
  we only check those they make between \kl{rounds}~$1$ and~$\Radius - \Round$.

  To show that
  $\PiLP{\Level} \subseteq \PiLFO{\Level}\On{\GRAPH}$,
  the construction is almost the same,
  but the \kl{formula}~$\Formula[1]_{\Machine}$ is now of the form
  \begin{equation*}
    \ForAllRel{\Vector{\SOVar}_{\Tag{aux}}} \,
    \ForAllRel{\Vector{\SOVar}_1} \:
    \ExistsRel{\Vector{\SOVar}_2}
    \dots
    \QuantifierRel{\Vector{\SOVar}_{\Level}} \;
    \QuantifierRel{\Vector{\SOVar}_{\Tag{exe}}} \:
    \ForAllNode{\FOVar[1]} \:
    \Formula[2]\Of{\FOVar[1]},
  \end{equation*}
  where
  $\QuantifierRel{}$ is~$\ExistsRel{}$ if $\Level$ is even
  and $\ForAllRel{}$ otherwise.
  Since the auxiliary \kl{relation variables} in~$\Vector{\SOVar}_{\Tag{aux}}$
  are now quantified universally,
  the \kl{subformula}~$\Formula[2]\Of{\FOVar[1]}$ becomes
  \begin{equation*}
    \Formula[2]\Of{\FOVar[1]} = \,
    \Formula[2]_{\Tag{aux}}\Of{\FOVar[1]}
    \IMP
    \Formula[2]_1\Of{\FOVar[1]}.
  \end{equation*}
  Similarly,
  the roles of even and odd indices are swapped
  in the definitions of
  $\Formula[2]_{\LevelIdx}\Of{\FOVar[1]}$,
  for $\LevelIdx \in \Range[1]{\Level + 1}$,
  to account for the fact
  that the quantifier alternation now starts
  with a block of \kl{universal quantifiers}.
\end{proof}

%%% Local Variables:
%%% mode: latex
%%% TeX-master: "../lph-paper"
%%% End:

\section{Hardness and completeness results}
\label{sec:reductions}

Reductions play a fundamental role in classical complexity theory,
allowing us to compare
the computational complexities of different problems,
even when we are unable to determine any of them absolutely.
Fraigniaud, Korman, and Peleg~\cite{DBLP:journals/jacm/FraigniaudKP13}
transferred this idea to the setting of distributed decision,
where they introduced the notion of \emph{local reductions}.
A local reduction
from a \kl(graph){property}~$\Property$
to a \kl(graph){property}~$\Property'$
is performed by a distributed algorithm
that modifies the \kl{labeling} of the input \kl{graph}
in \kl{constant round time}
such that
the original \kl{graph} has \kl(graph){property}~$\Property$
if and only if
the relabeled \kl{graph} has \kl(graph){property}~$\Property'$.
While this unconstrained definition turned out to be too strong
for the types of \kl{certificates} considered,
Balliu, D'Angelo, Fraigniaud, and Olivetti~\cite{DBLP:journals/jcss/BalliuDFO18}
later refined the concept
and used it to establish completeness results
for two classes of their \kl{identifier}-independent hierarchy.
However,
they also noted that the local reductions involved were
“very much time consuming at each node”
and left as an open problem
“whether non-trivial hardness results can be established
under polynomial-time local reductions”.

In this section,
we extend Karp's~\cite{DBLP:conf/coco/Karp72} notion
of polynomial-time reductions
to computer networks.
Our definition can also be seen as a further refinement
of the aforementioned local reductions,
where we
impose bounds on the \kl{step running time}
of the algorithms performing the reductions
and additionally require
them to work under \kl{locally unique} \kl{identifiers}.
As a presentation choice,
we consider \kl{graph} transformations
that are slightly more general than just relabelings.
These transformations may implicitly \kl{encode}
parts of the topology of the new \kl{graph}
in the output computed by the \kl{nodes} of the old \kl{graph},
thus allowing \kl{reductions} to more natural \kl{graph properties}.
We then establish several \kl{hardness} and \kl{completeness} results
for the two lowest levels of the \kl{locally polynomial hierarchy}.
We start with relatively simple \kl{reductions}
showing that
\kl{Eulerianness} is $\LP$-\kl{complete},
while \kl{Hamiltonicity} is both $\LP$-\kl{hard} and $\coLP$-\kl{hard}.
Then,
building on Theorem~\ref{thm:local-fagin},
we generalize the Cook--Levin theorem from~$\NP$ to $\NLP$,
which gives us a first $\NLP$-\kl{complete} \kl{graph property}.
Finally,
using standard techniques from complexity theory,
we use this \kl(graph){property} to establish
the $\NLP$-\kl{completeness} of $3$-\kl{colorability}.

In a way,
the framework of \kl{reductions}
proves even more beneficial in the distributed setting
than in the centralized setting.
Indeed,
since we will prove
the infiniteness of the \kl{locally polynomial hierarchy}
in Section~\ref{sec:hierarchy},
all the \kl{hardness} and \kl{completeness} results presented below
immediately yield \emph{unconditional} lower bounds
on the complexity of the \kl(graph){properties} in question,
i.e., lower bounds that do not rely on any complexity-theoretic assumptions
(see Corollaries~\ref{cor:three-colorable-not-in-lp},
\ref{cor:non-three-colorable-not-in-nlp}
and~\ref{cor:noneulerian-hamiltonian-nonhamiltonian-not-in-nlp}
in Section~\ref{ssec:warm-up}).

\subparagraph*{Clusters and implementable functions.}

Before we can define an appropriate notion of reduction,
we need to generalize the idea of computable functions
to our model of computation.
Intuitively,
the \kl{result} $\Result{\Machine}{\Graph, \IdMap}$
computed by a \kl{distributed Turing machine}~$\Machine$
on a \kl{graph}~$\Graph$
under an \kl{identifier assignment}~$\IdMap$
can be interpreted as the \kl{encoding} of a new \kl{graph}~$\Graph'$.
More precisely,
for each \kl{node}~$\Node[1]$ of the original \kl{graph}~$\Graph$,
the output \kl{label} computed by~$\Node[1]$
is taken to \kl{encode} a \kl{subgraph} of~$\Graph'$,
which we refer to as $\Node[1]$'s \kl{cluster}.
Since $\Graph'$ may depend on the \kl{identifiers} provided by~$\IdMap$,
and these are not considered part of the input,
we shall regard~$\Machine$ as \kl{implementing}
a “nondeterministic” function
$f \colon \GRAPH \to \PowerSet{\GRAPH}$
instead of a function
$f \colon \GRAPH \to \GRAPH$.

\AP
Formally,
a \intro{cluster map}
from a \kl{graph}~$\Graph'$ to a \kl{graph}~$\Graph$
is a function
$g \colon \NodeSet{\Graph'} \to \NodeSet{\Graph}$
such that
$\Set{\Node[1], \Node[2]} \in \EdgeSet{\Graph'}$
implies
$g(\Node[1]) = g(\Node[2])$
or
$\Set{g(\Node[1]),\, g(\Node[2])} \in \EdgeSet{\Graph}$.
With respect to~$g$,
the \intro{cluster} of any \kl{node}
$\Node[1] \inG \Graph$
is the \kl{induced subgraph} of~$\Graph'$
whose \kl{nodes} are mapped to~$\Node[1]$,
including the \kl{labels} of those \kl{nodes}.

\AP
Let $\IdentRadius \in \Positives$.
A function $f \colon \GRAPH \to \PowerSet{\GRAPH}$
is \intro{implemented} by a \kl{distributed Turing machine}~$\Machine$
under $\IdentRadius$-\kl{locally unique} \kl{identifiers}
if
for every $\Graph \in \GRAPH$
and every $\IdentRadius$-\kl{locally unique}
\kl{identifier assignment}~$\IdMap$ of~$\Graph$,
the \kl{result} $\Result{\Machine}{\Graph, \IdMap}$
computed by~$\Machine$ on~$\Graph$ under~$\IdMap$
represents a \kl{graph}~$\Graph' \in f(\Graph)$ in the following sense:
there is a \kl{cluster map}~$g$ from~$\Graph'$ to~$\Graph$
such that
the \kl{label}
$\Labeling{\Result{\Machine}{\Graph, \IdMap}}(\Node[1])$
computed at each \kl{node} $\Node[1] \inG \Graph$
\kl{encodes} $\Node[1]$'s \kl{cluster} with respect to~$g$
and all \kl{edges} to $\Node[1]$'s \kl{neighbors}' \kl{clusters}
(i.e., all \kl{edges} between \kl{nodes} of $\Node[1]$'s \kl{cluster}
and \kl{nodes} of $\Node[1]$'s \kl{neighbors}' \kl{clusters}).
Note that since the specific \kl{graph}~$\Graph' \in f(\Graph)$
computed by~$\Machine$ may depend on~$\IdMap$,
its \kl{labels} may refer to \kl{nodes} of~$\Graph$ by their \kl{identifiers}.

\AP
A function $f \colon \GRAPH \to \PowerSet{\GRAPH}$
is called \intro{topology-preserving}
if any two \kl{graphs} $\Graph \in \GRAPH$ and $\Graph' \in f(\Graph)$
are identical except for their \kl{labeling}, i.e.,
$\NodeSet{\Graph} = \NodeSet{\Graph'}$ and
$\EdgeSet{\Graph} = \EdgeSet{\Graph'}$.

\subparagraph*{Reductions, hardness, and completeness.}

Basically,
a \kl{locally polynomial reduction}
from a \kl(graph){property}~$\Property$
to a \kl(graph){property}~$\Property'$
is a \kl{graph} transformation,
\kl{implementable} by a \kl{locally polynomial machine},
that turns an input \kl{graph}~$\Graph$ into a new \kl{graph}~$\Graph'$
such that
$\Graph$~has \kl(graph){property}~$\Property$
if and only if
$\Graph'$~has \kl(graph){property}~$\Property'$.
The existence of such a \kl{reduction} implies that
$\Property'$~is at least as hard as~$\Property$,
since it allows us to convert
a hypothetical \kl{decider}~$\Machine'$ for~$\Property'$
into a \kl{decider}~$\Machine$ for~$\Property$,
which would work as follows:
First,
$\Machine$~would simulate a \kl{machine}~$\Machine_{\Tag{red}}$
that performs the \kl{reduction},
thereby transforming~$\Graph$ into~$\Graph'$.
Then it would simulate~$\Machine'$ on~$\Graph'$,
and finally each \kl{node} of~$\Graph$ would \kl{accept}
precisely if
all \kl{nodes} of its \kl{cluster} did so in the simulation.

\AP
More formally,
let $\BaseProperty, \Property, \Property' \subseteq \GRAPH$.
A \intro{locally polynomial reduction}
from~$\Property$ to~$\Property'$ on~$\BaseProperty$
is a function $f \colon \GRAPH \to \PowerSet{\GRAPH}$
\kl{implementable} by a \kl{locally polynomial machine}
such that
for all \kl{graphs} $\Graph \in \BaseProperty$ and $\Graph' \in f(\Graph)$,
we have
$\Graph \in \Property$
if and only if
$\Graph' \in \Property'$.
If such a function exists,
we denote this fact by
$\Property \intro*\ReducesTo{\BaseProperty} \Property'$.
Given a class~$\Class$ of \kl{graph properties},
we say that $\Property'$ is
$\Class$-\intro{hard} on~$\BaseProperty$
under \kl{locally polynomial reductions}
if $\Property \ReducesTo{\BaseProperty} \Property'$
for all
$\Property \in \Class$,
and we say that $\Property'$ is
$\Class$-\intro{complete} on~$\BaseProperty$
under \kl{locally polynomial reductions}
if additionally
$\Property' \in \Class$.
Since we will not consider other types of reductions,
we usually omit mentioning “under \kl{locally polynomial reductions}”,
and to avoid specifying~$\BaseProperty$ every time,
we stipulate that
“$\Class$-\kl{hard}” and “$\Class$-\kl{complete}”
imply $\BaseProperty = \GRAPH$ if
$\Class \in
 \Set{
   \SigmaLP{\Level}\!, \PiLP{\Level}\!,
   \coSigmaLP{\Level}\!, \coPiLP{\Level}
 }_{\Level \in \Naturals}$,
and $\BaseProperty = \NODE$ if
$\Class \in
 \Set{
   \SigmaP{\Level}, \PiP{\Level}
 }_{\Level \in \Naturals}$.

\begin{remark}
  \label{rem:from-nlp-complete-to-np-complete}
  If a \kl(graph){property}~$\Property$ is $\NLP$-\kl{hard},
  then it is also $\NP$-\kl{hard},
  since $\NODE \subseteq \GRAPH$ and $\NP = \NLP\On\NODE$.
  Moreover,
  if $\Property$~is $\NLP$-\kl{complete}
  under \kl{locally polynomial reductions}
  that are also \kl{topology-preserving},
  then $\Property \cap \NODE$ is $\NP$-\kl{complete}.
  This observation generalizes to all other levels
  of the \kl[locally polynomial hierarchy]{locally polynomial}
  and \kl{polynomial hierarchies}.
\end{remark}

In the remainder of this section,
we establish a series of \kl{hardness} and \kl{completeness} results
for the two lowest levels of the \kl{locally polynomial hierarchy}.
The first is trivial,
but will be useful below.
It concerns the \kl(graph){property} $\ALLSELECTED$
introduced in Section~\ref{ssec:example-formulas}.

\begin{remark}
  \label{rem:allselected-lp-complete}
  $\ALLSELECTED$ is $\LP$-\kl{complete}.
  This holds even if we impose that
  all \kl{locally polynomial reductions} must be \kl{topology-preserving}.
\end{remark}

\begin{claimproof}
  The \kl(graph){property} obviously lies in~$\LP$,
  and its $\LP$-\kl{hardness} under \kl{topology-preserving} \kl{reductions}
  is established by the basic observation that
  any \kl{graph property}~$\Property$
  \kl{decided} by a \kl{locally polynomial machine}~$\Machine$
  can be \kl{reduced} to $\ALLSELECTED$
  simply by \kl{executing}~$\Machine$.
\end{claimproof}

\AP
By considering \kl{reductions}
that do not necessarily \kl{preserve the topology} of the input \kl{graph},
we allow \kl{reductions} to more natural \kl{graph properties}.
We now illustrate this using $\intro*\EULERIAN$,
the \kl(graph){property} of \kl{graphs} that contain an \intro{Eulerian cycle},
i.e., a \kl{cycle} that uses each \kl{edge} exactly once.
The \kl{complement} of this \kl(graph){property}
will be denoted by~$\intro*\NONEULERIAN$.

\begin{proposition}
  \label{prp:eulerian-lp-complete}
  $\EULERIAN$ is $\LP$-\kl{complete}.
\end{proposition}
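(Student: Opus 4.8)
The plan is to prove the two halves of the claim separately: that $\EULERIAN \in \LP$, and that $\EULERIAN$ is $\LP$-\kl{hard}. For membership I would simply invoke Euler's theorem: since every \kl{graph} considered here is connected, $\Graph \in \EULERIAN$ if and only if every \kl{node} of $\Graph$ has even \kl{degree}. This is \kl{decided} by a one-\kl{round} \kl{local-polynomial machine}: by the semantics of the first \kl{round}, each \kl{node}'s receiving tape initially holds $\Separator^{d}$, where $d$ is its \kl{degree} (the messages being empty); the \kl{node} scans this string, computes the parity of $d$, \kl{accepts} if $d$ is even and \kl{rejects} otherwise, and enters $\StopState$. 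The \kl{step time} is linear, hence polynomial, and by unanimity the \kl{machine} \kl{accepts} $\Graph$ exactly when all \kl{degrees} are even.

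For \kl{hardness}: since $\ALLSELECTED$ is $\LP$-\kl{complete} (Remark~\ref{rem:allselected-lp-complete}) and \kl{local-polynomial reductions} compose — as one checks by composing the underlying \kl{machines} within \kl{clusters} and composing the \kl{cluster maps} — it suffices to exhibit a \kl{local-polynomial reduction} $\ALLSELECTED \ReducesTo{\GRAPH} \EULERIAN$. Given an input \kl{graph} $\Graph$ (with any \kl{identifier assignment}; $1$-\kl{locally unique} \kl{identifiers} suffice), the \kl{machine} produces a \kl{graph} $\Graph'$ by a \emph{doubled-subdivision} construction: the \kl{cluster} of each \kl{node} $u$ contains a copy $\hat u$ of $u$ together with, for every \kl{edge} $\{u,v\}$ of $\Graph$ with $\IdMap(u) \SmallerThan \IdMap(v)$, one fresh vertex $m_{uv}$; the \kl{edges} of $\Graph'$ are the copies $\{\hat u,\hat v\}$ of the \kl{edges} of $\Graph$, together with the two \kl{edges} $\{\hat u, m_{uv}\}$ and $\{m_{uv}, \hat v\}$ for each \kl{edge} $\{u,v\}$; all \kl{labels} of $\Graph'$ are set to the empty string (\kl{labels} are immaterial for $\EULERIAN$). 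Finally, if $u$ carries a \kl{label} other than $1$, one extra pendant vertex $q_u$ is added to $u$'s \kl{cluster}, \kl{adjacent} only to $\hat u$.

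This is \kl{implemented} by a \kl{local-polynomial machine}: each \kl{node} can compute its \kl{cluster} and all \kl{edges} to its \kl{neighbors}' \kl{clusters} from its \kl{label}, its \kl{identifier}, and its \kl{neighbors}' \kl{identifiers} — available after a constant number of \kl{rounds} — in a number of \kl{steps} polynomial in its local input; the \kl{clusters} partition $\NodeSet{\Graph'}$, inter-\kl{cluster} \kl{edges} join only \kl{clusters} of $\Graph$-\kl{adjacent} \kl{nodes}, and $\Graph'$ is simple and connected. Correctness is then a parity count: if no pendant is added, every vertex of $\Graph'$ has even \kl{degree} ($\hat u$ has \kl{degree} $2\deg_\Graph(u)$ and each $m_{uv}$ has \kl{degree} $2$), so the connected \kl{graph} $\Graph'$ has an \kl{Eulerian cycle}; whereas if some pendant $q_u$ is present, then $q_u$ has \kl{degree} $1$, so $\Graph'$ has no \kl{Eulerian cycle}. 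Hence $\Graph' \in \EULERIAN$ iff no \kl{node} of $\Graph$ has a \kl{label} $\neq 1$, i.e.\ iff $\Graph \in \ALLSELECTED$. I do not expect a genuine obstacle here; the only real work is the routine bookkeeping — checking that the \kl{cluster map} is well defined, that the two \kl{nodes} of each original \kl{edge} name the shared boundary vertices $\hat u$, $\hat v$, $m_{uv}$ of $\Graph'$ consistently (referring to them via \kl{identifiers}, as permitted), and that doubling-and-subdividing together with the conditional pendant preserves simplicity and connectedness. (Alternatively one can bypass Remark~\ref{rem:allselected-lp-complete} and reduce directly from an arbitrary $\Property \in \LP$, by first simulating an $\LP$-\kl{decider} for $\Property$ and then applying the same construction with each \kl{node}'s \kl{verdict} in place of its \kl{label}.)
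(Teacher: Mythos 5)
Your proof is correct and takes essentially the same route as the paper: membership via Euler's parity criterion, and $\LP$-hardness by reduction from $\ALLSELECTED$ (Remark~\ref{rem:allselected-lp-complete}), constructing a graph in which every vertex has even degree exactly when every original node is labeled~$1$. The only real difference is the gadget: the paper makes two copies $\Node[1]_0,\Node[1]_1$ of each node with all four cross-edges per original edge and flips parity by adding the edge $\Set{\Node[1]_0,\Node[1]_1}$, whereas you keep a single copy $\hat u$, subdivide each edge with a midpoint $m_{uv}$ (so $\hat u$ still gets degree $2\deg(u)$), and flip parity with a pendant $q_u$. Both are parity-preserving expansions computable in constant round time and polynomial step time, and both yield connected simple graphs. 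As a side benefit, your construction does not even need the paper's explicit single-node special case, since a lone $\hat u$ of degree~$0$ is already (vacuously) Eulerian, while the paper's doubling construction would produce a disconnected pair and hence requires a carve-out. Your additional remark about composability of reductions is fine but not needed here, since Remark~\ref{rem:allselected-lp-complete} already gives you a direct reduction target.
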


\begin{proof}
  By a famous theorem due to Euler
  (see, e.g., \cite[Thm.~1.8.1]{DBLP:books/daglib/0030488})
  a \kl{connected} \kl{graph} is \kl{Eulerian}
  if and only if
  all its \kl{nodes} have even \kl{degree}.
  This characterization makes it straightforward
  to \kl{decide} $\EULERIAN$ with a \kl{locally polynomial machine}.

  To show that the \kl(graph){property} is also $\LP$-\kl{hard},
  we now describe
  a \kl{locally polynomial reduction} to it
  from the \kl(graph){property} $\ALLSELECTED$,
  which is itself $\LP$-\kl{complete} by Remark~\ref{rem:allselected-lp-complete}.
  Given an arbitrary \kl{graph}~$\Graph$,
  we construct a \kl{graph}~$\Graph'$
  whose \kl{nodes} all have even \kl{degree}
  precisely if
  all \kl{nodes} of~$\Graph$ have \kl{label}~$1$.
  Let us assume without loss of generality
  that $\Graph$~has at least two \kl{nodes}.
  (\kl{Single-node graphs} can easily be treated as a special case.)
  For each \kl{node}~$\Node[1] \in \NodeSet{\Graph}$,
  the new \kl{graph}~$\Graph'$
  has two copies $\Node[1]_0$ and~$\Node[1]_1$,
  and for each \kl{edge}
  $\Set{\Node[1], \Node[2]} \in \EdgeSet{\Graph}$,
  it contains the four \kl{edges}
  $\Set{\Node[1]_i, \Node[2]_j}_{i, j \in \Set{0, 1}}$.
  In addition,
  for each \kl{node}~$\Node[1] \in \NodeSet{\Graph}$
  whose \kl{label} is not~$1$,
  the new \kl{graph} also contains the \kl{edge}
  $\Set{\Node[1]_0, \Node[1]_1}$.
  An example is shown in Figure~\ref{fig:allselected-to-eulerian}.
  Notice that
  $\Graph \in \ALLSELECTED$
  if and only if
  $\Graph' \in \EULERIAN$,
  that $\Graph'$ is always \kl{connected}
  (as required by our definition of \kl{graphs}),
  and that the \kl{nodes} of~$\Graph$ can compute~$\Graph'$
  in \kl{constant round time} and \kl{polynomial step time}.
\end{proof}

\begin{figure}[tb]
  \centering
  \begin{tikzpicture}[
    semithick,>=stealth',on grid,
    vertex/.style={draw,fill=white,circle,minimum size=1.8ex,inner sep=0},
    named vertex/.style={draw,circle,inner sep=0,minimum size=4ex},
    box/.style={draw,dotted},
    hamilton/.style={ultra thick},
    label/.style={inner sep=0,align=left},
  ]
  \def\unitDist{3ex}
  \def\boxMargin{3ex}
  % u's gadget
  \node[vertex] (u_top) {};
  \node[vertex] (u_bot) at ($(u_top)+(-90:1.5*\unitDist)$) {};
  % v's gadget
  \node[vertex] (v_top) at ($(u_top)+(0:8*\unitDist)$) {};
  \node[vertex] (v_bot) at ($(v_top)+(-90:1.5*\unitDist)$) {};
  % w's gadget
  \node[vertex] (w_top) at ($(v_top)+(0:8*\unitDist)$) {};
  \node[vertex] (w_bot) at ($(w_top)+(-90:1.5*\unitDist)$) {};
  \path (w_top) edge (w_bot);
  % Connections between the gadgets
  \path
    (u_top) edge (v_top)
    (u_top) edge (v_bot)
    (u_bot) edge (v_top)
    (u_bot) edge (v_bot)
    (v_top) edge (w_top)
    (v_top) edge (w_bot)
    (v_bot) edge (w_top)
    (v_bot) edge (w_bot);
  % Cluster boxes
  \node[box] (cluster_u) [fit={($(u_top)+(-\boxMargin,\boxMargin)$)
                               ($(u_bot)+(\boxMargin,-\boxMargin)$)},rounded corners=3.5ex] {};
  \node[box] (cluster_v) [fit={($(v_top)+(-\boxMargin,\boxMargin)$)
                               ($(v_bot)+(\boxMargin,-\boxMargin)$)},rounded corners=3.5ex] {};
  \node[box] (cluster_w) [fit={($(w_top)+(-\boxMargin,\boxMargin)$)
                               ($(w_bot)+(\boxMargin,-\boxMargin)$)},rounded corners=3.5ex] {};
  \node[label,anchor=north] at ($(cluster_u.south)+(0,-0.3*\unitDist)$)
       {\kl{cluster} of~$\Node[1]$};
  \node[label,anchor=north] at ($(cluster_v.south)+(0,-0.3*\unitDist)$)
       {\kl{cluster} of~$\Node[2]$};
  \node[label,anchor=north] at ($(cluster_w.south)+(0,-0.3*\unitDist)$)
       {\kl{cluster} of~$\Node[3]$};
  \node (graph') [label,above left=0.1*\unitDist and 0.2*\unitDist of cluster_u.north west,anchor=north east]
                 {$\Graph'$:};
  % Original graph
  \node[named vertex] (u) [above=1.0*\unitDist of cluster_u.north] {$\Node[1]$};
  \node[named vertex] (v) [above=1.0*\unitDist of cluster_v.north] {$\Node[2]$};
  \node[named vertex] (w) [above=1.0*\unitDist of cluster_w.north] {$\Node[3]$};
  \node[label,above=0.3*\unitDist of u.north] {$1$};
  \node[label,above=0.3*\unitDist of v.north] {$1$};
  \node[label,above=0.3*\unitDist of w.north] {$0$};
  \node[label] at (graph' |- u) {$\Graph$:};
  \path
    (u) edge (v)
    (v) edge (w);
\end{tikzpicture}

%%% Local Variables:
%%% mode: latex
%%% TeX-master: "../lph-paper"
%%% End:
  \caption{
    Example illustrating
    the \kl{reduction} from $\ALLSELECTED$ to~$\EULERIAN$
    used in the proof of Proposition~\ref{prp:eulerian-lp-complete}.
    The \kl{graph}~$\Graph$ has all \kl{node labels} equal to~$1$
    if and only if
    the \kl{graph}~$\Graph'$ has an \kl{Eulerian cycle}.
    In this particular case,
    if \kl{node}~$\Node[3]$ of~$\Graph$ had \kl{label}~$1$,
    then its \kl{cluster} in~$\Graph'$ would lack the “vertical” \kl{edge},
    and thus all \kl{nodes} of~$\Graph'$ would have even \kl{degree},
    making~$\Graph'$ \kl{Eulerian}.
  }
  \label{fig:allselected-to-eulerian}
\end{figure}

While it is easy to determine if a given \kl{graph} is \kl{Eulerian},
it is much harder to determine if it is \kl{Hamiltonian}.
Indeed,
a characterization of $\HAMILTONIAN$
similar to Euler's characterization of $\EULERIAN$
is neither known nor expected to exist
(see, e.g., \cite[Ch.\,10]{DBLP:books/daglib/0030488}).
The next two propositions show that this is reflected in
the complexity of $\HAMILTONIAN$ in our model of computation:
the \kl(graph){property} is both $\LP$-\kl{hard} and $\coLP$-\kl{hard},
which implies that it lies neither in $\NLP$ nor in $\coNLP$
(see Corollary~\ref{cor:noneulerian-hamiltonian-nonhamiltonian-not-in-nlp}).

\begin{figure}[htb]
  \centering
  \begin{tikzpicture}[
    semithick,>=stealth',on grid,
    vertex/.style={draw,fill=white,circle,minimum size=1.8ex,inner sep=0},
    named vertex/.style={draw,circle,inner sep=0,minimum size=4ex},
    box/.style={draw,dotted,rounded corners=2ex},
    hamilton/.style={ultra thick},
    label/.style={inner sep=0,align=left},
  ]
  \def\unitDist{3ex}
  \def\horizDist{22ex}
  \def\vertDist{18ex}
  \def\boxMargin{7ex}
  \def\cycleRad{5.5ex}
  % u1's gadget
  \coordinate (u1_mid);
  \draw[hamilton] ($(u1_mid)+(60:\cycleRad)$) arc (60:210:\cycleRad);
  \draw[hamilton] ($(u1_mid)+(30:\cycleRad)$) arc (30:-30:\cycleRad);
  \draw[hamilton] ($(u1_mid)+(-120:\cycleRad)$) arc (-120:-60:\cycleRad);
  \draw ($(u1_mid)+(30:\cycleRad)$) arc (30:60:\cycleRad);
  \draw ($(u1_mid)+(-30:\cycleRad)$) arc (-30:-60:\cycleRad);
  \draw ($(u1_mid)+(-150:\cycleRad)$) arc (-150:-120:\cycleRad);
  \node[vertex] (u1_east1) at ($(u1_mid)+(60:\cycleRad)$) {};
  \node[vertex] (u1_east2) at ($(u1_mid)+(30:\cycleRad)$) {};
  \node[vertex] (u1_southeast1) at ($(u1_mid)+(-60:\cycleRad)$) {};
  \node[vertex] (u1_southeast2) at ($(u1_mid)+(-30:\cycleRad)$) {};
  \node[vertex] (u1_south1) at ($(u1_mid)+(-150:\cycleRad)$) {};
  \node[vertex] (u1_south2) at ($(u1_mid)+(-120:\cycleRad)$) {};
  % u2's gadget
  \node[vertex] (u2_mid) at ($(u1_mid)+(0:\horizDist)$) {};
  \draw[hamilton] ($(u2_mid)+(120:\cycleRad)$) arc (120:-210:\cycleRad);
  \draw ($(u2_mid)+(120:\cycleRad)$) arc (120:150:\cycleRad);
  \node[vertex] (u2_west1) at ($(u2_mid)+(120:\cycleRad)$) {};
  \node[vertex] (u2_west2) at ($(u2_mid)+(150:\cycleRad)$) {};
  \node[vertex] (u2_south1) at ($(u2_mid)+(-60:\cycleRad)$) {};
  \node[vertex] (u2_south2) at ($(u2_mid)+(-30:\cycleRad)$) {};
  \path (u2_west1) edge (u2_mid);
  % u3's gadget
  \coordinate (u3_mid) at ($(u1_mid)+(-90:\vertDist)$);
  \draw[hamilton] ($(u3_mid)+(120:\cycleRad)$) arc (120:-210:\cycleRad);
  \draw ($(u3_mid)+(120:\cycleRad)$) arc (120:150:\cycleRad);
  \node[vertex] (u3_east1) at ($(u3_mid)+(-30:\cycleRad)$) {};
  \node[vertex] (u3_east2) at ($(u3_mid)+(-60:\cycleRad)$) {};
  \node[vertex] (u3_north1) at ($(u3_mid)+(150:\cycleRad)$) {};
  \node[vertex] (u3_north2) at ($(u3_mid)+(120:\cycleRad)$) {};
  % u4's gadget
  \coordinate (u4_mid) at ($(u2_mid)+(-90:\vertDist)$);
  \draw[hamilton] ($(u4_mid)+(120:\cycleRad)$) arc (120:-210:\cycleRad);
  \draw ($(u4_mid)+(120:\cycleRad)$) arc (120:150:\cycleRad);
  \node[vertex] (u4_west1) at ($(u4_mid)+(-150:\cycleRad)$) {};
  \node[vertex] (u4_west2) at ($(u4_mid)+(-120:\cycleRad)$) {};
  \node[vertex] (u4_northwest1) at ($(u4_mid)+(150:\cycleRad)$) {};
  \node[vertex] (u4_northwest2) at ($(u4_mid)+(120:\cycleRad)$) {};
  \node[vertex] (u4_north1) at ($(u4_mid)+(60:\cycleRad)$) {};
  \node[vertex] (u4_north2) at ($(u4_mid)+(30:\cycleRad)$) {};
  % Connections between the gadgets
  \path[hamilton]
    (u1_east1) edge (u2_west1)
    (u1_east2) edge (u2_west2)
    (u1_southeast1) edge (u4_northwest1)
    (u1_southeast2) edge (u4_northwest2)
    (u1_south1) edge (u3_north1)
    (u1_south2) edge (u3_north2);
  \path
    (u2_south1) edge (u4_north1)
    (u2_south2) edge (u4_north2)
    (u3_east1) edge (u4_west1)
    (u3_east2) edge (u4_west2);
  % Cluster boxes
  \node[box] (cluster_u1) [fit={($(u1_mid)+(-\boxMargin,\boxMargin)$)
                                ($(u1_mid)+(\boxMargin,-\boxMargin)$)},rounded corners=6ex] {};
  \node[box] (cluster_u2) [fit={($(u2_mid)+(-\boxMargin,\boxMargin)$)
                                ($(u2_mid)+(\boxMargin,-\boxMargin)$)},rounded corners=6ex] {};
  \node[box] (cluster_u3) [fit={($(u3_mid)+(-\boxMargin,\boxMargin)$)
                                ($(u3_mid)+(\boxMargin,-\boxMargin)$)},rounded corners=6ex] {};
  \node[box] (cluster_u4) [fit={($(u4_mid)+(-\boxMargin,\boxMargin)$)
                                ($(u4_mid)+(\boxMargin,-\boxMargin)$)},rounded corners=6ex] {};
  \node[label,anchor=south] at ($(cluster_u1.north)+(0,0.3*\unitDist)$)
       {\kl{cluster} of~$\Node[1]_1$};
  \node[label,anchor=south] at ($(cluster_u2.north)+(0,0.3*\unitDist)$)
       {\kl{cluster} of~$\Node[1]_2$};
  \node[label,anchor=north] at ($(cluster_u3.south)+(0,-0.3*\unitDist)$)
       {\kl{cluster} of~$\Node[1]_3$};
  \node[label,anchor=north] at ($(cluster_u4.south)+(0,-0.3*\unitDist)$)
       {\kl{cluster} of~$\Node[1]_4$};
  \node (graph') [label,above left=0*\unitDist and 0*\unitDist of cluster_u1.north west,anchor=north east]
                 {$\Graph'$:};
  % Original graph
  \node[named vertex] (u1) [left=14*\unitDist of cluster_u1] {$\Node[1]_1$};
  \node[named vertex] (u2) at ($(u1)+(0:\horizDist)$)  {$\Node[1]_2$};
  \node[named vertex] (u3) at ($(u1)+(-90:\vertDist)$) {$\Node[1]_3$};
  \node[named vertex] (u4) at ($(u2)+(-90:\vertDist)$) {$\Node[1]_4$};
  \node[label,above=0.3*\unitDist of u1.north] {$1$};
  \node[label,above=0.3*\unitDist of u2.north] {$0$};
  \node[label,below=0.3*\unitDist of u3.south] {$1$};
  \node[label,below=0.3*\unitDist of u4.south] {$1$};
  \node[label] at ($(u1 |- graph')+(-1.4*\unitDist,0)$) {$\Graph$:};
  \path[hamilton]
    (u1) edge (u2)
         edge (u3)
         edge (u4);
  \path
    (u2) edge (u4)
    (u3) edge (u4);
\end{tikzpicture}

%%% Local Variables:
%%% mode: latex
%%% TeX-master: "../lph-paper"
%%% End:
  \caption{
    \emph{(repeated from Figure~\ref{fig:allselected-to-hamiltonian-overview})}
    Example illustrating
    the \kl{reduction} from $\ALLSELECTED$ to~$\HAMILTONIAN$
    used in the proof of Proposition~\ref{prp:hamiltonian-lp-hard}.
    The \kl{graph}~$\Graph$ has all \kl{node labels} equal to~$1$
    if and only if
    the \kl{graph}~$\Graph'$ has a \kl{Hamiltonian cycle}.
    The thick \kl{edges} in~$\Graph$ form a \kl{spanning tree},
    which is replicated by the thick \kl{edges} in~$\Graph'$.
    In this particular case,
    if \kl{node}~$\Node[1]_2$ of~$\Graph$ had \kl{label}~$1$,
    then its \kl{cluster} in~$\Graph'$ would lack the “central” \kl{node},
    and thus the thick \kl{edges} in~$\Graph'$
    would form a \kl{Hamiltonian cycle}.
  }
  \label{fig:allselected-to-hamiltonian}
\end{figure}

\begin{proposition}
  \label{prp:hamiltonian-lp-hard}
  $\HAMILTONIAN$ is $\LP$-\kl{hard}.
\end{proposition}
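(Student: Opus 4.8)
The plan is to give a \kl{local-polynomial reduction} from $\ALLSELECTED$ to $\HAMILTONIAN$, exactly as sketched in the overview (Figure~\ref{fig:allselected-to-hamiltonian-overview}). Since $\ALLSELECTED$ is $\LP$-\kl{complete} by Remark~\ref{rem:allselected-lp-complete}, this immediately yields $\LP$-\kl{hardness} of $\HAMILTONIAN$. First I would handle the degenerate case of a \kl{single-node graph} separately: if $\Graph$ is a single \kl{node} with \kl{label}~$1$, map it to a triangle (which is \kl{Hamiltonian}); if the \kl{label} is not~$1$, map it to, say, a path on three \kl{nodes} (which is not \kl{Hamiltonian}). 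For $\Graph$ with at least two \kl{nodes} I would describe the construction of $\Graph'$ via the Euler tour technique: the \kl{nodes} first compute a \kl{spanning tree}~$T$ of~$\Graph$ in a constant number of \kl{communication rounds} using their \kl{locally unique} \kl{identifiers} (e.g.\ the BFS tree rooted at the \kl{node} of smallest \kl{identifier} — but since \kl{identifiers} are only \kl{locally unique} and we are restricted to \kl{constant round time}, I would instead use the purely local rule that orients each \kl{edge} toward the \kl{neighbor} of smaller \kl{identifier}, or more carefully take the tree induced by ``parent $=$ smallest-identifier \kl{neighbor} closer to a locally-chosen root'', ensuring consistency can be checked within a constant radius).

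The key step is specifying the \kl{cluster} of each \kl{node}~$\Node[1]$ in $\Graph'$ and the inter-\kl{cluster} \kl{edges}. For each \kl{node}~$\Node[1]$ of \kl{degree}~$d$ in~$\Graph$, its \kl{cluster} contains $2d$ \kl{nodes}, arranged so that each \kl{edge} $\Set{\Node[1], \Node[2]}$ of~$\Graph$ is represented by two ``ports'' on $\Node[1]$'s side and two on $\Node[2]$'s side, connected by two \kl{edges} across the \kl{cluster} boundary; the internal wiring of $\Node[1]$'s \kl{cluster} arranges the ports in a cyclic order consistent with a depth-first traversal of~$T$, so that a \kl{Hamiltonian cycle} of~$\Graph'$ corresponds precisely to a closed walk that traverses each tree \kl{edge} twice and each non-tree \kl{edge} zero times. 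If every \kl{node} of~$\Graph$ has \kl{label}~$1$, this closed walk visits every \kl{node} of every \kl{cluster} exactly once, so $\Graph'$ is \kl{Hamiltonian}. If some \kl{node}~$\Node[1]$ has \kl{label} different from~$1$, I add one extra \kl{node} of \kl{degree}~$1$ attached inside $\Node[1]$'s \kl{cluster}; a \kl{degree}-$1$ \kl{node} can never lie on a \kl{cycle}, so $\Graph'$ is not \kl{Hamiltonian}. I would then check that $\Graph'$ is \kl{connected} (it is, since it is built from a traversal of a \kl{spanning tree}), and that each \kl{node}~$\Node[1]$ can compute the \kl{encoding} of its \kl{cluster} together with all boundary \kl{edges} — which only requires knowing $\Node[1]$'s \kl{label}, its \kl{degree}, the \kl{identifiers} of its \kl{neighbors}, and which of its \kl{edges} lie in~$T$ — in a constant number of \kl{rounds} and a number of \kl{computation steps} polynomial in the size of its local input and the messages received. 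Hence the function $\Graph \mapsto \Set{\Graph'}$ is \kl{implemented} by a \kl{local-polynomial machine}, giving a \kl{local-polynomial reduction}.

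The main obstacle, I expect, is getting the ``\kl{Hamiltonian} iff all-selected'' equivalence airtight: I must argue both directions carefully. For the forward direction, given a \kl{spanning tree}~$T$ and all \kl{labels} equal to~$1$, I need to exhibit an explicit \kl{Hamiltonian cycle} of~$\Graph'$ — this is the standard Euler tour, but it requires the internal \kl{cluster} gadget to be designed so the tour is forced to enter and leave each \kl{cluster} exactly along the tree \kl{edges} in the correct cyclic order while still visiting all $2d$ \kl{cluster} \kl{nodes}. For the backward direction, I must show that \emph{every} \kl{Hamiltonian cycle} of~$\Graph'$, when an extra \kl{degree}-$1$ \kl{node} is absent, necessarily induces a valid Euler tour of some \kl{spanning subgraph} and that the only way to visit all \kl{nodes} is to traverse exactly a \kl{spanning tree} twice — this relies on a counting/parity argument about how a \kl{cycle} can pass through the gadget ports. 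A secondary subtlety is ensuring the \kl{spanning tree} can actually be agreed upon in \kl{constant round time} under merely \kl{locally unique} \kl{identifiers}; I would address this by noting that the \kl{reduction} only needs a \kl{subgraph} that is a \kl{spanning tree}, and that such a tree can be selected by a local rule whose correctness is checkable within a constant radius (for instance, orienting toward smaller \kl{identifiers} yields a forest, and a standard trick with a fixed number of extra \kl{rounds} patches it into a tree), after which the entire construction goes through.
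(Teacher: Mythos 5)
Your overall strategy matches the paper's --- reduce from $\ALLSELECTED$ (which is $\LP$-\kl{complete} by Remark~\ref{rem:allselected-lp-complete}), blow each \kl{node} up into a \kl{cycle} of ports via the Euler tour technique, and block \kl{Hamiltonicity} with a \kl{degree}-$1$ \kl{node} wherever the input \kl{label} differs from~$1$. But there is a genuine gap in how you realize the construction: you have the \kl{nodes} first \emph{compute} a \kl{spanning tree}~$T$ and then wire the gadget around~$T$. No \kl{local-polynomial machine} can compute a \kl{spanning tree} of~$\Graph$: this is a global task requiring a number of \kl{rounds} linear in the diameter, which is precisely what constant-\kl{round-time} algorithms cannot do. Your fallback rule ``parent $=$ smallest-\kl{identifier} \kl{neighbor}'' yields a spanning \emph{forest} rooted at the local minima; a long \kl{path} with alternating small and large \kl{identifiers} already has linearly many such minima, and merging the resulting \kl{components} into one tree requires global coordination that no constant-radius rule can achieve. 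There is no ``standard trick with a fixed number of extra \kl{rounds}'' that patches a forest into a tree --- if there were, it would solve spanning tree construction in constant \kl{round time}.

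The paper's construction avoids computing a tree entirely. Each \kl{node}~$\Node[1]$ of \kl{degree}~$\Degree$ builds its \kl{cluster} \kl{cycle} $\Node[1]_{\To\Node[2]_1}, \Node[1]_{\From\Node[2]_1}, \dots, \Node[1]_{\To\Node[2]_{\Degree}}, \Node[1]_{\From\Node[2]_{\Degree}}$ of length $\max \Set{3, 2\Degree}$ using \emph{only} its own \kl{label}, its own \kl{degree}, and the \kl{identifiers} of its \kl{neighbors}; no \kl{spanning tree} figures in the definition of~$\Graph'$. Crucially, the inter-\kl{cluster} \kl{edges} $\Set{\Node[1]_{\To\Node[2]}, \Node[2]_{\From\Node[1]}}$ and $\Set{\Node[1]_{\From\Node[2]}, \Node[2]_{\To\Node[1]}}$ are added for \emph{every} \kl{edge} $\Set{\Node[1], \Node[2]}$ of~$\Graph$, not just tree \kl{edges}. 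A \kl{Hamiltonian cycle} of~$\Graph'$ is then free to decide, per original \kl{edge}, whether to cross into the neighboring \kl{cluster} or to take the within-\kl{cluster} shortcuts $\Set{\Node[1]_{\To\Node[2]}, \Node[1]_{\From\Node[2]}}$ and $\Set{\Node[2]_{\To\Node[1]}, \Node[2]_{\From\Node[1]}}$. The \kl{spanning tree} enters only in the correctness proof, and only as an \emph{existential witness}: because~$\Graph$ is \kl{connected}, some \kl{spanning tree} exists, and from any such tree one exhibits a \kl{Hamiltonian cycle} of~$\Graph'$. This also dissolves your worry about a backward direction --- you never need to show that every \kl{Hamiltonian cycle} of~$\Graph'$ corresponds to a tree; when some \kl{label} differs from~$1$, the \kl{degree}-$1$ \kl{node} $\Node[1]_{\Tag{bad}}$ alone makes $\Graph'$ non-\kl{Hamiltonian}. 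Replacing your ``forced Euler tour of a computed tree'' gadget with this ``all-edges, let-the-cycle-choose'' gadget closes the gap, and the rest of your argument goes through.
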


\begin{proof}
  By Remark~\ref{rem:allselected-lp-complete},
  it is sufficient to provide a \kl{locally polynomial reduction}
  from $\ALLSELECTED$ to $\HAMILTONIAN$.
  Given an arbitrary \kl{graph}~$\Graph$,
  we construct a \kl{graph}~$\Graph'$
  that has a \kl{Hamiltonian cycle}
  if and only if
  all \kl{nodes} of~$\Graph$ have \kl{label}~$1$.
  The main idea is that
  a \kl{Hamiltonian cycle} in~$\Graph'$
  will represent
  a depth-first traversal of a \kl{spanning tree} of~$\Graph$,
  using a method known as the Euler tour technique.
  For this purpose,
  each \kl{edge} of~$\Graph$ is represented by two \kl{edges} in~$\Graph'$,
  so that it can be traversed twice by a \kl{Hamiltonian cycle} in~$\Graph'$.
  More precisely,
  each \kl{node} $\Node[1] \inG \Graph$ of \kl{degree}~$\Degree$
  with \kl{neighbors} $\Node[2]_1, \dots, \Node[2]_{\Degree}$
  is represented in~$\Graph'$
  by a \kl{cycle} of length $\max \Set{3, 2\Degree}$
  of the form
  $\Node[1]_{\To\Node[2]_1},
   \Node[1]_{\From\Node[2]_1},
   \dots,
   \Node[1]_{\To\Node[2]_{\Degree}},
   \Node[1]_{\From\Node[2]_{\Degree}},
   \Node[1]_{\To\Node[2]_1}$.
  For each \kl{neighbor}~$\Node[2]_i$ of~$\Node[1]$,
  this \kl{cycle} contains two \kl{adjacent} \kl{nodes}
  $\Node[1]_{\To\Node[2]_i}$
  and
  $\Node[1]_{\From\Node[2]_i}$,
  which can be thought of as the “ports”
  that allow us to “go to” and “come from”~$\Node[2]_i$,
  respectively.
  (To ensure that there are enough \kl{nodes} to form a \kl{cycle},
  we add three dummy \kl{nodes} if $\Degree = 0$,
  and one dummy \kl{node} if $\Degree = 1$.)
  If $\Node[1]$'s \kl{label} differs from~$1$,
  then $\Graph'$ additionally contains a \kl{node}~$\Node[1]_{\Tag{bad}}$
  that is connected to
  exactly one \kl{node} of the \kl{cycle} representing~$\Node[1]$.
  Moreover,
  for each \kl{edge}
  $\Set{\Node[1], \Node[2]}$ of~$\Graph$,
  the \kl{graph}~$\Graph'$ contains the two \kl{edges}
  $\Set{\Node[1]_{\To\Node[2]}, \Node[2]_{\From\Node[1]}}$
  and
  $\Set{\Node[1]_{\From\Node[2]}, \Node[2]_{\To\Node[1]}}$.
  An example is shown in Figure~\ref{fig:allselected-to-hamiltonian}.

  Now,
  if all \kl{nodes} of~$\Graph$ are \kl{labeled} with~$1$,
  then any \kl{spanning tree} of~$\Graph$
  yields a \kl{Hamiltonian cycle} of~$\Graph'$.
  This \kl{cycle} includes
  all \kl{edges} of~$\Graph'$ of the form
  $\Set{\Node[1]_{\From\Node[2]}, \Node[1]_{\To\Node[3]}}$
  with $\Node[2] \neq \Node[3]$,
  and additionally,
  for each \kl{edge}
  $\Set{\Node[1], \Node[2]}$ of~$\Graph$,
  either the \kl{edges}
  $\Set{\Node[1]_{\To\Node[2]}, \Node[2]_{\From\Node[1]}}$
  and
  $\Set{\Node[1]_{\From\Node[2]}, \Node[2]_{\To\Node[1]}}$
  if $\Set{\Node[1], \Node[2]}$ belongs to the \kl{spanning tree},
  and otherwise the \kl{edges}
  $\Set{\Node[1]_{\To\Node[2]}, \Node[1]_{\From\Node[2]}}$
  and
  $\Set{\Node[2]_{\To\Node[1]}, \Node[2]_{\From\Node[1]}}$
  (see~Figure~\ref{fig:allselected-to-hamiltonian}).
  However,
  if at least one \kl{node}~$\Node[1]$ of~$\Graph$
  has a \kl{label} different from~$1$,
  then $\Graph'$ is not \kl{Hamiltonian},
  because the \kl{node}~$\Node[1]_{\Tag{bad}}$,
  which has \kl{degree}~$1$,
  cannot be part of any \kl{cycle}.
  Hence,
  $\Graph \in \ALLSELECTED$
  if and only if
  $\Graph' \in \HAMILTONIAN$.

  Notice that
  $\Graph'$ is guaranteed to be \kl{connected}
  (as required by our definition of \kl{graphs}),
  and that the \kl{nodes} of~$\Graph$ can compute~$\Graph'$
  in \kl{constant round time} and \kl{polynomial step time}.
  As an aside,
  note that we could reduce the number of \kl{nodes} of~$\Graph'$
  by a factor of two
  by contracting \kl{edges} of the form
  $\Set{\Node[1]_{\From\Node[2]}, \Node[1]_{\To\Node[3]}}$
  with $\Node[2] \neq \Node[3]$,
  but this would make the graphical representation somewhat less intuitive.
\end{proof}

\begin{proposition}
  \label{prp:hamiltonian-colp-hard}
  $\HAMILTONIAN$ is $\coLP$-\kl{hard}.
\end{proposition}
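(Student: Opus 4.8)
The plan is to mirror the proof of Proposition~\ref{prp:hamiltonian-lp-hard}, but reduce \emph{from} $\NOTALLSELECTED$ (which is $\coLP$-\kl{complete} because $\ALLSELECTED$ is $\LP$-\kl{complete} by Remark~\ref{rem:allselected-lp-complete} and complementation swaps $\LP$ and $\coLP$) \emph{to} $\HAMILTONIAN$. So first I would note this complementation observation, and then exhibit a \kl{local-polynomial reduction} $f$ that turns an arbitrary \kl{graph}~$\Graph$ into a \kl{graph}~$\Graph'$ such that $\Graph'$ is \kl{Hamiltonian} if and only if \emph{at least one} \kl{node} of~$\Graph$ has a \kl{label} different from~$1$.

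The key difficulty is that in Proposition~\ref{prp:hamiltonian-lp-hard} the presence of a ``bad'' \kl{node} was used to \emph{destroy} Hamiltonicity (a \kl{degree}-$1$ pendant cannot sit on a \kl{cycle}), whereas now I need a bad \kl{node} to \emph{enable} Hamiltonicity. The natural fix is to reverse the construction: build $\Graph'$ so that it is Hamiltonian \emph{only when} some \kl{node} provides an extra traversal capability. Concretely, I would again represent a depth-first Euler tour of a \kl{spanning tree} of~$\Graph$ by a \kl{cycle} in~$\Graph'$, but make the ``root'' of the traversal a \kl{node} that can only be a valid root if it is unselected. One clean way: for each \kl{node}~$\Node[1]$ of \kl{degree}~$\Degree$, put a \kl{cluster} as in the previous proof (the ``to''/``from'' \kl{ports} forming a \kl{cycle} of length $\max\Set{3,2\Degree}$, plus the inter-\kl{cluster} \kl{edges} doubling each \kl{edge} of~$\Graph$), but in addition create a single global \emph{gadget node}~$r$ that is connected to exactly one port of \emph{every cluster}, and then force $r$ to have \kl{degree}~$2$ in any \kl{Hamiltonian cycle} by only allowing it to be ``bypassed'' through a \kl{node}~$\Node[1]_{\Tag{good}}$ that is present in $\Node[1]$'s \kl{cluster} precisely when $\Node[1]$'s \kl{label} differs from~$1$. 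Then a \kl{Hamiltonian cycle} exists iff some \kl{cluster} supplies the needed detour, i.e.\ iff $\Graph\in\NOTALLSELECTED$. I would verify the two implications by the usual Euler-tour argument: if $\Node[1]$ is unselected, route the \kl{Hamiltonian cycle} through $\Node[1]_{\Tag{good}}$ and $r$ once, otherwise use a DFS traversal of a \kl{spanning tree} that stays within the \kl{clusters}; conversely, if all \kl{nodes} are selected, $r$ has no admissible pair of incident \kl{edges} on a \kl{cycle} (or, symmetrically, $r$ forces a parity/connectivity obstruction), so $\Graph'$ is non-\kl{Hamiltonian}.

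The step I expect to be the main obstacle is locality of computing the gadget: a global \kl{node}~$r$ adjacent to every \kl{cluster} cannot be produced by a constant-\kl[round time]{round-time} algorithm with the \kl{cluster}-map discipline, because $r$ would have to lie in exactly one \kl{cluster} yet be \kl{adjacent} to all others, violating the rule that inter-\kl{cluster} \kl{edges} only connect \kl{clusters} of \kl{adjacent} \kl{nodes}. So I would instead distribute this ``root'' role along a \kl{spanning tree}: each \kl{node}'s \kl{cluster} carries a small local gadget whose ``pendant-avoidance'' can only be resolved if, somewhere along the tree, an unselected \kl{node} donates its $\Node[1]_{\Tag{good}}$ detour, and the obligation is passed between \kl{adjacent} \kl{clusters} by one extra \kl{edge} per tree-edge (exactly as the Euler-tour \kl{edges} are already passed in Proposition~\ref{prp:hamiltonian-lp-hard}). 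Making this local hand-off correct---so that a \kl{Hamiltonian cycle} exists iff \emph{at least one} such donation happens, and the \kl{nodes} of~$\Graph$ can compute their \kl{clusters} in \kl{constant round time} and \kl{polynomial step time} using only information about a constant-radius \kl{neighborhood} and \kl{locally unique} \kl{identifiers}---is the delicate part; once the gadget is pinned down, the correctness proof and the observation that $\Graph'$ stays \kl{connected} are routine, and an example figure analogous to Figure~\ref{fig:allselected-to-hamiltonian} would illustrate it.
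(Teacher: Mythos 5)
You correctly identify the right reduction source ($\NOTALLSELECTED$, which is $\coLP$-\kl{complete} by duality with Remark~\ref{rem:allselected-lp-complete}), and you correctly diagnose the central difficulty: a ``bad'' \kl{node} must now \emph{create} \kl{Hamiltonicity} rather than destroy it, and any construction doing this with a single global gadget \kl{node} is incompatible with the \kl{cluster}-map discipline. That much is a faithful reconstruction of the problem the paper faces.

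However, your proposal stops short of an actual construction, and the ``distributed obligation along a \kl{spanning tree}'' sketch you offer as a fallback is not the mechanism the paper uses, nor is it clear that it can be made to work. The paper's solution is a \emph{two-layer} idea: for each \kl{node} of~$\Graph$, the \kl{cluster} contains \emph{two} of the port-\kl{cycles} from Proposition~\ref{prp:hamiltonian-lp-hard} (a ``top'' and a ``bottom'' one), each augmented with a short chain of three extra \kl{nodes}. The \kl{subgraphs} $\Graph'_{\Tag{top}}$ and $\Graph'_{\Tag{bot}}$ induced by the top and bottom \kl{cycles} are each \kl{Hamiltonian} on their own (by exactly the Euler-tour argument of Proposition~\ref{prp:hamiltonian-lp-hard}). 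Every \kl{cluster} also has a ``default'' vertical \kl{edge} linking the middle \kl{node} of its top chain to the middle \kl{node} of its bottom chain, but these default \kl{edges} can never lie on a \kl{Hamiltonian cycle}, because the two end \kl{nodes} of each chain have \kl{degree}~$2$, which forces both chain \kl{edges} to be used and therefore excludes the middle \kl{node}'s third \kl{edge}. Only when a \kl{node} is unselected does its \kl{cluster} get a \emph{second} vertical \kl{edge} (between the first \kl{nodes} of the two chains), and then the two layer-\kl{cycles} can be spliced together at that \kl{cluster} into a single \kl{Hamiltonian cycle} of~$\Graph'$. This is entirely local: each \kl{cluster} is determined by the \kl{node}'s own \kl{label} and \kl{degree}, with no tree, no propagated obligation, and no hand-off between \kl{clusters} beyond the usual Euler-tour port \kl{edges}. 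Your sketch never arrives at the crucial ``two disjoint copies that are each already \kl{Hamiltonian}'' observation, which is what makes the \emph{existential} flavor of $\NOTALLSELECTED$ (it suffices for one unselected \kl{node} to exist anywhere) compatible with a purely local, constant-radius construction, so the gap in your proposal is precisely the gadget that you yourself flag as ``the delicate part.''
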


\begin{proof}
  By Remark~\ref{rem:allselected-lp-complete} and duality,
  $\NOTALLSELECTED$ is $\coLP$-\kl{complete},
  so it suffices to provide a \kl{locally polynomial reduction}
  from $\NOTALLSELECTED$ to $\HAMILTONIAN$.
  In essence,
  given any \kl{graph}~$\Graph$,
  we use the construction
  from the proof of Proposition~\ref{prp:hamiltonian-lp-hard}
  twice to create
  two \kl{subgraphs}~$\Graph'_{\Tag{top}}$ and~$\Graph'_{\Tag{bot}}$,
  and then connect them in such a way that
  the resulting \kl{graph}~$\Graph'$ has a \kl{Hamiltonian cycle}
  if and only if
  $\Graph$ has at least one unselected \kl{node}
  (i.e., a \kl{node} whose \kl{label} is not~$1$).
  By construction,
  $\Graph'_{\Tag{top}}$ and~$\Graph'_{\Tag{bot}}$
  each admit a \kl{Hamiltonian cycle},
  and the presence of an unselected \kl{node} in~$\Graph$
  will ensure the presence of two \kl{edges} in~$\Graph'$
  by which the two \kl{cycles} can be connected
  to form a \kl{Hamiltonian cycle} of~$\Graph'$.
  An example is provided in Figure~\ref{fig:notallselected-to-hamiltonian}.

  More formally,
  each \kl{node} $\Node[1] \inG \Graph$ of \kl{degree}~$\Degree$
  is represented in~$\Graph'$
  by two \kl{cycles} of length $(2\Degree + 3)$,
  which we will call the “top” and “bottom” \kl{cycles}.
  As in the proof of Proposition~\ref{prp:hamiltonian-lp-hard},
  for each \kl{neighbor}~$\Node[2]$ of~$\Node[1]$,
  each of the two \kl{cycles} contains two \kl{adjacent} \kl{nodes},
  which can be thought of as the “ports”
  that allow us to “go to” and “come from”
  the corresponding \kl{cycle} of~$\Node[2]$.
  That is,
  there are two adjacent \kl{nodes}
  $\Node[1]_{\To\Node[2]}^{\Tag{top}}$
  and
  $\Node[1]_{\From\Node[2]}^{\Tag{top}}$
  in the “top” \kl{cycle} of~$\Node[1]$,
  which are connected to the corresponding \kl{nodes}
  $\Node[2]_{\To\Node[1]}^{\Tag{top}}$
  and
  $\Node[2]_{\From\Node[1]}^{\Tag{top}}$
  of the “top” \kl{cycle} of~$\Node[2]$
  by means of the two \kl{edges}
  $\Set{\Node[1]_{\To\Node[2]}^{\Tag{top}}, \Node[2]_{\From\Node[1]}^{\Tag{top}}}$
  and
  $\Set{\Node[1]_{\From\Node[2]}^{\Tag{top}}, \Node[2]_{\To\Node[1]}^{\Tag{top}}}$.
  The “bottom” \kl{cycles} are connected by analogous \kl{nodes} and \kl{edges}.
  Moreover,
  to ensure that $\Graph'$ is \kl{connected}
  (as~required by our definition of \kl{graphs}),
  each \kl{cycle} contains a sequence of three additional \kl{nodes},
  named
  $\Node[1]_{\Down 1}^{\Tag{top}}$,
  $\Node[1]_{\Down 2}^{\Tag{top}}$,
  $\Node[1]_{\Down 3}^{\Tag{top}}$
  in the “top” \kl{cycle},
  and
  $\Node[1]_{\Up 1}^{\Tag{bot}}$,
  $\Node[1]_{\Up 2}^{\Tag{bot}}$,
  $\Node[1]_{\Up 3}^{\Tag{bot}}$
  in the “bottom” \kl{cycle}.
  For each \kl{node}~$\Node[1]$ of the original \kl{graph}~$\Graph$,
  the \kl{graph}~$\Graph'$ contains at least the \kl{edge}
  $\Set{\Node[1]_{\Down 2}^{\Tag{top}}, \Node[1]_{\Up 2}^{\Tag{bot}}}$.
  In addition,
  if $\Node[1]$~has a \kl{label} other than~$1$,
  then $\Graph'$ also contains the \kl{edge}
  $\Set{\Node[1]_{\Down 1}^{\Tag{top}}, \Node[1]_{\Up 1}^{\Tag{bot}}}$.

  Clearly,
  the new \kl{graph}~$\Graph'$ can be computed
  in \kl{constant round time} and \kl{polynomial step time}
  by the \kl{nodes} of~$\Graph$.
  To show the correctness of the construction,
  let $\Graph'_{\Tag{top}}$ and~$\Graph'_{\Tag{bot}}$
  be the \kl{induced subgraphs} of~$\Graph'$
  that contain all “top” \kl{cycles} and all “bottom” \kl{cycles},
  respectively.
  By the same argument as in
  the proof of Proposition~\ref{prp:hamiltonian-lp-hard},
  $\Graph'_{\Tag{top}}$ and~$\Graph'_{\Tag{bot}}$
  each have a \kl{Hamiltonian cycle},
  say $\HamCycle_{\Tag{top}}$ and~$\HamCycle_{\Tag{bot}}$.
  Now,
  if some \kl{node}~$\Node[1]$ of~$\Graph$ has a \kl{label} different from~$1$,
  then we can connect $\HamCycle_{\Tag{top}}$ and~$\HamCycle_{\Tag{bot}}$
  to form a \kl{Hamiltonian cycle} of~$\Graph'$.
  This can be achieved by adding the \kl{edges}
  $\Set{\Node[1]_{\Down 1}^{\Tag{top}}, \Node[1]_{\Up 1}^{\Tag{bot}}}$
  and
  $\Set{\Node[1]_{\Down 2}^{\Tag{top}}, \Node[1]_{\Up 2}^{\Tag{bot}}}$,
  and removing the \kl{edges}
  $\Set{\Node[1]_{\Down 1}^{\Tag{top}}, \Node[1]_{\Down 2}^{\Tag{top}}}$
  and
  $\Set{\Node[1]_{\Up 1}^{\Tag{bot}}, \Node[1]_{\Up 2}^{\Tag{bot}}}$
  (see~Figure~\ref{fig:notallselected-to-hamiltonian}).
  However,
  if all \kl{nodes} of~$\Graph$ are \kl{labeled} with~$1$,
  then~$\Graph'$ does not have a \kl{Hamiltonian cycle}.
  To see why,
  observe that in this case,
  all \kl{nodes} of the form
  $\Node[1]_{\Down 1}^{\Tag{top}}$,
  $\Node[1]_{\Down 3}^{\Tag{top}}$,
  $\Node[1]_{\Up 1}^{\Tag{bot}}$, or
  $\Node[1]_{\Up 3}^{\Tag{bot}}$
  have \kl{degree}~$2$.
  This implies that all \kl{edges} \kl{incident} to these \kl{nodes}
  must belong to any hypothetical \kl{Hamiltonian cycle}~$\HamCycle$ of~$\Graph'$,
  and hence that none of the \kl{edges} of the form
  $\Set{\Node[1]_{\Down 2}^{\Tag{top}}, \Node[1]_{\Up 2}^{\Tag{bot}}}$
  can be part of~$\HamCycle$.
  Since there are no other \kl{edges} connecting
  $\Graph'_{\Tag{top}}$ and~$\Graph'_{\Tag{bot}}$,
  the \kl{cycle}~$\HamCycle$ cannot exist.
\end{proof}

\begin{figure}[tb]
  \centering
  \begin{tikzpicture}[
    semithick,>=stealth',on grid,
    vertex/.style={draw,fill=white,circle,minimum size=1.8ex,inner sep=0},
    named vertex/.style={draw,circle,inner sep=0,minimum size=4ex},
    box/.style={draw,dotted,rounded corners=2ex},
    hamilton/.style={ultra thick},
    label/.style={inner sep=0,align=left},
  ]
  \def\unitDist{3ex}
  \def\boxMargin{7.1ex}
  \def\cycleRad{5.5ex}
  % u's gadget
  \coordinate (u_top);
  \draw[hamilton] ($(u_top)+(30:\cycleRad)$) arc (30:360:\cycleRad);
  \draw ($(u_top)+(30:\cycleRad)$) arc (30:0:\cycleRad);
  \node[vertex] (u_top_east1) at ($(u_top)+(30:\cycleRad)$) {};
  \node[vertex] (u_top_east2) at ($(u_top)+(0:\cycleRad)$) {};
  \node[vertex] (u_top_south1) at ($(u_top)+(-120:\cycleRad)$) {};
  \node[vertex] (u_top_south2) at ($(u_top)+(-90:\cycleRad)$) {};
  \node[vertex] (u_top_south3) at ($(u_top)+(-60:\cycleRad)$) {};
  \coordinate (u_bot) at ($(u_top)+(-90:5*\unitDist)$);
  \draw[hamilton] ($(u_bot)+(-30:\cycleRad)$) arc (-30:-360:\cycleRad);
  \draw ($(u_bot)+(-30:\cycleRad)$) arc (-30:0:\cycleRad);
  \node[vertex] (u_bot_east1) at ($(u_bot)+(0:\cycleRad)$) {};
  \node[vertex] (u_bot_east2) at ($(u_bot)+(-30:\cycleRad)$) {};
  \node[vertex] (u_bot_north1) at ($(u_bot)+(120:\cycleRad)$) {};
  \node[vertex] (u_bot_north2) at ($(u_bot)+(90:\cycleRad)$) {};
  \node[vertex] (u_bot_north3) at ($(u_bot)+(60:\cycleRad)$) {};
  \path (u_top_south2) edge (u_bot_north2);
  % v's gadget
  \coordinate (v_top) at ($(u_top)+(0:10*\unitDist)$);
  \draw[hamilton] ($(v_top)+(30:\cycleRad)$) arc (30:150:\cycleRad);
  \draw[hamilton] ($(v_top)+(0:\cycleRad)$) arc (0:-180:\cycleRad);
  \draw ($(v_top)+(30:\cycleRad)$) arc (30:0:\cycleRad);
  \draw ($(v_top)+(150:\cycleRad)$) arc (150:180:\cycleRad);
  \node[vertex] (v_top_west1) at ($(v_top)+(150:\cycleRad)$) {};
  \node[vertex] (v_top_west2) at ($(v_top)+(180:\cycleRad)$) {};
  \node[vertex] (v_top_east1) at ($(v_top)+(30:\cycleRad)$) {};
  \node[vertex] (v_top_east2) at ($(v_top)+(0:\cycleRad)$) {};
  \node[vertex] (v_top_south1) at ($(v_top)+(-120:\cycleRad)$) {};
  \node[vertex] (v_top_south2) at ($(v_top)+(-90:\cycleRad)$) {};
  \node[vertex] (v_top_south3) at ($(v_top)+(-60:\cycleRad)$) {};
  \coordinate (v_bot) at ($(v_top)+(-90:5*\unitDist)$);
  \draw[hamilton] ($(v_bot)+(0:\cycleRad)$) arc (0:180:\cycleRad);
  \draw[hamilton] ($(v_bot)+(-30:\cycleRad)$) arc (-30:-150:\cycleRad);
  \draw ($(v_bot)+(0:\cycleRad)$) arc (0:-30:\cycleRad);
  \draw ($(v_bot)+(-180:\cycleRad)$) arc (-180:-150:\cycleRad);
  \node[vertex] (v_bot_west1) at ($(v_bot)+(-180:\cycleRad)$) {};
  \node[vertex] (v_bot_west2) at ($(v_bot)+(-150:\cycleRad)$) {};
  \node[vertex] (v_bot_east1) at ($(v_bot)+(0:\cycleRad)$) {};
  \node[vertex] (v_bot_east2) at ($(v_bot)+(-30:\cycleRad)$) {};
  \node[vertex] (v_bot_north1) at ($(v_bot)+(120:\cycleRad)$) {};
  \node[vertex] (v_bot_north2) at ($(v_bot)+(90:\cycleRad)$) {};
  \node[vertex] (v_bot_north3) at ($(v_bot)+(60:\cycleRad)$) {};
  \path (v_top_south2) edge (v_bot_north2);
  % w's gadget
  \coordinate (w_top) at ($(v_top)+(0:10*\unitDist)$);
  \draw[hamilton] ($(w_top)+(150:\cycleRad)$) arc (150:-90:\cycleRad);
  \draw[hamilton] ($(w_top)+(-180:\cycleRad)$) arc (-180:-120:\cycleRad);
  \draw ($(w_top)+(150:\cycleRad)$) arc (150:180:\cycleRad);
  \draw ($(w_top)+(-120:\cycleRad)$) arc (-120:-90:\cycleRad);
  \node[vertex] (w_top_west1) at ($(w_top)+(150:\cycleRad)$) {};
  \node[vertex] (w_top_west2) at ($(w_top)+(180:\cycleRad)$) {};
  \node[vertex] (w_top_south1) at ($(w_top)+(-120:\cycleRad)$) {};
  \node[vertex] (w_top_south2) at ($(w_top)+(-90:\cycleRad)$) {};
  \node[vertex] (w_top_south3) at ($(w_top)+(-60:\cycleRad)$) {};
  \coordinate (w_bot) at ($(w_top)+(-90:5*\unitDist)$);
  \draw[hamilton] ($(w_bot)+(90:\cycleRad)$) arc (90:-150:\cycleRad);
  \draw[hamilton] ($(w_bot)+(120:\cycleRad)$) arc (120:180:\cycleRad);
  \draw ($(w_bot)+(-180:\cycleRad)$) arc (-180:-150:\cycleRad);
  \draw ($(w_bot)+(120:\cycleRad)$) arc (120:90:\cycleRad);
  \node[vertex] (w_bot_west1) at ($(w_bot)+(-180:\cycleRad)$) {};
  \node[vertex] (w_bot_west2) at ($(w_bot)+(-150:\cycleRad)$) {};
  \node[vertex] (w_bot_north1) at ($(w_bot)+(120:\cycleRad)$) {};
  \node[vertex] (w_bot_north2) at ($(w_bot)+(90:\cycleRad)$) {};
  \node[vertex] (w_bot_north3) at ($(w_bot)+(60:\cycleRad)$) {};
  \path[hamilton]
    (w_top_south1) edge (w_bot_north1)
    (w_top_south2) edge (w_bot_north2);
  % Connections between the gadgets
  \path[hamilton]
    (u_top_east1) edge (v_top_west1)
    (u_top_east2) edge (v_top_west2)
    (u_bot_east1) edge (v_bot_west1)
    (u_bot_east2) edge (v_bot_west2)
    (v_top_east1) edge (w_top_west1)
    (v_top_east2) edge (w_top_west2)
    (v_bot_east1) edge (w_bot_west1)
    (v_bot_east2) edge (w_bot_west2);
  % Cluster boxes
  \node[box] (cluster_u) [fit={($(u_top)+(-\boxMargin,\boxMargin)$)
                               ($(u_bot)+(\boxMargin,-\boxMargin)$)},rounded corners=6ex] {};
  \node[box] (cluster_v) [fit={($(v_top)+(-\boxMargin,\boxMargin)$)
                               ($(v_bot)+(\boxMargin,-\boxMargin)$)},rounded corners=6ex] {};
  \node[box] (cluster_w) [fit={($(w_top)+(-\boxMargin,\boxMargin)$)
                               ($(w_bot)+(\boxMargin,-\boxMargin)$)},rounded corners=6ex] {};
  \node[label,anchor=north] at ($(cluster_u.south)+(0,-0.3*\unitDist)$)
       {\kl{cluster} of~$\Node[1]$};
  \node[label,anchor=north] at ($(cluster_v.south)+(0,-0.3*\unitDist)$)
       {\kl{cluster} of~$\Node[2]$};
  \node[label,anchor=north] at ($(cluster_w.south)+(0,-0.3*\unitDist)$)
       {\kl{cluster} of~$\Node[3]$};
  \node (graph') [label,above left=0.1*\unitDist and -0.1*\unitDist of cluster_u.north west,anchor=north east]
                 {$\Graph'$:};
  % Original graph
  \node[named vertex] (u) [above=1.0*\unitDist of cluster_u.north] {$\Node[1]$};
  \node[named vertex] (v) [above=1.0*\unitDist of cluster_v.north] {$\Node[2]$};
  \node[named vertex] (w) [above=1.0*\unitDist of cluster_w.north] {$\Node[3]$};
  \node[label,above=0.3*\unitDist of u.north] {$1$};
  \node[label,above=0.3*\unitDist of v.north] {$1$};
  \node[label,above=0.3*\unitDist of w.north] {$0$};
  \node[label] at (graph' |- u) {$\Graph$:};
  \path
    (u) edge (v)
    (v) edge (w);
\end{tikzpicture}

%%% Local Variables:
%%% mode: latex
%%% TeX-master: "../lph-paper"
%%% End:
  \caption{
    Example illustrating
    the \kl{reduction} from $\NOTALLSELECTED$ to~$\HAMILTONIAN$
    used in the proof of Proposition~\ref{prp:hamiltonian-colp-hard}.
    The \kl{graph}~$\Graph$ has
    at least one \kl{node} with a \kl{label} different from~$1$
    if and only if
    the \kl{graph}~$\Graph'$ has a \kl{Hamiltonian cycle}.
    In this particular case,
    since \kl{node}~$\Node[3]$ of~$\Graph$ has \kl{label}~$0$,
    there is indeed a \kl{Hamiltonian cycle} in~$\Graph'$,
    represented by the thick \kl{edges}.
    This \kl{Hamiltonian cycle} can be thought of as
    the result of connecting the two “horizontally stretched” \kl{cycles}
    (the “top” one and the “bottom” one),
    using the two “vertical” \kl{edges} of $\Node[3]$'s \kl{cluster}.
    If $\Node[3]$~had \kl{label}~$1$,
    then one of these \kl{edges} would be missing,
    making it impossible to connect the two \kl{cycles}.
  }
  \label{fig:notallselected-to-hamiltonian}
\end{figure}

We now climb up one level in the \kl{locally polynomial hierarchy}
and investigate the notion of $\NLP$-\kl{completeness}.
Our treatment of centralized computing
as a special case of distributed computing
is particularly helpful here,
as it allows us to build directly on classical results from complexity theory.
We begin by recalling the Cook--Levin theorem,
which concerns the problem of determining
whether a given \kl{Boolean formula} is \kl(graph){satisfiable}.

\begin{theorem}[Cook and Levin~\cite{DBLP:conf/stoc/Cook71,Levin73}]
  \label{thm:cook-levin}
  $\SAT$ is $\NP$-\kl{complete}.
\end{theorem}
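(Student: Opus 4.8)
\textbf{Proof plan for Theorem~\ref{thm:cook-levin} (Cook--Levin).}

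The plan is to derive the Cook--Levin theorem as a corollary of Fagin's theorem (Theorem~\ref{thm:fagin}), specialized to \kl{single-node graphs}, exactly as is done in the centralized setting. Since we have $\NP = \SigmaFO{1}\On{\NODE}$, it suffices to show two things: that $\SAT \in \NP$, and that $\SAT$ is $\NP$-\kl{hard}. The membership $\SAT \in \NP$ is immediate: a \kl{local-polynomial machine} operating on a \kl{single-node graph} whose \kl{label} \kl{encodes} a \kl{Boolean formula}~$\varphi$ can take as its \kl{certificate} (chosen by \kl(certificate){Eve}) a \kl{valuation} of the \kl(Boolean){variables} of~$\varphi$, and then verify in polynomial time that this \kl{valuation} satisfies~$\varphi$ by straightforward evaluation. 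The \kl{certificate} has length linear in the number of \kl(Boolean){variables}, hence polynomially bounded in~$\Length{\varphi}$, so the $\Tuple{\CertifRadius, \CertifPolynomial}$-\kl(certificate){boundedness} condition is trivially met on \kl{single-node graphs}.

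For $\NP$-\kl{hardness}, the approach is to take an arbitrary \kl{graph property}~$\Property \in \NP = \SigmaP{1}\On{\NODE}$ and, using Fagin's theorem, fix a $\SigmaFOL{1}$-\kl{sentence} $\Formula = \ExistsRel{\SOVar_1} \dots \ExistsRel{\SOVar_n} \, \Formula[1]$ that \kl{defines}~$\Property$ on \kl{single-node graphs}, where $\Formula[1]$ is \kl{first-order}. Given an input string~$w$ (a \kl{single-node graph}), the \kl{reduction} must produce a \kl{Boolean formula}~$\varphi_w$ that is \kl(graph){satisfiable} if and only if $\StructRepr{w} \Satisfies \Formula$. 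This is the classical step: introduce \kl(Boolean){variables} encoding the membership of each tuple of \kl{elements} in each \kl{interpretation} of $\SOVar_1, \dots, \SOVar_n$, and then expand the \kl{first-order} part~$\Formula[1]$ into a propositional formula by replacing each \kl{first-order quantifier} by a finite conjunction or disjunction ranging over the $\CardS{\StructRepr{w}}$ \kl{elements} of the input \kl{structure}, and each \kl{atomic} \kl{formula} by a literal (a propositional \kl(Boolean){variable}, or a truth constant when it refers to a relation~$\BitSet{i}{\StructRepr{w}}$ or $\LinkRel{j}{\StructRepr{w}}$ fixed by the input). The resulting~$\varphi_w$ has size polynomial in $\CardS{\StructRepr{w}}$, hence polynomial in~$\Length{w}$, and can be computed by a single Turing machine in polynomial time, i.e.\ by a \kl{local-polynomial machine} on \kl{single-node graphs}. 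One has to check that this machine \kl{implements} a function $f \colon \GRAPH \to \PowerSet{\GRAPH}$ in the sense of Section~\ref{sec:reductions} (here trivially, since the \kl{graph} has one \kl{node} and its \kl{cluster} is the whole output \kl{graph}), and that $w \in \Property \iff f(w) \in \SAT$.

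The main obstacle, if there is one, is purely bookkeeping: the translation of~$\Formula[1]$ into a propositional formula must correctly handle \kl{second-order} \kl{atomic} \kl{formulas} $\InRel{\SOVar}{\FOVar[1]_1, \dots, \FOVar[1]_{\Arity}}$ by creating one \kl(Boolean){variable} per tuple and \kl{relation variable}, and must respect the \kl{encoding} conventions fixed in Section~\ref{sec:preliminaries} for representing \kl{Boolean formulas} as \kl{labels}. None of this is conceptually difficult---it is the standard proof of Cook--Levin via Fagin's theorem (see, e.g., \cite[Thm.~3.2.4]{DBLP:series/txtcs/GradelKLMSVVW07} for Fagin's theorem and its corollaries)---so the theorem follows without further ado. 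We note that the \kl{reduction} constructed here is \emph{not} \kl{topology-preserving}, but that is irrelevant for the statement as given.
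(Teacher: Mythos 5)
Your proposal is correct and follows essentially the same route as the paper: the paper observes that Theorem~\ref{thm:cook-levin} can be obtained as a corollary of Fagin's theorem, and then proves the distributed generalization (Theorem~\ref{thm:local-cook-levin}) by exactly the translation you describe—\kl{Boolean variables} $\BoolVar_{\SOVar(\Element_1, \dots, \Element_{\Arity})}$ for tuples in the quantified relations, brute-force expansion of \kl{first-order quantifiers} over the finite \kl{domain}—and recovers the classical theorem by restricting to \kl{single-node graphs} via Remark~\ref{rem:from-nlp-complete-to-np-complete}. The only cosmetic difference is that you argue directly at the single-node level from Theorem~\ref{thm:fagin}, whereas the paper proves the $\NLP$-\kl{complete}ness of $\SATGRAPH$ first (from Theorem~\ref{thm:local-fagin}, which subsumes Theorem~\ref{thm:fagin}) and specializes afterward; the underlying construction is the same.
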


While this result was discovered a few years before Fagin's theorem,
it can also be obtained as a simple corollary of the latter
(see, e.g., \cite[Thm.~3.2.6]{DBLP:series/txtcs/GradelKLMSVVW07}).
In the following,
we will show that this observation extends to the distributed setting
by using Theorem~\ref{thm:local-fagin}
to obtain a generalized version of the Cook--Levin theorem.
But first,
we need to generalize the \kl{Boolean satisfiability} problem to \kl{graphs}.

\subparagraph*{Boolean graph satisfiability.}

\AP
\phantomintro{Boolean formula}%
\phantomintro{Boolean variable}%
\phantomintro{valuation}%
A \intro{Boolean graph} is a \kl{graph}~$\Graph$
whose \kl{nodes} are \kl{labeled} with (\kl{encodings} of) \kl{Boolean formulas}.
We call $\Graph$ \intro(graph){satisfiable}
if there exists a function~$\ValuationAssignment$
that assigns to each \kl{node} $\Node[1] \inG \Graph$
a \kl{valuation} of the \kl{Boolean variables} occurring
in $\Node[1]$'s \kl(Boolean){formula} $\Labeling{\Graph}(\Node[1])$
such that
\begin{itemize}
\item $\ValuationAssignment(\Node[1])$ satisfies
  $\Labeling{\Graph}(\Node[1])$, and
\item $\ValuationAssignment(\Node[1])$ is consistent
  with the \kl{valuations} of $\Node[1]$'s \kl{neighbors}, i.e.,
  $\ValuationAssignment(\Node[1])(\BoolVar) =
   \ValuationAssignment(\Node[2])(\BoolVar)$
  for every \kl{neighbor}~$\Node[2]$ of $\Node[1]$
  and every \kl{Boolean variable}~$\BoolVar$
  that occurs in both $\Labeling{\Graph}(\Node[1])$
  and~$\Labeling{\Graph}(\Node[2])$.
\end{itemize}

\AP
We denote the set of all \kl(graph){satisfiable} \kl{Boolean graphs}
by~$\intro*\SATGRAPH$.
The standard \intro{Boolean satisfiability} problem~$\SAT$
is simply the restriction of~$\SATGRAPH$ to \kl{single-node graphs},
i.e., $\intro*\SAT = \SATGRAPH \cap \NODE$.

Now we are ready to generalize the Cook--Levin theorem from~$\NP$ to $\NLP$.
The original result (Theorem~\ref{thm:cook-levin}) can be recovered
by restricting the following statement to \kl{single-node graphs},
as noted in Remark~\ref{rem:from-nlp-complete-to-np-complete}.

\begin{theorem}
  \label{thm:local-cook-levin}
  $\SATGRAPH$ is $\NLP$-\kl{complete}.
  This holds even if we impose that
  all \kl{locally polynomial reductions} must be \kl{topology-preserving}.
\end{theorem}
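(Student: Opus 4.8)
The plan is to handle the two halves of ``$\NLP$-complete'' separately, deriving hardness from our generalized Fagin theorem in the same way that the classical Cook--Levin theorem is obtained from Fagin's theorem. Membership of $\SATGRAPH$ in $\NLP$ is easy: I would build an $\NLP$-verifier that reads at each node $\Node$ a certificate encoding a valuation $\ValuationAssignment(\Node)$ of the Boolean variables occurring in $\Node$'s label, has every node forward the relevant part of its valuation to its neighbors in a single round, and then checks locally that $\ValuationAssignment(\Node)$ is a well-formed valuation satisfying $\Labeling{\Graph}(\Node)$ and that it agrees with each neighbor's valuation on every shared variable. Both checks run in polynomial step time, and since the number of variables in $\Node$'s label is at most the length of that label, the certificates are $\Tuple{\CertifRadius,\CertifPolynomial}$-bounded for a suitable polynomial, so this is a genuine $\NLP$-verifier.

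For $\NLP$-hardness I would proceed via Theorem~\ref{thm:local-fagin}. Given $\Property \in \NLP$, that theorem supplies a $\SigmaLFOL{1}$-sentence $\ExistsRel{\SOVar_1}\cdots\ExistsRel{\SOVar_n}\,\ForAll{\FOVar}\,\Formula[1]\Of{\FOVar}$ defining $\Property$ on $\GRAPH$, with $\Formula[1] \in \BFL$ bounded around $\FOVar$; let $\Radius$ be the nesting depth of its bounded quantifiers. The reduction $f$ is topology-preserving: on input $\Graph$ it leaves nodes and edges unchanged and relabels each node $\Node$ with an encoding of a Boolean formula $\phi_\Node$. To produce $\phi_\Node$, node $\Node$ explores its $4\Radius$-neighborhood in a constant number of rounds and then forms the quantifier-free expansion of $\Formula[1]\Of{\Node} \wedge \bigwedge_k \Formula[1]\Of{(\Node,k)}$, where the second conjunction ranges over $\Node$'s own labeling bits. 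Because every first-order quantifier of $\Formula[1]$ is bounded, this expansion is a propositional formula of size polynomial in $\CardS{\StructNeighborhood{\Graph}{4\Radius}{\Node}}$; in it, the graph-determined atoms $\BitTrue{1}{e}$, $e \Linked{i} e'$, $e \Equal e'$ become Boolean constants, while each second-order atom $\InRel{\SOVar_i}{\bar{e}}$ becomes a Boolean variable $x_{\SOVar_i,\bar{e}}$, named canonically from the (locally unique) identifiers of the nodes to which the components of $\bar{e}$ belong, so that two nodes that both ``see'' $\bar{e}$ literally share that variable. Finally $\phi_\Node$ is padded with trivial tautologies mentioning $x_{\SOVar_i,\bar{e}}$ for every node in the $2\Radius$-neighborhood of the node $\Node_{\bar{e}}$ owning the first component of $\bar{e}$, so that the set of nodes whose formula mentions this variable is exactly that neighborhood---and hence \emph{connected}. (The single-node graph with empty label is a special case, treated as in the forward direction of Theorem~\ref{thm:local-fagin}.) Correctness is then a two-way gluing argument: if $\Graph \in \Property$ via relations $R_1,\dots,R_n$---which may be assumed to contain only tuples of bounded spread, since $\Formula[1]$ never refers to any other---then $\ValuationAssignment(\Node)(x_{\SOVar_i,\bar{e}}) := [\bar{e} \in R_i]$ is a consistent valuation satisfying every $\phi_\Node$, so $f(\Graph) \subseteq \SATGRAPH$; conversely, any satisfying valuation of $f(\Graph)$ must, by edge-consistency along the connected support of each $x_{\SOVar_i,\bar{e}}$, assign it a single value $b_{\bar{e}}$, and then $R_i := \SetBuilder{\bar{e}}{b_{\bar{e}} = 1}$ satisfies $\ForAll{\FOVar}\,\Formula[1]\Of{\FOVar}$ on $\StructRepr{\Graph}$, so $\Graph \in \Property$. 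This gives $\Property \ReducesTo{\GRAPH} \SATGRAPH$ by a topology-preserving local-polynomial reduction, which together with membership yields the theorem (and, by Remark~\ref{rem:from-nlp-complete-to-np-complete}, recovers the classical Cook--Levin theorem on single-node graphs).

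The hard part will be getting the variable-naming bookkeeping exactly right. I must check that the tuples $\bar{e}$ a node needs in order to expand $\Formula[1]$ at itself and at its labeling bits are all anchored within a bounded distance of $\Node$ and span a bounded neighborhood, so that the $2\Radius$-neighborhoods of their anchors really do cover precisely the nodes that reference them---this is what makes each variable's support connected, which is in turn what lets the pairwise-consistency requirement of $\SATGRAPH$ reconstruct a \emph{global} second-order interpretation rather than merely a patchwork of local ones. I also have to verify that the identifier-based names are unambiguous and agreed upon by adjacent nodes, which forces $\IdentRadius$ to be chosen large enough relative to $\Radius$, and that $\phi_\Node$---expansion plus padding---stays polynomial in a constant-radius neighborhood, so that $f$ is implementable by a local-polynomial machine. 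Everything else, including the propositional expansion of a bounded first-order formula and the check that $f(\Graph)$ is again a connected labeled graph, is routine.
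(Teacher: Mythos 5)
Your proof follows essentially the same route as the paper's: membership is a one-round local check of certificate-encoded valuations, and hardness is obtained by applying Theorem~\ref{thm:local-fagin}, unfolding the $\BFL$-formula $\Formula[1]$ at each node (and each of its \kl{labeling bits}) into a propositional formula, turning each second-order atom $\InRel{\SOVar_i}{\bar{e}}$ into a Boolean variable, and then renaming the variables via $(\Radius+1)$-\kl{locally unique} \kl{identifiers} so that the relabeling is computable in $\Radius+1$ rounds and \kl{topology-preserving}. This matches the paper's translation $\Translation_{\Assignment}$ and the construction of $\Graph'$ and $\Graph''$ in the proof of Theorem~\ref{thm:local-cook-levin}.

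Your padding step, however, is a real addition, and it addresses a point the paper's argument glosses over. The paper notes that a \emph{global} valuation corresponds to a second-order interpretation, but the converse direction needed for soundness, namely that $\Graph' \in \SATGRAPH$ forces a global valuation, is not argued: $\SATGRAPH$ only enforces agreement on a shared variable along \kl{edges}, so a value for $x_{\SOVar_i,\bar{e}}$ propagates only within a \kl{connected} \kl{component} of the set of \kl{nodes} whose formula mentions it. That set need not be connected. On a 4-cycle $v_1 v_2 v_3 v_4$, the non-\kl{adjacent} \kl{nodes} $v_1$ and $v_3$ both quantify over the pair $(v_2,v_4)$ via \kl(quantifier){bounded} quantification, while $v_2$ and $v_4$ do not, so the support of $x_{R,(v_2,v_4)}$ is $\{v_1,v_3\}$ and is disconnected; a $\BFL$-formula asserting \emph{label equals some symmetric $R$-condition on the two neighbors} then lets $\phi_{v_1}$ demand the variable true and $\phi_{v_3}$ demand it false with no edge consistency to block it, even though no single $R$ can reconcile $v_1$ and $v_3$. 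Your tautology padding over a $2\Radius$-ball around each tuple's anchor makes every support a connected ball that contains all its natural occurrences (since any node that ``sees'' $\bar{e}$ is within $\Radius$ of the anchor), which is exactly the connectivity needed to reconstruct a global interpretation from a pairwise-consistent valuation. The remaining bookkeeping you flag does go through: the number of padded tuples at a node is polynomial in the size of its $4\Radius$-\kl{neighborhood}, and identifier-based names are unambiguous and agreed upon by \kl{adjacent} \kl{nodes} under the paper's choice of $(\Radius+1)$-\kl{locally unique} \kl{identifiers}. Overall your argument is a tighter version of the paper's proof, and the extra care in the converse direction is warranted.
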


\begin{proof}
  Obviously,
  $\SATGRAPH$ lies in $\NLP$,
  since each \kl{node} can check
  in one \kl{communication round} and \kl{polynomial step time}
  whether a given \kl{valuation} is both
  locally satisfying and
  consistent with the \kl{valuations} of its \kl{neighbors}.

  It remains to show that
  $\SATGRAPH$ is $\NLP$-\kl{hard},
  i.e., that
  $\Property \ReducesTo{\GRAPH} \SATGRAPH$
  for every
  $\Property \in \NLP$,
  and to ensure that
  the \kl{locally polynomial reductions} involved are \kl{topology-preserving}.
  By Theorem~\ref{thm:local-fagin},
  we know that $\Property$ can be \kl{defined} by a $\SigmaLFOL{1}$-\kl{formula}
  $\ExistsRel{\SOVar_1} \dots \ExistsRel{\SOVar_n} \,
   \ForAll{\FOVar[1]} \, \Formula[1]\Of{\FOVar[1]}$.
  For each \kl{graph} $\Graph \in \GRAPH$,
  we now construct a \kl{Boolean graph}~$\Graph'$
  such that
  $\Graph \in \Property$
  if and only if
  $\Graph' \in \SATGRAPH$.
  The \kl{nodes} and \kl{edges} of~$\Graph'$
  are the same as those of~$\Graph$,
  and the \kl{labeling} function $\Labeling{\Graph'}$
  assigns a \kl{Boolean formula}~$\Formula[1]^{\Graph}_{\Node[1]}$
  to each \kl{node} $\Node[1] \inG \Graph$.
  This \kl(Boolean){formula}~$\Formula[1]^{\Graph}_{\Node[1]}$
  states that,
  for a given \kl{interpretation} of $\SOVar_1$, \dots, $\SOVar_n$,
  the $\BFL$-\kl{formula}~$\Formula[1]$ is satisfied
  at the \kl{element} of~$\StructRepr{\Graph}$ representing~$\Node[1]$
  and at all the \kl{elements} representing $\Node[1]$'s \kl{labeling bits}.
  To represent the relations $\SOVar_1$, \dots, $\SOVar_n$,
  we introduce \kl{Boolean variables} of the form
  $\BoolVar_{\SOVar(\Element_1, \dots, \Element_{\Arity})}$,
  with the intended meaning that the tuple of \kl{elements}
  $\Tuple{\Element_1, \dots, \Element_{\Arity}}
   \in (\Domain{\StructRepr{\Graph}})^{\Arity}$
  lies in the relation~$\SOVar$.

  Formally,
  we set
  \begin{equation*}
    \Formula[1]^{\Graph}_{\Node[1]} \,=\,
    \overbrace{
      \Translation_{\Set{\FOVar[1] \mapsto \Node[1]}}(\Formula[1])
      \vphantom{\BigAND}
    }^{\text{\kl{node}~$\Node[1]$}}
    \;\AND\;
    \overbrace{
      \BigAND_{
        \lalign{
          \Element \in (\Domain{\StructRepr{\Graph}} \setminus \NodeSet{\Graph}):\,
          \Element \,\NeighborRel{\StructRepr{\Graph}}\!\! \Node[1]
        }
      }
      \Translation_{\Set{\FOVar[1] \mapsto \Element}}(\Formula[1]),
    }^{\text{$\Node[1]$'s \kl{labeling bits}}}
  \end{equation*}
  where the translation function
  $\Translation_{\Assignment}(\Formula[2])$
  is defined inductively as follows
  for every $\BFL$-\kl{formula}~$\Formula[2]$
  and every \kl{variable assignment}
  $\Assignment \colon \FreeFO(\Formula[2]) \to \Domain{\StructRepr{\Graph}}$:
  \begin{itemize}
  \item \kl{Atomic} \kl{formulas} that do not involve a \kl{second-order variable}
    are replaced by
    their truth value in~$\Tuple{\StructRepr{\Graph}, \Assignment}$, i.e.,
    \begin{gather*}
      \Translation_{\Assignment}(\BitTrue{1}{\FOVar[2]}) = \,
      \begin{cases*}
        \True
        & if $\Assignment(\FOVar[2]) \in \BitSet{1}{\StructRepr{\Graph}}$, \\
        \False
        & otherwise,
      \end{cases*}
      \qquad
      \Translation_{\Assignment}(\FOVar[2] \Linked{i} \FOVar[3]) = \,
      \begin{cases*}
        \True
        & if $\Assignment(\FOVar[2]) \LinkRel{i}{\StructRepr{\Graph}}
              \Assignment(\FOVar[3])$, \\
        \False
        & otherwise,
      \end{cases*} \\
      \Translation_{\Assignment}(\FOVar[2] \Equal \FOVar[3]) = \,
      \begin{cases*}
        \True
        & if $\Assignment(\FOVar[2]) = \Assignment(\FOVar[3])$, \\
        \False
        & otherwise,
      \end{cases*}
    \end{gather*}
    for $i \in \Set{1, 2}$.
  \item \kl{Atomic} \kl{formulas} that involve a \kl{second-order variable}
    are replaced by the corresponding \kl{Boolean variable}, i.e.,\,
    $\Translation_{\Assignment}(
       \InRel{\SOVar}{\FOVar[2]_1,\dots,\FOVar[2]_{\Arity}}
     ) = \,
     \BoolVar_{
       \SOVar(
         \Assignment(\FOVar[2]_1), \dots,
         \Assignment(\FOVar[2]_{\Arity})
       )
     }$.
  \item \kl{Boolean connectives} are preserved:
    $\Translation_{\Assignment}(\NOT \Formula[2]) = \,
     \NOT \Translation_{\Assignment}(\Formula[2])$,
    and
    $\Translation_{\Assignment}(\Formula[2]_1 \OR \Formula[2]_2) = \,
     \Translation_{\Assignment}(\Formula[2]_1) \OR
     \Translation_{\Assignment}(\Formula[2]_2)$.
  \item \kl{First-order quantification}
    is expressed through a case distinction
    over all possible \kl{variable assignments}, i.e.,
    \begin{equation*}
      \Translation_{\Assignment}(
        \ExistsNb{\FOVar[3]}{\FOVar[2]} \; \Formula[2]
      ) = \,
      \BigOR_{
        \lalign{
          \Element \inS \StructRepr{\Graph}\!:\,
          \Element \,\NeighborRel{\StructRepr{\Graph}}\!\! \Assignment(\FOVar[2])
        }
      }
      \Translation_{\Version{\Assignment}{\FOVar[3]}{\Element}}(\Formula[2]).
    \end{equation*}
  \end{itemize}
  Clearly,
  every \kl{interpretation}
  of a $\Arity$-\kl{ary} \kl{relation variable}~$\SOVar$
  on~$\StructRepr{\Graph}$
  can be translated to a \kl{valuation} of the \kl{Boolean variables}
  $\Tuple{
     \BoolVar_{\SOVar(\Element_1, \dots, \Element_{\Arity})}
   }_{\Element_1, \dots, \Element_{\Arity} \inS \StructRepr{\Graph}}$,
  and vice versa.
  Hence,
  the \kl{structure}~$\StructRepr{\Graph}$ \kl{satisfies}
  $\ExistsRel{\SOVar_1} \dots \ExistsRel{\SOVar_n} \,
   \ForAll{\FOVar[1]} \, \Formula[1]\Of{\FOVar[1]}$
  if and only if
  there exists a (global) \kl{valuation}
  of all \kl{Boolean variables} occurring in~$\Graph'$
  that simultaneously satisfies
  all the \kl(Boolean){formulas} of~$\Graph'$.

  The only issue is that a \kl{distributed Turing machine}
  would require \kl{globally unique} \kl{identifiers}
  to compute~$\Graph'$ from~$\Graph$,
  since the \kl{Boolean variables} in~$\Graph'$
  allow to distinguish between
  all \kl{elements} of~$\StructRepr{\Graph}$.
  So instead of~$\Graph'$,
  we compute an \kl(graph){equisatisfiable} \kl{Boolean graph}~$\Graph''$
  for which \kl{locally unique} \kl{identifiers} suffice.
  Let~$\Radius$ be the maximum nesting depth
  of \kl(quantifier){bounded} \kl{first-order quantifiers} in~$\Formula[1]$
  (intuitively,
  the distance up to which $\Formula[1]$ can “see”).
  For a given $(\Radius + 1)$-\kl{locally unique}
  \kl{identifier assignment}~$\IdMap$ of~$\Graph$,
  we define~$\Graph''$ as the \kl{graph}
  that one obtains from~$\Graph'$ by rewriting all the \kl{Boolean formulas}
  in such a way that
  each \kl{node}~$\Node[1]$ and each of its \kl{labeling bits}
  are referred to using the \kl{identifier}~$\IdMap(\Node[1])$.
  Note that $\Graph''$ can be computed from~$\Graph$
  in \kl{round time} $(\Radius + 1)$ and \kl{polynomial step time}.
  (The size of~$\Formula[1]$ is constant with respect to~$\Graph$.)
  Furthermore,
  $\Graph''$ is \kl(graph){satisfiable}
  if and only if
  $\Graph'$ is so,
  because \kl{elements} that share the same \kl{identifier}
  are never referred to
  by the same \kl(Boolean){formula}
  or by two \kl(Boolean){formulas} belonging to \kl{adjacent} \kl{nodes}.
\end{proof}

While $\SATGRAPH$ itself may not be a particularly natural \kl{graph property},
we can use it as a basis for establishing
the $\NLP$-\kl{completeness} of more natural \kl(graph){properties},
in the same way that Karp~\cite{DBLP:conf/coco/Karp72}
used $\SAT$ to prove many other problems $\NP$-\kl{complete}.
Again,
this is made possible by the flexibility
of \kl{locally polynomial reductions},
which do not necessarily have to be \kl{topology-preserving}.
We now apply this approach to $3$-\kl{colorability}.

\begin{theorem}
  \label{thm:three-colorable}
  $\COLORABLE{3}$ is $\NLP$-\kl{complete}.
\end{theorem}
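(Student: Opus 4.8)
The plan is to prove the two directions separately: that $\COLORABLE{3}$ lies in~$\NLP$, which is easy, and that it is $\NLP$-\kl{hard}, for which I would exhibit a \kl{local-polynomial reduction} from the $\NLP$-\kl{complete} \kl(graph){property}~$\SATGRAPH$ (Theorem~\ref{thm:local-cook-levin}). Membership follows, e.g., from Example~\ref{ex:3-colorable}, which gives a $\SigmaLFOL{1}$-\kl{formula} \kl{defining}~$\COLORABLE{3}$, together with Theorem~\ref{thm:local-fagin}; more directly, a \kl{local-polynomial machine} can \kl{verify}~$\COLORABLE{3}$ by letting \kl(certificate){Eve} label each \kl{node} with a color in~$\Range{3}*$ and having every \kl{node} check, in one \kl{communication round}, that its color differs from those of its \kl{neighbors}. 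The substance is thus the \kl{reduction}.

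For the \kl{reduction}, given a \kl{Boolean graph}~$\Graph$ I would construct, along the lines of Figure~\ref{fig:3satgraph-to-3colorable-overview}, a \kl{graph}~$\Graph'$ that is $3$-\kl{colorable} if and only if $\Graph$~is \kl(graph){satisfiable}. First, each \kl{node}~$\Node[1]$ broadcasts its \kl{label} to its \kl{neighbors}, so that after a second \kl{round} it knows all the neighboring \kl{Boolean formulas} and, in particular, which \kl{Boolean variables} it shares with each \kl{neighbor}. Then $\Node[1]$ locally rewrites its own \kl{formula} as an equisatisfiable $3$-CNF \kl{formula} --- using fresh auxiliary \kl{Boolean variables} whose names incorporate~$\Node[1]$'s \kl{identifier}, so that they are never shared with a \kl{neighbor} --- and builds its \kl{cluster} in~$\Graph'$ by applying the classical \kl{reduction} from $3$-CNF satisfiability to $3$-\kl{colorability} (see, e.g., \cite[Prp.~2.27]{DBLP:books/daglib/0019967}) to this $3$-CNF \kl{formula}. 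Recall that this construction produces, among others, two reference \kl{nodes} $\mathit{false}$ and $\mathit{ground}$, one triangle $\Set{\BoolVar, \Negated{\BoolVar}, \mathit{ground}}$ per \kl{Boolean variable}~$\BoolVar$, and one clause gadget per clause, in such a way that a valid $3$-\kl{coloring} of the \kl{cluster} corresponds to a satisfying \kl{valuation} of the $3$-CNF \kl{formula}, with the color of the \kl{node}~$\BoolVar$ relative to $\mathit{false}$ and $\mathit{ground}$ encoding the value of~$\BoolVar$.

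To couple the \kl{clusters}, for every \kl{edge} $\Set{\Node[1], \Node[2]}$ of~$\Graph$ I would insert an equality gadget --- a pair of mutually \kl{adjacent} fresh \kl{nodes}, each also \kl{adjacent} to both \kl{nodes} to be equated, which in any valid $3$-\kl{coloring} forces those two \kl{nodes} to receive the (unique) color avoided by the pair --- between the $\mathit{false}$ \kl{nodes} of the two \kl{clusters}, between their $\mathit{ground}$ \kl{nodes}, and, for every \kl{Boolean variable} shared by~$\Node[1]$ and~$\Node[2]$, between the corresponding $\BoolVar$ \kl{nodes}. One of the two fresh \kl{nodes} of each gadget is assigned to $\Node[1]$'s \kl{cluster} and the other to $\Node[2]$'s, so that the \kl{cluster map} condition --- cross-\kl{cluster} \kl{edges} only between \kl{clusters} of \kl{adjacent} \kl{nodes} --- is met; $\Node[1]$ names these cross-\kl{cluster} \kl{edges} by referring to~$\Node[2]$ via~$\Node[2]$'s \kl{identifier}, which is unambiguous because an $\IdentRadius$-\kl{locally unique} \kl{identifier assignment}, for a small constant~$\IdentRadius$, already distinguishes \kl{adjacent} \kl{nodes} and \kl{nodes} with a common \kl{neighbor}. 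Equating the $\mathit{false}$ and $\mathit{ground}$ \kl{nodes} of the two \kl{clusters} pins down all three colors coherently across~$\Set{\Node[1], \Node[2]}$ (a $3$-\kl{coloring} uses all three), so the truth-value encoding is consistent between the two \kl{clusters}; the further gadget on each shared \kl{Boolean variable} then forces that \kl{variable} to the same value on both sides, which is exactly the consistency requirement in the definition of~$\SATGRAPH$. Placing the $\mathit{false}$- and $\mathit{ground}$-gadgets on \emph{every} \kl{edge} of~$\Graph$ also keeps~$\Graph'$ \kl{connected}. Since all of $\Node[1]$'s computation is polynomial in the length of its \kl{label} and of the messages it receives, the map $\Graph \mapsto \Graph'$ is \kl{implemented} by a \kl{local-polynomial machine}; hence it is a \kl{local-polynomial reduction} from~$\SATGRAPH$ to~$\COLORABLE{3}$, and combined with $\COLORABLE{3} \in \NLP$ this gives $\NLP$-\kl{completeness}.

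I expect the main obstacle to be the combinatorial book-keeping around the cross-\kl{cluster} equality gadgets: one must verify that the family of constraints they impose is \emph{exactly} right, so that $\Graph'$ is $3$-\kl{colorable} if and only if $\Graph$ is \kl(graph){satisfiable}, with a valid $3$-\kl{coloring} encoding an assignment of \kl{valuations} to the \kl{nodes} of~$\Graph$ that is locally satisfying and consistent along every \kl{edge}, and with no spurious constraints (in particular, \kl{clusters} of non-\kl{adjacent} \kl{nodes} remain uncoupled, matching the fact that $\SATGRAPH$ only requires consistency along \kl{edges}) and none missing. A secondary point is to confirm that the transformation is genuinely \kl{implementable} by a \kl{local-polynomial machine} under merely \kl{locally unique} \kl{identifiers}: constantly many \kl{communication rounds}, polynomially many \kl{computation steps} per \kl{round} and \kl{node}, and output \kl{labels} that correctly \kl{encode} a \kl{cluster} together with all its \kl{edges} to neighboring \kl{clusters}. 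The purely classical ingredient --- that a valid $3$-\kl{coloring} of one \kl{cluster} corresponds to a satisfying \kl{valuation} of the associated $3$-CNF \kl{formula} --- can be cited without reproof.
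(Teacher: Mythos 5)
Your proposal is correct and follows essentially the same route as the paper: membership via Example~\ref{ex:3-colorable} together with Theorem~\ref{thm:local-fagin}; hardness by reducing from $\SATGRAPH$, first rewriting each \kl{node}'s \kl(Boolean){formula} into an equisatisfiable $\tCNF$ \kl(Boolean){formula} with \kl{identifier}-tagged fresh \kl{Boolean variables}, then building per-\kl{node} \kl{clusters} from the classical \kl(Boolean){formula} gadget and coupling \kl{adjacent} \kl{clusters} at their $\mathit{false}$, $\mathit{ground}$, and shared-\kl{variable} \kl{nodes} with exactly the connector gadget the paper uses. The only presentational difference is that the paper factors the $\tCNF$ rewriting out as a separate intermediate \kl{reduction} $\SATGRAPH \ReducesTo{\GRAPH} \tSATGRAPH$, whereas you fold it into a single \kl{local-polynomial machine}; this does not affect correctness.
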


\begin{proof}[Proof sketch]
  By Example~\ref{ex:3-colorable} and Theorem~\ref{thm:local-fagin},
  the \kl(graph){property} $\COLORABLE{3}$ lies in~$\NLP$.
  \AP
  To show that it is also $\NLP$-\kl{hard},
  we first \kl{reduce} $\SATGRAPH$ to~$\intro*\tSATGRAPH$,
  the set of all \kl(graph){satisfiable} \kl{Boolean graphs}
  in which every \kl{node} is \kl{labeled}
  with a $\intro*\tCNF$ \kl(Boolean){formula},
  i.e., a \kl{Boolean formula} in conjunctive normal form
  consisting of clauses with at most three literals.
  \AP
  This is straightforward because the standard \kl{reduction}
  from $\SAT$ to $\intro*\tSAT$
  (i.e., to $\tSATGRAPH \cap \NODE$)
  can be trivially generalized:
  Given a \kl{Boolean graph}~$\Graph$
  and an \kl{identifier assignment}~$\IdMap$ of~$\Graph$,
  we create an \kl(graph){equisatisfiable} \kl{graph}~$\Graph'$
  by replacing the \kl(Boolean){formula}~$\Formula$
  at each \kl{node} $\Node[1] \inG \Graph$
  by an equisatisfiable $\tCNF$ \kl(Boolean){formula}~$\Formula'$
  whose size is proportional to the size of~$\Formula$.
  (This can be done using the Tseytin transformation.)
  The new \kl(Boolean){formula}~$\Formula'$
  may contain additional \kl{Boolean variables}
  that do not occur in~$\Formula$,
  but every satisfying \kl{valuation} of~$\Formula$
  can be extended to a satisfying \kl{valuation} of~$\Formula'$,
  and conversely,
  every satisfying \kl{valuation} of~$\Formula'$
  can be restricted to a satisfying \kl{valuation} of~$\Formula$.
  To ensure that the \kl{graphs}~$\Graph$ and~$\Graph'$
  are indeed \kl(graph){equisatisfiable},
  we make the new \kl(Boolean){variables}' names depend on
  the \kl{identifier}~$\IdMap(\Node[1])$.
  Thus,
  the \kl{valuations} of \kl{adjacent} \kl{nodes} in~$\Graph'$
  need to be consistent
  only for the original \kl(Boolean){variables},
  not for the newly introduced ones.

\begin{figure}[tb]
  \centering
  \begin{tikzpicture}[
    semithick,>=stealth',on grid,
    vertex/.style={draw,circle,minimum size=1.8ex,inner sep=0},
    named vertex/.style={draw,circle,inner sep=0,minimum size=4ex},
    box/.style={draw,dotted,rounded corners=2ex},
    label/.style={inner sep=0,align=left},
  ]
  \def\nodeDist{3ex}
  % Nodes of u's gadget
  \node[vertex] (false_u) {};
  \node[vertex] (ground_u) [below=2.0*\nodeDist of false_u] {};
  \node[vertex] (P2_u)    [below right=2.0*\nodeDist and 0.5*\nodeDist of ground_u] {};
  \node[vertex] (notP2_u) [below left =2.0*\nodeDist and 0.5*\nodeDist of ground_u] {};
  \node[vertex] (P1_u)    [left=1.5*\nodeDist of notP2_u] {};
  \node[vertex] (notP1_u) [left=\nodeDist of P1_u] {};
  \node[vertex] (notP3_u) [right=1.5*\nodeDist of P2_u] {};
  \node[vertex] (P3_u)    [right=\nodeDist of notP3_u] {};
  \node[vertex] (clause1_u) [below left=2.4*\nodeDist and 0.5*\nodeDist of notP2_u] {};
  \node[vertex] (clause2_u) [right=\nodeDist of clause1_u] {};
  \node[vertex] (clause3_u) at ($(clause1_u)+(-60:\nodeDist)$) {};
  \node[vertex] (clause4_u) [below=\nodeDist of clause3_u] {};
  \node[vertex] (clause5_u) [right=\nodeDist of clause4_u] {};
  \node[vertex] (clause6_u) at ($(clause4_u)+(-60:\nodeDist)$) {};
  \node[box] (clause_u) [fit=(clause1_u)(clause2_u)(clause3_u)(clause4_u)(clause5_u)(clause6_u)] {};
  \node[box] (formula_u) [fit={($(false_u)+(0,1.4*\nodeDist)$)
                               ($(notP1_u)-(2.4*\nodeDist,0)$)
                               ($(P3_u)+(0.8*\nodeDist,0)$)
                               ($(clause6_u)-(0,1.0*\nodeDist)$)},rounded corners=6ex] {};
  \node[label,above=0.1*\nodeDist of false_u.north] {$\mathit{false}$};
  \node[label,anchor=east,above left=0.15*\nodeDist and 0*\nodeDist of ground_u.west] {$\mathit{ground}$};
  \node[label,anchor=center,below=0.25*\nodeDist of P1_u.south] {$\BoolVar_1$};
  \node[label,anchor=center,below=0.1*\nodeDist of notP1_u.south] {$\Negated{\BoolVar}_1$};
  \node[label,anchor=center,below=0.25*\nodeDist of P2_u.south] {$\BoolVar_2$};
  \node[label,anchor=center,below=0.1*\nodeDist of notP2_u.south] {$\Negated{\BoolVar}_2$};
  \node[label,anchor=center,below=0.25*\nodeDist of P3_u.south] {$\BoolVar_3$};
  \node[label,anchor=center,below=0.1*\nodeDist of notP3_u.south] {$\Negated{\BoolVar}_3$};
  \node[label,rotate=90,anchor=north] at ($(clause_u.east)+(0.15*\nodeDist,0)$) {clause \\[-0.8ex] gadget};
  \node[label,rotate=90,anchor=north] at ($(formula_u.east |- clause_u)+(0.15*\nodeDist,0)$)
       {\kl(Boolean){formula} \\[-0.8ex] gadget};
  % Nodes of v's gadget
  \node[vertex] (false_v) [right=17*\nodeDist of false_u] {};
  \node[vertex] (ground_v) [below=2.0*\nodeDist of false_v] {};
  \node[vertex] (P4_v)    [below left =2.0*\nodeDist and 0.5*\nodeDist of ground_v] {};
  \node[vertex] (notP4_v) [below right=2.0*\nodeDist and 0.5*\nodeDist of ground_v] {};
  \node[vertex] (notP3_v) [left=1.5*\nodeDist of P4_v] {};
  \node[vertex] (P3_v)    [left=\nodeDist of notP3_v] {};
  \node[vertex] (P5_v)    [right=1.5*\nodeDist of notP4_v] {};
  \node[vertex] (notP5_v) [right=\nodeDist of P5_v] {};
  \node[vertex] (clause1_v) [below left=2.4*\nodeDist and 0.5*\nodeDist of P4_v] {};
  \node[vertex] (clause2_v) [right=\nodeDist of clause1_v] {};
  \node[vertex] (clause3_v) at ($(clause1_v)+(-60:\nodeDist)$) {};
  \node[vertex] (clause4_v) [below=\nodeDist of clause3_v] {};
  \node[vertex] (clause5_v) [right=\nodeDist of clause4_v] {};
  \node[vertex] (clause6_v) at ($(clause4_v)+(-60:\nodeDist)$) {};
  \node[box] (clause_v) [fit=(clause1_v)(clause2_v)(clause3_v)(clause4_v)(clause5_v)(clause6_v)] {};
  \node[box] (formula_v) [fit={($(false_v)+(0,1.4*\nodeDist)$)
                               ($(P3_v)-(0.8*\nodeDist,0)$)
                               ($(notP5_v)+(2.4*\nodeDist,0)$)
                               ($(clause6_v)-(0,1.0*\nodeDist)$)},rounded corners=6ex] {};
  \node[label,above=0.1*\nodeDist of false_v.north] {$\mathit{false}$};
  \node[label,anchor=west,above right=0.15*\nodeDist and 0.1*\nodeDist of ground_v.east] {$\mathit{ground}$};
  \node[label,anchor=center,below=0.25*\nodeDist of P3_v.south] {$\BoolVar_3$};
  \node[label,anchor=center,below=0.1*\nodeDist of notP3_v.south] {$\Negated{\BoolVar}_3$};
  \node[label,anchor=center,below=0.25*\nodeDist of P4_v.south] {$\BoolVar_4$};
  \node[label,anchor=center,below=0.1*\nodeDist of notP4_v.south] {$\Negated{\BoolVar}_4$};
  \node[label,anchor=center,below=0.25*\nodeDist of P5_v.south] {$\BoolVar_5$};
  \node[label,anchor=center,below=0.1*\nodeDist of notP5_v.south] {$\Negated{\BoolVar}_5$};
  \node[label,rotate=90,anchor=south] at ($(clause_v.west)-(0.15*\nodeDist,0)$) {clause \\[-0.8ex] gadget};
  \node[label,rotate=90,anchor=south] at ($(formula_v.west |- clause_v)-(0.15*\nodeDist,0)$)
       {\kl(Boolean){formula} \\[-0.8ex] gadget};
  % Nodes of connector gadgets
  \node[vertex] (false_uv) at ($(false_u)!0.4!(false_v)$) {};
  \node[vertex] (false_vu) at ($(false_u)!0.6!(false_v)$) {};
  \node[vertex] (ground_uv) at ($(ground_u)!0.4!(ground_v)$) {};
  \node[vertex] (ground_vu) at ($(ground_u)!0.6!(ground_v)$) {};
  \node[vertex] (P3_uv) [below=2.0*\nodeDist of ground_uv] {};
  \node[vertex] (P3_vu) [below=2.0*\nodeDist of ground_vu] {};
  % Cluster boxes
  \node[box] (cluster_u) [fit={($(false_uv)+(0.8*\nodeDist,2.0*\nodeDist)$)
                               ($(notP1_u)-(3.0*\nodeDist,0)$)
                               ($(clause6_u)-(0,1.6*\nodeDist)$)},rounded corners=6ex] {};
  \node[box] (cluster_v) [fit={($(false_vu)+(-0.8*\nodeDist,2.0*\nodeDist)$)
                               ($(notP5_v)+(3.0*\nodeDist,0)$)
                               ($(clause6_v)-(0,1.6*\nodeDist)$)},rounded corners=6ex] {};
  \node[label,anchor=north] at ($(cluster_u.south)+(0,-0.3*\nodeDist)$)
       {\kl{cluster} representing \kl{node}~$\Node[1]$};
  \node[label,anchor=north] at ($(cluster_v.south)+(0,-0.3*\nodeDist)$)
       {\kl{cluster} representing \kl{node}~$\Node[2]$};
  \node (graph') [label,above left=0.2*\nodeDist and -0.2*\nodeDist of cluster_u.north west,anchor=north east]
                 {$\Graph'$:};
  % Edges of u's gadget
  \path (false_u) edge (ground_u)
        (ground_u) edge (P1_u)
        (ground_u) edge (notP1_u)
        (ground_u) edge (P2_u)
        (ground_u) edge (notP2_u)
        (ground_u) edge (P3_u)
        (ground_u) edge (notP3_u)
        (P1_u) edge (notP1_u)
        (P2_u) edge (notP2_u)
        (P3_u) edge (notP3_u)
        (clause1_u) edge (clause2_u)
        (clause1_u) edge (clause3_u)
        (clause2_u) edge (clause3_u)
        (clause3_u) edge (clause4_u)
        (clause4_u) edge (clause5_u)
        (clause4_u) edge (clause6_u)
        (clause5_u) edge (clause6_u)
        (P1_u)    edge[out=-55,in=90]  (clause1_u)
        (notP2_u) edge[out=-55,in=90]  (clause2_u)
        (notP3_u) edge[out=-130,in=90] (clause5_u)
        (clause6_u) edge[out=170,in=180,looseness=1.8] (false_u)
        (clause6_u) edge[out=160,in=180,looseness=1.9] (ground_u)
        ;
  % Edges of v's gadget
  \path (false_v) edge (ground_v)
        (ground_v) edge (P3_v)
        (ground_v) edge (notP3_v)
        (ground_v) edge (P4_v)
        (ground_v) edge (notP4_v)
        (ground_v) edge (P5_v)
        (ground_v) edge (notP5_v)
        (P3_v) edge (notP3_v)
        (P4_v) edge (notP4_v)
        (P5_v) edge (notP5_v)
        (clause1_v) edge (clause2_v)
        (clause1_v) edge (clause3_v)
        (clause2_v) edge (clause3_v)
        (clause3_v) edge (clause4_v)
        (clause4_v) edge (clause5_v)
        (clause4_v) edge (clause6_v)
        (clause5_v) edge (clause6_v)
        (P3_v.-48)     edge[out=-75,in=175] (clause1_v)
        (P4_v)         edge[out=-55,in=90]  (clause2_v)
        (notP5_v.-130) edge[out=-112,in=16,looseness=1.1] (clause5_v)
        (clause6_v) edge[out=10,in=0,looseness=1.8] (false_v)
        (clause6_v) edge[out=20,in=0,looseness=1.9] (ground_v)
        ;
  % Edges of connector gadgets
  \path (false_u)  edge (false_uv)
        (false_u)  edge[bend left=13] (false_vu)
        (false_v)  edge[bend left=13] (false_uv)
        (false_v)  edge (false_vu)
        (false_uv) edge (false_vu)
        (ground_u)  edge (ground_uv)
        (ground_u)  edge[bend left=13] (ground_vu)
        (ground_v)  edge[bend left=13] (ground_uv)
        (ground_v)  edge (ground_vu)
        (ground_uv) edge (ground_vu)
        (P3_u)  edge (P3_uv)
        (P3_u)  edge[bend left=15] (P3_vu)
        (P3_v)  edge[bend left=15] (P3_uv)
        (P3_v)  edge (P3_vu)
        (P3_uv) edge (P3_vu)
        ;
  % Original graph
  \node[named vertex] (u) [above=1.0*\nodeDist of cluster_u.north] {$\Node[1]$};
  \node[named vertex] (v) [above=1.0*\nodeDist of cluster_v.north] {$\Node[2]$};
  \node[label,left=0.5*\nodeDist of u.west] {
    $\BoolVar_1 \OR \Negated{\BoolVar}_2 \OR \Negated{\BoolVar}_3$};
  \node[label,right=0.5*\nodeDist of v.east] {
    $\BoolVar_3 \OR \BoolVar_4 \OR \Negated{\BoolVar}_5$};
  \node[label] at (graph' |- u) {$\Graph$:};
  \path (u) edge (v);
\end{tikzpicture}

%%% Local Variables:
%%% mode: latex
%%% TeX-master: "../lph-paper"
%%% End:
  \caption{
    \emph{(repeated from Figure~\ref{fig:3satgraph-to-3colorable-overview})}
    Example illustrating
    the \kl{reduction} from $\tSATGRAPH$ to~$\COLORABLE{3}$
    used in the proof of Theorem~\ref{thm:three-colorable}.
    The \kl{Boolean graph}~$\Graph$ is \kl(graph){satisfiable}
    if and only if
    the \kl{graph}~$\Graph'$ is $3$-\kl{colorable}.
    The labels in the depiction of~$\Graph'$ serve explanatory purposes only
    and are not part of the \kl{graph}.
  }
  \label{fig:3satgraph-to-3colorable}
\end{figure}

  Next,
  we \kl{reduce} $\tSATGRAPH$ to~$\COLORABLE{3}$,
  again by generalizing to arbitrary \kl{graphs}
  the corresponding \kl{reduction} on \kl{single-node graphs}.
  The standard \kl{reduction}
  from $\tSAT$ to the \kl{string-encoded} version of $\COLORABLE{3}$
  converts any given $\tCNF$-\kl(Boolean){formula}~$\Formula$
  into a \kl{graph}~$\Graph[2]_{\Formula}$,
  which we will refer to as a \kl(Boolean){formula} gadget.
  This \kl{graph} contains
  two special \kl{nodes} called $\mathit{false}$ and $\mathit{ground}$,
  two \kl{nodes}~$\BoolVar$ and~$\Negated{\BoolVar}$
  for each \kl(Boolean){variable}~$\BoolVar$ occurring in~$\Formula$
  (to~represent the corresponding positive and negative literals),
  and a small gadget for each clause of~$\Formula$.
  The \kl{edges} are chosen in such a way
  that $\Graph[2]_{\Formula}$~is $3$-\kl{colorable}
  if and only if
  $\Formula$~is satisfiable.
  Moreover,
  if a $3$-\kl{coloring} exists,
  then we may assume without loss of generality
  that $\mathit{false}$ and $\mathit{ground}$ are colored $0$~and~$2$,
  respectively,
  and each \kl{node} representing a literal
  is colored by that literal's truth value
  ($0$~for false, and $1$ for true).
  For details, see, e.g.,
  \cite[Prp.~2.27]{DBLP:books/daglib/0019967}.

  We now generalize this construction to arbitrary \kl{graphs}.
  An example is provided in Figure~\ref{fig:3satgraph-to-3colorable}.
  Given a \kl{Boolean graph}~$\Graph$,
  we construct a \kl{graph}~$\Graph'$
  that is $3$-\kl{colorable} if and only if $\Graph$~is \kl(graph){satisfiable}.
  For each \kl{node} $\Node[1] \inG \Graph$
  \kl{labeled} with a \kl{Boolean formula}~$\Formula[1]$,
  the \kl{cluster} representing~$\Node[1]$ in~$\Graph'$
  contains a copy of
  the \kl(Boolean){formula} gadget~$\Graph[2]_{\Formula[1]}$.
  We denote this copy by~$\Graph[2]^{\Node[1]}$
  and mark its \kl{nodes} with a superscript~$\Node[1]$.
  To enforce that
  the \kl(Boolean){formula} gadgets of adjacent \kl{clusters}
  are colored consistently,
  we connect some of their \kl{nodes}
  by means of an additional gadget.
  More precisely,
  for $\Set{\Node[1], \Node[2]} \in \EdgeSet{\Graph}$,
  if we require that two \kl{nodes}
  $\Node[3]^{\Node[1]} \inG \Graph[2]^{\Node[1]}$
  and
  $\Node[3]^{\Node[2]} \inG \Graph[2]^{\Node[2]}$
  have the same color,
  then we connect them using the following connector gadget:
  \begin{center}
    \begin{tikzpicture}[
    semithick,>=stealth',
    vertex/.style={draw,circle,minimum size=1.8ex,inner sep=0},
    label/.style={inner sep=0,align=left},
  ]
  \def\labelDist{0.5ex}
  \node[vertex] (w_u) {};
  \node[vertex] (w_v) [right=40ex of w_u] {};
  \node[vertex] (w_uv) at ($(w_u)!0.4!(w_v)$) {};
  \node[vertex] (w_vu) at ($(w_u)!0.6!(w_v)$) {};
  \node[label,left=\labelDist of w_u] {$\Node[3]^{\Node[1]}$};
  \node[label,right=\labelDist of w_v] {$\Node[3]^{\Node[2]}$};
  \path (w_u)  edge (w_uv)
        (w_u)  edge[bend left=18] (w_vu)
        (w_v)  edge[bend left=18] (w_uv)
        (w_v)  edge (w_vu)
        (w_uv) edge (w_vu)
        ;
\end{tikzpicture}

%%% Local Variables:
%%% mode: latex
%%% TeX-master: "../lph-paper"
%%% End:
  \end{center}
  Note that any valid $3$-\kl{coloring} of the connector gadget
  has to assign the same color
  to~$\Node[3]^{\Node[1]}$ and~$\Node[3]^{\Node[2]}$.
  Using this,
  we connect $\mathit{false}^{\Node[1]}$~to~$\mathit{false}^{\Node[2]}$,
  $\mathit{ground}^{\:\!\Node[1]}$~to~$\mathit{ground}^{\:\!\Node[2]}$, and
  $\BoolVar^{\:\!\Node[1]}$~to~$\BoolVar^{\:\!\Node[2]}$
  for any \kl{Boolean variable}~$\BoolVar$
  that occurs in the \kl(Boolean){formulas}
  of both~$\Node[1]$ and~$\Node[2]$
  (see Figure~\ref{fig:3satgraph-to-3colorable}).
  Thereby we ensure that
  each color has the same meaning in both \kl(Boolean){formula} gadgets
  and that \kl(Boolean){variables} shared by~$\Node[1]$ and~$\Node[2]$
  are assigned the same truth value.
  Besides,
  this also ensures that the \kl{graph}~$\Graph'$ is \kl{connected}
  (as~required by our definition of \kl{graphs}),
  even if some \kl{adjacent} \kl{nodes} of~$\Graph$
  do not share any \kl(Boolean){variables}.

  Clearly,
  $\Graph'$~can be computed from~$\Graph$
  by a \kl{distributed Turing machine}
  in two \kl{rounds} and \kl{polynomial step time}.
  (In the first \kl{round},
  the \kl{nodes} send their \kl{label} and \kl{identifier}
  to their \kl{neighbors};
  in the second \kl{round},
  they perform only local computations.)
  The \kl{cluster map} from~$\Graph'$ to~$\Graph$
  can be chosen such that
  each \kl{cluster} contains half of the \kl{nodes}
  of each connector gadget attached to it,
  as shown in Figure~\ref{fig:3satgraph-to-3colorable}.
\end{proof}

%%% Local Variables:
%%% mode: latex
%%% TeX-master: "../lph-paper"
%%% End:

\section{Infiniteness of the locally polynomial hierarchy}
\label{sec:hierarchy}

In this section,
we show that the \kl{locally polynomial hierarchy} does not collapse.
Intuitively,
this means that the more alternations we allow
between \kl(certificate){Eve} and \kl(certificate){Adam},
the more \kl{graph properties} we can express.
More precisely,
we prove that all inclusions
represented by solid lines in Figure~\ref{fig:hierarchy-full}
are strict,
and that classes represented on the same level are pairwise distinct.
For the remaining inclusions,
represented by dashed lines,
we expect the proof of strictness to be difficult,
since they are equalities if and only if $\PTIME = \coNP$.

\begin{figure}[tb]
  \centering
  \begin{tikzpicture}[
    semithick,on grid,node distance=9ex,
    every node/.style={draw,rounded rectangle,
                       minimum height=4ex,minimum width=10ex},
    extra/.style={draw=none,rectangle,minimum size=0,inner sep=0},
    strict/.style={},
    non strict/.style={thin,dashed},
    heavy/.style={ultra thick},
    pseudo/.style={thin,dotted},
  ]
  \def\nodeDist{9ex}
  % Sigma and Pi nodes
  \node[heavy] (s0) {$\LP$};
  \node[heavy] (s1) [above left=\nodeDist and 2/3*\nodeDist of s0] {$\SigmaLP{1}$};
  \node (p1) [above right=\nodeDist and 2/3*\nodeDist of s0] {$\PiLP{1}$};
  \node (s2) [above of=s1]        {$\SigmaLP{2}$};
  \node[heavy] (p2) [above of=p1] {$\PiLP{2}$};
  \node[heavy] (s3) [above of=s2] {$\SigmaLP{3}$};
  \node (p3) [above of=p2]        {$\PiLP{3}$};
  \node (s4) [above of=s3]        {$\SigmaLP{4}$};
  \node[heavy] (p4) [above of=p3] {$\PiLP{4}$};
  \node[heavy] (s5) [above of=s4] {$\SigmaLP{5}$};
  \node (p5) [above of=p4]        {$\PiLP{5}$};
  \node (s6) [above of=s5]        {$\SigmaLP{6}$};
  \node[heavy] (p6) [above of=p5] {$\PiLP{6}$};
  \node[extra] at ([yshift=2/3*\nodeDist]$(s6)!0.5!(p6)$) {$\vdots$};
  % co-Sigma and co-Pi nodes
  \node[heavy] (cs0) [right=4*\nodeDist of s0] {$\coLP$};
  \node[heavy] (cs1) [above left=\nodeDist and 2/3*\nodeDist of cs0] {$\coSigmaLP{1}$};
  \node (cp1) [above right=\nodeDist and 2/3*\nodeDist of cs0] {$\coPiLP{1}$};
  \node (cs2) [above of=cs1]        {$\coSigmaLP{2}$};
  \node[heavy] (cp2) [above of=cp1] {$\coPiLP{2}$};
  \node[heavy] (cs3) [above of=cs2] {$\coSigmaLP{3}$};
  \node (cp3) [above of=cp2]        {$\coPiLP{3}$};
  \node (cs4) [above of=cs3]        {$\coSigmaLP{4}$};
  \node[heavy] (cp4) [above of=cp3] {$\coPiLP{4}$};
  \node[heavy] (cs5) [above of=cs4] {$\coSigmaLP{5}$};
  \node (cp5) [above of=cp4]        {$\coPiLP{5}$};
  \node (cs6) [above of=cs5]        {$\coSigmaLP{6}$};
  \node[heavy] (cp6) [above of=cp5] {$\coPiLP{6}$};
  \node[extra] at ([yshift=2/3*\nodeDist]$(cs6)!0.5!(cp6)$) {$\vdots$};
  % Pseudo Sigma nodes
  \node (s0x) [draw=none,right=4*\nodeDist of cs0] {};
  \node (s1x) [pseudo,above left=\nodeDist and 2/3*\nodeDist of s0x] {$\SigmaLP{1}$};
  \node (s2x) [pseudo,above of=s1x] {$\SigmaLP{2}$};
  \node (s3x) [pseudo,above of=s2x] {$\SigmaLP{3}$};
  \node (s4x) [pseudo,above of=s3x] {$\SigmaLP{4}$};
  \node (s5x) [pseudo,above of=s4x] {$\SigmaLP{5}$};
  % Additional labels
  \node [extra,anchor=east,left=0ex of s0.west] {$\SigmaLP{0} \,{=}\,$};
  \node [extra,anchor=west,right=0ex of s0.east] {$\,{=}\, \PiLP{0}$};
  \node [extra,anchor=east,left=0ex of s1.west] {$\NLP \,{=}\,$};
  \node [extra,anchor=east,left=0ex of cs0.west] {$\coSigmaLP{0} \,{=}\,$};
  \node [extra,anchor=west,right=0ex of cs0.east] {$\,{=}\, \coPiLP{0}$};
  \node [extra,anchor=east,left=0ex of cs1.west] {$\coNLP \,{=}\,$};
  % Sigma and Pi edges
  \path[non strict]
    (s5) edge (s6)
    (p4) edge (p5)
    (s3) edge (s4)
    (p2) edge (p3)
    (s1) edge (s2)
    (s0) edge (p1);
  \path[strict]
    (s5) edge (p6)
    (p5) edge (s6)
    (p5) edge (p6)
    (s4) edge (s5)
    (s4) edge (p5)
    (p4) edge (s5)
    (s3) edge (p4)
    (p3) edge (s4)
    (p3) edge (p4)
    (s2) edge (s3)
    (s2) edge (p3)
    (p2) edge (s3)
    (s1) edge (p2)
    (p1) edge (s2)
    (p1) edge (p2)
    (s0) edge (s1);
  % co-Sigma and co-Pi edges
  \path[non strict]
    (cs5) edge (cs6)
    (cp4) edge (cp5)
    (cs3) edge (cs4)
    (cp2) edge (cp3)
    (cs1) edge (cs2)
    (cs0) edge (cp1);
  \path[strict]
    (cs5) edge (cp6)
    (cp5) edge (cs6)
    (cp5) edge (cp6)
    (cs4) edge (cs5)
    (cs4) edge (cp5)
    (cp4) edge (cs5)
    (cs3) edge (cp4)
    (cp3) edge (cs4)
    (cp3) edge (cp4)
    (cs2) edge (cs3)
    (cs2) edge (cp3)
    (cp2) edge (cs3)
    (cs1) edge (cp2)
    (cp1) edge (cs2)
    (cp1) edge (cp2)
    (cs0) edge (cs1);
  % Edges connecting non-co to co nodes
  \path[strict]
    (p1)  edge (cs3)
    (p3)  edge (cs5)
    (cs2) edge (p4)
    (cs4) edge (p6);
  \path[strict]
    (cp1) edge (s3x)
    (cp3) edge (s5x)
    (s2x) edge (cp4)
    (s4x) edge (cp6);
\end{tikzpicture}

%%% Local Variables:
%%% mode: latex
%%% TeX-master: "../lph-paper"
%%% End:
  \caption{
    \emph{(extended from Figure~\ref{fig:hierarchy-full-overview})}
    The \kl{locally polynomial hierarchy} (left)
    and its \kl{complement hierarchy} (right).
    The dotted classes on the far right
    are the same as those on the far left,
    repeated for the sake of readability.
    Only the lowest seven levels are shown,
    but the pattern extends infinitely.
    Each line (whether solid or dashed) indicates
    an inclusion of the lower class in the higher class.
    (This~holds by definition
    for classes in the same hierarchy,
    and by Proposition~\ref{prp:complementation} and duality
    for classes in separate hierarchies.)
    All inclusions represented by solid lines are proved to be strict,
    and classes located on the same level (regardless of which hierarchy)
    are proved to be pairwise distinct,
    even when restricted to \kl{graphs} of \kl{bounded structural degree}
    (by Proposition~\ref{prp:lp-vs-nlp},
    Theorem~\ref{thm:locally-polynomial-hierarchy},
    Corollaries~\ref{cor:locally-polynomial-hierarchy-missing-pieces},
    \ref{cor:complement-classes}
    and~\ref{cor:complementation-strict-inclusions},
    and duality).
    The inclusions represented by dashed lines are in fact equalities
    on \kl{graphs} of \kl{bounded structural degree}
    (by~Proposition~\ref{prp:bounded-graph-collapse}),
    but this statement is unlikely to generalize to arbitrary \kl{graphs},
    where it holds if and only if $\PTIME = \coNP$
    (by Remark~\ref{rem:unbounded-graph-collapse}).
    This means that
    from a distributed computing perspective,
    the classes depicted with thick borders are the most meaningful.
  }
  \label{fig:hierarchy-full}
\end{figure}

The picture becomes simpler when restricted to
\kl{graphs} of bounded maximum \kl{degree} and \kl{label} length.
On such \kl{graphs},
all of the previous separation results still hold,
but the inclusions represented by dashed lines in Figure~\ref{fig:hierarchy-full}
are actually equalities.
Thus,
in the bounded case,
the \kl{locally polynomial hierarchy}
boils down to the classes depicted with thick borders,
yielding a strict linear order of the following form:
\begin{equation*}
  \PiLP{0}\On{\bGRAPH{\MaxDegree}}    \, \subsetneqq \,
  \SigmaLP{1}\On{\bGRAPH{\MaxDegree}} \, \subsetneqq \,
  \PiLP{2}\On{\bGRAPH{\MaxDegree}}    \, \subsetneqq \,
  \SigmaLP{3}\On{\bGRAPH{\MaxDegree}} \, \subsetneqq \,
  \dots
\end{equation*}
\AP
Here,
$\intro*\bGRAPH{\MaxDegree}$ denotes
the set of \kl{graphs} of $\MaxDegree$-\kl{bounded structural degree},
for some $\MaxDegree \in \Naturals$.
Formally,
the \intro{structural degree} of a \kl{node}~$\Node[1]$ in a \kl{graph}~$\Graph$
is the number of \kl{elements}
that are connected to~$\Node[1]$
in the \kl{graph}'s \kl{structural representation}~$\StructRepr{\Graph}$, i.e.,
$\Card{
   \SetBuilder{
     \Element \inS \StructRepr{\Graph}
   }{
     \Element \NeighborRel{\StructRepr{\Graph}} \!\! \Node[1]
   }
 }$.
In other words,
$\Node[1]$'s \kl{structural degree} is the sum of
its \kl{degree} and its \kl{label} length.
We say that $\Graph$~is
of $\MaxDegree$-\intro{bounded structural degree}
if the \kl{structural degree} of every \kl{node} $\Node[1] \in \Graph$
is at most~$\MaxDegree$.

The remainder of this section is organized as follows.
In Section~\ref{ssec:warm-up},
we separate~$\LP$ from~$\NLP$ and~$\coLP$
by identifying simple \kl{graph properties}
that lie in one class but not the other.
Then,
in Section~\ref{ssec:climb-up},
we separate all higher levels of the \kl{locally polynomial hierarchy}
that end with an existential quantifier.
Our proof uses
an analogous result about \kl{monadic second-order logic} on~\kl{pictures},
as well as an automaton model
characterizing the \kl(monadic){existential fragment} of that logic.
Finally,
the picture is completed in Section~\ref{ssec:complete-picture},
where we establish
the remaining separations and inclusions shown in Figure~\ref{fig:hierarchy-full}
and identify \kl{graph properties}
that lie outside the \kl[locally polynomial hierarchy]{hierarchy}.

%–––––––––––––––––––––––––––––––––––––––––––––––––––––––––––––––––––––––––––––––
\subsection{Warming up at ground level}
\label{ssec:warm-up}

The connection to logic established in Section~\ref{sec:fagin}
will be useful for separating the higher levels of
the \kl{locally polynomial hierarchy} and its \kl{complement hierarchy}.
But since this connection does not hold for the lowest level,
we must use a different approach to separate~$\LP$ from~$\NLP$ and~$\coLP$.
In this subsection,
we do so using elementary arguments
based on symmetry breaking and the pigeonhole principle.
This also gives us the opportunity to gain a better intuition
for the lower levels of the \kl{locally polynomial hierarchy}
by revisiting the \kl{graph properties}
presented in Section~\ref{ssec:example-formulas}
and taking advantage of the \kl{hardness} and \kl{completeness} results
established in Section~\ref{sec:reductions}.

\begin{proposition}
  \label{prp:lp-vs-nlp}
  Some \kl{graph properties} can be \kl{verified} but not \kl{decided}
  by a \kl{locally polynomial machine},
  even when restricted to \kl{graphs} of \kl{bounded structural degree}.
  More precisely,
  $\LP\On{\bGRAPH{\MaxDegree}} \subsetneqq \NLP\On{\bGRAPH{\MaxDegree}}$
  for all $\MaxDegree \geq 2$,
  and a fortiori
  $\LP \subsetneqq \NLP$.
\end{proposition}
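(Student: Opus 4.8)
The plan is to exhibit a concrete \kl{graph property} that is easily \kl{verified} by a \kl{local-polynomial machine} but cannot be \kl{decided} by one, and to make the witness robust enough to survive restriction to \kl{graphs} of \kl{bounded structural degree}. The natural candidate is $\NOTALLSELECTED$, or a bounded-structural-degree variant thereof: a \kl{labeled graph} is accepted if at least one \kl{node} carries a \kl{label} different from~$1$. Membership in $\NLP$ is immediate: \kl(certificate){Eve} can supply a spanning-tree/pointer \kl{certificate} (as in Example~\ref{ex:not-all-selected}) directing every \kl{node} toward an unselected one, and the \kl{verifier} checks locally that the pointers are consistent, acyclic in the appropriate sense, and terminate at an unselected \kl{node}. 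Since the \kl{certificates} only need to name a parent among the \kl{neighbors}, they are of constant size, so the construction stays inside $\NLP\On{\bGRAPH{\MaxDegree}}$ for $\MaxDegree \geq 2$.

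The core of the argument is the lower bound: $\NOTALLSELECTED \notin \LP$, even on \kl{graphs} of \kl{bounded structural degree}. First I would fix an arbitrary \kl{local-polynomial machine}~$\Machine$ claiming to \kl{decide} the property, running in \kl{constant round time}~$\Radius$, and note that its collective verdict must be correct under every $\IdentRadius$-\kl{locally unique} \kl{identifier assignment}. The obstruction is symmetry: on a sufficiently long cycle $C_n$ in which every \kl{node} is \kl{labeled}~$1$, the correct answer is ``\kl{reject}'' (all \kl{nodes} selected, so the \kl{graph} is \emph{not} in $\NOTALLSELECTED$), hence at least one \kl{node} must \kl{reject}. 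But by a counting argument --- over a long enough cycle, two \kl{nodes} at \kl{distance} greater than $2\Radius$ must have the same $\Radius$-\kl{neighborhood} up to isomorphism, including the local pattern of \kl{identifiers}, because in a \kl{locally unique} \kl{identifier assignment} the \kl{identifiers} need only be distinct within radius~$\IdentRadius$ and can otherwise be reused --- one can arrange an assignment in which \emph{all} \kl{nodes} see the same $\Radius$-\kl{neighborhood}. A \kl{local-polynomial machine} in \kl{round time}~$\Radius$ produces a verdict that depends only on a \kl{node}'s $\Radius$-\kl{neighborhood} (including \kl{labels} and \kl{identifiers} therein); hence all \kl{nodes} return the same verdict. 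If they all \kl{reject}, the machine is wrong on a genuinely ``all-selected'' line/cycle that it must distinguish from an ``all-selected'' proper input --- more precisely, one compares with a path on which exactly one endpoint carries \kl{label}~$0$: the machine cannot simultaneously \kl{reject} the homogeneous cycle and \kl{accept} the almost-homogeneous path if the relevant local views coincide. Making this precise requires the standard indistinguishability lemma for \kl{constant-round} algorithms under \kl{locally unique} \kl{identifiers}, combined with a pigeonhole step on \kl{identifier} patterns along the cycle.

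Concretely, the key steps in order: \textbf{(1)} place $\NOTALLSELECTED$ (suitably padded to fit $\bGRAPH{\MaxDegree}$ for $\MaxDegree \geq 2$, e.g.\ using $2$-\kl{regular} cycles with one-bit \kl{labels}) in $\NLP\On{\bGRAPH{\MaxDegree}}$ via a constant-size pointer \kl{certificate}; \textbf{(2)} assume toward a contradiction an \kl{LP-decider}~$\Machine$ with \kl{round time}~$\Radius$ and \kl{locality radius}~$\IdentRadius$; \textbf{(3)} choose $n$ large (a function of $\Radius$ and $\IdentRadius$) and build two \kl{graphs}, a cycle $C_n$ with all \kl{labels}~$1$ (a no-instance) and a ``broken'' variant that is a yes-instance, together with \kl{locally unique} \kl{identifier assignments} under which a designated \kl{node} has the same $\Radius$-\kl{neighborhood} in both; \textbf{(4)} invoke the fact that a \kl{local-polynomial machine}'s verdict at a \kl{node} is a function of that \kl{node}'s $\Radius$-\kl{neighborhood} to derive that the designated \kl{node} returns the same verdict on both \kl{graphs}, contradicting correctness. \textbf{(5)} Conclude $\LP\On{\bGRAPH{\MaxDegree}} \subsetneqq \NLP\On{\bGRAPH{\MaxDegree}}$ and, by taking $\MaxDegree$ large or just dropping the restriction, $\LP \subsetneqq \NLP$. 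The main obstacle I anticipate is step~(3): engineering a \kl{locally unique} \kl{identifier assignment} that genuinely makes the two \kl{graphs} locally indistinguishable while respecting the $\IdentRadius$-uniqueness constraint --- one must re-use \kl{identifiers} periodically along the cycle with period large enough to avoid collisions within radius~$\IdentRadius$ but structured enough that the broken and unbroken versions agree on the view of some fixed \kl{node}. Everything else (the $\NLP$ upper bound, the ``verdict depends only on the \kl{neighborhood}'' lemma) is routine.
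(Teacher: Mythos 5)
Your proposed witness property is wrong: $\NOTALLSELECTED$ does \emph{not} lie in $\NLP$, so the upper-bound half of your argument fails at the first step. You cite Example~\ref{ex:not-all-selected} as evidence, but that example places $\NOTALLSELECTED$ in $\SigmaLFOL{3}$ (three quantifier blocks: $\exists P\,\forall X\,\exists Y$), not in $\SigmaLFOL{1}$. The ``pointer/spanning-forest'' certificate you describe cannot be validated by a purely existential verifier: with $\Tuple{\CertifRadius,\CertifPolynomial}$-bounded certificates, each node's certificate is polynomially bounded by the size of a constant-radius neighborhood, which on bounded-structural-degree graphs is just a constant. Constant-size pointers cannot encode anything like a distance-to-root counter, and therefore a cycle of parent pointers on a homogeneous cycle is locally indistinguishable from a legitimate forest. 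Detecting such a cycle is exactly what the extra $\forall X\,\exists Y$ alternations in Example~\ref{ex:not-all-selected} are for. Indeed the paper proves in Proposition~\ref{prp:colp-vs-nlp} that $\NOTALLSELECTED \cap \bGRAPH{3} \notin \NLP\On{\bGRAPH{3}}$, by precisely the pigeonhole-and-cut-and-paste argument you sketch in your lower-bound step; that argument shows not only $\NOTALLSELECTED \notin \LP$ but the stronger $\NOTALLSELECTED \notin \NLP$, which kills your candidate.

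The paper instead uses $\COLORABLE{2}$, which is genuinely in $\NLP$: each node's certificate is a single color bit, verified in one round by comparing with neighbors. For the lower bound, one takes an odd cycle~$\Graph$ (not $2$-colorable) and glues two copies to form an even cycle~$\Graph'$ (a $2$-colorable graph), then reuses the same locally unique identifier assignment on both halves of~$\Graph'$. Every node of~$\Graph'$ sees exactly the same tapes in every round as the corresponding node of~$\Graph$, so a hypothetical $\LP$-decider accepts~$\Graph$ iff it accepts~$\Graph'$, a contradiction. This ``replicate the identifier pattern'' construction is the same mechanism your step~(3) worries about; it works cleanly on unlabeled cycles of $2$-bounded structural degree, which also handles the $\MaxDegree\geq 2$ claim directly (your ``one-bit labels on $2$-regular cycles'' padding would push structural degree to $3$). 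The lower-bound strategy you outline is essentially sound, but you must pair it with a property whose $\NLP$ membership survives constant-size certificates; $2$-colorability (or a similar property with one-bit per-node certificates that are locally checkable without acyclicity constraints) is the right kind of witness.
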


\begin{proof}
  The \kl{graph property} $\COLORABLE{2}$ clearly lies in~$\NLP$
  (simply adapt the \kl{formula} from Example~\ref{ex:3-colorable}
  and apply Theorem~\ref{thm:local-fagin}).
  In the following,
  we show that it does not lie in~$\LP$,
  even when restricted to \kl{graphs} of $2$-\kl{bounded structural degree}.

  Suppose,
  for the sake of contradiction,
  that
  $\COLORABLE{2} \cap \bGRAPH{2} \in \LP\On{\bGRAPH{2}}$.
  By Lemma~\ref{lem:restrictive-arbiters},
  this means that there is
  a \kl{restrictive} $\LP$\nobreakdash-\kl{decider}~$\Machine$
  for~$\COLORABLE{2}$ on~$\bGRAPH{2}$.
  Assume that $\Machine$~operates
  under $\IdentRadius$-\kl{locally unique} \kl{identifiers},
  and consider an unlabeled \kl{cycle graph}~$\Graph$
  with \kl{node} set
  $\Set{\Node[1]_1, \dots, \Node[1]_n}$
  such that
  $n$~is odd and greater than~$2\IdentRadius$.
  Since $\Graph$~is of odd length,
  it is not $2$-\kl{colorable}.
  We now construct a new \kl{cycle graph}~$\Graph'$
  with \kl{node} set
  $\Set{\Node[1]_1, \dots, \Node[1]_n,\Node[1]_1', \dots, \Node[1]_n'}$
  by “gluing together” two copies of~$\Graph$ as follows:
  \begin{center}
    \begin{tikzpicture}[
    semithick,
    vertex/.style={draw,circle,minimum size=2ex},
    label/.style={},
  ]
  \def\nodeDist{4ex}
  \def\graphDist{30ex}
  \begin{scope}
    \node[vertex] (u1) at (36:\nodeDist) {};
    \node[vertex] (u2) at (108:\nodeDist) {};
    \node[vertex] (u3) at (180:\nodeDist) {};
    \node[vertex] (u4) at (252:\nodeDist) {};
    \node[vertex] (u5) at (324:\nodeDist) {};
    \node[label] at (144:2.2*\nodeDist) {$\Graph$:};
    \node[label,above=0 of u1] {$\Node[1]_1$};
    \node[label,above=0 of u2] {$\Node[1]_2$};
    \node[label,below=0 of u5] {$\Node[1]_n$};
    \path (u1) edge (u2)
          (u2) edge (u3)
          (u3) edge[draw=none]
               node[sloped] {$\scriptstyle\boldsymbol{{\dots}}$}
               (u4)
          (u4) edge (u5)
          (u5) edge (u1);
  \end{scope}
  \begin{scope}[xshift=\graphDist]
    \node[vertex] (u1Left) at (36:\nodeDist) {};
    \node[vertex] (u2Left) at (108:\nodeDist) {};
    \node[vertex] (u3Left) at (180:\nodeDist) {};
    \node[vertex] (u4Left) at (252:\nodeDist) {};
    \node[vertex] (u5Left) at (324:\nodeDist) {};
    \node[label] at (144:2.2*\nodeDist) {$\Graph'$:};
    \node[label,above=0 of u1Left] {$\Node[1]_1$};
    \node[label,above=0 of u2Left] {$\Node[1]_2$};
    \node[label,below=0 of u5Left] {$\Node[1]_n$};
    \path (u1Left) edge (u2Left)
          (u2Left) edge (u3Left)
          (u3Left) edge[draw=none]
                   node[sloped] {$\scriptstyle\boldsymbol{{\dots}}$}
                   (u4Left)
          (u4Left) edge (u5Left);
  \end{scope}
  \begin{scope}[xshift=\graphDist+5*\nodeDist]
    \node[vertex] (u3Right) at (360:\nodeDist) {};
    \node[vertex] (u4Right) at (72:\nodeDist) {};
    \node[vertex] (u5Right) at (144:\nodeDist) {};
    \node[vertex] (u1Right) at (216:\nodeDist) {};
    \node[vertex] (u2Right) at (288:\nodeDist) {};
    \node[label,below=0 of u1Right] {$\Node[1]_1'$};
    \node[label,below=0 of u2Right] {$\Node[1]_2'$};
    \node[label,above=0 of u5Right] {$\Node[1]_n'$};
    \path (u1Right) edge (u2Right)
          (u2Right) edge (u3Right)
          (u3Right) edge[draw=none]
                    node[sloped] {$\scriptstyle\boldsymbol{{\dots}}$}
                    (u4Right)
          (u4Right) edge (u5Right)
          (u5Right) edge (u1Left)
          (u5Left)  edge (u1Right);
  \end{scope}
\end{tikzpicture}

%%% Local Variables:
%%% mode: latex
%%% TeX-master: "../lph-paper"
%%% End:
  \end{center}
  Since $\Graph'$~is of even length,
  it is $2$-\kl{colorable}.
  Given any $\IdentRadius$-\kl{locally unique}
  \kl{identifier assignment}~$\IdMap$ of~$\Graph$,
  let $\IdMap' \colon \NodeSet{\Graph'} \to \Set{0, 1}^\KleeneStar$
  be such that
  $\IdMap'(\Node[1]_i) =
   \IdMap'(\Node[1]_i') =
   \IdMap(\Node[1]_i)$
  for $i \in \Range[1]{n}$.
  As $n \geq 2\IdentRadius$,
  the function~$\IdMap'$ is
  an $\IdentRadius$-\kl{locally unique} \kl{identifier assignment} of~$\Graph'$.
  Moreover,
  the \kl{verdict} of $\Node[1]_i$ and~$\Node[1]_i'$
  in $\Result{\Machine}{\Graph', \IdMap'}$
  is the same as
  the \kl{verdict} of~$\Node[1]_i$
  in $\Result{\Machine}{\Graph, \IdMap}$,
  because the \kl{tape contents} of these \kl{nodes}
  are the same in every \kl{communication round} and \kl{computation step}.
  Hence,
  $\Machine$~\kl{accepts}~$\Graph$
  if and only if
  it \kl{accepts}~$\Graph'$.
  This contradicts our assumption that $\Machine$~is
  a \kl{restrictive} $\LP$\nobreakdash-\kl{decider}
  for~$\COLORABLE{2}$ on~$\bGRAPH{2}$,
  since both $\Graph$ and~$\Graph'$
  are of $2$-\kl{bounded structural degree},
  but only $\Graph'$ is $2$-\kl{colorable}.
\end{proof}

\begin{corollary}
  \label{cor:three-colorable-not-in-lp}
  $\COLORABLE{3}$ does not lie in $\LP$.
\end{corollary}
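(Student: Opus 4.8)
The plan is to derive this immediately from the two facts already available: $\COLORABLE{3}$ is $\NLP$-\kl{complete} (Theorem~\ref{thm:three-colorable}), and $\LP \subsetneqq \NLP$ (Proposition~\ref{prp:lp-vs-nlp}). The reasoning is the standard completeness argument: if $\COLORABLE{3}$ were in $\LP$, then every \kl{graph property} in $\NLP$ would reduce to a problem in $\LP$, forcing $\NLP \subseteq \LP$ and contradicting the strict inclusion.

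In more detail, I would first suppose for contradiction that $\COLORABLE{3} \in \LP$. Take an arbitrary \kl{graph property} $\Property \in \NLP$. By the $\NLP$-\kl{hardness} of $\COLORABLE{3}$, there is a \kl{local-polynomial reduction} from $\Property$ to $\COLORABLE{3}$, implemented by some \kl{local-polynomial machine} $\Machine_{\Tag{red}}$ that maps an input \kl{graph} $\Graph$ to a \kl{graph} $\Graph'$ with $\Graph \in \Property \iff \Graph' \in \COLORABLE{3}$. Since $\COLORABLE{3} \in \LP$, there is an $\LP$-\kl{decider} $\Machine'$ for it. As sketched in the discussion of \kl{reductions} preceding Remark~\ref{rem:from-nlp-complete-to-np-complete}, one composes these: a \kl{decider} for $\Property$ first runs $\Machine_{\Tag{red}}$ so that each \kl{node} $\Node[1]$ of $\Graph$ holds (an \kl{encoding} of) its \kl{cluster} in $\Graph'$, then simulates $\Machine'$ on $\Graph'$ by running it within each \kl{cluster} and exchanging messages between \kl{neighbors} to simulate inter-\kl{cluster} communication, and finally has $\Node[1]$ \kl{accept} precisely if all \kl{nodes} of its \kl{cluster} \kl{accepted}. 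Because both $\Machine_{\Tag{red}}$ and $\Machine'$ run in \kl{constant round time} and \kl{polynomial step time}, and each \kl{cluster} has size bounded in terms of a constant-radius \kl{neighborhood} of $\Node[1]$, the composed \kl{machine} is again a \kl{local-polynomial machine}. Hence $\Property \in \LP$.

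Since $\Property \in \NLP$ was arbitrary, this gives $\NLP \subseteq \LP$, hence $\LP = \NLP$, contradicting Proposition~\ref{prp:lp-vs-nlp}. Therefore $\COLORABLE{3} \notin \LP$.

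There is essentially no obstacle here beyond checking that the composition of a \kl{local-polynomial reduction} with an $\LP$-\kl{decider} is again a \kl{local-polynomial machine} — i.e., that the \kl{cluster}-simulation argument respects both the constant-\kl[round time]{round-time} bound and the \kl{polynomial step time} bound relative to local information. That verification is exactly the content of the informal description of \kl{reductions}, \kl{hardness}, and \kl{completeness} given in Section~\ref{sec:reductions}, so the proof of the corollary itself is a one-line consequence and need not repeat those details. (Alternatively, one could observe directly that the specific \kl(graph){property} $\COLORABLE{2}$ used to prove Proposition~\ref{prp:lp-vs-nlp} reduces to $\COLORABLE{3}$, but invoking $\NLP$-\kl{hardness} is cleaner and the generic argument is preferable.)
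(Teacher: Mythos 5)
Your proposal is correct and matches the paper's proof essentially verbatim: both invoke the $\NLP$-\kl{hardness} of $\COLORABLE{3}$ (Theorem~\ref{thm:three-colorable}) together with $\LP \subsetneqq \NLP$ (Proposition~\ref{prp:lp-vs-nlp}), and derive the contradiction by composing the hypothetical $\LP$-\kl{decider} with a \kl{local-polynomial reduction} via \kl{cluster} simulation. No meaningful difference in approach.
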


\begin{proof}
  By Theorem~\ref{thm:three-colorable},
  we know that
  $\COLORABLE{3}$ is $\NLP$-\kl{hard},
  and by Proposition~\ref{prp:lp-vs-nlp},
  we know that $\NLP \nsubseteq \LP$.
  This implies that $\COLORABLE{3}$ cannot lie in~$\LP$,
  since otherwise we could show that $\NLP \subseteq \LP$.
  To do so,
  it would suffice to transform a hypothetical
  $\LP$\nobreakdash-\kl{decider}~$\Machine$ for~$\COLORABLE{3}$
  into an $\LP$\nobreakdash-\kl{decider}~$\Machine'$
  for an arbitrary \kl(graph){property} ${\Property \in \NLP}$.
  The \kl{nodes} running~$\Machine'$ would first apply
  a \kl{locally polynomial reduction} from~$\Property$ to~$\COLORABLE{3}$.
  Then,
  each \kl{node} would simulate~$\Machine$ on its \kl{cluster}
  and \kl{accept} precisely if all \kl{nodes} of the \kl{cluster} do so.
  (See, e.g., Figure~\ref{fig:3satgraph-to-3colorable}
  on page~\pageref{fig:3satgraph-to-3colorable}
  for an illustration of two \kl{clusters}.)
\end{proof}

\begin{proposition}
  \label{prp:colp-vs-nlp}
  The classes $\coLP$ and~$\NLP$ are incomparable,
  even when restricted to \kl{graphs} of \kl{bounded structural degree}.
  More precisely,
  $\coLP\On{\bGRAPH{\MaxDegree}} \Incomparable \NLP\On{\bGRAPH{\MaxDegree}}$
  for all $\MaxDegree \geq 3$,
  and a fortiori
  $\coLP \Incomparable \NLP$.
\end{proposition}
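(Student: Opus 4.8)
The statement asserts two separations: $\NLP\On{\bGRAPH{\MaxDegree}} \nsubseteq \coLP\On{\bGRAPH{\MaxDegree}}$ and $\coLP\On{\bGRAPH{\MaxDegree}} \nsubseteq \NLP\On{\bGRAPH{\MaxDegree}}$, for $\MaxDegree \geq 3$. For the first direction, the natural witness is $\COLORABLE{3}$ (or the closely related $\tSATGRAPH$, restricted to a bounded family). By Theorem~\ref{thm:three-colorable} it lies in $\NLP$, and I would argue it cannot lie in $\coLP$ by essentially the same symmetry/gluing trick used in Proposition~\ref{prp:lp-vs-nlp}: a \kl{restrictive} $\coLP$-\kl{decider} for $\COLORABLE{3}$ would have to \emph{reject} every non-$3$-colorable \kl{graph} via some \kl{node} that locally sees a defect, but by gluing copies of an odd cycle (or a more elaborate bounded-degree non-$3$-colorable gadget) one obtains a $3$-colorable \kl{graph} in which every \kl{node}'s \kl{tape contents} are identical to those in the non-$3$-colorable instance, forcing the same \kl{verdict} — a contradiction. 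The delicate point is producing a bounded-structural-degree \emph{non-$3$-colorable} graph that nonetheless admits this kind of local indistinguishability from a $3$-colorable one; odd cycles are $2$-regular and only give the $2$-colorability obstruction, so for $\COLORABLE{3}$ I would instead take a fixed small non-$3$-colorable graph of degree $\leq \MaxDegree$ (e.g. $K_4$ if $\MaxDegree \geq 3$) chained in a cycle-like fashion so that an odd/even parity flip toggles $3$-colorability while preserving all constant-radius \kl{neighborhoods}.

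\textbf{Second direction.} For $\coLP\On{\bGRAPH{\MaxDegree}} \nsubseteq \NLP\On{\bGRAPH{\MaxDegree}}$, the witness should be $\NOTALLSELECTED$ (or $\NONCOLORABLE{3}$). By Remark~\ref{rem:allselected-lp-complete} and duality, $\NOTALLSELECTED \in \coLP$, and it clearly stays in $\coLP$ when restricted to \kl{graphs} of $\MaxDegree$-\kl{bounded structural degree} for any $\MaxDegree \geq 1$. The hard part — and I expect this to be the main obstacle — is showing $\NOTALLSELECTED \notin \NLP$, even on such bounded families. The intuition is a pigeonhole argument: an $\NLP$-\kl{verifier} for $\NOTALLSELECTED$ would need, on a yes-instance (at least one unselected \kl{node}), some \kl{certificate assignment} making all \kl{nodes} \kl{accept}; but on a long path or cycle where all \kl{nodes} carry \kl{label}~$1$ (a no-instance), no such \kl{certificate} may exist, while on a sufficiently long cycle with a single unselected \kl{node}, the \kl{certificates} are $\Tuple{\CertifRadius, \CertifPolynomial}$-\kl{bounded} hence locally indistinguishable across many positions, so one can find two far-apart identical local views. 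The plan is to take a cycle of $n$ selected \kl{nodes} (a no-instance, since all \kl{labels} are~$1$), observe that for \emph{every} \kl{bounded certificate assignment} some \kl{node} \kl{rejects}, and then build from it — by the same gluing as above, or by a windowing/cut-and-paste argument on \kl{certificates} — a longer cycle with one unselected \kl{node} for which \kl(certificate){Eve} still cannot avoid a rejecting \kl{node}. Concretely: fix the \kl{verifier}~$\Machine$ with parameters $\IdentRadius, \CertifRadius$; by a counting argument over the $\CertifPolynomial$-bounded \kl{certificate} values and \kl{identifiers} available in a window of radius $\max\{\IdentRadius,\CertifRadius,\Radius\}$, on a long enough cycle any \kl{certificate assignment} repeats a local configuration, and one splices together a selected segment realizing a \kl{rejecting} \kl{node} into the yes-instance.

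\textbf{Assembly.} Once both non-memberships are established on the bounded families, the "a fortiori" conclusions $\coLP \Incomparable \NLP$ follow immediately since $\bGRAPH{\MaxDegree} \subseteq \GRAPH$ and the classes restrict coherently (cf.\ the definition of $\Class\On{\BaseProperty}$). I would present the two directions as separate paragraphs inside the proof of Proposition~\ref{prp:colp-vs-nlp}, reusing the indistinguishability lemma implicitly proved in Proposition~\ref{prp:lp-vs-nlp} (two \kl{nodes} with identical \kl{tape contents} in every \kl{round} and \kl{step} reach the same \kl{verdict}), and, for the $\coLP \nsubseteq \NLP$ direction, combining it with the \kl{boundedness} of \kl{certificates} to force a \kl{certificate}-level repetition on long cycles. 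The main obstacle, as noted, is the $\NOTALLSELECTED \notin \NLP$ argument: unlike the deterministic case, \kl(certificate){Eve} has freedom to place \kl{certificates}, so the contradiction must exploit that her \kl{certificates} are $\Tuple{\CertifRadius,\CertifPolynomial}$-\kl{bounded} and therefore carry only locally-bounded information, which is insufficient to globally "announce" the single unselected \kl{node} across an arbitrarily long cycle.
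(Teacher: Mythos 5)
Your plan identifies the right key witness and the right obstacle, but it diverges from the paper in two ways that matter.

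\textbf{Only one direction is proved directly.} The paper proves \emph{only} $\NOTALLSELECTED \cap \bGRAPH{3} \notin \NLP\On{\bGRAPH{3}}$, then derives everything else by duality: since $\NOTALLSELECTED$ is $\coLP$-complete, this gives $\coLP\On{\bGRAPH{3}} \nsubseteq \NLP\On{\bGRAPH{3}}$; taking complement classes (an inclusion-preserving involution) gives $\LP\On{\bGRAPH{3}} \nsubseteq \coNLP\On{\bGRAPH{3}}$; and since $\LP \subseteq \NLP$ while $\NLP \subseteq \coLP$ would force $\LP \subseteq \coNLP$, the first separation also yields $\NLP\On{\bGRAPH{3}} \nsubseteq \coLP\On{\bGRAPH{3}}$. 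Your separate direct attack on $\COLORABLE{3} \notin \coLP$ is therefore unnecessary — and it is also the part that is genuinely shaky: the gadget you gesture at (chained $K_4$'s whose parity toggles $3$-colorability while preserving local views) is underspecified and, as described, does not work, since $K_4$ is never $3$-colorable and chaining copies of it does not recover $3$-colorability by any parity flip. The $2$-colorability proof in Proposition~\ref{prp:lp-vs-nlp} leans on a genuine parity obstruction for odd cycles; there is no obvious analogue for $3$-colorability in bounded degree, and the paper does not attempt one.

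\textbf{The pigeonhole argument is run in the wrong direction.} Your high-level description of the $\NOTALLSELECTED \notin \NLP$ proof (long cycle, bounded certificates, repeated local views) is on target, but the concrete plan — ``take a cycle of $n$ selected nodes, observe every certificate assignment makes some node reject, then build a longer cycle with one unselected node for which Eve still cannot avoid a rejecting node'' and ``splice a selected segment realizing a rejecting node into the yes-instance'' — inverts the argument and does not close. Knowing that every certificate fails on a particular no-instance tells you nothing about a different graph; Eve is free to adapt her certificates to the instance. The contradiction must go the other way: start from the \emph{yes}-instance (one unselected node on a long cycle with a cyclic small identifier scheme), fix an \emph{accepting} certificate assignment, apply pigeonhole to find two nodes $\Node[2], \Node[2]'$ with identical $\Radius$-views, and then \emph{shrink} the cycle by taking the path between them that avoids the unselected node and identifying $\Node[2]$ with $\Node[2]'$. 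The restricted identifiers and certificates still cause every node of the smaller cycle to accept, yet all its labels are $1$ — a no-instance that is accepted, which is the contradiction. So: keep the pigeonhole-on-a-cycle idea, flip the direction of the splicing, drop the $\COLORABLE{3}$ gadget, and finish by duality.
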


\begin{proof}
  Since $\NOTALLSELECTED$ lies in~$\coLP$
  (by Remark~\ref{rem:allselected-lp-complete} it is even $\coLP$-\kl{complete}),
  it suffices to show that
  $\NOTALLSELECTED \cap \bGRAPH{3} \notin \NLP\On{\bGRAPH{3}}$.
  Indeed,
  this statement implies that
  $\coLP\On{\bGRAPH{3}} \nsubseteq \NLP\On{\bGRAPH{3}}$,
  and by duality that
  $\LP\On{\bGRAPH{3}} \nsubseteq \coNLP\On{\bGRAPH{3}}$,
  and hence also that
  $\NLP\On{\bGRAPH{3}} \nsubseteq \coLP\On{\bGRAPH{3}}$.

  Suppose then, for the sake of contradiction, that
  $\NOTALLSELECTED \cap \bGRAPH{3} \in \NLP\On{\bGRAPH{3}}$.
  By Lemma~\ref{lem:restrictive-arbiters},
  this means that there exists
  a \kl{restrictive} $\NLP$\nobreakdash-\kl{verifier}~$\Machine$
  for~$\NOTALLSELECTED$ on~$\bGRAPH{3}$.
  Assume that $\Machine$~operates
  under $\IdentRadius$-\kl{locally unique} \kl{identifiers}
  and $\Tuple{\CertifRadius_1, \CertifPolynomial}$\nobreakdash-%
  \kl{bounded certificates}
  and runs in \kl{round time}~$\Radius_2$,
  and let
  $\Radius = \max\Set{2\IdentRadius, \CertifRadius_1, \Radius_2}$.
  In the following,
  we restrict our attention to \kl{cycle graphs}
  whose \kl{nodes} are all \kl{labeled} with a single bit
  and whose length is a multiple of~$(\Radius + 1)$.
  Notice that on such \kl{graphs},
  we can construct an $\IdentRadius$-\kl{locally unique} \kl{identifier assignment}
  by cyclically assigning each \kl{node} an \kl{identifier}
  corresponding to a number in $\Range[0]{\Radius}$,
  \kl{encoded} as a binary string.
  If we do so,
  then the length of an
  $\Tuple{\CertifRadius_1, \CertifPolynomial}$\nobreakdash-\kl{bounded certificate}
  is at most
  $m = \CertifPolynomial
  \bigl(
    (2\Radius + 1) \cdot (\Ceiling{\log_2\Radius} + 3)
  \bigr)$.
  Thus,
  for our choice of \kl{graphs} and \kl{identifier assignments},
  there are no more than
  $n = (\Radius + 1) \cdot (2^{m + 2} - 2)^{2\Radius + 1}$
  possible ways to assign
  \kl{labels}, \kl{identifiers}, and \kl{certificates}
  to the $\Radius_2$-\kl{neighborhood} of any \kl{node}.

  First,
  let $\Graph$ be a \kl{cycle graph}
  whose length is greater than~$n$ and a multiple of~$(\Radius + 1)$
  such that
  exactly one \kl{node}~$\Node[1]$ has \kl{label}~$0$
  (the unselected \kl{node})
  and all others have \kl{label}~$1$.
  Let $\IdMap$ be a cyclic
  $\IdentRadius$\nobreakdash-\kl{locally unique} \kl{identifier assignment}
  of~$\Graph$ as described above.
  Since $\Graph \in \NOTALLSELECTED \cap \bGRAPH{3}$,
  there exists an
  $\Tuple{\CertifRadius_1, \CertifPolynomial}$\nobreakdash-%
  \kl{bounded certificate assignment}~$\CertifMap$ of $\Tuple{\Graph, \IdMap}$
  such that
  $\Result{\Machine}{\Graph, \IdMap, \CertifMap} \equiv \Accept$.
  By the pigeonhole principle,
  there must be two distinct \kl{nodes} $\Node[2]$ and~$\Node[2]'$ in~$\Graph$
  whose $\Radius_2$\nobreakdash-\kl{neighborhoods} are indistinguishable
  because
  the \kl{labels}, \kl{identifiers}, and \kl{certificates} therein
  are all the same.
  Now consider the \kl{cycle graph}~$\Graph'$
  obtained from~$\Graph$ by taking the \kl{path}
  between $\Node[2]$ and~$\Node[2]'$
  that does not contain the $0$-\kl{labeled} \kl{node}~$\Node[1]$,
  and identifying $\Node[2]$ with~$\Node[2]'$.
  Let $\IdMap'$ and~$\CertifMap'$ be
  the restrictions of $\IdMap$ and~$\CertifMap$ to~$\Graph'$.
  Notice that
  $\IdMap'$~is $\IdentRadius$-\kl{locally unique},
  $\CertifMap'$~is
  $\Tuple{\CertifRadius_1, \CertifPolynomial}$\nobreakdash-\kl(certificate){bounded},
  and
  $\Result{\Machine}{\Graph', \IdMap', \CertifMap'} \equiv \Accept$
  because every \kl{node} of~$\Graph'$ reaches the same \kl{verdict}
  in~$\Result{\Machine}{\Graph', \IdMap', \CertifMap'}$
  as in~$\Result{\Machine}{\Graph, \IdMap, \CertifMap}$.
  We conclude that $\Graph' \in \NOTALLSELECTED \cap \bGRAPH{3}$,
  which contradicts the fact that
  all \kl{nodes} of~$\Graph'$ are \kl{labeled} with~$1$.
\end{proof}

\begin{corollary}
  \label{cor:lp-complementation}
  The class of \kl{graph properties}
  \kl{decidable} by a \kl{locally polynomial machine}
  is not closed under \kl{complementation},
  even when restricted to \kl{graphs} of \kl{bounded structural degree}.
  More precisely,
  $\LP\On{\bGRAPH{\MaxDegree}} \neq \coLP\On{\bGRAPH{\MaxDegree}}$
  for all $\MaxDegree \geq 3$,
  and a fortiori
  $\LP \neq \coLP$.
\end{corollary}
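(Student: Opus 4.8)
The plan is to derive this corollary immediately from Proposition~\ref{prp:colp-vs-nlp}, which establishes the incomparability of $\coLP$ and~$\NLP$ (even on \kl{graphs} of \kl{bounded structural degree}). The key observation is that any collapse $\LP = \coLP$ would trivialize the complement hierarchy at its base, forcing $\coLP \subseteq \LP \subseteq \NLP$ and thereby contradicting $\coLP \nsubseteq \NLP$.

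First I would argue contrapositively. Suppose, for the sake of contradiction, that $\LP\On{\bGRAPH{\MaxDegree}} = \coLP\On{\bGRAPH{\MaxDegree}}$ for some $\MaxDegree \geq 3$. Since $\LP \subseteq \NLP$ by definition (the \kl{local-polynomial hierarchy} is increasing), restricting to $\bGRAPH{\MaxDegree}$ gives $\LP\On{\bGRAPH{\MaxDegree}} \subseteq \NLP\On{\bGRAPH{\MaxDegree}}$. Chaining these inclusions yields $\coLP\On{\bGRAPH{\MaxDegree}} \subseteq \NLP\On{\bGRAPH{\MaxDegree}}$, which directly contradicts the incomparability statement $\coLP\On{\bGRAPH{\MaxDegree}} \Incomparable \NLP\On{\bGRAPH{\MaxDegree}}$ from Proposition~\ref{prp:colp-vs-nlp} (in particular the part asserting $\coLP\On{\bGRAPH{\MaxDegree}} \nsubseteq \NLP\On{\bGRAPH{\MaxDegree}}$). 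Hence the assumed equality is impossible. Concretely, one may point to the witness already extracted in the proof of Proposition~\ref{prp:colp-vs-nlp}: the \kl(graph){property} $\NOTALLSELECTED$ lies in $\coLP\On{\bGRAPH{3}}$ but not in $\NLP\On{\bGRAPH{3}}$, and a fortiori not in $\LP\On{\bGRAPH{3}}$ (as $\LP\On{\bGRAPH{3}} \subseteq \NLP\On{\bGRAPH{3}}$); so $\NOTALLSELECTED \cap \bGRAPH{3}$ is a member of $\coLP\On{\bGRAPH{3}}$ that is absent from $\LP\On{\bGRAPH{3}}$, separating the two classes.

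The ``a fortiori'' part $\LP \neq \coLP$ then follows because a collapse on all \kl{graphs} would restrict to a collapse on $\bGRAPH{\MaxDegree}$ for every~$\MaxDegree$; equivalently, $\NOTALLSELECTED$ itself lies in $\coLP$ (indeed it is $\coLP$-\kl{complete} by Remark~\ref{rem:allselected-lp-complete} and duality) but not in $\LP$. There is essentially no obstacle here: the entire argument is a short deduction from the already-established incomparability result, and the only thing to be careful about is bookkeeping the inclusion direction $\LP \subseteq \NLP$ (and its restriction to bounded-degree \kl{graphs}) correctly. The heavy lifting — the pigeonhole and symmetry argument producing a bounded-degree \kl{graph} that is erroneously accepted — was already carried out in Proposition~\ref{prp:colp-vs-nlp}, so this corollary merely repackages that conclusion in terms of \kl{complementation}.
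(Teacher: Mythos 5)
Your proof is correct and matches the paper's approach: the paper's own proof consists solely of the sentence ``This follows immediately from Proposition~\ref{prp:colp-vs-nlp}.'' You have simply spelled out the same one-step deduction (via $\LP \subseteq \NLP$ and the witness $\NOTALLSELECTED$) that the paper leaves implicit.
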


\begin{proof}
  This follows immediately from Proposition~\ref{prp:colp-vs-nlp}.
\end{proof}

\begin{corollary}
  \label{cor:non-three-colorable-not-in-nlp}
  $\NONCOLORABLE{3}$ does not lie in $\NLP$.
\end{corollary}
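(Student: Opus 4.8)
The plan is to derive this in exact analogy with Corollary~\ref{cor:three-colorable-not-in-lp}, one level up and on the complement side, by combining Theorem~\ref{thm:three-colorable} with Proposition~\ref{prp:colp-vs-nlp}.

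First I would record that $\NONCOLORABLE{3}$ is $\coNLP$-\kl{hard}. By Theorem~\ref{thm:three-colorable}, $\COLORABLE{3}$ is $\NLP$-\kl{hard}, so for every \kl{graph property} $\Property \in \NLP$ there is a \kl{local-polynomial reduction} $f$ such that $\Graph \in \Property$ if and only if $\Graph' \in \COLORABLE{3}$, for all $\Graph \in \GRAPH$ and $\Graph' \in f(\Graph)$. The very same $f$ then satisfies $\Graph \in \Complement{\Property}$ if and only if $\Graph' \in \NONCOLORABLE{3}$, so $\Complement{\Property} \ReducesTo{\GRAPH} \NONCOLORABLE{3}$; and as $\Property$ ranges over $\NLP$, $\Complement{\Property}$ ranges over $\coNLP$. (Since in addition $\NONCOLORABLE{3} = \Complement{\COLORABLE{3}} \in \coNLP$, it is in fact $\coNLP$-\kl{complete}, but this stronger statement is not needed.) Next I would note that $\coNLP \nsubseteq \NLP$: Proposition~\ref{prp:colp-vs-nlp} gives $\coLP \nsubseteq \NLP$, and $\LP \subseteq \NLP$ yields $\coLP \subseteq \coNLP$, so $\coNLP \nsubseteq \NLP$. (Concretely, $\NOTALLSELECTED$ is $\coLP$-\kl{complete} by Remark~\ref{rem:allselected-lp-complete}, hence lies in $\coNLP$, yet it lies outside $\NLP$ by Proposition~\ref{prp:colp-vs-nlp}.)

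To conclude, suppose toward a contradiction that $\NONCOLORABLE{3}$ had an $\NLP$-\kl{verifier} $\Machine$. Then for an arbitrary \kl{graph property} $\Property' \in \coNLP$ I would assemble an $\NLP$-\kl{verifier} for $\Property'$: the \kl{nodes} first run a \kl{local-polynomial machine} implementing a \kl{local-polynomial reduction} from $\Property'$ to $\NONCOLORABLE{3}$ (which exists by the \kl{hardness} just established), thereby computing their \kl{clusters} of the reduced \kl{graph} $\Graph'$; then, with \kl(certificate){Eve} packing into each \kl{node}'s own \kl{certificate} the \kl{certificates} of all \kl{nodes} of that \kl{cluster}, the \kl{nodes} simulate $\Machine$ on $\Graph'$ --- locally inside each \kl{cluster}, and across inter-\kl{cluster} \kl{edges} via messages between \kl{neighbors} --- with each \kl{node} finally \kl{accepting} if and only if every \kl{node} of its \kl{cluster} does. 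This would give $\Property' \in \NLP$, hence $\coNLP \subseteq \NLP$, contradicting the previous paragraph; therefore $\NONCOLORABLE{3} \notin \NLP$.

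The only step requiring genuine care --- the closest thing to an obstacle --- is verifying that this composition keeps the \kl{certificates} $\Tuple{\CertifRadius, \CertifPolynomial}$-\kl(certificate){bounded}, i.e.\ that $\NLP$ is closed under \kl{local-polynomial reductions}. This is handled just as in Corollary~\ref{cor:three-colorable-not-in-lp}: by Lemma~\ref{lem:polynomial-space-time} the size of each \kl{node}'s \kl{cluster} is polynomially bounded by the \kl(structure){cardinality} of a constant-radius \kl{neighborhood} of that \kl{node} in the input \kl{graph}, and since inter-\kl{cluster} \kl{edges} only join \kl{clusters} of \kl{adjacent} \kl{nodes}, a constant-radius \kl{neighborhood} of a \kl{node} in $\Graph'$ is contained in the union of \kl{clusters} over a slightly larger constant-radius \kl{neighborhood} of the corresponding \kl{node} in the input \kl{graph}; chaining the two bounds shows that the \kl{certificates} \kl(certificate){Eve} must supply for a single \kl{cluster} are, taken together, still polynomially bounded in a constant-radius \kl{neighborhood} of the corresponding input \kl{node}. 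Everything else --- the constant \kl{round time} and \kl{polynomial step time} of the composed \kl{local-polynomial machine}, and the equivalence of acceptance conditions --- is routine.
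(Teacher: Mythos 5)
Your proposal is correct and follows essentially the same route as the paper's proof: duality applied to Theorem~\ref{thm:three-colorable} gives $\coNLP$-hardness of $\NONCOLORABLE{3}$, Proposition~\ref{prp:colp-vs-nlp} (via $\coLP \subseteq \coNLP$) gives $\coNLP \nsubseteq \NLP$, and the simulation argument with \kl(certificate){Eve} packing a cluster's worth of \kl{certificates} into each \kl{node}'s own \kl{certificate} is precisely the "additional observation" the paper appeals to. Your closing remark on $\Tuple{\CertifRadius,\CertifPolynomial}$-boundedness via Lemma~\ref{lem:polynomial-space-time} correctly fills in the detail the paper leaves implicit.
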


\begin{proof}
  By Theorem~\ref{thm:three-colorable},
  we know that
  $\COLORABLE{3}$ is $\NLP$-\kl{hard},
  which by duality means that
  $\NONCOLORABLE{3}$ is $\coNLP$-\kl{hard}.
  This implies that $\NONCOLORABLE{3}$ cannot lie in~$\NLP$,
  since otherwise we could show that $\coNLP \subseteq \NLP$,
  contradicting Proposition~\ref{prp:colp-vs-nlp}.
  The argument is analogous to the proof of
  Corollary~\ref{cor:three-colorable-not-in-lp},
  with the additional observation that
  a \kl{node} can simulate an $\NLP$\nobreakdash-\kl{verifier}
  on its \kl{cluster}
  by interpreting its own \kl{certificate} as an \kl{encoding} of
  the \kl{certificates} of all \kl{nodes} of the \kl{cluster}.
\end{proof}

\begin{corollary}
  \label{cor:noneulerian-hamiltonian-nonhamiltonian-not-in-nlp}
  None of the following \kl{graph properties} lies in $\NLP$:
  $\NONEULERIAN$, $\HAMILTONIAN$, $\NONHAMILTONIAN$.
\end{corollary}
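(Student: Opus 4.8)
The plan is to derive all three non-membership results from the hardness facts already established in Section~\ref{sec:reductions}, together with the separation $\coLP \Incomparable \NLP$ from Proposition~\ref{prp:colp-vs-nlp}. The unifying principle is the one spelled out in Corollaries~\ref{cor:three-colorable-not-in-lp} and~\ref{cor:non-three-colorable-not-in-nlp}: if a \kl{graph property}~$\Property$ is \kl{hard} for some class~$\Class$, then $\Property$ cannot lie in any class~$\Class'$ with $\Class \nsubseteq \Class'$, because a \kl{local-polynomial reduction} to~$\Property$ combined with a \kl{decider} (or \kl{verifier}) for~$\Property$ in~$\Class'$ would place every \kl{graph property} of~$\Class$ into~$\Class'$. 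So each of the three cases reduces to pairing a hardness result with an appropriate non-inclusion.

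Concretely, I would proceed as follows. For $\NONEULERIAN$: by Proposition~\ref{prp:eulerian-lp-complete}, $\EULERIAN$ is $\LP$-\kl{complete}, so by duality $\NONEULERIAN$ is $\coLP$-\kl{complete}, in particular $\coLP$-\kl{hard}. Since $\coLP \nsubseteq \NLP$ (Proposition~\ref{prp:colp-vs-nlp}), the hardness principle above gives $\NONEULERIAN \notin \NLP$: otherwise, composing the \kl{reduction} from any $\Property \in \coLP$ to $\NONEULERIAN$ with an $\NLP$-\kl{verifier} for $\NONEULERIAN$ would show $\coLP \subseteq \NLP$, a contradiction. (As in the proof of Corollary~\ref{cor:non-three-colorable-not-in-nlp}, the simulation requires a \kl{node} to interpret its own \kl{certificate} as \kl{encoding} the \kl{certificates} of all \kl{nodes} in its \kl{cluster}; this is unproblematic because \kl{clusters} have constant size relative to the local \kl{neighborhood}.) For $\HAMILTONIAN$: by Propositions~\ref{prp:hamiltonian-lp-hard} and~\ref{prp:hamiltonian-colp-hard}, $\HAMILTONIAN$ is both $\LP$-\kl{hard} and $\coLP$-\kl{hard}. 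Suppose $\HAMILTONIAN \in \NLP$; then $\coLP$-\kl{hardness} together with the composition argument yields $\coLP \subseteq \NLP$, again contradicting Proposition~\ref{prp:colp-vs-nlp}. (Alternatively one could use $\LP$-\kl{hardness} and $\LP \nsubseteq \coNLP$, which also follows from Proposition~\ref{prp:colp-vs-nlp} by duality, but the $\coLP$-route is the most direct.) For $\NONHAMILTONIAN$: by duality from the previous item, $\NONHAMILTONIAN$ is both $\coLP$-\kl{hard} and $\LP$-\kl{hard}; if $\NONHAMILTONIAN \in \NLP$ then $\coLP$-\kl{hardness} forces $\coLP \subseteq \NLP$, the same contradiction.

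I do not anticipate a serious obstacle here — the real work was done in establishing the hardness results in Section~\ref{sec:reductions} and the incomparability $\coLP \Incomparable \NLP$ in Proposition~\ref{prp:colp-vs-nlp}. The only point requiring a little care is making sure the composition of a \kl{local-polynomial reduction} with an $\NLP$-\kl{verifier} is itself an $\NLP$-\kl{verifier}: the reduction runs in \kl{constant round time} and \kl{polynomial step time}, each \kl{node} reconstructs its \kl{cluster} (of bounded size relative to a constant-radius \kl{neighborhood}), and then simulates the \kl{verifier} restricted to that \kl{cluster}, with the outer existential \kl{certificate} supplying the inner \kl{certificates} for all \kl{cluster} \kl{nodes}; a \kl{node} \kl{accepts} iff all \kl{nodes} of its \kl{cluster} \kl{accept}. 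Since this is essentially the argument already given in Corollaries~\ref{cor:three-colorable-not-in-lp} and~\ref{cor:non-three-colorable-not-in-nlp}, I would simply refer back to those proofs rather than repeat the details.

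\begin{proof}
  We use the principle, already applied in the proofs of
  Corollaries~\ref{cor:three-colorable-not-in-lp}
  and~\ref{cor:non-three-colorable-not-in-nlp},
  that a \kl{graph property} which is \kl{hard} for a class~$\Class$
  cannot belong to~$\NLP$ unless $\Class \subseteq \NLP$:
  given a \kl{local-polynomial reduction} from a \kl(graph){property}
  $\Property \in \Class$ to the \kl{hard} \kl(graph){property},
  and a hypothetical $\NLP$-\kl{verifier} for the latter,
  each \kl{node} can first perform the \kl{reduction},
  then simulate the \kl{verifier} on its \kl{cluster}
  (interpreting its own \kl{certificate} as an \kl{encoding} of
  the \kl{certificates} of all \kl{nodes} of the \kl{cluster}),
  and \kl{accept} precisely if all \kl{nodes} of the \kl{cluster} do so;
  this yields an $\NLP$-\kl{verifier} for~$\Property$,
  so $\Property \in \NLP$.
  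By Proposition~\ref{prp:colp-vs-nlp},
  $\coLP \nsubseteq \NLP$.

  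For $\NONEULERIAN$:
  by Proposition~\ref{prp:eulerian-lp-complete},
  $\EULERIAN$ is $\LP$-\kl{complete},
  so by duality $\NONEULERIAN$ is $\coLP$-\kl{complete},
  hence $\coLP$-\kl{hard}.
  If $\NONEULERIAN$ were in~$\NLP$,
  the above principle would give $\coLP \subseteq \NLP$,
  a contradiction.

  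For $\HAMILTONIAN$:
  by Propositions~\ref{prp:hamiltonian-lp-hard}
  and~\ref{prp:hamiltonian-colp-hard},
  $\HAMILTONIAN$ is $\coLP$-\kl{hard},
  so if it were in~$\NLP$ we would again obtain
  $\coLP \subseteq \NLP$, a contradiction.

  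For $\NONHAMILTONIAN$:
  by duality from Proposition~\ref{prp:hamiltonian-colp-hard},
  $\NONHAMILTONIAN$ is $\coLP$-\kl{hard}
  (since $\HAMILTONIAN$ is $\LP$-\kl{hard}),
  so the same argument shows $\NONHAMILTONIAN \notin \NLP$.
\end{proof}
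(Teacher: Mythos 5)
Your proof is correct and follows essentially the same route as the paper: derive $\coLP$-hardness of each of the three properties from the hardness results of Section~\ref{sec:reductions} (using duality where needed), then invoke Proposition~\ref{prp:colp-vs-nlp} and the cluster-simulation argument already given in Corollaries~\ref{cor:three-colorable-not-in-lp} and~\ref{cor:non-three-colorable-not-in-nlp}. One small slip in the $\NONHAMILTONIAN$ case: you write ``by duality from Proposition~\ref{prp:hamiltonian-colp-hard}'' but then correctly parenthesize ``since $\HAMILTONIAN$ is $\LP$-hard''---the proposition you actually need to dualize is Proposition~\ref{prp:hamiltonian-lp-hard} ($\LP$-hardness of $\HAMILTONIAN$), which yields $\coLP$-hardness of $\NONHAMILTONIAN$; dualizing Proposition~\ref{prp:hamiltonian-colp-hard} instead would give $\LP$-hardness of $\NONHAMILTONIAN$, which is not the fact your argument uses.
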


\begin{proof}
  By the duals of
  Propositions~\ref{prp:eulerian-lp-complete} and~\ref{prp:hamiltonian-lp-hard},
  and by Proposition~\ref{prp:hamiltonian-colp-hard},
  all these \kl{graph properties} are $\coLP$-\kl{hard}.
  Using the same argument as in the proofs of
  Corollaries~\ref{cor:three-colorable-not-in-lp}
  and~\ref{cor:non-three-colorable-not-in-nlp},
  we conclude from Proposition~\ref{prp:colp-vs-nlp}
  that none of them lies in~$\NLP$.
\end{proof}

%–––––––––––––––––––––––––––––––––––––––––––––––––––––––––––––––––––––––––––––––
\subsection{Climbing up the hierarchy}
\label{ssec:climb-up}

We now turn to the more technical part of our separation proof,
which uses the connection to logic
provided by Theorem~\ref{thm:local-hierarchy-equivalence}
to leverage two results about
\kl{monadic second-order logic} on \kl{pictures}
(stated in Theorems~\ref{thm:mso-hierarchy} and~\ref{thm:equivalence-ts-emso}).
In Section~\ref{sssec:digression-on-pictures},
we use these results
to show that the \kl{local second-order hierarchy} is infinite
when restricted to \kl{pictures}.
Then,
in Section~\ref{sssec:from-pictures-to-graphs},
we transfer this infiniteness result
from \kl{pictures} to \kl{graphs},
where
the \kl{local second-order hierarchy}
coincides with
the \kl{locally polynomial hierarchy}.
As this two-part proof
may seem a little convoluted,
we also briefly explain
in Section~\ref{sssec:direct-infiniteness-proof}
why a more straightforward approach would yield a weaker result.

\AP
\intro{Monadic second-order logic} is the fragment of \kl{second-order logic}
that can only \kl{quantify} over sets instead of arbitrary relations.
That is,
for \kl{second-order quantifications} of the form
$\ExistsRel{\SOVar} \, \Formula[1]$,
the \kl{second-order variable}~$\SOVar$
must necessarily be of \kl{arity}~$1$.
We analogously define
\intro{local monadic second-order logic} as
the corresponding fragment of \kl{local second-order logic}.

\AP
Accordingly,
the \intro[monadic second-order hierarchy]{monadic}
\phantomintro{local monadic second-order hierarchy}%
versions of
the \kl[second-order hierarchy]{second-order}
and \kl{local second-order hierarchies}
are obtained by restricting
\kl{second-order quantification} in each class
to \kl{unary} relations.
To denote the monadic classes,
we add the letter “m” as a prefix to the corresponding non-monadic classes.
This gives us the classes of \kl{formulas}
$\intro*\mSigmaFOL{\Level}$,
$\intro*\mPiFOL{\Level}$,
$\intro*\mSigmaLFOL{\Level}$,
$\intro*\mPiLFOL{\Level}$,
for~$\Level \in \Naturals$,
and the classes of \kl{structure properties}
$\intro*\mSigmaFO{\Level}$,
$\intro*\mPiFO{\Level}$,
$\intro*\mSigmaLFO{\Level}$,
$\intro*\mPiLFO{\Level}$
that can be \kl{defined} by \kl{formulas} of the respective classes.
We will refer to $\mSigmaFOL{1}$ and~$\mSigmaLFOL{1}$
as the the \intro(monadic){existential fragments}
of \kl{monadic second-order logic} and \kl{local monadic second-order logic}.

%...............................................................................
\subsubsection{A digression on pictures}
\label{sssec:digression-on-pictures}

\AP
In the context of this paper,
\kl{pictures} are matrices of fixed-length binary strings.
More precisely,
for any $\BitLen \in \Naturals$ and $\PicHeight,\PicWidth \in \Positives$,
a $\BitLen$-bit \intro{picture}~$\Picture$
of \intro(picture){size} $\Tuple{\PicHeight, \PicWidth}$
is an $(\PicHeight \times \PicWidth)$-matrix
whose entries are bit strings of length~$\BitLen$.
We refer to the positions
$\Tuple{i, j} \in \Range*{\PicHeight} \times \Range*{\PicWidth}$
as $\Picture$'s~\intro{pixels}.
Alternatively,
a \kl{picture} can also be viewed as a function
$\Picture \colon \Range*{\PicHeight} \times \Range*{\PicWidth}
          \to \Set{0, 1}^{\BitLen}$.
All the usual terminology of matrices applies;
for instance,
the \kl{pixel} $\Tuple{1, 1}$ is referred to as the top-left corner.
The class of all $\BitLen$-bit \kl{pictures}
(of arbitrary \kl(picture){size})
is denoted by $\intro*\PIC{\BitLen}$,
and any subset $\Property \subseteq \PIC{\BitLen}$
is called a \intro{picture property}.

\AP
As with \kl{graphs},
we can evaluate \kl{logical formulas} on \kl{pictures}
by identifying each $\BitLen$-bit \kl{picture}~$\Picture$ with a \kl{structure}.
Formally,
the \intro(picture){structural representation} of~$\Picture$
is the \kl{structure}
$\intro*\StructReprP{\Picture} =
 \Tuple{
   \Domain{\StructReprP{\Picture}},
   \BitSet{1}{\StructReprP{\Picture}},
   \dots,
   \BitSet{\BitLen}{\StructReprP{\Picture}},
   {\LinkRel{1}{\StructReprP{\Picture}}},
   {\LinkRel{2}{\StructReprP{\Picture}}}
 }$
of \kl{signature}~$\Tuple{\BitLen, 2}$
with
\kl{domain}
$\Domain{\StructReprP{\Picture}} =
  \Range*{\PicHeight} \times \Range*{\PicWidth}$,
\kl{unary} relations
$\BitSet{1}{\StructReprP{\Picture}},
 \dots,
 \BitSet{\BitLen}{\StructReprP{\Picture}}
 \subseteq
 \Domain{\StructReprP{\Picture}}$\!
such that
$\Tuple{i, j} \in \BitSet{k}{\StructReprP{\Picture}}$
precisely when the $k$-th bit of $\Picture(i, j)$ is~$1$,
and “vertical” and “horizontal” successor relations
${\LinkRel{1}{\StructReprP{\Picture}}},
 {\LinkRel{2}{\StructReprP{\Picture}}}
 \subseteq
 (\Domain{\StructReprP{\Picture}} \!\times \Domain{\StructReprP{\Picture}})$
such that
$\Tuple{i, j} \LinkRel{1}{\StructReprP{\Picture}} \Tuple{i + 1, j}$
and
$\Tuple{i, j} \LinkRel{2}{\StructReprP{\Picture}} \Tuple{i, j + 1}$
for all suitable $i, j$.
An example is provided in Figure~\ref{fig:picture}.

\begin{figure}[htb]
  \centering
  % https://tex.stackexchange.com/a/55604
\makeatletter
\tikzset{circle split part fill/.style  args={#1,#2}{%
    alias=tmp@name, % Jake's idea !!
    postaction={%
      insert path={%
        \pgfextra{%
          \pgfpointdiff{\pgfpointanchor{\pgf@node@name}{center}}%
          {\pgfpointanchor{\pgf@node@name}{east}}%
          \pgfmathsetmacro\insiderad{\pgf@x}
          \fill[#1] (\pgf@node@name.base) ([xshift=-\pgflinewidth]\pgf@node@name.east) arc
          (0:180:\insiderad-\pgflinewidth)--cycle;
          \fill[#2] (\pgf@node@name.base) ([xshift=\pgflinewidth]\pgf@node@name.west)  arc
          (180:360:\insiderad-\pgflinewidth)--cycle;
        }}}}}
\makeatother

\begin{tikzpicture}[
    semithick,>=stealth',on grid,
    inner sep=0,
    pixel/.style={minimum size=7ex},
    vertex/.style={circle split,rotate=90,minimum size=2ex},
    00/.style={draw=white,circle split part fill={white,white}},
    01/.style={draw=black,circle split part fill={white,black}},
    10/.style={draw=black,circle split part fill={black,white}},
    11/.style={draw=black,circle split part fill={black,black}},
  ]
  \def\pixelDist{7ex}
  \begin{scope}
    \matrix (p) [matrix of math nodes,left delimiter={[},right delimiter={]},
                 inner xsep=0ex,inner ysep=0.5ex,
                 row sep={\pixelDist,between origins},
                 column sep={\pixelDist,between origins},
                 matrix anchor=p-1-1.center] {
      00 & 01 & 00 & 01 \\
      10 & 11 & 10 & 11 \\
      00 & 01 & 00 & 01 \\
    };
    \node [left=6ex of p-1-1] {$\Picture$:};
  \end{scope}
  \begin{scope}[xshift=40ex]
    \node[vertex,00] (p11) {};
    \node[vertex,10] (p21) [below=\pixelDist of p11] {};
    \node[vertex,00] (p31) [below=\pixelDist of p21] {};
    \node[vertex,01] (p12) [right=\pixelDist of p11] {};
    \node[vertex,11] (p22) [below=\pixelDist of p12] {};
    \node[vertex,01] (p32) [below=\pixelDist of p22] {};
    \node[vertex,00] (p13) [right=\pixelDist of p12] {};
    \node[vertex,10] (p23) [below=\pixelDist of p13] {};
    \node[vertex,00] (p33) [below=\pixelDist of p23] {};
    \node[vertex,01] (p14) [right=\pixelDist of p13] {};
    \node[vertex,11] (p24) [below=\pixelDist of p14] {};
    \node[vertex,01] (p34) [below=\pixelDist of p24] {};
    \node [left=5ex of p11] {$\StructReprP{\Picture}$:};
    \foreach \n in {p11,p31,p13,p33}
      \node[draw,circle,minimum size=2ex] at (\n) {};
    \path[->] (p11) edge (p21)
              (p21) edge (p31)
              (p12) edge (p22)
              (p22) edge (p32)
              (p13) edge (p23)
              (p23) edge (p33)
              (p14) edge (p24)
              (p24) edge (p34);
    \path[->,densely dotted]
              (p11) edge (p12)
              (p12) edge (p13)
              (p13) edge (p14)
              (p21) edge (p22)
              (p22) edge (p23)
              (p23) edge (p24)
              (p31) edge (p32)
              (p32) edge (p33)
              (p33) edge (p34);
  \end{scope}
\end{tikzpicture}

%%% Local Variables:
%%% mode: latex
%%% TeX-master: "../lph-paper"
%%% End:
  \caption{
    \emph{(repeated from Figure~\ref{fig:picture-overview})}
    A $2$-bit \kl{picture}~$\Picture$ of \kl(picture){size} $\Tuple{3, 4}$
    and its \kl(picture){structural representation}~$\StructReprP{\Picture}$.
    The sets $\BitSet{1}{\StructReprP{\Picture}}$
    and~$\BitSet{2}{\StructReprP{\Picture}}$
    are represented by \kl{elements} whose left and right halves,
    respectively,
    are colored black,
    and the relations
    $\LinkRel{1}{\StructReprP{\Picture}}$ and~$\LinkRel{2}{\StructReprP{\Picture}}$
    are represented by solid and dotted arrows,
    respectively.
  }
  \label{fig:picture}
\end{figure}

\kl{Monadic second-order logic} on \kl{pictures}
has been fairly well-understood since the early 2000s.
In particular,
Matz, Schweikardt, and Thomas~\cite[Thm.~1]{DBLP:journals/iandc/MatzST02}
have shown that the \kl{monadic second-order hierarchy} is infinite
on several kinds of \kl{structures},
including \kl{pictures}.
Here,
we state only the part of their result we need,
in a stronger form obtained by
Matz~\cite[Thm.~2.26]{DBLP:journals/tcs/Matz02}.

\begin{theorem}
  [Matz, Schweikardt, Thomas
  \cite{DBLP:journals/tcs/Matz02,DBLP:journals/iandc/MatzST02}]
  \label{thm:mso-hierarchy}
  The \kl{monadic second-order hierarchy} on \kl{pictures} is infinite.
  More precisely,
  $\mSigmaFO{\Level}\On{\PIC{\BitLen}}$
  and
  $\mPiFO{\Level}\On{\PIC{\BitLen}}$
  are incomparable,
  which implies that
  $\mSigmaFO{\Level}\On{\PIC{\BitLen}} \subsetneqq
   \mPiFO{\Level + 1}\On{\PIC{\BitLen}}$
  and
  $\mPiFO{\Level}\On{\PIC{\BitLen}} \subsetneqq
   \mSigmaFO{\Level + 1}\On{\PIC{\BitLen}}$,
  for all $\Level \in \Positives$ and $\BitLen \in \Naturals$.
\end{theorem}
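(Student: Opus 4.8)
The plan is to import Theorem~\ref{thm:mso-hierarchy} from Matz and from Matz, Schweikardt, and Thomas~\cite{DBLP:journals/tcs/Matz02, DBLP:journals/iandc/MatzST02}; what I give here is a sketch of the strategy one would follow. The backbone is the automata-theoretic characterization of the existential level of \kl{monadic second-order logic} on \kl{pictures} provided by Theorem~\ref{thm:equivalence-ts-emso}: $\mSigmaFO{1}$ on \kl{pictures} coincides with the class of picture languages recognized by \kl{tiling systems}, i.e., with the projections of local picture languages. The decisive feature is that the \kl{first-order} kernel of an existential-monadic formula is \emph{local} — this is a locality property of \kl{first-order logic} — so it can be absorbed entirely into the tiles. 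First I would lift this by induction on~$\Level$: each class $\mSigmaFO{\Level}$ (respectively $\mPiFO{\Level}$) on \kl{pictures} is exactly what one obtains from the local picture languages by $\Level-1$ alternating rounds of existential projection and complementation — one monadic quantifier block per round, with no block spent on bookkeeping.

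Next I would read bounded-alternation Turing-machine computations off pictures. Fixing a machine, a picture can encode an input together with a computation tableau: columns are successive configurations (of bounded width, namely the machine's space), the width of the picture is the running time, and a tiling system can locally check that consecutive configurations are consistent and that a hard-wired clock behaves correctly. Under this encoding, each alternation of the machine becomes one alternating round in the characterization above. The crucial point is that the picture setting is genuinely two-dimensional: by admitting only pictures of a prescribed shape — enforceable by a local condition together with a counting tiling gadget — one can let the width, and hence the available time budget, depend on the height. This is the manoeuvre that is unavailable over one-dimensional words, where by Büchi's theorem \kl{monadic second-order logic} collapses to its existential level; over pictures it is what allows additional quantifier alternations to be cashed in for genuinely more computational power.

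Finally I would transfer a \emph{provably} strict complexity hierarchy, namely the time hierarchy theorem for alternating machines, $\Sigma_{\Level}\mathrm{TIME}(t_1) \subsetneq \Sigma_{\Level}\mathrm{TIME}(t_2)$ for $t_1$ sufficiently smaller than $t_2$, together with the trivial inclusion $\Pi_{\Level-1}\mathrm{TIME}(t) \subseteq \Sigma_{\Level}\mathrm{TIME}(t)$. By a two-dimensional diagonalization inside the picture world — in which an $\Level$-alternation formula can afford to pack, onto a picture of a given shape, a computation that a formula with fewer alternations can only certify on a provably larger picture — this yields, for every $\Level$, a picture language in $\mSigmaFO{\Level}$ but not in $\mPiFO{\Level-1}$ and, symmetrically, one in $\mPiFO{\Level}$ but not in $\mSigmaFO{\Level-1}$; re-indexing gives the incomparability of $\mSigmaFO{\Level}$ and $\mPiFO{\Level}$ on \kl{pictures}, hence $\mSigmaFO{\Level} \subsetneq \mPiFO{\Level+1}$ and $\mPiFO{\Level} \subsetneq \mSigmaFO{\Level+1}$ for all $\Level \in \Positives$ and $\BitLen \in \Naturals$. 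The hard part — and the reason this is a genuine theorem rather than a routine translation — is twofold: keeping the logic-to-tiling correspondence lossless, so that no spurious quantifier block is spent on linear orders, arithmetic on pixel coordinates, or tape-head bookkeeping (an off-by-one here would collapse a level), and exploiting the two-dimensional structure to make the extra alternation provably useful even though the corresponding question over strings is open. Both are settled by the cited work of Matz and of Matz, Schweikardt, and Thomas, on which we rely.
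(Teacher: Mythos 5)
The paper does not prove Theorem~\ref{thm:mso-hierarchy}; it is stated verbatim as an imported result, with the two references serving as the proof. So there is no ``paper's proof'' to compare against, and your decision to import the theorem is exactly what the paper does. What can be assessed is whether your sketch of the cited work is a faithful account of it, and here there is a genuine gap.

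Your first step is correct and matches the literature: the levelwise identification of $\mSigmaFO{\Level}$ on pictures with $\Level{-}1$ alternating rounds of projection and complementation applied to tiling-recognizable (local) picture languages, via Theorem~\ref{thm:equivalence-ts-emso}. But the separation mechanism you propose next --- encode alternating Turing-machine computations as pictures, then transfer the alternating time hierarchy theorem together with $\Pi_{\Level-1}\mathrm{TIME}(t) \subseteq \Sigma_{\Level}\mathrm{TIME}(t)$ by a ``two-dimensional diagonalization'' --- is not what Matz, Schweikardt, and Thomas do, and taken at face value it does not work. The time hierarchy theorem separates classes \emph{within} a fixed alternation level; to conclude that some picture language lies in $\mSigmaFO{\Level}$ but outside $\mPiFO{\Level}$ by your route, you would at some point have to separate a $\Sigma_{\Level}$ time class from a $\Pi_{\Level}$ time class, which is not given by any diagonalization and is, in bounded-resource form, precisely the kind of question that is famously open. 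The two-dimensionality of pictures does not rescue this: the width/height trade-off you describe changes the resource bound but not the alternation type, and both $\Sigma_{\Level}$ and $\Pi_{\Level}$ formulas are evaluated over the \emph{same} picture, hence under the same width and height.

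The actual argument in the cited work is a combinatorial periodicity theorem, closer in spirit to the pumping lemma than to diagonalization. Matz shows that if a grid language (a set of rectangle shapes, i.e.\ $0$-bit pictures) is tiling-recognizable, then for each height $m$ the set of admissible widths is ultimately periodic with threshold and period bounded by $c^m$ for some constant $c$; this is obtained by analyzing the number of distinct column ``states'' a tiling system can sustain at height $m$, an automata-theoretic counting bound with no Turing machine in sight. Iterating this bound through the alternation rounds gives a $k$-fold-exponential bound for level $k$, and the separating witnesses are then grid languages of the form ``width equals a $(k{+}1)$-fold exponential of the height'', which by the counting bound cannot sit at level $k$ but can, with one more alternation used to certify a taller virtual grid, be placed at level $k{+}1$. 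Incomparability follows because these witnesses and their complements land on opposite sides of the $\Sigma/\Pi$ divide. If you wish to keep your sketch, I would replace the time-hierarchy/diagonalization paragraph with this periodicity argument; as written, a reader familiar with the cited papers would conclude you had not read them, and a reader unfamiliar with them might believe a machine-based diagonalization settles the matter, which it does not.
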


The ultimate goal of Section~\ref{ssec:climb-up}
is to transfer part of Theorem~\ref{thm:mso-hierarchy}
from \kl{monadic second-order logic} on \kl{pictures}
to \kl{local second-order logic} on \kl{graphs}
(and thus to the \kl{locally polynomial hierarchy}).
This will culminate in Theorem~\ref{thm:locally-polynomial-hierarchy}
on page~\pageref{thm:locally-polynomial-hierarchy}.
As a first milestone towards this goal,
we establish a partial levelwise equivalence between the two logics
when restricted to \kl{pictures}.
We do this in two steps,
showing roughly that the expressive power of
\kl{local second-order logic} on \kl{pictures} remains unaffected
if we first weaken \kl{second-order quantification}
and then strengthen \kl{first-order quantification}.
The outcome is presented in Theorem~\ref{thm:equivalence-lso-mso}.

We begin by reducing unrestricted \kl{second-order quantification}
to \kl{quantification} over sets,
exploiting the fact that \kl{local second-order logic} on \kl{pictures}
allows us to represent arbitrary relations as collections of sets.
Since the analogous observation
also holds for \kl{graphs} of \kl{bounded structural degree},
which will be useful in Section~\ref{ssec:complete-picture},
we prove a more general statement
for all \kl{structures} of \kl(structure){bounded degree}.

\AP
Formally,
a \kl{structure}~$\Struct$ is of $\MaxDegree$-\intro(structure){bounded degree},
for $\MaxDegree \in \Naturals$,
if every \kl{element}
$\Element[1] \inS \Struct$
is connected to at most~$\MaxDegree$ other \kl{elements},
i.e.,
$\Card{
   \SetBuilder{
     \Element[2] \inS \Struct
   }{
     \Element[2] \NeighborRel{\Struct} \! \Element[1]
   }
 }
 \leq \MaxDegree$.
We denote by $\intro*\bSTRUCT{\MaxDegree}$
the set of all \kl{structures} of $\MaxDegree$-\kl(structure){bounded degree}.
Notice that
$\PIC{\BitLen} \subseteq \bSTRUCT{4}$
and
$\bGRAPH{\MaxDegree} \subseteq \bSTRUCT{\MaxDegree}$
for all $\BitLen, \MaxDegree \in \Naturals$.

\begin{proposition}
  \label{prp:equivalence-lso-lmso}
  When restricted to \kl{structures} of \kl(structure){bounded degree},
  each level of the \kl{local second-order hierarchy}
  is equivalent to the corresponding level of
  the \kl{local monadic second-order hierarchy}.
  More precisely,
  $\SigmaLFO{\Level}\On{\bSTRUCT{\MaxDegree}} =
   \mSigmaLFO{\Level}\On{\bSTRUCT{\MaxDegree}}$
  and
  $\PiLFO{\Level}\On{\bSTRUCT{\MaxDegree}} =
   \mPiLFO{\Level}\On{\bSTRUCT{\MaxDegree}}$
  for all $\Level, \MaxDegree \in \Naturals$.
\end{proposition}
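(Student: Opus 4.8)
The plan is to prove the two equalities by mutual inclusion, with one direction being essentially free and the other requiring a uniform encoding of arbitrary relations by tuples of sets. The inclusions $\mSigmaLFO{\Level}\On{\PIC{\BitLen}} \subseteq \SigmaLFO{\Level}\On{\PIC{\BitLen}}$ and $\mPiLFO{\Level}\On{\PIC{\BitLen}} \subseteq \PiLFO{\Level}\On{\PIC{\BitLen}}$ hold trivially on \emph{any} class of \kl{structures}, since a monadic \kl{formula} is already a \kl{local second-order} \kl{formula} (the monadic hierarchies are defined as syntactic restrictions of the non-monadic ones). So the real work is the reverse direction: given a $\SigmaLFOL{\Level}$- or $\PiLFOL{\Level}$-\kl{sentence}, I would produce an equivalent \kl{formula} of the corresponding monadic class that is evaluated on \kl{pictures} only, without changing the number of \kl{second-order} quantifier alternations.

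First I would handle a single \kl{second-order} \kl{quantifier} over a \kl{relation variable}~$\SOVar$ of \kl{arity}~$\Arity \geq 2$. The key observation is that on a \kl{picture}~$\Picture$ of \kl(picture){size} $\Tuple{\PicHeight, \PicWidth}$, the \kl{domain} of $\StructReprP{\Picture}$ is a grid, so an $\Arity$-tuple of \kl{pixels} can itself be viewed as a \kl{picture} over a larger alphabet laid out on the same grid — or more carefully, I would represent $\SOVar \subseteq (\Domain{\StructReprP{\Picture}})^{\Arity}$ by a bounded collection of \kl{unary} relations $\MSOVar_1, \dots, \MSOVar_c$ together with extra \kl{first-order}-definable machinery tying each \kl{pixel} to the coordinates of the tuple components it encodes. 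Concretely, since \kl{pictures} come with both a horizontal and a vertical successor relation, one can use the \kl(quantifier){bounded} \kl{first-order} apparatus (the $\ExistsNb{}{}$ quantifiers and their iterates $\ExistsLoc{}{}{}$) to "walk" along rows and columns and so define, in a local way, an arithmetic of positions. The idea is that a single \kl{pixel} $p$ can carry, via membership in a fixed number of sets, enough bits to name a tuple $\Tuple{q_1, \dots, q_{\Arity}}$ \emph{relative to $p$}; but because \kl{local second-order logic}'s inner \kl{formula} is \kl[bounded fragment]{bounded}, it only ever compares \kl{elements} within a fixed radius, so it suffices to encode tuples whose components all lie within that radius of one another. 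This is exactly the situation already exploited in the forward direction of Theorem~\ref{thm:local-hierarchy-equivalence}, where the \kl{certificate assignments} encode relations relating only \kl{elements} at \kl{distance} at most $2\Radius$.

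Once a single non-monadic \kl{second-order} \kl{quantifier} is simulated by a block of monadic ones, I would extend this to a whole quantifier block $\ExistsRel{\Vector{\SOVar}_1}$ (or $\ForAllRel{\Vector{\SOVar}_1}$) by applying the encoding component-wise: each \kl{relation variable} in the tuple is replaced by its own bounded collection of \kl{unary} relations, and because all replacements sit inside the same block, no new alternation is introduced. One has to check two routine points: (i) the translated inner \kl{formula} is still a $\BFL$-\kl{formula} (the decoding of a tuple from the sets is done by \kl(quantifier){bounded} \kl{first-order} quantification over a fixed-radius \kl{neighborhood}, so boundedness is preserved), and (ii) the prefix structure $\ExistsRel{}\ForAllRel{}\dots$ is preserved level by level, which gives the induction on~$\Level$ — the base case $\Level = 0$ is $\LFOL = \LFOL$, and the step lifts the block-by-block encoding through one more alternation. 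I would then note that the construction respects the choice between $\SigmaLFOL{}$ and $\PiLFOL{}$ symmetrically, giving both claimed equalities.

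The main obstacle I anticipate is making the "local arithmetic of positions" genuinely \emph{local}, i.e.\ expressible in $\BFL$: naively one would want a \kl{first-order} \kl{formula} saying "\kl{pixel} $p$ encodes the tuple $\Tuple{q_1, \dots, q_{\Arity}}$", but $q_1, \dots, q_{\Arity}$ may range over the whole grid, which is forbidden for \kl(quantifier){bounded} quantification. The resolution — and the technical heart of the proof — is to observe that one never needs the \emph{global} relation $\SOVar$, only its restriction to tuples of mutually nearby \kl{elements}, because the inner \kl{formula} $\Formula[1]$ (being in $\BFL$) is \kl[bounded fragment]{bounded} around a single \kl{first-order variable} and hence can only ever query $\SOVar$ at such tuples. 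So the encoding can be kept entirely within a constant-radius window, where \kl(quantifier){bounded} quantification suffices. Getting the bookkeeping right for arbitrary \kl{arity} $\Arity$ and arbitrary radius, while keeping the number of auxiliary sets finite and independent of the \kl{picture}, is the part that needs care but no new ideas beyond those already present in the proof of Theorem~\ref{thm:local-hierarchy-equivalence}.
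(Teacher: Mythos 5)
Your proposal takes the same approach as the paper: the monadic-to-general inclusion is syntactic, and the substantive direction replaces each $\Arity$-ary relation variable by a finite family of unary variables indexed by relative grid coordinates within a fixed window, exploiting that the inner $\BFL$-formula only ever queries $\SOVar$ at tuples of mutually nearby pixels, so the relation's restriction to such tuples suffices. One wording slip worth correcting: "enough bits to name a tuple $\Tuple{q_1, \dots, q_{\Arity}}$ relative to $p$" reads as if each pixel encodes a single tuple, whereas what your "bounded collection of unary relations" phrasing correctly suggests — and what the paper does — is one \emph{indicator bit per possible relative placement} of the remaining $\Arity - 1$ components (a pixel~$p$ belongs to the set $\MSOVar[1]_{\SOVar(*, \Position_2, \dots, \Position_{\Arity})}$ precisely if the tuple starting at~$p$ with that placement is in~$\SOVar$); and the locality observation you cite appears in the \emph{backward}, not the forward, direction of Theorem~\ref{thm:local-hierarchy-equivalence}.
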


\begin{proof}
  The statement holds trivially for $\Level = 0$,
  so in the following we assume that $\Level > 0$.
  We need only show inclusions from left to right,
  since
  $\mSigmaLFOL{\Level}$
  is a syntactic fragment of
  $\SigmaLFOL{\Level}$
  and
  $\mPiLFOL{\Level}$
  is a syntactic fragment of
  $\PiLFOL{\Level}$.
  Consider any \kl{formula}
  $\Formula[1] \in
  \SigmaLFOL{\Level} \cup \PiLFOL{\Level}$,
  and let~$\Radius \in \Naturals$ be
  the maximum nesting depth
  of \kl(quantifier){bounded} \kl{first-order quantifiers}
  in~$\Formula[1]$.
  Intuitively,
  $\Radius$~is the distance
  up to which each \kl{element} can “see”
  when evaluating~$\Formula[1]$
  on a \kl{structure}.

  We now show how to use multiple sets of \kl{elements}
  to encode a single relation of arbitrary \kl{arity},
  taking advantage of the fact that
  we have a bound on the number of \kl{elements}
  that lie within distance~$2\Radius$ of any fixed \kl{element}.
  More precisely,
  for any \kl{structure}~$\Struct$
  of $\MaxDegree$-\kl(structure){bounded degree},
  the number of \kl{elements}
  reachable in at most $2\Radius$~steps
  from a given \kl{element}
  cannot be greater than
  $\MaxName = \sum_{i \in \Range{2\Radius}} \MaxDegree^i$.
  Hence,
  we can assign each \kl{element}
  $\Element \inS \Struct$
  a number $\Name \in \Range{\MaxName}*$,
  called its \emph{name},
  that uniquely identifies it within distance~$2\Radius$.
  (If $\Struct$ is the \kl{structural representation} of a \kl{graph},
  this is like an $\Radius$-\kl{locally unique} \kl{identifier assignment},
  but extended to all \kl{elements}.)
  These names will be represented
  by a collection of \kl{unary} \kl{variables}
  $\Tuple{\MSOVar[1]_{\Name}}_{\Name \in \Range{\MaxName}*}$,
  each to be \kl{interpreted} as
  the set of \kl{elements} having the corresponding name~$\Name$.
  Based on this,
  if~$\Formula[1]$ contains a \kl{second-order variable}~$\SOVar$
  of \kl{arity}~$\Arity \geq 2$,
  we proceed as follows to represent~$\SOVar$
  by a collection of \kl{unary} \kl{variables}
  $\Tuple{
    \MSOVar[2]_{\SOVar(*, \Name_2, \dots, \Name_{\Arity})}
  }_{\Name_2, \dots, \Name_{\Arity} \in \Range{\MaxName}*}$:
  for any \kl{elements}
  $\Element_1, \dots, \Element_{\Arity} \inS \Struct$
  such that
  $\Element_2, \dots, \Element_{\Arity}$
  are located within distance~$2\Radius$ of~$\Element_{1}$
  and are assigned the names
  $\Name_2, \dots, \Name_{\Arity} \in \Range{\MaxName}*$,
  we stipulate that $\Element_1$ lies in the set
  $\MSOVar[2]_{\SOVar(*, \Name_2, \dots, \Name_{\Arity})}$
  if and only if the tuple
  $\Tuple{\Element_1, \dots, \Element_{\Arity}}$
  lies in the relation~$\SOVar$.
  Notice that we do not encode the entire relation~$\SOVar$,
  but only its restriction to tuples whose \kl{elements} lie
  within a distance of at most~$2\Radius$
  from the first \kl{element}~$\Element_1$.
  This is sufficient for our purposes,
  since $\Formula[1]$ cannot refer to any other tuples.
  (Intuitively,
  when evaluating~$\Formula[1]$,
  each \kl{element}~$\Element$ can only “see” tuples
  whose \kl{elements} are all reachable within $\Radius$~steps,
  so those \kl{elements} cannot be further than~$2\Radius$ apart from each other,
  and in particular not further than~$2\Radius$ from the first \kl{element}.)

  Formally,
  given~$\Radius$,
  we define a function~$\Translation_{\Radius}$
  that translates~$\Formula[1]$
  into a \kl{formula} of \kl{local monadic second-order logic}
  with \kl{free} \kl{variables}
  $\Tuple{\MSOVar[1]_{\Name}}_{\Name \in \Range{\MaxName}*}$.
  The translation works under the assumption
  that these \kl{variables}
  will be \kl{interpreted} as $\Radius$-locally unique names,
  as described above.
  We define~$\Translation_{\Radius}$ by induction
  on the structure of \kl{local second-order} \kl{formulas}
  (bearing in mind
  that the unique \kl(quantifier){unbounded} \kl{first-order quantifier}
  is always \kl{universal},
  and that both \kl{existential} and \kl{universal} \kl{second-order quantifiers}
  must be handled explicitly,
  since they are syntactically restricted to the \kl{quantifier} prefix
  and thus cannot occur under negation):
  \begin{itemize}
  \item \kl{Atomic} \kl{formulas}
    that do not involve a \kl{relation variable} of higher \kl{arity}
    are kept unchanged, i.e.,
    $\Translation_{\Radius}(\Formula[3]) = \Formula[3]$
    if $\Formula[3]$ is of the form
    $\BitTrue{i}{\FOVar[1]}$,\:
    $\FOVar[1] \Linked{i} \FOVar[2]$,\:
    $\FOVar[1] \Equal \FOVar[2]$, or
    $\InRel{\SOVar}{\FOVar[1]}$,
    where $\SOVar$ is of \kl{arity}~$1$.
  \item \kl{Atomic} \kl{formulas}
    involving a \kl{relation variable}~$\SOVar$ of \kl{arity} $\Arity \geq 2$
    are rewritten in terms of the corresponding \kl{unary} \kl{variables}:
    \begin{equation*}
      \Translation_{\Radius}
      \bigr(
        \InRel{\SOVar}{\FOVar_1, \dots, \FOVar_{\Arity}}
      \bigl)
      \: = \,
      \BigOR_{\lalign{\Name_2, \dots, \Name_{\Arity} \in \Range{\MaxName}*}}
      \Bigl(
        \InRel{\MSOVar[2]_{\SOVar(*, \Name_2, \dots, \Name_{\Arity})}}{\FOVar_1}
        \, \AND \,
        \InRel{\MSOVar[1]_{\Name_2}}{\FOVar_2}
        \, \AND \dots \AND \,
        \InRel{\MSOVar[1]_{\Name_{\Arity}}}{\FOVar_{\Arity}}
      \Bigr).
    \end{equation*}
  \item \kl{Boolean connectives},
    \kl{first-order quantifiers}, and
    \kl{second-order quantifiers} over \kl{unary} relations
    are preserved:
    $\Translation_{\Radius}(\NOT \Formula[3]) = \,
     \NOT \Translation_{\Radius}(\Formula[3])$,\:
    $\Translation_{\Radius}(\Formula[3]_1 \OR \Formula[3]_2) = \,
     \Translation_{\Radius}(\Formula[3]_1) \OR
     \Translation_{\Radius}(\Formula[3]_2)$,\:
    $\Translation_{\Radius}(
       \ExistsNb{\FOVar[1]}{\FOVar[2]} \; \Formula[3]
     ) = \,
     \ExistsNb{\FOVar[1]}{\FOVar[2]} \; \Translation_{\Radius}(\Formula[3])$,\:
    $\Translation_{\Radius}(
       \ForAll{\FOVar[1]} \, \Formula[3]
     ) = \,
     \ForAll{\FOVar[1]} \, \Translation_{\Radius}(\Formula[3])$,
    and
    $\Translation_{\Radius}(
       \QuantifierRel{\SOVar} \, \Formula[3]
     ) = \,
     \QuantifierRel{\SOVar} \, \Translation_{\Radius}(\Formula[3])$
    if $\SOVar$ is of \kl{arity}~$1$,
    where $\QuantifierRel{}$~can be $\ExistsRel{}$ or~$\ForAllRel{}$.
  \item Each \kl{second-order quantifier}
    binding a \kl{relation variable}~$\SOVar$ of \kl{arity} $\Arity \geq 2$
    is replaced by a collection of \kl{second-order quantifiers}
    binding the corresponding \kl{unary} \kl{variables}:
    $\Translation_{\Radius}(
       \QuantifierRel{\SOVar} \, \Formula[3]
     ) = \,
     \QuantifierRel{
       \bigTuple{
         \MSOVar[2]_{\SOVar(*, \Name_2, \dots, \Name_{\Arity})}
       }_{\Name_2, \dots, \Name_{\Arity} \in \Range{\MaxName}*}
     } \,
     \Translation_{\Radius}(\Formula[3])$,\,
    where $\QuantifierRel{}$~can be $\ExistsRel{}$ or~$\ForAllRel{}$.
  \end{itemize}

  Applying this translation to the initial \kl{formula}~$\Formula[1]$,
  we obtain a \kl{formula} $\Translation_{\Radius}(\Formula[1])$
  that expresses the same \kl(structure){property} as~$\Formula[1]$
  on \kl{structures} of $\MaxDegree$-\kl(structure){bounded degree},
  assuming that the \kl{variables}
  $\Tuple{\MSOVar[1]_{\Name}}_{\Name \in \Range{\MaxName}*}$
  are \kl{interpreted} as intended.
  Notice that $\Translation_{\Radius}$ preserves
  the alternation level of \kl{second-order quantifiers}.
  Thus,
  if
  $\Formula[1] \in \SigmaLFOL{\Level}$,
  then
  $\Translation_{\Radius}(\Formula[1]) \in \mSigmaLFOL{\Level}$,
  and if
  $\Formula[1] \in \PiLFOL{\Level}$,
  then
  $\Translation_{\Radius}(\Formula[1]) \in \mPiLFOL{\Level}$.

  It remains to bind
  $\Vector{\MSOVar[1]} =
   \Tuple{\MSOVar[1]_{\Name}}_{\Name \in \Range{\MaxName}*}$
  with a \kl{second-order quantifier}
  that is relativized to admissible \kl{interpretations}.
  To maintain the level of alternation,
  we choose the same type of \kl{quantifier}
  as the first \kl{second-order quantifier} of~$\Formula[1]$.
  Let us first assume that
  $\Formula[1] \in \SigmaLFOL{\Level}$,
  and let the result of the translation be
  $\Translation_{\Radius}(\Formula[1]) =
   \ExistsRel{\Vector{\MSOVar[2]}_1}
   \ForAllRel{\Vector{\MSOVar[2]}_2} \dots
   \QuantifierRel{\Vector{\MSOVar[2]}_{\Level}} \,
   \ForAll{\FOVar[1]} \, \Formula[2]\Of{\FOVar[1]}$,
  where each $\Vector{\MSOVar[2]}_i$
  is a tuple of \kl{unary} \kl{variables},
  and $\QuantifierRel{}$ is~$\ForAllRel{}$ if $\Level$~is even
  and $\ExistsRel{}$ otherwise.
  From this we derive the $\mSigmaLFOL{\Level}$-\kl{formula}
  \begin{equation*}
    \Formula[1]' = \,
    \ExistsRel{\Vector{\MSOVar[1]}} \,
    \ExistsRel{\Vector{\MSOVar[2]}_1} \,
    \ForAllRel{\Vector{\MSOVar[2]}_2} \dots
    \QuantifierRel{\Vector{\MSOVar[2]}_{\Level}} \:
    \ForAll{\FOVar[1]} \,
    \bigl(
      \Fml{UniqueName}\Of{\FOVar[1]}
      \, \AND \,
      \Formula[2]\Of{\FOVar[1]}
    \bigr),
  \end{equation*}
  which is equivalent to~$\Formula[1]$
  on \kl{structures} of $\MaxDegree$-\kl(structure){bounded degree}.
  Here,
  $\Fml{UniqueName}\Of{\FOVar[1]}$
  is the $\BFL$-\kl{formula}
  \begin{equation*}
    \Bigl( \,
      \BigOR_{\lalign{\Name \in \Range{\MaxName}*}}
      \InRel{\MSOVar[1]_{\Name}}{\FOVar[1]}
    \, \Bigr)
    \, \AND \,
    \Bigl( \,
      \BigAND_{\lalign{\Name_1,\Name_2 \in \Range{\MaxName}*:\, \Name_1 \neq \Name_2}}
      \NOT
      \bigl(
        \InRel{\MSOVar[1]_{\Name_1}}{\FOVar[1]}
        \AND
        \InRel{\MSOVar[1]_{\Name_2}}{\FOVar[1]}
      \bigr)
    \, \Bigr)
    \, \AND \:
    \ForAllLoc{\FOVar[2]}{2\Radius}{\FOVar[1]}
    \Bigl( \,
      \BigAND_{\lalign{\Name \in \Range{\MaxName}*}}
      \NOT
      \bigl(
        \InRel{\MSOVar[1]_{\Name}}{\FOVar[1]}
        \AND
        \InRel{\MSOVar[1]_{\Name}}{\FOVar[2]}
      \bigr)
    \, \Bigr),
  \end{equation*}
  which states that
  $\FOVar[1]$~is assigned a name
  that is unique within distance~$2\Radius$ of~$\FOVar[1]$.
  (This is very similar to the \kl{formula} $\WellColored\Of{\FOVar[1]}$
  from Example~\ref{ex:3-colorable} on page~\pageref{ex:3-colorable}.)

  The case where
  $\Formula[1] \in \PiLFOL{\Level}$
  is quite similar,
  but this time we bind~$\Vector{\MSOVar[1]}$
  with a \kl{universal} \kl{quantifier}.
  To relativize this \kl{quantifier} appropriately,
  we ensure that
  the $\BFL$\nobreakdash-\kl{formula} evaluated at each \kl{element}~$\FOVar[1]$
  holds true
  whenever
  an \kl{element} within distance~$\Radius$ of~$\FOVar[1]$
  violates the condition of $\Radius$-locally unique names.
  (This is a simple form of the quantifier relativization
  performed in the proofs of
  Lemma~\ref{lem:restrictive-arbiters} and
  Theorem~\ref{thm:local-hierarchy-equivalence};
  it is sound because
  the names of any conflicting \kl{elements}
  can always be replaced with $\Radius$-locally unique names
  without affecting the names of any other \kl{elements}.)
  Thus,
  if the translation yields
  $\Translation_{\Radius}(\Formula[1]) =
   \ForAllRel{\Vector{\MSOVar[2]}_1}
   \ExistsRel{\Vector{\MSOVar[2]}_2} \dots
   \QuantifierRel{\Vector{\MSOVar[2]}_{\Level}} \,
   \ForAll{\FOVar[1]} \, \Formula[2]\Of{\FOVar[1]}$,
  then we construct the $\mPiLFOL{\Level}$-\kl{formula}
  \begin{equation*}
    \Formula[1]' = \,
    \ForAllRel{\Vector{\MSOVar[1]}} \,
    \ForAllRel{\Vector{\MSOVar[2]}_1} \,
    \ExistsRel{\Vector{\MSOVar[2]}_2} \dots
    \QuantifierRel{\Vector{\MSOVar[2]}_{\Level}} \:
    \ForAll{\FOVar[1]} \,
    \bigl( \,
      \ForAllLoc{\FOVar[2]}{\Radius}{\FOVar[1]} \,
      (\Fml{UniqueName}\Of{\FOVar[2]})
      \, \IMP \,
      \Formula[2]\Of{\FOVar[1]}
    \bigr),
  \end{equation*}
  which is equivalent to~$\Formula[1]$
  on \kl{structures} of $\MaxDegree$-\kl(structure){bounded degree}.
\end{proof}

\AP
Next,
we want to strengthen \kl{first-order quantification}.
To do this,
we take advantage of an automaton model for \kl{pictures}
introduced by Giammarresi and Restivo~\cite{DBLP:journals/ijprai/GiammarresiR92},
which is closely related to \kl{monadic second-order logic}.
A “machine” in this model,
called a $\BitLen$\nobreakdash-bit \intro[tiling system]{tiling~system},
is defined as a tuple
$\TilingSystem = \Tuple{\TStateSet, \TTileSet}$,
where
$\BitLen$~is a nonnegative integer,
$\TStateSet$ is a finite set of \intro(tiling system){states},
and
$\TTileSet \subseteq
 \bigl(
   (\Set{0, 1}^{\BitLen} \times \TStateSet) \cup
   \Set{\TBoundarySymbol}
 \bigr)^4$
is a set of $(2 \times 2)$-\intro{tiles}.
Each \kl{tile} in~$\TTileSet$
consists of entries that are
either a $\BitLen$-bit string accompanied by
a \kl(tiling system){state} in~$\TStateSet$,
or the special boundary symbol $\intro*\TBoundarySymbol$
(assumed not to be contained in $\Set{0, 1}^{\BitLen} \times \TStateSet$).

\AP
A $\BitLen$-bit \kl{tiling system}
$\TilingSystem = \Tuple{\TStateSet, \TTileSet}$
operates similarly to
a nondeterministic finite automaton generalized to two dimensions:
given a \kl{picture}~$\Picture$,
it first nondeterministically assigns
a \kl(tiling system){state} of~$\TStateSet$ to each \kl{pixel} of~$\Picture$,
and then checks that this assignment of \kl(tiling system){states}
respects the “transitions” that are allowed by~$\TTileSet$.
More precisely,
a $\BitLen$-bit \kl{picture}~$\Picture$
of \kl(picture){size} $\Tuple{\PicHeight, \PicWidth}$
is \intro(tiling system){accepted} by~$\TilingSystem$
if there exists an assignment
$\Range*{\PicHeight} \times \Range*{\PicWidth} \to \TStateSet$
such that
each $(2 \times 2)$-subblock of~$\Picture$
matches some \kl{tile} of~$\TTileSet$,
assuming that the entire \kl{picture} is surrounded by a frame
consisting of $\TBoundarySymbol$'s
(to detect the borders),
and that a \kl{pixel} matches
$\Tuple{\String, \TState} \in \Set{0, 1}^{\BitLen} \times \TStateSet$
precisely if its value is~$\String$
and its assigned \kl(tiling system){state} is~$\TState$.
The \kl{picture property} \intro{recognized} by~$\TilingSystem$
consists of those $\BitLen$-bit \kl{pictures}
that are \kl(tiling system){accepted} by~$\TilingSystem$.
We write $\intro*\TS$
for the class of \kl{picture properties}
that are \kl{recognized} by some \kl{tiling system}.

Exploiting a locality property of \kl{first-order logic},
Giammarresi, Restivo, Seibert, and Thomas~%
\cite[Thm.~3.1]{DBLP:journals/iandc/GiammarresiRST96}
have shown that \kl{tiling systems} capture precisely
the nondeterministic level of the \kl{monadic second-order hierarchy}
on \kl{pictures}:

\begin{theorem}
  [Giammarresi, Restivo, Seibert, Thomas
  {\cite{DBLP:journals/iandc/GiammarresiRST96}}]
  \label{thm:equivalence-ts-emso}
  \kl{Tiling systems} are equivalent to
  the \kl(monadic){existential fragment} of \kl{monadic second-order logic}
  on \kl{pictures}.
  That is,
  $\TS\On{\PIC{\BitLen}} =
   \mSigmaFO{1}\On{\PIC{\BitLen}}$
  for all $\BitLen \in \Naturals$.
\end{theorem}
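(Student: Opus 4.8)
The plan is to prove the two inclusions $\TS\On{\PIC{\BitLen}} \subseteq \mSigmaFO{1}\On{\PIC{\BitLen}}$ and $\mSigmaFO{1}\On{\PIC{\BitLen}} \subseteq \TS\On{\PIC{\BitLen}}$ separately, the second being by far the more involved.

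For the inclusion $\TS\On{\PIC{\BitLen}} \subseteq \mSigmaFO{1}\On{\PIC{\BitLen}}$, I would start from a $\BitLen$-bit tiling system $\TilingSystem = \Tuple{\TStateSet, \TTileSet}$ and build an $\mSigmaFOL{1}$-sentence $\ExistsRel{\SOVar_1} \dots \ExistsRel{\SOVar_m} \, \Formula$, where $m = \Card{\TStateSet}$ and the unary variables $\SOVar_1, \dots, \SOVar_m$ are intended to carry the state assignment (the $i$-th state is assigned to pixel $\FOVar[1]$ exactly when $\InRel{\SOVar_i}{\FOVar[1]}$ holds), and $\Formula$ is a first-order formula asserting that (i)~every pixel lies in exactly one $\SOVar_i$, and (ii)~every $(2 \times 2)$-subblock of the bordered picture matches a tile of $\TTileSet$. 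Part~(i) is immediate. For part~(ii), the key observations are that the value bits of a pixel are recorded by the unary relations $\BitSet{1}{\StructReprP{\Picture}}, \dots, \BitSet{\BitLen}{\StructReprP{\Picture}}$ of its structural representation, and that every relevant ``border situation'' of a pixel --- lying in the first or last row, the first or last column, or a corner --- is first-order definable (for instance ``$\FOVar[1]$ is in the first row'' is $\NOT \Exists{\FOVar[2]} \, (\FOVar[2] \Linked{1} \FOVar[1])$). For each of the finitely many border situations and each of the four positions a pixel may occupy within a $(2 \times 2)$-block, I would enumerate the admissible tiles (filling any absent cells with $\TBoundarySymbol$) and conjoin the corresponding local constraints, obtaining $\Formula$.

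For $\mSigmaFO{1}\On{\PIC{\BitLen}} \subseteq \TS\On{\PIC{\BitLen}}$, I would first reduce to the claim that every \emph{first-order} definable picture property is recognized by a tiling system. Indeed, a sentence $\ExistsRel{\SOVar_1} \dots \ExistsRel{\SOVar_n} \, \Formula$ with $\Formula$ first-order defines, over the enlarged alphabet $\Set{0,1}^{\BitLen} \times \Set{0,1}^n$ (whose extra $n$ bits at each pixel record membership in $\SOVar_1, \dots, \SOVar_n$), a first-order picture property; a tiling system recognizing that enlarged property can be converted into one for the original property by projecting away the extra bits, which is possible because recognizable picture properties are closed under alphabet projection. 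So it suffices to show that first-order definable implies tiling-recognizable. Here I would invoke Hanf's locality theorem, which applies because the structural representation of a picture has bounded degree (each pixel is linked to at most four others): every first-order sentence is equivalent to a Boolean combination of threshold sentences of the form ``there are at least $k$ pixels whose radius-$r$ von Neumann neighborhood has isomorphism type $\tau$'', and over a fixed finite alphabet there are only finitely many such types (a type being a bounded sub-picture together with the information of which borders lie within distance $r$). Since recognizable picture properties are closed under finite unions and intersections --- via disjoint-union and product constructions on the state sets --- it then remains only to show that each threshold sentence, and the negation of each threshold sentence, is recognized by a tiling system.

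The main obstacle is exactly this last step, which I would handle in two stages. First, the predicate ``pixel $\FOVar[1]$ has radius-$r$ neighborhood of type $\tau$'' depends only on pixels within Manhattan distance $r$ of $\FOVar[1]$, so a tiling system can mark each pixel with its radius-$r$ neighborhood type as a state and verify global consistency: the radius-$r$ neighborhoods of two adjacent pixels overlap enough that compatibility of their types is a condition on $(2 \times 2)$-blocks, and translating a ``window of radius $r$'' tiling rule into an equivalent system of $(2 \times 2)$-tiles is a routine packing argument. Second, on top of these markings, a condition ``there are at least $k$'' (or ``exactly $k$'', or ``at most $k$'') pixels marked $\tau$ is realized by having the tiling system simulate a boustrophedon sweep of the picture --- left to right along a row, descend, right to left along the next row, and so on --- carrying in its states a running count capped at $k$, with the boundary tiles enforcing the start of the sweep, the row turns, and the set of admissible final counts; since ``at most $k$'' is counted just as easily as ``at least $k$'', the negation of a threshold sentence is again tiling-recognizable, so no complementation of tiling systems is ever required. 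Combining these constructions with the closure of recognizable picture properties under finite unions and intersections yields a tiling system for the Hanf normal form of $\Formula$, completing the proof.
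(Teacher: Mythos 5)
The paper does not prove this theorem; it is quoted verbatim from Giammarresi, Restivo, Seibert, and Thomas \cite{DBLP:journals/iandc/GiammarresiRST96} and used as a black box. (The paper only revisits the easy direction informally in the proof of Corollary~\ref{cor:equivalence-lemso-emso}, where it notes that the $\mSigmaFOL{1}$-sentence produced there is already a $\mSigmaLFOL{1}$-sentence after replacing unbounded first-order quantifiers by bounded ones.) Your proposal is a genuine proof of the cited result and follows the standard argument from the original paper: the easy direction by existentially quantifying the state assignment and expressing the $(2 \times 2)$-tile constraints in first-order logic, and the hard direction by reducing to the first-order case via projection, invoking Hanf's locality theorem over the bounded-degree grid structures, and then realizing each threshold sentence (and its negation) by a tiling system that marks local neighborhood types and carries a capped counter along a boustrophedon sweep.

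A few small points worth stating explicitly if you were to write this out in full. The reduction from $\mSigmaFO{1}$ to first-order quietly uses that tiling-recognizable picture properties are closed under alphabet projection, which is really the definition (a tiling-recognizable language is a projection of a local one), so this step is safe. You correctly avoid the trap of needing closure under complement: since the negation of a lower threshold is an upper threshold, one can distribute negations all the way down to the threshold sentences, where they are absorbed, so only unions and intersections of tiling-recognizable languages are needed. The ``routine packing argument'' for converting a radius-$r$ window constraint into $(2 \times 2)$-tiles, and the snake-sweep counter, are both folklore but nontrivial enough that a careful writeup should spell out the state sets: in the first case a state is an $(2r{+}1)\times(2r{+}1)$ window with border information; in the second it is a pair (row parity, counter value in $\Range{k}$). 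With those details filled in, the argument goes through.
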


This result gives us the key to move
from \kl(quantifier){bounded} to arbitrary \kl{first-order quantification}.
The following corollary is based on the observation
that \kl{tiling systems} can be easily described
in \kl{local monadic second-order logic}.

\begin{corollary}
  \label{cor:equivalence-lemso-emso}
  When restricted to \kl{pictures},
  the \kl(monadic){existential fragment} of \kl{local monadic second-order logic}
  is equivalent to
  the \kl(monadic){existential fragment} of \kl{monadic second-order logic}.
  That is,
  $\mSigmaLFO{1}\On{\PIC{\BitLen}} =
   \mSigmaFO{1}\On{\PIC{\BitLen}}$
  for all $\BitLen \in \Naturals$.
\end{corollary}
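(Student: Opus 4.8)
The plan is to route the argument through \kl{tiling systems}. By Theorem~\ref{thm:equivalence-ts-emso} we already have $\mSigmaFO{1}\On{\PIC{\BitLen}} = \TS\On{\PIC{\BitLen}}$, so it is enough to prove $\mSigmaLFO{1}\On{\PIC{\BitLen}} = \TS\On{\PIC{\BitLen}}$. One of the two inclusions costs nothing: a \kl(quantifier){bounded} \kl{first-order quantifier} $\ExistsNb{\FOVar[2]}{\FOVar[1]} \; \Formula[1]$ is mere syntactic sugar for the relativized \kl{first-order quantification} $\Exists{\FOVar[2]} \bigl( (\FOVar[2] \Linked{1} \FOVar[1] \OR \FOVar[1] \Linked{1} \FOVar[2] \OR \FOVar[2] \Linked{2} \FOVar[1] \OR \FOVar[1] \Linked{2} \FOVar[2]) \AND \Formula[1] \bigr)$ on \kl{pictures} (the \kl(picture){structural representation} of a \kl{picture} carrying exactly two binary relations), so every $\mSigmaLFOL{1}$-\kl{sentence} rewrites into an equivalent $\mSigmaFOL{1}$-\kl{sentence}. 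Hence $\mSigmaLFO{1}\On{\PIC{\BitLen}} \subseteq \mSigmaFO{1}\On{\PIC{\BitLen}} = \TS\On{\PIC{\BitLen}}$ — in fact this rewriting works over any class of \kl{structures}.

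For the reverse inclusion $\TS\On{\PIC{\BitLen}} \subseteq \mSigmaLFO{1}\On{\PIC{\BitLen}}$ — the promised observation that \kl{tiling systems} are easily described in \kl{local monadic second-order logic} — I would, given a $\BitLen$-bit \kl{tiling system} $\TilingSystem = \Tuple{\TStateSet, \TTileSet}$, build a $\mSigmaLFOL{1}$-\kl{sentence} $\Formula[1]_{\TilingSystem} = \ExistsRel{(\SOVar_q)_{q \in \TStateSet}} \, \ForAll{\FOVar[1]} \, \Formula[2]\Of{\FOVar[1]}$ that \kl{defines} on \kl{pictures} exactly the \kl{picture property} \kl{recognized} by $\TilingSystem$. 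Each $\SOVar_q$ is a \kl{unary} \kl{second-order variable} whose \kl{interpretation} is meant to collect the \kl{pixels} assigned state~$q$; thus \kl{interpretations} of the family $(\SOVar_q)_{q \in \TStateSet}$ that partition the \kl{domain} of $\StructReprP{\Picture}$ correspond bijectively to state assignments $\Range*{\PicHeight} \times \Range*{\PicWidth} \to \TStateSet$. The $\BFL$-\kl{formula} $\Formula[2]\Of{\FOVar[1]}$ asserts two things about the \kl{pixel} represented by $\FOVar[1]$: first, that $\FOVar[1]$ lies in exactly one of the sets $\SOVar_q$ (a Boolean combination of \kl{atomic} \kl{formulas} about $\FOVar[1]$ alone); and second, that each of the at most four $\TBoundarySymbol$-framed $(2 \times 2)$ windows having $\FOVar[1]$ among its four cells matches some \kl{tile} of $\TTileSet$. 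All four cells of any such window lie within \kl{distance}~$2$ of $\FOVar[1]$ in $\StructReprP{\Picture}$ — reachable through nested \kl(quantifier){bounded} \kl{quantification} of depth~$2$ along the two successor relations and their inverses — and whether a cell is a frame cell carrying $\TBoundarySymbol$ or a genuine \kl{pixel} is decided by the presence or absence of the relevant $\LinkRel{1}{\StructReprP{\Picture}}$- or $\LinkRel{2}{\StructReprP{\Picture}}$-\kl{neighbor}; the \kl{pixel} values are then read off the \kl{unary} relations $\BitSet{1}{\StructReprP{\Picture}}, \dots, \BitSet{\BitLen}{\StructReprP{\Picture}}$ and the states off the $\SOVar_q$'s. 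Since $\TTileSet$ is a fixed finite set, ``matches some \kl{tile}'' is a finite disjunction, so $\Formula[2]$ is a \kl{formula} of constant size in $\BFL$, and $\Formula[1]_{\TilingSystem}$ is indeed in $\mSigmaLFOL{1}$.

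Correctness is then routine: under the bijection between state assignments and \kl{domain}-partitioning \kl{interpretations} of $(\SOVar_q)_{q \in \TStateSet}$, a \kl{picture}~$\Picture$ \kl{satisfies} $\ForAll{\FOVar[1]} \, \Formula[2]\Of{\FOVar[1]}$ exactly when every $(2 \times 2)$-subblock of the $\TBoundarySymbol$-framed \kl{picture} matches a \kl{tile}, i.e.\ exactly when $\Picture$ is \kl(tiling system){accepted} by~$\TilingSystem$; hence $\Formula[1]_{\TilingSystem}$ \kl{defines} the \kl{picture property} \kl{recognized} by $\TilingSystem$. Chaining this with the first inclusion and Theorem~\ref{thm:equivalence-ts-emso} gives $\mSigmaLFO{1}\On{\PIC{\BitLen}} \subseteq \mSigmaFO{1}\On{\PIC{\BitLen}} = \TS\On{\PIC{\BitLen}} \subseteq \mSigmaLFO{1}\On{\PIC{\BitLen}}$, which is the claimed equality. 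The only slightly delicate point — the ``hard part'', such as it is — is getting the border bookkeeping in $\Formula[2]$ exactly right, so that it faithfully reproduces the $\TBoundarySymbol$-frame wrapped around $\Picture$ by the \kl{tiling system} semantics; in particular one has to treat the degenerate thin \kl{pictures} of \kl(picture){size} $\Tuple{1, \PicWidth}$ and $\Tuple{\PicHeight, 1}$, where the frame contributes most of the cells of every window. Everything else amounts to a purely mechanical transcription of a finite-state device into logic.
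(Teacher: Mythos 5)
Your proposal is correct and follows the same route as the paper's proof: reduce to showing $\TS\On{\PIC{\BitLen}} \subseteq \mSigmaLFO{1}\On{\PIC{\BitLen}}$ via Theorem~\ref{thm:equivalence-ts-emso}, then transcribe a tiling system $\TilingSystem = \Tuple{\TStateSet, \TTileSet}$ as $\ExistsRel{(\SOVar_q)_{q \in \TStateSet}} \, \ForAll{\FOVar[1]} \, \Formula[2]\Of{\FOVar[1]}$ where $\Formula[2]$ asserts unique state assignment and local tile legality (with the border case distinction handled by nested $\kl(quantifier){bounded}$ quantifiers of depth~$2$). The only cosmetic difference is that you spell out the $\mSigmaLFO{1} \subseteq \mSigmaFO{1}$ direction as an explicit relativization rewrite, whereas the paper just calls it a syntactic fragment — both amount to the same thing.
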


\begin{proof}
  Since $\mSigmaLFOL{1}$ can be seen as
  a syntactic fragment of $\mSigmaFOL{1}$,
  it suffices to show that
  $\TS\On{\PIC{\BitLen}} \subseteq
   \mSigmaLFO{1}\On{\PIC{\BitLen}}$
  and then apply Theorem~\ref{thm:equivalence-ts-emso}.
  We thus require a translation~$\Translation$
  from \kl{tiling systems}
  to $\mSigmaLFOL{1}$-\kl{sentences} on \kl{pictures}
  such that
  a \kl{picture}~$\Picture$ is \kl(tiling system){accepted}
  by a \kl{tiling system}~$\TilingSystem$
  if and only if
  its \kl(picture){structural representation}~$\StructReprP{\Picture}$
  \kl{satisfies}~$\Translation(\TilingSystem)$.

  By inspecting the proof of Theorem~\ref{thm:equivalence-ts-emso}
  in~{\cite[Thm.~3.1]{DBLP:journals/iandc/GiammarresiRST96}},
  it is easy to see that
  the $\mSigmaFOL{1}$\nobreakdash-\kl{sentence} provided there
  can be rewritten as an equivalent $\mSigmaLFOL{1}$-\kl{sentence},
  essentially by replacing \kl(quantifier){unbounded} \kl{first-order quantifiers}
  with their \kl(quantifier){bounded} counterparts.
  We therefore only give a high-level description of the construction.
  For $\TilingSystem = \Tuple{\TStateSet, \TTileSet}$,
  the \kl{formula}~$\Translation(\TilingSystem)$
  is of the form
  \begin{equation*}
    \ExistsRel{
      \Tuple{
        \MSOVar_{\TState}
      }_{
        \TState \in \TStateSet
      }
    } \,
    \ForAll{\FOVar[1]}
    \bigl(
      \Fml{OneState}\Of{\FOVar[1]}
      \AND
      \Fml{LegalTiling}\Of{\FOVar[1]}
    \bigr),
  \end{equation*}
  where
  each $\MSOVar_{\TState}$ is a \kl{unary} \kl{relation variable}
  intended to represent the set of \kl{pixels}
  in \kl(tiling system){state}~$\TState$,
  $\Fml{OneState}\Of{\FOVar[1]}$
  is a $\BFL$-\kl{formula} stating that
  exactly one \kl(tiling system){state}
  has been assigned to \kl{pixel}~$\FOVar[1]$,
  and $\Fml{LegalTiling}\Of{\FOVar[1]}$
  is another $\BFL$-\kl{formula} stating that
  each of the $(2 \times 2)$-subblocks containing \kl{pixel}~$\FOVar[1]$
  corresponds to some \kl{tile} of~$\TTileSet$.
  Since the boundary markers~$\TBoundarySymbol$ surrounding the \kl{picture}
  are not represented by any \kl{elements}
  in the \kl{structure}~$\StructReprP{\Picture}$,
  the \kl{formula} $\Fml{LegalTiling}\Of{\FOVar[1]}$
  performs a case distinction on whether $\FOVar[1]$ lies
  in one of the four corners,
  along one of the four borders,
  or somewhere else inside the \kl{picture}.
  This can be written as a conjunction of the form
  \begin{align*}
    \Fml{LegalTiling}\Of{\FOVar[1]} = \;
    & \Fml{TopLeft}\Of{\FOVar[1]} \, \AND \,
      \Fml{TopRight}\Of{\FOVar[1]} \, \AND \,
      \Fml{BottomLeft}\Of{\FOVar[1]} \, \AND \,
      \Fml{BottomRight}\Of{\FOVar[1]} \, \AND \, {} \\
    & \Fml{Top}\Of{\FOVar[1]} \, \AND \,
      \Fml{Bottom}\Of{\FOVar[1]} \, \AND \,
      \Fml{Left}\Of{\FOVar[1]} \, \AND \,
      \Fml{Right}\Of{\FOVar[1]} \, \AND \,
      \Fml{Inside}\Of{\FOVar[1]},
  \end{align*}
  where,
  for example,
  the conjunct $\Fml{TopLeft}\Of{\FOVar[1]}$ states that
  if $\FOVar[1]$ lies in the top-left corner
  (i.e., if it has neither a “vertical” nor a “horizontal” predecessor),
  then there must be some \kl{tile}
  $\bigl(
     \begin{smallmatrix}
       \TBoundarySymbol & \TBoundarySymbol \\
       \TBoundarySymbol & \String, \TState
     \end{smallmatrix}
   \bigr)
   \in \TTileSet$
  such that
  $\FOVar[1]$ has value~$\String$
  and lies in \kl(tiling system){state}~$\TState$.
  The other conjuncts are similar.
\end{proof}

By combining Proposition~\ref{prp:equivalence-lso-lmso}
(which weakens \kl{second-order quantification})
and Corollary~\ref{cor:equivalence-lemso-emso}
(which strengthens \kl{first-order quantification}),
we can now derive a partial levelwise equivalence between
\kl{local second-order logic} and \kl{monadic second-order logic}
on \kl{pictures}.

\begin{theorem}
  \label{thm:equivalence-lso-mso}
  When restricted to \kl{pictures},
  every level of the \kl{local second-order hierarchy}
  that ends with a block of existential quantifiers
  is equivalent to the corresponding level of
  the \kl{monadic second-order hierarchy}.
  That is,
  $\SigmaLFO{\Level}\On{\PIC{\BitLen}} =
   \mSigmaFO{\Level}\On{\PIC{\BitLen}}$
  if $\Level$ is odd,
  and
  $\PiLFO{\Level}\On{\PIC{\BitLen}} =
   \mPiFO{\Level}\On{\PIC{\BitLen}}$
  if $\Level$ is even,
  for all $\Level \in \Positives$ and $\BitLen \in \Naturals$.
\end{theorem}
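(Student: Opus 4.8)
The plan is to prove the theorem by induction on the alternation level $\Level$, establishing the two equalities $\SigmaLFO{\Level}\On{\PIC{\BitLen}} = \mSigmaFO{\Level}\On{\PIC{\BitLen}}$ (for $\Level$ odd) and $\PiLFO{\Level}\On{\PIC{\BitLen}} = \mPiFO{\Level}\On{\PIC{\BitLen}}$ (for $\Level$ even) simultaneously, using the already-available ingredients. By Proposition~\ref{prp:equivalence-lso-lmso}, it suffices to prove the equivalences between the \emph{monadic} local hierarchy and the monadic (non-local) hierarchy on \kl{pictures}, i.e.\ $\mSigmaLFO{\Level}\On{\PIC{\BitLen}} = \mSigmaFO{\Level}\On{\PIC{\BitLen}}$ for odd $\Level$ and $\mPiLFO{\Level}\On{\PIC{\BitLen}} = \mPiFO{\Level}\On{\PIC{\BitLen}}$ for even $\Level$. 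One inclusion is syntactic in each case: $\mSigmaLFOL{\Level}$ is a fragment of $\mSigmaFOL{\Level}$ and $\mPiLFOL{\Level}$ is a fragment of $\mPiFOL{\Level}$, so $\mSigmaLFO{\Level} \subseteq \mSigmaFO{\Level}$ and $\mPiLFO{\Level} \subseteq \mPiFO{\Level}$ always. The work is in the reverse inclusions.

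For the base case, $\Level = 1$ is odd, and we need $\mSigmaFO{1}\On{\PIC{\BitLen}} \subseteq \mSigmaLFO{1}\On{\PIC{\BitLen}}$. This is exactly Corollary~\ref{cor:equivalence-lemso-emso}, which in turn rests on Theorem~\ref{thm:equivalence-ts-emso} (the \kl{tiling system} characterization of $\mSigmaFO{1}$ on \kl{pictures}): a \kl{picture property} in $\mSigmaFO{1}$ is \kl{recognized} by a \kl{tiling system}, and a \kl{tiling system} is straightforwardly describable by an $\mSigmaLFOL{1}$-\kl{sentence} because the tiling condition is purely local (each \kl{pixel} only needs to inspect the $(2\times 2)$-subblocks containing it, which lie within Manhattan distance $1$).

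For the inductive step, suppose the equivalence holds at level $\Level - 1$ (with the appropriate parity), and consider level $\Level$. Take a \kl{formula} of the form $\ExistsRel{\MSOVar_1} \dots \ExistsRel{\MSOVar_n} \, \Formula'$ with $\Formula' \in \mPiFOL{\Level - 1}$ (the case $\Level$ odd; the case $\Level$ even is dual, swapping $\Sigma$ and $\Pi$). By the induction hypothesis applied to $\Formula'$, the \kl{picture property} \kl{defined} by $\Formula'$ on \kl{pictures} is also \kl{defined} by some $\mPiLFOL{\Level - 1}$-\kl{formula} $\Formula''$ — but one must be careful here, since $\Formula'$ has \kl{free} \kl{second-order variables} $\MSOVar_1, \dots, \MSOVar_n$, whereas the induction hypothesis is stated for \kl{sentences} defining \kl{picture properties}. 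The standard fix is to relativize: view $\MSOVar_1, \dots, \MSOVar_n$ as additional unary relations decorating the \kl{picture} (encoding them into a larger bit-width, i.e.\ passing to $\PIC{\BitLen + n}$), apply the induction hypothesis over this enriched class of \kl{pictures}, and then re-internalize the decorations as $\ExistsRel{}$-quantified sets. This yields an $\mSigmaLFOL{\Level}$-\kl{formula} $\ExistsRel{\MSOVar_1} \dots \ExistsRel{\MSOVar_n} \, \Formula''$ equivalent to the original on \kl{pictures} in $\PIC{\BitLen}$, completing the step.

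The main obstacle I expect is the bookkeeping around \kl{free} \kl{second-order variables} in the inductive step: making the "treat the quantified sets as extra \kl{picture} bits, apply IH, then re-quantify" argument airtight, especially verifying that the bit-width enlargement $\PIC{\BitLen} \to \PIC{\BitLen + n}$ is harmless (the translation in Proposition~\ref{prp:equivalence-lso-lmso} and the tiling-system argument in Corollary~\ref{cor:equivalence-lemso-emso} are uniform in $\BitLen$, so this is fine, but it needs to be said) and that the alternation level is preserved throughout — the inner translations from Proposition~\ref{prp:equivalence-lso-lmso} and Corollary~\ref{cor:equivalence-lemso-emso} are explicitly noted to preserve the alternation structure, so stacking them under an outer block of like-kind \kl{second-order quantifiers} stays within the same level. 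A minor subtlety is that one should only invoke Corollary~\ref{cor:equivalence-lemso-emso} at the \emph{innermost} existential block (the one adjacent to the \kl{first-order} part), since that is where \kl(quantifier){bounded} \kl{first-order quantification} needs to be upgraded to \kl(quantifier){unbounded}; the outer blocks are handled purely by the inductive hypothesis and carry no first-order quantifiers of their own.
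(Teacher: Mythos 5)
Your proposal is correct and matches the paper's proof in all essentials: induction on $\Level$, with Proposition~\ref{prp:equivalence-lso-lmso} first reducing to the monadic local vs.\ monadic comparison, Corollary~\ref{cor:equivalence-lemso-emso} (via \kl{tiling systems}) handling the base case $\Level = 1$, and the inductive step absorbing the outermost block of \kl{quantified} sets into the \kl{picture} bit-width ($\PIC{\BitLen} \to \PIC{\BitLen + n}$) so that the induction hypothesis can be applied to the inner \kl{sentence} of opposite parity. The only cosmetic difference is that you write out the $\Level$-odd case explicitly while the paper writes out the $\Level$-even case; both correctly note that Corollary~\ref{cor:equivalence-lemso-emso} is invoked only at the innermost level.
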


\begin{proof}
  We proceed by induction on~$\Level$.
  For $\Level = 1$,
  it suffices to apply Proposition~\ref{prp:equivalence-lso-lmso}
  and then Corollary~\ref{cor:equivalence-lemso-emso}, i.e.,
  \begin{equation*}
    \SigmaLFO{1}\On{\PIC{\BitLen}}
    \overset{\text{Prp.\ref{prp:equivalence-lso-lmso}}}{=}
    \mSigmaLFO{1}\On{\PIC{\BitLen}}
    \overset{\text{Cor.\ref{cor:equivalence-lemso-emso}}}{=}
    \mSigmaFO{1}\On{\PIC{\BitLen}}.
  \end{equation*}
  For $\Level \geq 2$,
  let us assume that $\Level$ is even,
  the other case being completely analogous.
  We have
  \begin{equation*}
    \PiLFO{\Level}\On{\PIC{\BitLen}}
    \overset{\text{Prp.\ref{prp:equivalence-lso-lmso}}}{=}
    \mPiLFO{\Level}\On{\PIC{\BitLen}}
    \; = \;
    \mPiFO{\Level}\On{\PIC{\BitLen}},
  \end{equation*}
  by first applying Proposition~\ref{prp:equivalence-lso-lmso}
  and then using the fact that
  $\mPiLFOL{\Level}$ and~$\mPiFOL{\Level}$
  are defined in terms of
  $\mSigmaLFOL{\Level - 1}$ and~$\mSigmaFOL{\Level - 1}$,
  for which the induction hypothesis already provides an equivalence.
  More precisely,
  if \,$\ForAllRel{\MSOVar_1} \dots \ForAllRel{\MSOVar_n}(\Formula)$
  is an $\mPiLFOL{\Level}$-\kl{sentence}
  evaluated on $\BitLen$-bit \kl{pictures},
  where $\Formula$ starts with a block of
  \kl{existential quantifiers} over sets,
  then $\Formula$ can be evaluated as
  an $\mSigmaLFOL{\Level - 1}$-\kl{sentence}
  on $(\BitLen + n)$-bit \kl{pictures}.
  The idea is simply to interpret each \kl{atomic} \kl{formula}
  $\InRel{\MSOVar_i}{\FOVar[1]}$ as
  $\BitTrue{\BitLen + i}{\FOVar[1]}$,
  for $i \in \Range[1]{n}$.
  The analogous observation holds for $\mPiFOL{\Level}$-\kl{sentences}
  on $\BitLen$-bit \kl{pictures},
  whose \kl{subformulas} can be interpreted as
  $\mSigmaFOL{\Level - 1}$-\kl{sentences}
  on $(\BitLen + n)$-bit \kl{pictures}.
  Since we already know that
  $\mSigmaLFO{\Level - 1}\On{\PIC{\BitLen + n}} =
   \mSigmaFO{\Level - 1}\On{\PIC{\BitLen + n}}$,
  this implies that
  $\mPiLFO{\Level}\On{\PIC{\BitLen}} =
   \mPiFO{\Level}\On{\PIC{\BitLen}}$.
\end{proof}

%...............................................................................
\subsubsection{From pictures to graphs}
\label{sssec:from-pictures-to-graphs}

With the partial levelwise equivalence
obtained in Theorem~\ref{thm:equivalence-lso-mso},
we can already transfer
part of the separation result of Matz, Schweikardt, and Thomas
(Theorem~\ref{thm:mso-hierarchy})
from \kl{monadic second-order logic} to \kl{local second-order logic},
while remaining in the realm of \kl{pictures}.
To further transfer the result from \kl{pictures} to \kl{graphs},
we now show how to \kl(graph){encode} $0$-bit \kl{pictures}
as \kl{graphs} of $4$-\kl{bounded structural degree},
and how to translate \kl{formulas} from one type of \kl{structure} to the~other.

\AP
The \intro{graph encoding} of the $0$-bit \kl{picture}~$\Picture$
of \kl(picture){size} $\Tuple{\PicHeight, \PicWidth}$
is a \kl{graph}~$\GraphEnc{\Picture}$
that represents each \kl{pixel} of~$\Picture$
by five \kl{nodes}:
one main \kl{node} ($\Tag{pxl}$),
and four auxiliary \kl{nodes}
($\Tag{in}_1$, $\Tag{in}_2$, $\Tag{out}_1$, $\Tag{out}_2$),
which can be thought of as the incoming and outgoing “ports” of the \kl{pixel}.
Each main \kl{node} is connected to its four ports,
and the ports of any adjacent \kl{pixels} are connected
in such a way as to represent the relations
$\LinkRel{1}{\StructReprP{\Picture}}$ and~%
$\LinkRel{2}{\StructReprP{\Picture}}$.
Formally,
$\GraphEnc{\Picture}$ is defined by
the set of \kl{nodes}
\begin{equation*}
  \NodeSet{\GraphEnc{\Picture}} =
  \Range*{\PicHeight} \times \Range*{\PicWidth} \,\times\,
  \Set{
    \Tag{pxl},
    \Tag{in}_1, \Tag{in}_2,
    \Tag{out}_1, \Tag{out}_2
  },
\end{equation*}
the set of \kl{edges}
\begin{align*}
  \EdgeSet{\GraphEnc{\Picture}} = \;
  &\bigSetBuilder{ \,
    \Set{\Tuple{i, j, \Tag{pxl}}, \Tuple{i, j, \Symbol}} \,
  }{ \,
    i \in \Range*{\PicHeight}, \:
    j \in \Range*{\PicWidth}, \:
    \Symbol \in
    \Set{
      \Tag{in}_1, \Tag{in}_2,
      \Tag{out}_1, \Tag{out}_2
    } \,
  }
  \\
  {} \cup \;
  &\bigSetBuilder{ \,
    \Set{\Tuple{i, j, \Tag{out}_1}, \Tuple{i + 1, j, \Tag{in}_1}} \,
  }{ \,
    i \in \Range*{\PicHeight}*, \:
    j \in \Range*{\PicWidth} \,
  }
  \\
  {} \cup \;
  &\bigSetBuilder{ \,
    \Set{\Tuple{i, j, \Tag{out}_2}, \Tuple{i, j + 1, \Tag{in}_2}} \,
  }{ \,
    i \in \Range*{\PicHeight}, \:
    j \in \Range*{\PicWidth}* \,
  },
\end{align*}
and the \kl{labeling} function
\begin{alignat*}{2}
  \Labeling{\GraphEnc{\Picture}} \colon \quad
  \Tuple{i, j, \Tag{pxl}}    \mapsto \EmptyString, \qquad
  \Tuple{i, j, \Tag{in}_1}  &\mapsto 00, &\qquad
  \Tuple{i, j, \Tag{out}_1} &\mapsto 10, \\
  \Tuple{i, j, \Tag{in}_2}  &\mapsto 01, &
  \Tuple{i, j, \Tag{out}_2} &\mapsto 11,
\end{alignat*}
\AP
where
$i \in \Range*{\PicHeight}$,\,
$j \in \Range*{\PicWidth}$,
and $\intro*\EmptyString$ denotes the empty string.
An example is provided in Figure~\ref{fig:picture-graph}.
Notice that $\GraphEnc{\Picture}$ is always of $4$-\kl{bounded structural degree}.

\begin{figure}[tb]
  \centering
  \input{fig/picture-graph.tex}
  \caption{
    The $0$-bit \kl{picture}~$\Picture$ of \kl(picture){size} $\Tuple{2, 2}$,
    its \kl{graph encoding}~$\GraphEnc{\Picture}$,
    and their respective \kl{structural representations}
    $\StructReprP{\Picture}$ and~$\StructRepr{\GraphEnc{\Picture}}$.
    We follow the same graphical conventions as
    in Figures~\ref{fig:graph} and~\ref{fig:picture}.
  }
  \label{fig:picture-graph}
\end{figure}

\begin{figure}[htb]
  \centering
  \begin{tikzpicture}[
    semithick,>=stealth',on grid,
    label distance=0,
    vertex/.style={draw,circle,minimum size=2ex,inner sep=0,font=\scriptsize},
    bit/.style={vertex},
    set/.style={fill=black,text=white},
  ]
  \def\nodeDist{13ex}
  \def\bitNodeDist{6.5ex}
  \node[vertex] (pxl)
       [label=below left:$\Tag{pxl}$] {};
  \node[vertex] (in1)  at ([shift={(+90:\nodeDist)}]pxl)
       [label=right:$\Tag{in}_1$] {};
  \node[vertex] (in2)  at ([shift={(180:\nodeDist)}]pxl)
       [label=below:$\Tag{in}_2$] {};
  \node[vertex] (out1) at ([shift={(-90:\nodeDist)}]pxl)
       [label=left:$\Tag{out}_1$] {};
  \node[vertex] (out2) at ([shift={(0:\nodeDist)}]pxl)
       [label=above:\;$\Tag{out}_2$] {};
  \node[bit]     (in1-1) at ([shift={(+155:\bitNodeDist)}]in1) {1};
  \node[bit]     (in1-2) at ([shift={(-155:\bitNodeDist)}]in1) {2};
  \node[bit]     (in2-1) at ([shift={(+115:\bitNodeDist)}]in2) {1};
  \node[bit,set] (in2-2) at ([shift={(+65:\bitNodeDist)}]in2) {2};
  \node[bit,set] (out1-1) at ([shift={(+25:\bitNodeDist)}]out1) {1};
  \node[bit]     (out1-2) at ([shift={(-25:\bitNodeDist)}]out1) {2};
  \node[bit,set] (out2-1) at ([shift={(-115:\bitNodeDist)}]out2) {1};
  \node[bit,set] (out2-2) at ([shift={(-65:\bitNodeDist)}]out2) {2};
  \path[<->] (pxl) edge (in1)
                   edge (in2)
                   edge (out1)
                   edge (out2);
  \path[->] (in1-1) edge (in1-2)
            (in2-1) edge (in2-2)
            (out1-1) edge (out1-2)
            (out2-1) edge (out2-2);
  \path[->,densely dotted]
            (in1) edge (in1-1)
                  edge (in1-2)
            (in2) edge (in2-1)
                  edge (in2-2)
            (out1) edge (out1-1)
                   edge (out1-2)
            (out2) edge (out2-1)
                   edge (out2-2);
\end{tikzpicture}

%%% Local Variables:
%%% mode: latex
%%% TeX-master: "../lph-paper"
%%% End:
  \caption{
    The \kl{gadget}~$\Gadget$ representing
    a single \kl{pixel} whose value is the empty string.
    $\Gadget$ occurs four times
    in the \kl{structure}~$\StructRepr{\GraphEnc{\Picture}}$
    shown in Figure~\ref{fig:picture-graph}.
    We follow the same graphical conventions as in Figure~\ref{fig:graph}.
    The names of the \kl{elements} serve explanatory purposes only
    and are not part of the~\kl{gadget}.
  }
  \label{fig:gadget}
\end{figure}

\AP
If we look at
the \kl{graph encoding}'s
\kl{structural representation}~$\StructRepr{\GraphEnc{\Picture}}$
(also illustrated in Figure~\ref{fig:picture-graph}),
we see that each \kl{pixel} is represented by
the \kl{gadget} shown in Figure~\ref{fig:gadget}.
Formally,
the \intro{gadget}~$\intro*\Gadget$
representing any \kl{pixel} $\Tuple{i, j}$
of a $0$-bit \kl{picture}~$\Picture$
corresponds to the \kl{structural representation}
of the \kl{subgraph} of~$\GraphEnc{\Picture}$
that is \kl{induced} by
$\Set{i} \times \Set{j} \times
 \Set{
   \Tag{pxl},
   \Tag{in}_1, \Tag{in}_2,
   \Tag{out}_1, \Tag{out}_2
 }$.
It is convenient to identify the \kl{domain} of~$\Gadget$
with the set
\begin{equation*}
  \Domain{\Gadget} = \,
  \Set{
    \Tag{pxl},
    \Tag{in}_1, \Tag{in}_2,
    \Tag{out}_1, \Tag{out}_2
  }
  \: \cup \,
  \bigl(
    \Set{
      \Tag{in}_1, \Tag{in}_2,
      \Tag{out}_1, \Tag{out}_2
    } \times \Range*{2}
  \bigr),
\end{equation*}
where
the \kl{elements} in
$\Set{
   \Tag{pxl},
   \Tag{in}_1, \Tag{in}_2,
   \Tag{out}_1, \Tag{out}_2
 }$
represent \kl{nodes},
and the remaining \kl{elements} represent \kl{labeling bits}.
Considering Cartesian products to be associative,
this allows us to identify the \kl{domain}
of the entire \kl{structure}~$\StructRepr{\GraphEnc{\Picture}}$
with the set of \kl{elements}
$\Domain{\StructRepr{\GraphEnc{\Picture}}} =
 \Domain{\StructReprP{\Picture}} \times \Domain{\Gadget}$,
where the first component specifies the \kl{pixel},
and the second component specifies the \kl{gadget} \kl{element}.

The following lemma states
that the expressive power of the \kl{local second-order hierarchy}
remains essentially the same
whether we consider \kl{pictures} or \kl{graph encodings} of \kl{pictures}.
This is because we can translate \kl{formulas}
from one type of \kl{structure} to the other
without changing the alternation level of \kl{second-order quantifiers}.
Consequently,
any separation result established for \kl{pictures}
carries over to \kl{graphs}.

\begin{lemma}
  \label{lem:translation-picture-graph}
  Let $\Level \in \Naturals$.
  \begin{enumerate}
  \item \label{itm:forward-translation}
    For every
    $\SigmaLFOL{\Level}$-\kl{sentence}~$\Formula$
    evaluated on $0$-bit \kl{pictures},
    there is a
    $\SigmaLFOL{\Level}$-\kl{sentence}~$\Formula'$
    evaluated on \kl{graphs}
    such that
    $\StructReprP{\Picture} \Satisfies \Formula$
    if and only if
    $\StructRepr{\GraphEnc{\Picture}} \Satisfies \Formula'$
    for all $\Picture \in \PIC{0}$.
  \item \label{itm:backward-translation}
    Conversely,
    for every
    $\SigmaLFOL{\Level}$-\kl{sentence}~$\Formula$
    evaluated on \kl{graphs},
    there is a
    $\SigmaLFOL{\Level}$-\kl{sentence}~$\Formula'$
    evaluated on $0$-bit \kl{pictures}
    such that
    $\StructRepr{\GraphEnc{\Picture}} \Satisfies \Formula$
    if and only if
    $\StructReprP{\Picture} \Satisfies \Formula'$
    for all $\Picture \in \PIC{0}$.
  \end{enumerate}
  The analogous statements hold for $\PiLFOL{\Level}$-\kl{sentences}.
\end{lemma}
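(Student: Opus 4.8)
The plan is to treat this as a pair of standard logical interpretations (transductions) between the two classes of structures, built on the fact that $\Domain{\StructRepr{\GraphEnc{\Picture}}}$ decomposes as $\Domain{\StructReprP{\Picture}} \times \Domain{\Gadget}$, where $\Domain{\Gadget}$ is a fixed set of $13$ \kl{elements} whose induced relations are \emph{constant} (which \kl{elements} are \kl{nodes} and which are \kl{labeling bits}, which ones lie in $\BitSet{1}{}$, and the within-\kl{gadget} edges), and whose only cross-\kl{gadget} links are the four port-to-port edges connecting a \kl{pixel} to its vertically and horizontally adjacent neighbors. Both translations keep \kl{second-order variables} at their original \kl{arity} and preserve the block structure of the \kl{second-order} prefix, so that $\SigmaLFOL{\Level}$ (resp.\ $\PiLFOL{\Level}$) is sent to $\SigmaLFOL{\Level}$ (resp.\ $\PiLFOL{\Level}$); the only delicate point is to check that all \kl{first-order quantification} stays \kl(quantifier){bounded}, i.e.\ that the $\BFL$-kernel is translated to a $\BFL$-kernel.

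For part~\ref{itm:forward-translation}, I would first write a $\BFL$-\kl{formula} that holds in $\StructRepr{\GraphEnc{\Picture}}$ exactly at the \kl{elements} $\Tuple{i,j,\Tag{pxl}}$, namely the \kl{nodes} owning no \kl{labeling bit} (every port carries a two-bit \kl{label}), and $\BFL$-\kl{formulas} expressing, for two such \kl{elements}, the \kl{picture}-adjacencies ``$\FOVar[1] \Linked{1} \FOVar[2]$'' and ``$\FOVar[1] \Linked{2} \FOVar[2]$'' via the port pattern (the $\Tag{out}_1$-port of $\FOVar[1]$, recognizable by its \kl{label}~$10$, is \kl{adjacent} to the $\Tag{in}_1$-port of $\FOVar[2]$, recognizable by its \kl{label}~$00$; analogously with \kl{labels}~$11$ and~$01$ for the horizontal successor). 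All of this stays within a constant-radius \kl{neighborhood} of $\FOVar[1]$, hence in $\BFL$. The translation $\Formula'$ of a $\SigmaLFOL{\Level}$-\kl{sentence}~$\Formula$ is obtained by relativizing the leading \kl{quantifier} $\ForAll{\FOVar[1]}$ to the $\Tag{pxl}$-\kl{elements}, relativizing every \kl(quantifier){bounded} \kl{first-order quantifier} in the kernel to $\Tag{pxl}$-\kl{elements} reachable along the port pattern (a $\BFL$-\kl{quantifier} whose nesting radius is a constant multiple of the original), replacing each atom $\FOVar[1] \Linked{i} \FOVar[2]$ by its port-theoretic encoding, and leaving the \kl{second-order quantifiers} untouched. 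No relativization of the relations is needed: the kernel only ever queries tuples of $\Tag{pxl}$-\kl{elements}, so an existential choice can be met by the embedded \kl{picture}-relation while a universal choice is unaffected by the extra tuples. A model of~$\Formula$ on $\StructReprP{\Picture}$ transports to a model of~$\Formula'$ on $\StructRepr{\GraphEnc{\Picture}}$ through the \kl{pixel}-to-$\Tag{pxl}$ embedding, and conversely; since distinct \kl{pixels} give distinct $\Tag{pxl}$-\kl{elements}, this yields the claimed equivalence, and $\Formula'$ is again a $\SigmaLFOL{\Level}$-\kl{sentence}.

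For part~\ref{itm:backward-translation}, I would simulate $\StructRepr{\GraphEnc{\Picture}}$ inside $\StructReprP{\Picture}$ by representing a \kl{graph}-\kl{element} $\Tuple{i,j,g}$ as the \kl{pixel} $\Tuple{i,j}$ tagged with the constant $g \in \Domain{\Gadget}$. Each \kl{first-order quantifier} over \kl{graph}-\kl{elements} is replaced by a \kl{first-order quantifier} over \kl{pixels} together with a finite case distinction over~$g$ (size $\Card{\Domain{\Gadget}}$); in each branch~$g$ is known, so $\BitTrue{1}{}$-atoms become the truth constants $\True$ or $\False$, equality becomes \kl{pixel}-equality conditioned on the tags agreeing, and a $\Linked{i}$-atom becomes a condition on whether the two \kl{pixels} coincide or are appropriately \kl{adjacent}, with the relevant boundary checks (presence or absence of a vertical/horizontal predecessor or successor), all in $\BFL$ on \kl{pictures}. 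A \kl(quantifier){bounded} \kl{graph}-\kl{quantifier} $\ExistsNb{\FOVar[2]}{\FOVar[1]}$ becomes a radius-$1$ \kl(quantifier){bounded} \kl{picture}-\kl{quantifier} — every $\NeighborRel{\StructRepr{\GraphEnc{\Picture}}}$-neighbor of $\Tuple{i,j,g}$ sits in \kl{pixel} $\Tuple{i,j}$ or an orthogonally adjacent \kl{pixel} — decorated with the tag case distinction and the adjacency tests. Each $\Arity$-\kl{ary} \kl{second-order quantifier} over \kl{graph}-\kl{elements} is replaced by a block of $\Card{\Domain{\Gadget}}^{\Arity}$ many $\Arity$-\kl{ary} \kl{second-order quantifiers} over \kl{pixels}, one slice per tag-tuple; this is a bijective re-encoding of \kl{graph}-\kl{element} relations as families of \kl{pixel} relations, so the alternation level is preserved. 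The equivalence follows because this is a faithful interpretation of $\StructRepr{\GraphEnc{\Picture}}$ in $\StructReprP{\Picture}$.

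Finally, the $\PiLFOL{\Level}$ statements follow from the identical constructions, since nothing above is sensitive to whether the outermost \kl{second-order} block is \kl{existential} or \kl{universal}. The main obstacle is bookkeeping rather than idea: in part~\ref{itm:backward-translation} one must carry out the $\Domain{\Gadget}$-indexed case distinctions for the \kl{first-order} part while correctly treating the \kl{picture} borders, and verify that the radius-$1$ \kl{picture}-\kl{quantifiers} really capture all \kl{graph}-adjacencies of a tagged \kl{pixel}; and in both directions one must double-check that no step introduces an \kl(quantifier){unbounded} \kl{first-order quantifier}, so that the $\BFL$-kernel — and hence the level in the hierarchy — is preserved.
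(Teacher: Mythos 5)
Your proposal is correct and follows essentially the same approach as the paper: in the forward direction you relativize first-order quantification to $\Tag{pxl}$-elements (rewriting $\Linked{i}$-atoms through the port pattern) and leave second-order quantifiers intact, observing that the kernel only ever queries $\Tag{pxl}$-tuples so no relativization of relations is needed; in the backward direction you tag each first-order variable with an element of $\Domain{\Gadget}$ (the paper calls this a ``virtual variable assignment''), branch on the tag, and split each $\Arity$-ary relation variable into $\Card{\Domain{\Gadget}}^{\Arity}$ pixel-level slices. Both translations are verified to stay within $\BFL$ and to preserve the second-order prefix, exactly as in the paper's proof.
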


\begin{proof}
  The forward direction
  (Statement~\ref{itm:forward-translation})
  is straightforward.
  \kl{First-order quantification} on
  $\StructRepr{\GraphEnc{\Picture}}$
  is relativized to \kl{elements} of
  $\Domain{\StructReprP{\Picture}} \times \Set{\Tag{pxl}}$,
  which correspond to the central \kl{element} of the \kl{gadget},
  and \kl{atomic} \kl{formulas} that refer to the relations
  $\LinkRel{1}{\StructReprP{\Picture}}$
  and
  $\LinkRel{2}{\StructReprP{\Picture}}$
  of~$\StructReprP{\Picture}$
  are rewritten in terms of the representation
  of these relations in~$\StructRepr{\GraphEnc{\Picture}}$.
  There is no need to explicitly relativize \kl{second-order quantification},
  since relations can only be evaluated for \kl{elements}
  represented by \kl{first-order variables} anyway.

  We start by defining some helper \kl{formulas}.
  The \kl{formula}
  \begin{equation*}
    \Fml{IsPixel}\Of{\FOVar[1]} =
    \NOT \ExistsNb{\FOVar[2]}{\FOVar[1]} \,
    \bigl(
      \FOVar[1] \Linked{2} \FOVar[2]
      \, \OR \,
      \FOVar[2] \Linked{2} \FOVar[1]
    \bigr)
  \end{equation*}
  states that $\FOVar[1]$ is a “\kl{pixel} center”,
  corresponding to \kl{element} $\Tag{pxl}$ of the \kl{gadget}.
  For $i \in \Set{1, 2}$,
  the \kl{formulas}
  \begin{align*}
    \Fml{HasBitFalse}_i\Of{\FOVar[1]} &= \,
    \ExistsNb{\FOVar[2]}{\FOVar[1]} \,
    \bigl(
      \Fml{AtBitPos}_i\Of{\FOVar[1], \FOVar[2]}
      \, \AND \,
      \NOT \BitTrue{1}{\FOVar[2]}
    \bigr) \quad \text{and}\\
    \Fml{HasBitTrue}_i\Of{\FOVar[1]} &= \,
    \ExistsNb{\FOVar[2]}{\FOVar[1]} \,
    \bigl(
      \Fml{AtBitPos}_i\Of{\FOVar[1], \FOVar[2]}
      \, \AND \,
      \BitTrue{1}{\FOVar[2]}
    \bigr)
  \end{align*}
  state that
  the $i$-th \kl{labeling bit} of \kl{node}~$\FOVar[1]$ has value~$0$ and~$1$,
  respectively,
  where the \kl{subformulas}
  \begin{equation*}
    \Fml{AtBitPos}_i\Of{\FOVar[1], \FOVar[2]} =
    \begin{cases*}
      \FOVar[1] \Linked{2} \FOVar[2]
      \: \AND \,
      \ExistsNb{\FOVar[3]}{\FOVar[1]} \, (\FOVar[2] \Linked{1} \FOVar[3])
      & if $i = 1$,
      \\
      \FOVar[1] \Linked{2} \FOVar[2]
      \: \AND \,
      \ExistsNb{\FOVar[3]}{\FOVar[1]} \, (\FOVar[3] \Linked{1} \FOVar[2])
      & if $i = 2$
    \end{cases*}
  \end{equation*}
  identify~$\FOVar[2]$ as the $i$-th \kl{labeling bit} of~$\FOVar[1]$.
  Building on that,
  \begin{equation*}
    \Fml{IsIn}_1\Of{\FOVar[1]} =
    \Fml{HasBitFalse}_{1}\Of{\FOVar[1]}
    \AND
    \Fml{HasBitFalse}_{2}\Of{\FOVar[1]}
  \end{equation*}
  states that $\FOVar[1]$ is a “vertical input port”,
  corresponding to \kl{element} $\Tag{in}_1$ of the \kl{gadget}.
  Similarly,
  we define
  $\Fml{IsIn}_2\Of{\FOVar[1]}$,
  $\Fml{IsOut}_1\Of{\FOVar[1]}$, and
  $\Fml{IsOut}_2\Of{\FOVar[1]}$,
  which identify the
  “horizontal input port”,
  “vertical output port”, and
  “horizontal output port”,
  respectively.

  We now show by structural induction
  that there is a translation~$\Translation$
  from \kl{formulas} on $0$-bit \kl{pictures}
  to \kl{formulas} on \kl{graphs}
  such that
  for every \kl{formula}~$\Formula$,
  every \kl{picture} $\Picture \in \PIC{0}$,
  and every \kl{variable assignment}~$\Assignment$
  of $\Free(\Formula[1])$ on~$\StructReprP{\Picture}$,
  we have
  $\StructReprP{\Picture}, \Assignment \Satisfies \Formula[1]$
  if and only if
  $\StructRepr{\GraphEnc{\Picture}}, \Assignment' \Satisfies \Translation(\Formula[1])$.
  Here,
  $\Assignment'$ is the \kl{variable assignment}
  of $\Free(\Formula[1])$ on~$\StructRepr{\GraphEnc{\Picture}}$
  such that
  $\Assignment'(\FOVar[1]) =
   \bigTuple{\Assignment(\FOVar[1]), \Tag{pxl}}$
  for all $\FOVar[1] \in \FreeFO(\Formula[1])$,
  and
  \begin{equation*}
    \Assignment'(\SOVar) =
    \bigSetBuilder{
      \bigTuple{
        \Tuple{\Element_1, \Tag{pxl}},
        \dots,
        \Tuple{\Element_{\Arity}, \Tag{pxl}}
      }
    }{
      \Tuple{\Element_1, \dots, \Element_{\Arity}} \in \Assignment(\SOVar)
    }
  \end{equation*}
  for all $\SOVar \in \FreeSO(\Formula[1])$
  of \kl{arity} $\Arity \in \Positives$.

  \begin{itemize}
  \item To express that one \kl{pixel} of~$\Picture$
    is the “vertical” or “horizontal” successor of another,
    we state that
    the appropriate ports of the corresponding \kl{nodes} of~$\GraphEnc{\Picture}$
    are connected, i.e.,
    \begin{equation*}
      \Translation(\FOVar[1] \Linked{i} \FOVar[2]) \, = \,
      \ExistsLoc{\FOVar[3]_1, \FOVar[3]_2}{2}{\FOVar[1]}
      \bigl(
        \Fml{IsOut}_i\Of{\FOVar[3]_1}
        \AND
        \Fml{IsIn}_i\Of{\FOVar[3]_2}
        \AND
        \FOVar[1] \Linked{1} \FOVar[3]_1
                  \Linked{1} \FOVar[3]_2
                  \Linked{1} \FOVar[2]
      \bigr).
    \end{equation*}
  \item \kl{Atomic} \kl{formulas} for equality and other relations
    are kept unchanged:
    $\Translation(\FOVar[1] \Equal \FOVar[2]) =
     (\FOVar[1] \Equal \FOVar[2])$,
    and
    $\Translation
     \bigl(
       \InRel{\SOVar}{\FOVar[1]_1,\dots,\FOVar[1]_{\Arity}}
     \bigr) =
     \InRel{\SOVar}{\FOVar[1]_1,\dots,\FOVar[1]_{\Arity}}$
    for any \kl{second-order variable}~$\SOVar$
    of \kl{arity} $\Arity \in \Positives$.
  \item \kl{Boolean connectives} are also preserved:
    $\Translation(\NOT \Formula[1]) =
     \NOT \Translation(\Formula[1])$,
    and
    $\Translation(\Formula[1] \OR \Formula[2]) =
     \Translation(\Formula[1]) \OR \Translation(\Formula[2])$.
  \item \kl{First-order quantifiers} are relativized to “\kl{pixel} centers”.
    More precisely,
    the \kl(quantifier){unbounded} \kl{universal quantifier}
    at the outermost scope of an $\LFOL$-\kl{formula}
    is translated by
    $\Translation(\ForAll{\FOVar[1]} \, \Formula[1]) =
     \ForAll{\FOVar[1]}
     \bigl(
       \Fml{IsPixel}\Of{\FOVar[1]} \IMP \Translation(\Formula[1])
     \bigr)$.
    For \kl(quantifier){bounded} \kl{first-order quantifiers},
    the idea is the same,
    but we have to take into account
    that two adjoining “\kl{pixel} centers”
    lie at a distance of~$3$ from each other,
    thus
    $\Translation(\ExistsNb{\FOVar[2]}{\FOVar[1]} \; \Formula[1]) \, = \,
     \ExistsLoc{\FOVar[2]}{3}{\FOVar[1]}
     \bigl(
       \FOVar[2] \NotEqual \FOVar[1] \AND
       \Fml{IsPixel}\Of{\FOVar[2]} \AND
       \Translation(\Formula[1])
     \bigr)$.
  \item \kl{Second-order quantifiers} are not affected by the translation, i.e.,
    $\Translation(\QuantifierRel{\SOVar} \, \Formula[1]) =
     \QuantifierRel{\SOVar} \: \Translation(\Formula[1])$
    for any \kl{second-order variable}~$\SOVar$
    of \kl{arity} $\Arity \in \Positives$,
    where $\QuantifierRel{}$~is either $\ExistsRel{}$ or~$\ForAllRel{}$.
    We explain why this works in the case of \kl{existential quantifiers};
    the argument for \kl{universal quantifiers} is completely analogous.
    Specifically,
    $\StructReprP{\Picture}\!, \Assignment
     \Satisfies \ExistsRel{\SOVar} \, \Formula[1]$
    is equivalent to the existence of
    $\Relation \subseteq (\Domain{\StructReprP{\Picture}})^{\Arity}$
    such that
    $\StructReprP{\Picture}\!, \Version{\Assignment}{\SOVar}{\Relation}
     \Satisfies \Formula[1]$,
    which by induction is equivalent to the existence of
    $\Relation' \, \subseteq \,
     (\Domain{\StructReprP{\Picture}} \! \times \Set{\Tag{pxl}})^{\Arity}$
    such that
    $\StructRepr{\GraphEnc{\Picture}},
     \Version{\Assignment'}{\SOVar}{\Relation'}
     \Satisfies \Translation(\Formula[1])$,
    where
    $\Assignment'$~is defined as above.
    This in turn is equivalent to the existence of
    \begin{equation*}
      \Relation' \, \subseteq \,
      \bigl(
        \Domain{\StructReprP{\Picture}} \! \times \Set{\Tag{pxl}}
      \bigr)^{\Arity}
      \qquad \text{and} \qquad
      \Relation'' \subseteq \,
      \bigl(
        \Domain{\StructRepr{\GraphEnc{\Picture}}}
      \bigr)^{\Arity}
      \setminus
      \bigl(
        \Domain{\StructReprP{\Picture}} \! \times \Set{\Tag{pxl}}
      \bigr)^{\Arity}
    \end{equation*}
    such that
    $\StructRepr{\GraphEnc{\Picture}},
     \Version{\Assignment'}{\SOVar}{\Relation' \cup \Relation''}
     \Satisfies \Translation(\Formula[1])$,
    where $\Relation''$ is irrelevant
    because we ensure that all \kl{first-order variables}
    refer to “\kl{pixel} centers”.
    Finally,
    since $\Relation' \cup \Relation''$ can be any $\Arity$-\kl{ary} relation on
    $\Domain{\StructRepr{\GraphEnc{\Picture}}}$\!,
    the last condition is equivalent to
    $\StructRepr{\GraphEnc{\Picture}}, \Assignment'
     \Satisfies \ExistsRel{\SOVar} \: \Translation(\Formula[1])$.
  \end{itemize}

  Note that our translation preserves
  the alternation level of \kl{second-order quantifiers}
  and the fact that there is exactly one
  \kl(quantifier){unbounded} \kl{universal quantifier}
  nested directly below the \kl{second-order quantifiers}.
  That is,
  if $\Formula[1]$ is a
  $\SigmaLFOL{\Level}$- or $\PiLFOL{\Level}$-\kl{formula},
  then so is~$\Translation(\Formula[1])$.
  Hence,
  Statement~\ref{itm:forward-translation} corresponds to
  the special case of the induction hypothesis where
  $\Formula[1]$ does not have any \kl{free} \kl{variables}.

  \bigskip

  The backward direction
  (Statement~\ref{itm:backward-translation})
  is a bit more tedious because
  $\StructRepr{\GraphEnc{\Picture}}$
  has $\CardS{\Gadget}$ times
  as many \kl{elements} as~$\StructReprP{\Picture}$.
  To simulate the additional \kl{elements}
  when translating a \kl{formula}~$\Formula[1]$
  from \kl{graphs} to \kl{pictures},
  we introduce a “virtual \kl{variable assignment}”
  $\VirtualAssignment \colon \FreeFO(\Formula[1]) \to \Domain{\Gadget}$
  that tells us for each \kl{free} \kl{first-order variable} of~$\Formula[1]$
  to which \kl{element} of the \kl{gadget} it corresponds.
  In combination with the actual \kl{variable assignment}~$\Assignment'$
  on~$\StructReprP{\Picture}$,
  which tells us the corresponding \kl{pixel},
  this allows us to reference every \kl{element} of
  $\StructRepr{\GraphEnc{\Picture}}$.
  Our translation is thus parameterized by~$\VirtualAssignment$.
  We handle \kl{first-order quantification}
  by combining actual \kl{quantification} with
  a case distinction over all possible values of~$\VirtualAssignment$,
  and \kl{second-order quantification}
  by representing
  each $\Arity$-\kl{ary} \kl{relation variable}~$\SOVar$ of~$\Formula[1]$
  by a collection of \kl{variables}
  $\SOVar_{\Tuple{\Element[1]_1, \dots, \Element[1]_{\Arity}}}$,
  one for each $\Arity$-tuple
  $\Tuple{\Element[1]_1, \dots, \Element[1]_{\Arity}}$
  of \kl{gadget} \kl{elements}.

  Formally,
  we show by structural induction
  that there is a parameterized translation~$\Translation_{\VirtualAssignment}$
  from \kl{formulas} on \kl{graphs}
  to \kl{formulas} on $0$-bit \kl{pictures}
  such that
  for every \kl{formula}~$\Formula$,
  every \kl{picture} $\Picture \in \PIC{0}$,
  and every \kl{variable assignment}~$\Assignment$
  of $\Free(\Formula[1])$ on~$\StructRepr{\GraphEnc{\Picture}}$,
  we have
  $\StructRepr{\GraphEnc{\Picture}}, \Assignment \Satisfies \Formula[1]$
  if and only if
  $\StructReprP{\Picture}, \Assignment'
   \Satisfies \Translation_{\VirtualAssignment}(\Formula[1])$.
  Here,
  $\Assignment'$ is the \kl{variable assignment}
  of $\Free(\Formula[1])$ on~$\StructReprP{\Picture}$
  and
  $\VirtualAssignment$ is the “virtual \kl{variable assignment}”
  $\FreeFO(\Formula[1]) \to \Domain{\Gadget}$
  such that
  $\Assignment(\FOVar[1]) =
   \bigTuple{\Assignment'(\FOVar[1]), \VirtualAssignment(\FOVar[1])}$
  for all $\FOVar[1] \in \FreeFO(\Formula[1])$,
  and
  \begin{equation*}
    \Assignment(\SOVar) \, = \,
    \smashoperator{
      \bigcup_{
        \hspace{8.5ex}
        \Element[1]_1, \dots, \Element[1]_{\Arity}
        \inS \Gadget
      }
    } \;
    \bigSetBuilder{
      \bigTuple{
        \Tuple{\Element[2]_1, \Element[1]_1},
        \dots,
        \Tuple{\Element[2]_{\Arity}, \Element[1]_{\Arity}}
      }
    }{
      \Tuple{\Element[2]_1, \dots, \Element[2]_{\Arity}}
      \in \Assignment'(\SOVar_{\Tuple{\Element[1]_1, \dots, \Element[1]_{\Arity}}})
    }
  \end{equation*}
  for all $\SOVar \in \FreeSO(\Formula[1])$
  of \kl{arity} $\Arity \in \Positives$.

  \begin{itemize}
  \item To express that an \kl{element} of~$\StructRepr{\GraphEnc{\Picture}}$
    lies in the set~$\BitSet{1}{\StructRepr{\GraphEnc{\Picture}}}$,
    we state that
    the corresponding \kl{pixel} of~$\StructReprP{\Picture}$
    is mapped by~$\VirtualAssignment$
    to a \kl{gadget} \kl{element}
    that lies in~$\BitSet{1}{\Gadget}$.
    Hence,
    \begin{equation*}
      \Translation_{\VirtualAssignment}
      \bigl(
        \BitTrue{1}{\FOVar[1]}
      \bigr) =
      \begin{cases*}
        \True
        & if $\VirtualAssignment(\FOVar[1]) \in \BitSet{1}{\Gadget}$,
        \\
        \False
        & otherwise.
      \end{cases*}
    \end{equation*}
  \item To express that two \kl{elements} of~$\StructRepr{\GraphEnc{\Picture}}$
    are connected by
    the relation~$\LinkRel{i}{\StructRepr{\GraphEnc{\Picture}}}$,
    we need to distinguish
    the case where they belong to the same \kl{gadget}
    from the case where they are in two adjacent \kl{gadgets}.
    In the first case,
    the corresponding \kl{pixels} of~$\StructReprP{\Picture}$
    must coincide and be mapped by~$\VirtualAssignment$
    to two \kl{gadget} \kl{elements}
    that are connected by~$\LinkRel{i}{\Gadget}$.
    In the second case,
    the connection must necessarily be a “$\Linked{1}$”-link
    from an “input port” to an “output port”,
    or vice versa,
    and the corresponding \kl{pixels} of~$\StructReprP{\Picture}$
    must be connected accordingly
    by the “vertical” or “horizontal” successor relation:
    \begin{equation*}
      \Translation_{\VirtualAssignment}
      \bigl(
        \FOVar[1] \Linked{i} \FOVar[2]
      \bigr) =
      \begin{cases*}
        \FOVar[1] \Equal \FOVar[2]
        & if $\VirtualAssignment(\FOVar[1])
              \LinkRel{i}{\Gadget} \!
              \VirtualAssignment(\FOVar[2])$,
        \\
        \FOVar[1] \Linked{j} \FOVar[2]
        & if $i = 1$,\,
          $\VirtualAssignment(\FOVar[1]) = \Tag{out}_j$,\, and\,
          $\VirtualAssignment(\FOVar[2]) = \Tag{in}_j$,\,
          where $j \in \Set{1, 2}$,
        \\
        \FOVar[2] \Linked{j} \FOVar[1]
        & if $i = 1$,\,
          $\VirtualAssignment(\FOVar[1]) = \Tag{in}_j$,\, and\,
          $\VirtualAssignment(\FOVar[2]) = \Tag{out}_j$,\,
          where $j \in \Set{1, 2}$,
        \\
        \False
        & otherwise.
      \end{cases*}
    \end{equation*}
  \item In order for
    two \kl{elements} of~$\StructRepr{\GraphEnc{\Picture}}$ to be equal,
    they must correspond to the same \kl{pixel} of~$\StructReprP{\Picture}$
    and the same \kl{element} of the \kl{gadget}, i.e.,
    \begin{equation*}
      \Translation_{\VirtualAssignment}
      \bigl(
        \FOVar[1] \Equal \FOVar[2]
      \bigr) =
      \begin{cases*}
        \FOVar[1] \Equal \FOVar[2]
        & if $\VirtualAssignment(\FOVar[1]) = \VirtualAssignment(\FOVar[2])$,
        \\
        \False
        & otherwise.
      \end{cases*}
    \end{equation*}
  \item To express that
    $\Arity$ \kl{elements} of~$\StructRepr{\GraphEnc{\Picture}}$
    are $\SOVar$-related,
    for some \kl{second-order variable}~$\SOVar$
    of \kl{arity}~$\Arity$,
    we state that the corresponding \kl{pixels} of~$\StructReprP{\Picture}$
    are related by the appropriate copy of~$\SOVar$,
    which is determined by the \kl{gadget} \kl{elements}
    that $\VirtualAssignment$ assigns to each \kl{pixel}.
    That is,
    $\Translation_{\VirtualAssignment}
     \bigl(
       \InRel{\SOVar}{\FOVar[1]_1, \dots, \FOVar[1]_{\Arity}}
     \bigr) =
     \InRel{
       \SOVar_{
         \Tuple{\VirtualAssignment(\FOVar[1]_1), \dots,
                \VirtualAssignment(\FOVar[1]_{\Arity})}
       }
     }{
       \FOVar[1]_1, \dots, \FOVar[1]_{\Arity}
     }$.
  \item \kl{Boolean connectives} are preserved:
    $\Translation_{\VirtualAssignment}
     \bigl(
       \NOT \Formula[2]
     \bigr) =
     \NOT \Translation_{\VirtualAssignment}(\Formula[2])$,
    and
    $\Translation_{\VirtualAssignment}
     \bigl(
       \Formula[2]_1 \OR \Formula[2]_2
     \bigr) =
     \Translation_{\VirtualAssignment}(\Formula[2]_1) \OR
     \Translation_{\VirtualAssignment}(\Formula[2]_2)$.
  \item \kl{First-order quantification} over~$\StructRepr{\GraphEnc{\Picture}}$
    is expressed through a combination of
    \kl{first-order quantification} over~$\StructReprP{\Picture}$
    and a case distinction over the \kl{gadget} \kl{element}
    to which $\VirtualAssignment$ maps the \kl{quantified} \kl{variable}.
    For the (unique) \kl(quantifier){unbounded}
    \kl{first-order quantifier} of $\LFOL$,
    this simply means
    \begin{equation*}
      \Translation_{\VirtualAssignment}
      \bigl(
        \ForAll{\FOVar[1]} \, \Formula[1]
      \bigr) = \,
      \ForAll{\FOVar[1]}
      \BigAND_{
        \lalign{\Element \inS \Gadget}
      }
      \Translation_{\Version{\VirtualAssignment}{\FOVar[1]}{\Element}}(\Formula[1]).
    \end{equation*}
    For \kl(quantifier){bounded} \kl{first-order quantifiers},
    the case distinction is a bit more involved
    because we must take into account
    the topology of~$\StructRepr{\GraphEnc{\Picture}}$:
    each \kl{element} of~$\StructRepr{\GraphEnc{\Picture}}$
    is connected to its \kl{neighbors} within the same \kl{gadget},
    and additionally,
    an “input” or “output port” is also connected to its counterpart
    in the appropriate adjacent \kl{gadget}.
    This yields
    \begin{equation*}
      \Translation_{\VirtualAssignment}
      \bigl(
        \ExistsNb{\FOVar[2]}{\FOVar[1]} \; \Formula[1]
      \bigr) = \,
      \ExistsLoc{\FOVar[2]}{1}{\FOVar[1]}
      \bigl(
        \Fml{IntraGadget}_{\Formula[1]}\Of{\FOVar[1], \FOVar[2]}
        \OR
        \Fml{InterGadget}_{\Formula[1]}\Of{\FOVar[1], \FOVar[2]}
      \bigr),
    \end{equation*}
    where
    \begin{equation*}
      \Fml{IntraGadget}_{\Formula[1]}\Of{\FOVar[1], \FOVar[2]} = \,
      (\FOVar[2] \Equal \FOVar[1])
      \, \AND \,
      \BigOR_{
        \lalign{
          \Element \;
          \NeighborRel{\Gadget}
          \VirtualAssignment(\FOVar[1])
        }
      }
      \Translation_{\Version{\VirtualAssignment}{\FOVar[2]}{\Element}}(\Formula[1]),
    \end{equation*}
    and
    \begin{equation*}
      \Fml{InterGadget}_{\Formula[1]}\Of{\FOVar[1], \FOVar[2]} =
      \begin{cases*}
        \FOVar[2] \Linked{i} \FOVar[1]
        \, \AND \,
        \Translation_{\Version{\VirtualAssignment}{\FOVar[2]}{\Tag{out}_i}}(\Formula[1])
        & if $\VirtualAssignment(\FOVar[1]) = \Tag{in}_i$,
          for $i \in \Set{1, 2}$,
        \\
        \FOVar[1] \Linked{i} \FOVar[2]
        \, \AND \,
        \Translation_{\Version{\VirtualAssignment}{\FOVar[2]}{\Tag{in}_i}}(\Formula[1])
        & if $\VirtualAssignment(\FOVar[1]) = \Tag{out}_i$,
          for $i \in \Set{1, 2}$,
        \\
        \False & otherwise.
      \end{cases*}
    \end{equation*}
  \item Each \kl{second-order quantification}
    over~$\StructRepr{\GraphEnc{\Picture}}$
    is expressed through
    multiple \kl{second-order quantifications}
    over~$\StructReprP{\Picture}$.
    More precisely,
    each $\Arity$-\kl{ary} relation~$\Relation$
    on~$\StructRepr{\GraphEnc{\Picture}}$
    is represented as the union of
    $\CardS{\Gadget}^{\Arity}$
    pairwise disjoint $\Arity$-\kl{ary} relations.
    Each such relation
    $\Relation_{\Tuple{\Element_1, \dots, \Element_{\Arity}}}$
    contains precisely those $\Arity$-tuples of~$\Relation$
    whose components correspond to the \kl{gadget} \kl{elements}
    $\Element_1, \dots, \Element_{\Arity}$
    (in that order).
    Hence,
    \begin{equation*}
      \Translation_{\VirtualAssignment}
      \bigl(
        \QuantifierRel{\SOVar} \, \Formula[1]
      \bigr) = \,
      \QuantifierRel{
        \Tuple{
          \SOVar_{\Tuple{\Element_1, \dots, \Element_{\Arity}}}
        }_{
          \Element_1, \dots, \Element_{\Arity} \inS \Gadget
        }
      } \,
      \bigl(
        \Translation_{\VirtualAssignment}(\Formula[1])
      \bigr),
    \end{equation*}
    where $\QuantifierRel{}$~can be either $\ExistsRel{}$ or~$\ForAllRel{}$.
  \end{itemize}

  Again,
  our translation preserves
  the alternation level of \kl{second-order quantifiers}
  and the fact that there is exactly one
  \kl(quantifier){unbounded} \kl{universal quantifier}
  nested directly below the \kl{second-order quantifiers}.
  Hence,
  Statement~\ref{itm:backward-translation} corresponds to
  the special case of the induction hypothesis where
  $\Formula[1]$ does not have any \kl{free} \kl{variables}
  (which means, in particular, that
  the “virtual \kl{variable assignment}”~$\VirtualAssignment$ is empty).
\end{proof}

We now have everything at hand to transfer
part of Theorem~\ref{thm:mso-hierarchy}
from \kl{monadic second-order logic} on \kl{pictures}
to \kl{local second-order logic} on \kl{graphs},
and thus to the \kl{locally polynomial hierarchy}.
The result is stated in the following theorem
and illustrated in Figure~\ref{fig:hierarchy-partial}.

\begin{figure}[htb]
  \centering
  \begin{minipage}[c]{0.28\textwidth}
    \begin{tikzpicture}[
    semithick,on grid,node distance=9ex,
    every node/.style={draw,rounded rectangle,
                       minimum height=4ex,minimum width=10ex},
    strict/.style={},
    non strict/.style={thin,dashed},
    heavy/.style={ultra thick},
  ]
  \def\nodeDist{9ex}
  \node (s1) [heavy] {$\SigmaLP{\Level - 1}$};
  \node (p1) [right=4/3*\nodeDist of s1] {$\PiLP{\Level - 1}$};
  \node (s2) [above of=s1]       {$\SigmaLP{\Level}$};
  \node (p2) [heavy,above of=p1] {$\PiLP{\Level}$};
  \node (s3) [heavy,above of=s2] {$\SigmaLP{\Level + 1}$};
  \node (p3) [above of=p2]       {$\PiLP{\Level + 1}$};
  \path[non strict]
    (s2) edge (s3)
    (s2) edge (p3)
    (p2) edge (p3)
    (s1) edge (s2)
    (p1) edge (s2)
    (p1) edge (p2);
  \path[strict]
    (p2) edge (s3)
    (s1) edge (p2);
\end{tikzpicture}

%%% Local Variables:
%%% mode: latex
%%% TeX-master: "../lph-paper"
%%% End:
  \end{minipage}
  \begin{minipage}[c]{0.45\textwidth}
    \vspace{2ex}
    \caption{
      A partial separation result for the \kl{locally polynomial hierarchy}
      obtained in Theorem~\ref{thm:locally-polynomial-hierarchy}
      for every even integer $\Level \geq 2$.
      Each line indicates an inclusion of the lower class in the higher class.
      The inclusions represented by solid lines are proved to be strict,
      even when restricted to \kl{graphs} of \kl{bounded structural degree}.
      This forms the basis for the fuller separation result
      shown in Figure~\ref{fig:hierarchy-full}.
    }
    \label{fig:hierarchy-partial}
  \end{minipage}
\end{figure}

\begin{theorem}
  \label{thm:locally-polynomial-hierarchy}
  The \kl{locally polynomial hierarchy} is infinite,
  even when restricted to \kl{graphs} of \kl{bounded structural degree}.
  More precisely,
  $\SigmaLP{\Level - 1}\On{\bGRAPH{\MaxDegree}} \subsetneqq
   \PiLP{\Level}\On{\bGRAPH{\MaxDegree}} \subsetneqq
   \SigmaLP{\Level + 1}\On{\bGRAPH{\MaxDegree}}$,
  and a fortiori
  $\SigmaLP{\Level - 1} \subsetneqq
   \PiLP{\Level} \subsetneqq
   \SigmaLP{\Level + 1}$,
  for every even integer $\Level \geq 2$
  and every integer $\MaxDegree \geq 4$.
\end{theorem}

\begin{proof}
  By Theorem~\ref{thm:local-hierarchy-equivalence},
  the statement can be equivalently formulated
  in terms of the \kl{local second-order hierarchy} on \kl{graphs}:
  $\SigmaLFO{\Level - 1}\On{\bGRAPH{4}} \subsetneqq
   \PiLFO{\Level}\On{\bGRAPH{4}} \subsetneqq
   \SigmaLFO{\Level + 1}\On{\bGRAPH{4}}$
  for every even integer $\Level \geq 2$.
  To prove it,
  we start with the analogous separation result
  for the \kl{monadic second-order hierarchy} on $0$-bit \kl{pictures},
  which holds by Theorem~\ref{thm:mso-hierarchy}:
  \begin{equation*}
    \mSigmaFO{\Level - 1}\On{\PIC{0}}
    \, \subsetneqq \,
    \mPiFO{\Level}\On{\PIC{0}}
    \, \subsetneqq \,
    \mSigmaFO{\Level + 1}\On{\PIC{0}}
  \end{equation*}
  By Theorem~\ref{thm:equivalence-lso-mso},
  this can be rewritten
  in terms of \kl{local second-order logic}:
  \begin{equation*}
    \SigmaLFO{\Level - 1}\On{\PIC{0}}
    \, \subsetneqq \,
    \PiLFO{\Level}\On{\PIC{0}}
    \, \subsetneqq \,
    \SigmaLFO{\Level + 1}\On{\PIC{0}}
    \label{eq:picture-inequalities}
    \tag{$\ast$}
  \end{equation*}
  We now transfer this result
  from $0$-bit \kl{pictures}
  to \kl{graphs} of $4$-\kl{bounded structural degree}.
  The first inequality of~\eqref{eq:picture-inequalities}
  tells us that there exists a \kl{picture property}
  $\Property \in \PiLFO{\Level}\On{\PIC{0}}$
  that does not lie in
  $\SigmaLFO{\Level - 1}\On{\PIC{0}}$.
  Applying the forward translation provided by
  Lemma~\ref{lem:translation-picture-graph}.\ref{itm:forward-translation}
  and the fact that
  \kl{graph encodings} of $0$-bit \kl{pictures}
  are of $4$-\kl{bounded structural degree},
  we infer the existence of a \kl{graph property}
  $\Property' \in
   \PiLFO{\Level}\On{\bGRAPH{4}}$
  such that
  $\Picture \in \Property$
  if and only if
  $\GraphEnc{\Picture} \in \Property'$,
  for every $0$-bit \kl{picture}~$\Picture$.
  Similarly,
  the backward translation provided by
  Lemma~\ref{lem:translation-picture-graph}.\ref{itm:backward-translation}
  lets us deduce that
  $\Property' \notin
   \SigmaLFO{\Level - 1}\On{\bGRAPH{4}}$,
  because otherwise we would have
  $\Property \in
   \SigmaLFO{\Level - 1}\On{\PIC{0}}$.
  Hence,
  $\SigmaLFO{\Level - 1}\On{\bGRAPH{4}}
   \subsetneqq
   \PiLFO{\Level}\On{\bGRAPH{4}}$.
  Analogously,
  we can conclude from the second inequality of~\eqref{eq:picture-inequalities}
  that
  $\PiLFO{\Level}\On{\bGRAPH{4}}
   \subsetneqq
   \SigmaLFO{\Level + 1}\On{\bGRAPH{4}}$.
\end{proof}

%...............................................................................
\subsubsection{Why not a simpler infiniteness proof?}
\label{sssec:direct-infiniteness-proof}

The infiniteness proof presented in the previous sections
may seem unnecessarily complicated.
If our only objective were to show that
the \kl{locally polynomial hierarchy} is infinite,
the detour through \kl{pictures}
in Section~\ref{sssec:digression-on-pictures}
could be avoided entirely.
A more direct argument can be derived from the following two facts:
\begin{itemize}
  \item The infiniteness of the \kl{monadic second-order hierarchy}
    established by Matz, Schweikardt, and Thomas
    (Theorem~\ref{thm:mso-hierarchy})
    holds not only on \kl{pictures},
    but also on \kl{graphs} of \kl{bounded structural degree}.
  \item When restricted to \kl{graphs} of \kl{bounded structural degree},
    the \kl{locally polynomial hierarchy}
    is equivalent,
    though not levelwise equivalent,
    to \kl{monadic second-order logic}.
    As we will see in Section~\ref{ssec:complete-picture},
    one direction of this equivalence
    follows as a corollary from our previous results
    (Corollary~\ref{cor:mso-subsumes-hierarchy-on-bounded-graphs}
    on page~\pageref{cor:mso-subsumes-hierarchy-on-bounded-graphs}),
    while the other could be easily shown
    using the closure properties of the \kl{locally polynomial hierarchy}.
\end{itemize}
Together,
these two statements imply that
the \kl{locally polynomial hierarchy} does not collapse.
Moreover,
similarly to the classical \kl{polynomial hierarchy}
(see, e.g.,~\cite[Thm.~5.4]{DBLP:books/daglib/0023084}),
one can show that
the \kl{locally polynomial hierarchy} would collapse
to~$\SigmaLP{\Level}$
if that class were equal to~$\PiLP{\Level + 1}$,
and likewise to~$\PiLP{\Level}$
if that class were equal to~$\SigmaLP{\Level + 1}$.
The proof is fully analogous to that for the logarithmic-size hierarchy
of Feuilloley, Fraigniaud, and Hirvonen
(see \cite[Thm.~5]{DBLP:journals/tcs/FeuilloleyFH21}).
Therefore,
the infiniteness of the \kl{locally polynomial hierarchy}
immediately implies its strictness,
i.e., that
any two consecutive levels starting with different quantifiers
are~distinct.

Why, then, did we take a detour through \kl{pictures}
to separate the individual levels of the \kl[locally polynomial hierarchy]{hierarchy}?
The reason is twofold.
First,
the proof via \kl{pictures} remains fully constructive,
providing concrete (albeit artificial) \kl{graph properties}
that separate the different levels.
These are precisely the \kl{graph-encoded} versions
of the \kl(picture){properties} identified by Matz, Schweikardt, and Thomas
for the \kl{monadic second-order hierarchy} on $0$-bit \kl{pictures}
(see \cite[\S\,2.4]{DBLP:journals/iandc/MatzST02}).
Second,
the non-constructive proof sketched above
becomes more complicated
when restricted to \kl{graphs} of \kl{bounded structural degree}.
For instance,
to show that
$\SigmaLP{\Level} = \PiLP{\Level + 1}$
implies
$\SigmaLP{\Level} = \SigmaLP{\Level + 2}$,
the \kl{certificates} chosen by \kl(certificate){Eve} in her initial move
must somehow be encoded
within the constant-radius \kl{neighborhoods} of the \kl{nodes}.
The most natural choice would be
to encode the \kl{certificates} in the \kl{node labels},
but this is not possible on \kl{graphs} of \kl{bounded structural degree},
where
the maximum \kl{label} size is fixed
independently of the \kl{graph property}.
Hence,
the \kl{certificates} would need to be encoded
into the \kl{graph} topology using additional \kl{nodes}.
This would have to be done in a way
that neither increases the maximum \kl{degree} of the \kl{graphs}
nor modifies the given quantifier alternation level
when
converting between \kl{arbiters} (or \kl{formulas})
on original and encoded \kl{graphs}.
While possible,
this would not be any easier
than our approach via \kl{graph encodings} of~\kl{pictures}.

%–––––––––––––––––––––––––––––––––––––––––––––––––––––––––––––––––––––––––––––––
\subsection{Completing the picture}
\label{ssec:complete-picture}

In this subsection,
we establish all the remaining separations and inclusions
shown in Figure~\ref{fig:hierarchy-full} on page~\pageref{fig:hierarchy-full},
and then conclude by identifying \kl{graph properties}
that lie outside the \kl{locally polynomial hierarchy}.

Our first goal is to prove that
the inclusions represented by dashed lines in Figure~\ref{fig:hierarchy-full}
are equalities when restricted to \kl{graphs} of \kl{bounded structural degree}.
To do this,
we first show that on such \kl{graphs},
we can refine the notion of \kl{restrictive arbiters}
introduced in Section~\ref{sec:restrictive-arbiters}
to require that \kl{identifier assignments} are necessarily \kl{small}.

\AP
Let $\Level$ be a nonnegative integer,
$\IdentRadius$ and~$\CertifRadius$ be positive integers,
$\CertifPolynomial$~be a polynomial function,
$\BaseProperty$~be an $\LP$-\kl(graph){property},
and
$\Machine_1, \dots, \Machine_{\Level}$
be \kl{certificate restrictors}
for $\Tuple{\CertifRadius, \CertifPolynomial}$\nobreakdash-\kl{bounded certificates}
under $\IdentRadius$\nobreakdash-\kl{locally unique} \kl{identifiers}.
A \intro{small-restrictive} \,$\SigmaLP{\Level}$\nobreakdash-\reintro{arbiter}
for a \kl{graph property}~$\Property$
on~$\BaseProperty$
under
$\IdentRadius$-\kl{locally unique} \kl{identifiers}
and
$\Tuple{\CertifRadius, \CertifPolynomial}$\nobreakdash-\kl{bounded certificates}
restricted by
$\Machine_1, \dots, \Machine_{\Level}$
is a \kl{locally polynomial machine}~$\Machine$
that satisfies the same equivalence
as a \kl{restrictive} $\SigmaLP{\Level}$\nobreakdash-\kl{arbiter}
(see on page~\pageref{def:restrictive-arbiter})
for every \kl{graph}~$\Graph \in \BaseProperty$
and every
\emph{\kl{small}} $\IdentRadius$-\kl{locally unique}
\kl{identifier assignment}~$\IdMap$ of~$\Graph$.
That is,
the equivalence does not have to hold for \kl{arbitrary-sized} \kl{identifiers}.
We analogously define
\reintro{small-restrictive} \,$\PiLP{\Level}$\nobreakdash-\reintro{arbiters}.

The following lemma is a refinement of Lemma~\ref{lem:restrictive-arbiters}
for \kl{small-restrictive arbiters}
on \kl{graphs} of \kl{bounded structural degree}.

\begin{lemma}
  \label{lem:small-restrictive-arbiters}
  Let $\Level, \MaxDegree \in \Naturals$
  and $\Property \subseteq \GRAPH$.
  The \kl{graph property} $\Property \cap \bGRAPH{\MaxDegree}$
  belongs to~$\SigmaLP{\Level}\On{\bGRAPH{\MaxDegree}}$
  if and only if
  $\Property$~has
  a \kl{small-restrictive} $\SigmaLP{\Level}$-\kl{arbiter}
  on~$\bGRAPH{\MaxDegree}$.
  The analogous statement holds for~$\PiLP{\Level}\On{\bGRAPH{\MaxDegree}}$.
\end{lemma}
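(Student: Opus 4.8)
The plan is to prove the two directions separately, with the forward direction essentially free and the backward direction carrying all the content.

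For the forward direction ($\Rightarrow$): if $\Property \cap \bGRAPH{\MaxDegree} \in \SigmaLP{\Level}\On{\bGRAPH{\MaxDegree}}$, then, since $\bGRAPH{\MaxDegree} \in \LP$ (each node can check in one round whether its own degree plus label length is at most $\MaxDegree$), Lemma~\ref{lem:restrictive-arbiters} yields a restrictive $\SigmaLP{\Level}$-arbiter for $\Property$ on $\bGRAPH{\MaxDegree}$. Its defining equivalence holds for \emph{all} $\IdentRadius$-locally unique identifier assignments, hence a fortiori for all \emph{small} ones, so the same machine is a small-restrictive $\SigmaLP{\Level}$-arbiter for $\Property$ on $\bGRAPH{\MaxDegree}$. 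The argument for $\PiLP{\Level}$ is identical.

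For the backward direction ($\Leftarrow$): given a small-restrictive $\SigmaLP{\Level}$-arbiter $\Machine$ (with restrictors $\Machine_1, \dots, \Machine_{\Level}$, radius $\IdentRadius$, and bound $\Tuple{\CertifRadius, \CertifPolynomial}$) for $\Property$ on $\bGRAPH{\MaxDegree}$, I would build a (full) restrictive $\SigmaLP{\Level}$-arbiter $\Machine'$ for $\Property$ on $\bGRAPH{\MaxDegree}$ and then invoke Lemma~\ref{lem:restrictive-arbiters} once more to conclude membership. The idea is to let the first player additionally commit to a small $\IdentRadius$-locally unique identifier assignment $\tau$ and to simulate $\Machine$ (and the $\Machine_i$) relative to $\tau$ rather than relative to the ambient identifiers. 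Concretely, $\Machine'$ operates under $1$-locally unique identifiers (it does not otherwise need them unique); the first certificate $\CertifMap_1$ is reinterpreted as a pair $(\tau,\CertifMap_1^{\circ})$, where each node $v$ stores its own value $\tau(v)$ — a string of length at most $\Ceiling{\log_2 \CardG{\Neighborhood{\Graph}{2\IdentRadius}{v}}}$, hence of constant length on $\bGRAPH{\MaxDegree}$ — together with the original first-move certificate $\CertifMap_1^{\circ}(v)$, while $\CertifMap_2,\dots,\CertifMap_{\Level}$ are passed through unchanged. After collecting a constant-radius ball (possible on $\bGRAPH{\MaxDegree}$, where such balls have constant size), $\Machine'$ simulates $\Machine$ exactly as $\Machine$ would run on the same graph under identifier assignment $\tau$ and certificate list $\CertifMap_1^{\circ}\CertifConcat\CertifMap_2\CertifConcat\dots\CertifConcat\CertifMap_{\Level}$, and likewise simulates each $\Machine_i$ w.r.t.\ $\tau$ to enforce the original boundedness restrictions (the certificate-size parameters only need to be inflated by a constant depending on $\MaxDegree$, $\CertifRadius$, $\IdentRadius$, since on bounded-degree graphs the quantities governing $\tau$-boundedness are within a constant factor of those governing the ambient one). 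In addition, $\Machine'$ carries one further certificate restrictor $\Machine_0$ that checks the validity of $\tau$ — namely that each $\tau(v)$ has admissible length and that $\tau$ is $\IdentRadius$-locally unique. Since membership of $\Graph$ in $\Property$ does not depend on the identifier assignment, and by Remark~\ref{rem:small-sized-identifiers} a valid small $\tau$ always exists, combining this with the small-restrictiveness of $\Machine$ shows that $\Machine'$, with restrictors $\Machine_0,\Machine_1,\dots,\Machine_{\Level}$, correctly arbitrates $\Property$ under all $1$-locally unique identifier assignments; Lemma~\ref{lem:restrictive-arbiters} then gives $\Property \cap \bGRAPH{\MaxDegree} \in \SigmaLP{\Level}\On{\bGRAPH{\MaxDegree}}$. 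The $\PiLP{\Level}$ case is symmetric: there the first move belongs to the universal player, so it is that player who commits to $\tau$, which keeps the alternation level at $\Level$, and the case analysis (invalid $\tau$ $\Rightarrow$ the outcome favours the honest player) is the mirror image.

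The step I expect to be the main obstacle is showing that the new restrictor $\Machine_0$ satisfies local repairability, which is precisely what is needed for Lemma~\ref{lem:restrictive-arbiters} to apply — and, exactly as in the second half of that lemma's proof, it is this property that lets us relativize the first-move quantifier to valid $\tau$'s even though a node that rejects because of a far-away defect cannot detect it. A rejecting node rejects because of a locally observable defect (its own $\tau(v)$ is too long, or clashes with the value of some node within distance $2\IdentRadius$), and we must re-choose $\tau(v)$ alone so as to clear the defect without changing any other node's verdict under $\Machine_0$. The delicate point is that re-assigning $\tau(v)$ may create or destroy clashes seen by neighbouring nodes; the way to handle it is to make each clash the responsibility of a single designated node — breaking ties with the ambient (possibly long) identifiers, which we may also require to be $\IdentRadius$-locally unique — and to arrange the re-assignment so that the relationship of $\tau(v)$ to every node for which $v$ is not responsible is preserved. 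The generous slack in the definition of \emph{small} (the length bound is the logarithm of a radius-$2\IdentRadius$ ball, whereas only distinctness within distance $2\IdentRadius$ is required) is what leaves enough room to do this. Checking that the re-assignment is always possible, and that on bounded-structural-degree graphs all the accompanying radii and certificate-size polynomials can be chosen consistently, is the routine-but-careful remainder.
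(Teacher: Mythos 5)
Your forward direction matches the paper's (modulo a small detour through Lemma~\ref{lem:restrictive-arbiters}), and your backward direction for $\Level > 0$ is essentially the paper's second case: the first player's certificate is overloaded to carry a small identifier assignment~$\tau$ alongside the original first certificate, and the machine is re-run relative to~$\tau$. You have also correctly anticipated the subtlety of making the $\tau$-validity check locally repairable, and your fix --- making each clash the responsibility of a single designated node, with ties broken by the ambient identifiers --- is the right idea and is if anything spelled out more carefully than in the paper, which only gestures at it via Remark~\ref{rem:small-sized-identifiers}.

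The genuine gap is the case $\Level = 0$. Your construction rests on ``the first player additionally commit[ting] to~$\tau$ \dots{} the first certificate $\CertifMap_1$ is reinterpreted as a pair $(\tau,\CertifMap_1^{\circ})$'', but when $\Level = 0$ there is no first player and no $\CertifMap_1$ at all: a small-restrictive $\SigmaLP{0}$-arbiter is a bare $\LP$-decider, with no certificate anywhere to smuggle~$\tau$ into. A separate argument is needed here, and this is in fact the \emph{only} place in the lemma where the restriction to $\bGRAPH{\MaxDegree}$ is doing real work: the paper handles $\Level = 0$ by having each node gather a slightly larger neighborhood, locally simulate the given decider under \emph{every} possible small $\IdentRadius$-locally unique identifier assignment restricted to that ball, and accept only if it would have accepted in all of them. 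On bounded-structural-degree graphs the number of such partial assignments is a constant, so this exhaustive search runs in polynomial (indeed constant) step time; nothing like this is available on general graphs, which is why the statement is phrased relative to $\bGRAPH{\MaxDegree}$. Your proposal uses the bounded-degree hypothesis only to tame certificate sizes in the $\Level > 0$ branch (where the paper does not actually need it), and omits the one place where the hypothesis is essential.
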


\begin{proof}
  We prove only the first statement,
  since the proof for~$\PiLP{\Level}\On{\bGRAPH{\MaxDegree}}$ is completely analogous.
  By definition,
  if $\Property \cap \bGRAPH{\MaxDegree}$ belongs
  to~$\SigmaLP{\Level}\On{\bGRAPH{\MaxDegree}}$,
  then there exists
  a \kl{permissive} $\SigmaLP{\Level}$\nobreakdash-\kl{arbiter}~$\Machine$
  for a \kl{graph property}~$\Property'$
  such that
  $\Property' \cap \bGRAPH{\MaxDegree} = \Property \cap \bGRAPH{\MaxDegree}$,
  and thus
  $\Machine$~is also a \kl{small-restrictive} $\SigmaLP{\Level}$-\kl{arbiter}
  for~$\Property$ on~$\bGRAPH{\MaxDegree}$
  under \kl{unrestricted} \kl{certificates}.

  For the converse,
  since $\bGRAPH{\MaxDegree}$ is an $\LP$-\kl(graph){property},
  it suffices by Lemma~\ref{lem:restrictive-arbiters}
  to convert a \kl{small-restrictive arbiter}
  into a \kl{restrictive arbiter}
  that operates under \kl{arbitrary-sized} \kl{identifiers}.
  Let $\Machine$ be
  a \kl{small-restrictive} $\SigmaLP{\Level}$\nobreakdash-\kl{arbiter}
  for~$\Property$ on~$\bGRAPH{\MaxDegree}$
  under $\IdentRadius$-\kl{locally unique} \kl{identifiers}
  and $\Tuple{\CertifRadius, \CertifPolynomial}$\nobreakdash-\kl{bounded certificates}
  restricted by
  $\Machine_1, \dots, \Machine_{\Level}$.
  We need to distinguish two cases:
  \begin{enumerate}
  \item If $\Level = 0$,
    then $\Machine$~is in fact a \kl{small-restrictive} $\LP$-\kl{decider}
    for~$\Property$ on~$\bGRAPH{\MaxDegree}$.
    This means that
    every \kl{node} reaches its \kl{verdict}
    simply by examining a portion of the input \kl{graph},
    i.e., without having to consider any \kl{certificates}
    that could potentially depend on a particular choice of \kl{identifiers}.
    Suppose that $\Machine$ runs in \kl{round time}~$\RunRadius$.
    We construct~$\Machine'$ such that when it is \kl{executed}
    on a \kl{graph}~$\Graph$,
    the \kl{nodes} first communicate
    for $(\RunRadius + 2 \IdentRadius)$~\kl{rounds}
    to reconstruct their $(\RunRadius + 2 \IdentRadius)$-\kl{neighborhoods}.
    Then,
    each \kl{node}~$\Node[1]$ simulates~$\Machine$ locally
    on its $\RunRadius$-\kl{neighborhood}
    under every possible
    \kl{small} $\IdentRadius$-\kl{locally unique} \kl{identifier assignment}
    of~$\Graph$ restricted to~$\Neighborhood{\Graph}{\RunRadius}{\Node[1]}$,
    and finally \kl{accepts} if it did so in every simulation.
    The condition of being
    \kl{small} $\IdentRadius$-\kl{locally unique} can be respected
    because $\Node[1]$~knows the $2 \IdentRadius$-\kl{neighborhood}
    of every \kl{node} in~$\Neighborhood{\Graph}{\RunRadius}{\Node[1]}$.
    Note that
    by running the simulation under every possible \kl{identifier assignment},
    we avoid the problem
    of the \kl{nodes} having to agree on a particular one.
    This can be done in constant \kl{step time}
    because
    the restriction to \kl{graphs} of $\MaxDegree$-\kl{bounded structural degree}
    entails a constant upper bound on
    the number of (partial) \kl{identifier assignments}
    each \kl{node} has to consider.
    The \kl{machine}~$\Machine'$ obtained this way
    is a \kl{restrictive} $\LP$-\kl{decider}
    for~$\Property$ on~$\bGRAPH{\MaxDegree}$
    under \kl{arbitrary-sized}
    $\IdentRadius'$-\kl{locally unique} \kl{identifiers},
    where $\IdentRadius' = \RunRadius + 2 \IdentRadius$.
  \item If $\Level > 0$,
    then $\Machine$~can be simulated by a \kl{machine}~$\Machine'$
    that uses the first \kl{certificate assignment}
    to \kl{encode} \kl{small} \kl{identifiers}.
    More precisely,
    let the given \kl{small-restrictive arbiter}~$\Machine$ be such that
    for every \kl{graph} $\Graph \in \bGRAPH{\MaxDegree}$
    and every \kl{small} $\IdentRadius$-\kl{locally unique}
    \kl{identifier assignment}~$\IdMap$ of~$\Graph$,
    \begin{equation*}
      \Graph \in \Property
      \; \iff \;
      \exists \CertifMap_1 \,
      \forall \CertifMap_2
      \dots
      \Quantifier \CertifMap_{\Level}:
      \Result{\Machine}{
        \Graph, \,
        \IdMap, \,
        \CertifMap_1 \CertifConcat
        \CertifMap_2 \CertifConcat \dots \CertifConcat
        \CertifMap_{\Level}
      } \equiv
      \Accept,
    \end{equation*}
    where all quantifiers range over
    $\Tuple{\CertifRadius, \CertifPolynomial}$-\kl{bounded certificate assignments}
    of $\Tuple{\Graph, \IdMap}$
    with the additional restriction that
    $\Result{\Machine_i}{
       \Graph, \,
       \IdMap, \,
       \CertifMap_1 \CertifConcat \dots \CertifConcat
       \CertifMap_i
     } \equiv
    \Accept$
    for all $i \in \Range[1]{\Level}$.
    We now construct~$\Machine'$
    with appropriately chosen
    constants $\IdentRadius', \CertifRadius' \in \Positives$,
    polynomial~$\CertifPolynomial'$,
    and \kl{certificate restrictors}
    $\Machine_1', \dots, \Machine_{\Level}'$
    such that
    for every \kl{graph} $\Graph \in \bGRAPH{\MaxDegree}$
    and every \kl{arbitrary-sized} $\IdentRadius'$-\kl{locally unique}
    \kl{identifier assignment}~$\IdMap'$ of~$\Graph$,
    \begin{equation*}
      \Graph \in \Property
      \; \iff \;
      \exists \CertifMap_1' \,
      \forall \CertifMap_2
      \dots
      \Quantifier \CertifMap_{\Level}:
      \Result{\Machine'}{
        \Graph, \,
        \IdMap', \,
        \CertifMap_1' \CertifConcat
        \CertifMap_2 \CertifConcat \dots \CertifConcat
        \CertifMap_{\Level}
      } \equiv
      \Accept,
    \end{equation*}
    where all quantifiers range over
    $\Tuple{\CertifRadius', \CertifPolynomial'}$-\kl{bounded certificate assignments}
    of $\Tuple{\Graph, \IdMap'}$
    with the additional restriction that
    $\Result{\Machine_i'}{
       \Graph, \,
       \IdMap', \,
       \CertifMap_1' \CertifConcat
       \CertifMap_2 \CertifConcat \dots \CertifConcat
       \CertifMap_i
     } \equiv
    \Accept$
    for all $i \in \Range[1]{\Level}$.
    The \kl{certificate restrictors} are chosen such that
    $\CertifMap_1'$ \kl{encodes} both
    a \kl{small} $\IdentRadius$-\kl{locally unique}
    \kl{identifier assignment}~$\IdMap$ of~$\Graph$
    and an
    $\Tuple{\CertifRadius, \CertifPolynomial}$-%
    \kl{bounded certificate assignment}~$\CertifMap_1$
    of $\Tuple{\Graph, \IdMap}$
    satisfying the restrictions imposed by~$\Machine_1$,
    and the remaining \kl{certificate assignments}
    $\CertifMap_2, \dots, \CertifMap_{\Level}$
    are
    $\Tuple{\CertifRadius, \CertifPolynomial}$-\kl(certificate){bounded}
    with respect to $\Tuple{\Graph, \IdMap}$
    and satisfy the restrictions imposed by
    $\Machine_2, \dots, \Machine_{\Level}$.
    The \kl{machine}~$\Machine'$ itself
    then simply simulates~$\Machine$ on~$\Graph$ under~$\IdMap$ and
    $\CertifMap_1 \CertifConcat
     \CertifMap_2 \CertifConcat \dots \CertifConcat
     \CertifMap_{\Level}$.

    Note that it is easy to construct
    $\Machine_1', \dots, \Machine_{\Level}'$
    such that they satisfy \kl{local repairability}
    (and thus the definition of a \kl{certificate restrictor}).
    In particular,
    the restriction of $\IdMap$~being
    \kl{small} $\IdentRadius$-\kl{locally unique}
    is compatible with \kl{local repairability}
    because
    if a \kl{node} has an invalid \kl{identifier}
    (i.e., too large or not $\IdentRadius$-\kl{locally unique}),
    then by the proof of Remark~\ref{rem:small-sized-identifiers},
    we can assign it a valid \kl{identifier}
    without affecting the validity of the other \kl{nodes}' \kl{identifiers}.
    Also note that we need
    the restriction to \kl{graphs} of $\MaxDegree$-\kl{bounded structural degree}
    only for the case $\Level = 0$,
    not for the case $\Level > 0$.
    \qedhere
  \end{enumerate}
\end{proof}

Using the notion of \kl{small-restrictive arbiters},
we now show that on \kl{graphs} of \kl{bounded structural degree},
it is useless to let \kl(certificate){Adam} choose
the last \kl{certificate assignment}.
Intuitively,
this is because on such \kl{graphs},
the \kl{step running time} of the \kl{nodes}
can be arbitrarily large with respect to
their local input and the messages they receive,
so they can use brute force
to perform universal quantification over the last \kl{certificate}.
This is analogous to the observation made by
Feuilloley, Fraigniaud, and Hirvonen~\cite{DBLP:journals/tcs/FeuilloleyFH21}
for their alternation hierarchy
(which does not impose any restrictions on
the processing power of the \kl{nodes}).

\begin{proposition}
  \label{prp:bounded-graph-collapse}
  When restricted to \kl{graphs} of \kl{bounded structural degree},
  every level of the \kl{locally polynomial hierarchy}
  that ends with a universal quantifier
  is equivalent to the level directly below it
  that lacks that final quantifier.
  Formally,
  $\SigmaLP{\Level}\On{\bGRAPH{\MaxDegree}} =
   \SigmaLP{\Level + 1}\On{\bGRAPH{\MaxDegree}}$
  if~$\Level$~is odd,
  and
  $\PiLP{\Level}\On{\bGRAPH{\MaxDegree}} =
   \PiLP{\Level + 1}\On{\bGRAPH{\MaxDegree}}$
  if~$\Level$~is even,
  for all $\Level, \MaxDegree \in \Naturals$.
\end{proposition}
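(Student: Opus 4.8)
The plan is to prove the two nontrivial inclusions; the reverse inclusions $\SigmaLP{\Level}\On{\bGRAPH{\MaxDegree}} \subseteq \SigmaLP{\Level+1}\On{\bGRAPH{\MaxDegree}}$ and $\PiLP{\Level}\On{\bGRAPH{\MaxDegree}} \subseteq \PiLP{\Level+1}\On{\bGRAPH{\MaxDegree}}$ hold by definition, since an \kl{arbiter} for the shorter quantifier prefix yields one for the longer prefix simply by discarding the last \kl{certificate assignment} and pairing it with a \kl{trivial} \kl{certificate restrictor}. I would handle the $\Sigma$ and $\Pi$ statements uniformly: when $\Level$ is odd the prefix of $\SigmaLP{\Level}$ ends with an existential quantifier, and when $\Level$ is even the prefix of $\PiLP{\Level}$ also ends with an existential quantifier, so in both situations the claim amounts to showing that appending one more universal quantifier adds nothing on \kl{graphs} of \kl{bounded structural degree}.

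First I would pass to \kl{small-restrictive} \kl{arbiters} via Lemma~\ref{lem:small-restrictive-arbiters}, so that it suffices to convert a \kl{small-restrictive} $\SigmaLP{\Level+1}$-\kl{arbiter}~$\Machine$ for a \kl{graph property}~$\Property$ on~$\bGRAPH{\MaxDegree}$, restricted by \kl{certificate restrictors} $\Machine_1, \dots, \Machine_{\Level+1}$, into a \kl{small-restrictive} $\SigmaLP{\Level}$-\kl{arbiter} restricted by $\Machine_1, \dots, \Machine_{\Level}$; applying Lemma~\ref{lem:small-restrictive-arbiters} once more then gives the result, and the $\PiLP{}$ case is the same with the initial block of quantifiers relabeled. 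The reason for routing through \kl{small} \kl{identifiers} is exactly the one exploited in that lemma: on \kl{graphs} of $\MaxDegree$-\kl{bounded structural degree} under \kl{small} \kl{identifiers}, every constant-radius \kl{neighborhood} of a \kl{node} contains only a bounded number of \kl{nodes}, and every \kl{label}, \kl{identifier}, and $\Tuple{\CertifRadius, \CertifPolynomial}$-\kl(certificate){bounded} \kl{certificate} in it has bounded length; hence there are only constantly many \kl{certificate assignments} to such a \kl{neighborhood}, and a \kl{node} can enumerate all of them in constant — a fortiori polynomial — \kl{step time}.

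With that in hand, I would let the new \kl{arbiter}~$\Machine'$ work as follows. Pick $R$ large enough to exceed $\IdentRadius$, $\CertifRadius$, and the number of \kl{rounds} used by each of $\Machine, \Machine_1, \dots, \Machine_{\Level+1}$. Each \kl{node}~$\Node$ first communicates for $R$ \kl{rounds} to reconstruct $\Neighborhood{\Graph}{R}{\Node}$ together with all \kl{labels}, \kl{identifiers}, and the \kl{certificates} $\CertifMap_1, \dots, \CertifMap_{\Level}$ therein. Then~$\Node$ iterates over every candidate assignment~$\sigma$ of a possible last \kl{certificate assignment}~$\CertifMap_{\Level+1}$ to the \kl{nodes} of $\Neighborhood{\Graph}{R}{\Node}$ (there are constantly many such~$\sigma$): for each one it simulates~$\Machine_{\Level+1}$ to decide whether~$\sigma$ violates the restriction at any \kl{node} whose \kl{verdict} under a simulation of~$\Machine$ could influence~$\Node$, and, if not, it simulates~$\Machine$ on $\CertifMap_1 \CertifConcat \dots \CertifConcat \CertifMap_{\Level} \CertifConcat \sigma$ to obtain its \kl{verdict} for that~$\sigma$. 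Finally~$\Node$ \kl{accepts} in~$\Machine'$ precisely when it \kl{accepts} in all these simulations, treating restriction-violating~$\sigma$ as benign — just as universally quantified \kl{certificates} are treated in the proof of Lemma~\ref{lem:restrictive-arbiters}. One then checks that \kl(certificate){Eve} wins the $\SigmaLP{\Level}$-game with~$\Machine'$ if and only if she wins the $\SigmaLP{\Level+1}$-game with~$\Machine$: any globally restriction-respecting~$\CertifMap_{\Level+1}$ restricts at each \kl{node} to a locally valid~$\sigma$, and conversely a locally valid~$\sigma$ under which some \kl{node}~$\Node$ \kl{rejects} extends to a globally restriction-respecting~$\CertifMap_{\Level+1}$ with the same effect at~$\Node$, repairing the distant \kl{nodes} one at a time using the \kl{local repairability} of~$\Machine_{\Level+1}$.

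The step I expect to be the main obstacle is exactly this last equivalence: arguing that \emph{local} universal quantification over the constantly many candidates~$\sigma$ faithfully captures the \emph{global} universal quantification over restriction-respecting \kl{certificate assignments}~$\CertifMap_{\Level+1}$. This forces one to pin down precisely how deep a \kl{node} must inspect before it may declare a candidate~$\sigma$ locally valid, and to invoke \kl{local repairability} to complete a locally valid~$\sigma$ to a globally valid \kl{certificate assignment} without disturbing the rejecting \kl{node}'s \kl{verdict} — a more delicate variant of the quantifier-relativization argument in the second part of the proof of Lemma~\ref{lem:restrictive-arbiters}.
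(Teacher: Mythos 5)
Your proposal is correct in spirit and lands on the same core mechanism as the paper --- local exhaustive enumeration of the last certificate, justified by bounded structural degree and small identifiers, with the observation that the outer universal quantifier and the universal acceptance criterion commute. But you take a noticeably harder route than necessary, and the obstacle you flag at the end is self-inflicted.

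The paper does \emph{not} carry arbitrary certificate restrictors $\Machine_1, \dots, \Machine_{\Level+1}$ through the construction. Instead, it invokes Lemma~\ref{lem:restrictive-arbiters} at the outset to choose a restrictive $\PiLP{\Level+1}$-arbiter $\Machine$ on $\bGRAPH{\MaxDegree}$ that operates under \emph{unrestricted} (i.e.\ trivially restricted) certificates. This is always available: the lemma's easy direction converts any permissive arbiter for $\Property'$ with $\Property' \cap \bGRAPH{\MaxDegree} = \Property \cap \bGRAPH{\MaxDegree}$ into such a restrictive arbiter with trivial restrictors. With unrestricted certificates, the last universal quantifier ranges over \emph{all} $\Tuple{\CertifRadius, \CertifPolynomial}$-bounded certificate assignments, so a locally chosen candidate $\sigma$ on $\Neighborhood{\Graph}{\Radius_{\Machine}}{\Node}$ extends to a global one by padding with anything (the empty string suffices) --- no repair, no global validity condition, no interaction with other nodes' restrictor verdicts. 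The ``main obstacle'' you anticipate simply evaporates, and the equivalence reduces to swapping the order of two universal quantifiers (one over $\CertifMap_{\Level+1}$, one over nodes) plus the fact that a node's verdict under $\Machine$ depends only on the local restriction of $\CertifMap_{\Level+1}$. The paper then constructs a small-restrictive $\PiLP{\Level}$-arbiter $\Machine'$ by reconstructing a $(\Radius_{\Machine} + \CertifRadius)$-neighborhood and enumerating local candidates, and closes via Lemma~\ref{lem:small-restrictive-arbiters}.

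Your version, keeping non-trivial restrictors, can be made to work --- the repair argument you sketch (each locally valid $\sigma$, combined with an arbitrary global extension, can be patched into a globally restriction-respecting assignment by iterated local repairs at nodes outside $\Neighborhood{\Graph}{\Radius_{\Machine}}{\Node}$, and each repair strictly decreases the number of restrictor-rejecting nodes while leaving $\Node$'s $\Machine$-verdict fixed) --- but it forces you to carefully ensure that the radius you reconstruct is the \emph{sum} $\Radius_{\Machine} + R_{\Level+1} + \CertifRadius$ (so that the restrictor's verdicts at all nodes within radius $\Radius_{\Machine}$ of $\Node$ are determined by the local $\sigma$), not merely larger than each of those quantities individually as your ``large enough to exceed'' phrasing suggests. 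The paper avoids all of this by not keeping the restrictors in the first place, which is the simplification you should have reached for before invoking local repairability.
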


\begin{proof}
  We prove only the second equality,
  the proof of the first being completely analogous.
  Let $\Property$ be a \kl{graph property}
  in $\PiLP{\Level + 1}\On{\bGRAPH{\MaxDegree}}$,
  and let $\Machine$ be
  a \kl{restrictive} $\PiLP{\Level + 1}$-\kl{arbiter}
  for~$\Property$ on~$\bGRAPH{\MaxDegree}$
  under $\IdentRadius$-\kl{locally unique} \kl{identifiers}
  and \kl{unrestricted}
  $\Tuple{\CertifRadius, \CertifPolynomial}$\nobreakdash-\kl{bounded certificates}.
  By definition,
  for every \kl{graph} $\Graph \in \bGRAPH{\MaxDegree}$
  and every $\IdentRadius$\nobreakdash-\kl{locally unique}
  \kl{identifier assignment}~$\IdMap$ of~$\Graph$,
  we have
  \begin{equation*}
    \Graph \in \Property
    \; \iff \;
    \forall \CertifMap_1 \,
    \exists \CertifMap_2
    \dots
    \forall \CertifMap_{\Level + 1}:
    \Result{\Machine}{
      \Graph, \,
      \IdMap, \,
      \CertifMap_1 \CertifConcat
      \CertifMap_2 \CertifConcat \dots \CertifConcat
      \CertifMap_{\Level + 1}
    } \equiv
    \Accept,
  \end{equation*}
  where all quantifiers range over
  $\Tuple{\CertifRadius, \CertifPolynomial}$-\kl{bounded certificate assignments}
  of $\Tuple{\Graph, \IdMap}$.
  Now,
  to show that
  $\Property \in \PiLP{\Level}\On{\bGRAPH{\MaxDegree}}$,
  it suffices by Lemma~\ref{lem:small-restrictive-arbiters}
  to provide
  a \kl{small-restrictive} $\PiLP{\Level}$\nobreakdash-\kl{arbiter}~$\Machine'$
  for~$\Property$ on~$\bGRAPH{\MaxDegree}$.
  More precisely,
  we construct~$\Machine'$
  with an appropriately chosen constant~$\IdentRadius'$
  such that
  for every \kl{graph} $\Graph \in \bGRAPH{\MaxDegree}$
  and every \kl{small} $\IdentRadius'$-\kl{locally unique}
  \kl{identifier assignment}~$\IdMap$ of~$\Graph$,
  we have
  \begin{equation*}
    \Graph \in \Property
    \; \iff \;
    \forall \CertifMap_1 \,
    \exists \CertifMap_2
    \dots
    \exists \CertifMap_{\Level}:
    \Result{\Machine'}{
      \Graph, \,
      \IdMap, \,
      \CertifMap_1 \CertifConcat
      \CertifMap_2 \CertifConcat \dots \CertifConcat
      \CertifMap_{\Level}
    } \equiv
    \Accept,
  \end{equation*}
  where all quantifiers range over
  $\Tuple{\CertifRadius, \CertifPolynomial}$-\kl{bounded certificate assignments}
  of $\Tuple{\Graph, \IdMap}$.

  Given the \kl{certificates} assigned to them by
  $\CertifMap_1, \dots, \CertifMap_{\Level}$,
  the \kl{nodes} running~$\Machine'$ simulate~$\Machine$
  for every possible choice of~$\CertifMap_{\Level + 1}$.
  To do so,
  they first communicate for
  $\CertifRadius' = \Radius_{\Machine} + \CertifRadius$
  \kl{rounds},
  where $\Radius_{\Machine}$ is a (constant) bound
  on the \kl{round running time} of~$\Machine$.
  Setting $\IdentRadius' = \CertifRadius'$,
  this allows each \kl{node} $\Node[1]$ of~$\Graph$
  to reconstruct its $\CertifRadius'$-\kl{neighborhood}
  $\Neighborhood{\Graph}{\CertifRadius'}{\Node[1]}$
  and the \kl{identifiers} and \kl{certificates}
  of all \kl{nodes} therein.
  Then,
  $\Node[1]$~simulates~$\Machine$ locally
  for every
  $\Tuple{\CertifRadius, \CertifPolynomial}$-\kl(certificate){bounded}
  \kl{certificate assignment}~$\CertifMap_{\Level + 1}$
  of~$\Tuple{\Graph, \IdMap}$
  restricted to~$\Neighborhood{\Graph}{\Radius_{\Machine}}{\Node[1]}$.
  Note that $\Node[1]$~can respect
  the condition of
  $\Tuple{\CertifRadius, \CertifPolynomial}$-\kl(certificate){boundedness}
  because it knows the $\CertifRadius$-\kl{neighborhood}
  of every \kl{node} in~$\Neighborhood{\Graph}{\Radius_{\Machine}}{\Node[1]}$.
  Finally,
  $\Node[1]$ \kl{accepts} precisely if
  it has \kl{accepted} in every simulation.

  Intuitively,
  the reason why this approach works is that
  there is a universal quantification
  on both~$\CertifMap_{\Level + 1}$
  and on the \kl{nodes}
  in the \kl{acceptance} criterion of \kl{distributed Turing machines}.
  Therefore,
  we can reverse the order of quantification
  (by letting the \kl{nodes} perform the quantification over~$\CertifMap_{\Level + 1}$)
  without changing the semantics.

  The \kl{step running time} of~$\Machine'$ at~$\Node[1]$
  is clearly exponential in
  \begin{equation*}
    \sum_{
      \lalign{
        \Node[2] \inG \Neighborhood{\Graph}{\Radius_{\Machine}\!}{\Node[1]}
      }
    }
    \hspace{4ex}
    \CertifPolynomial
    \Bigl( \,
      \sum_{
        \lalign{
          \Node[3] \inG \Neighborhood{\Graph}{\CertifRadius}{\Node[2]}
        }
      }
      1 +
      \Length{\Labeling{\Graph}(\Node[3])} +
      \Length{\IdMap(\Node[3])} \,
    \Bigr).
  \end{equation*}
  However,
  since we require
  $\Graph$~to be of $\MaxDegree$-\kl{bounded structural degree}
  and $\IdMap$~to be \kl{small} $\CertifRadius'$-\kl{locally unique},
  this value is bounded by a constant
  that depends only on $\MaxDegree$, $\CertifRadius'$, and~$\CertifPolynomial$.
  Hence,
  $\Machine'$~runs in constant and thus \kl{polynomial step time}.
\end{proof}

The equalities established in Proposition~\ref{prp:bounded-graph-collapse}
not only simplify the \kl{locally polynomial hierarchy}
on \kl{graphs} of \kl{bounded structural degree},
but also imply,
in combination with the previous separation results,
that the remaining inclusions in the \kl[locally polynomial hierarchy]{hierarchy}
are strict.
This is particularly relevant on arbitrary \kl{graphs}.

\begin{corollary}
  \label{cor:locally-polynomial-hierarchy-missing-pieces}
  The separation results for the \kl{locally polynomial hierarchy}
  stated in Proposition~\ref{prp:lp-vs-nlp}
  and Theorem~\ref{thm:locally-polynomial-hierarchy}
  can be extended as follows:
  $\PiLP{\Level - 1} \subsetneqq
   \PiLP{\Level}$,
  and\,
  $\SigmaLP{\Level} \subsetneqq
   \SigmaLP{\Level + 1}$,
  and\,
  $\PiLP{\Level - 1} \subsetneqq
   \SigmaLP{\Level}  \subsetneqq
   \PiLP{\Level + 1}$,
  for every even integer $\Level \geq 2$.
\end{corollary}

\begin{proof}
  By Proposition~\ref{prp:lp-vs-nlp}
  and Theorem~\ref{thm:locally-polynomial-hierarchy}
  (for the inequalities),
  and by Proposition~\ref{prp:bounded-graph-collapse}
  (for the equalities),
  we have
  \begin{equation*}
    \begin{array}[t]{@{}ccccccc@{}}
        \PiLP{\Level - 2}\On{\bGRAPH{4}}    & \subsetneqq
      & \SigmaLP{\Level - 1}\On{\bGRAPH{4}} & \subsetneqq
      & \PiLP{\Level}\On{\bGRAPH{4}}        & \subsetneqq
      & \SigmaLP{\Level + 1}\On{\bGRAPH{4}}
      \\[1ex]
      \RotateRelation{=} & & \RotateRelation{=} & & \RotateRelation{=}
      \\[1ex]
        \PiLP{\Level - 1}\On{\bGRAPH{4}} &
      & \SigmaLP{\Level}\On{\bGRAPH{4}}  &
      & \PiLP{\Level + 1}\On{\bGRAPH{4}}\:\!,
    \end{array}
  \end{equation*}
  which by transitivity yields
  $\PiLP{\Level - 1}\On{\bGRAPH{4}} \subsetneqq
   \PiLP{\Level}\On{\bGRAPH{4}}$,
  and\,
  $\SigmaLP{\Level}\On{\bGRAPH{4}} \subsetneqq
   \SigmaLP{\Level + 1}\On{\bGRAPH{4}}$,
  and\,
  $\PiLP{\Level - 1}\On{\bGRAPH{4}} \subsetneqq
   \SigmaLP{\Level}\On{\bGRAPH{4}}  \subsetneqq
   \PiLP{\Level + 1}\On{\bGRAPH{4}}$.
  This implies the desired inequalities on arbitrary \kl{graphs}.
\end{proof}

Since the proof of Proposition~\ref{prp:bounded-graph-collapse}
relies on exhaustive search,
it is unlikely to generalize to arbitrary \kl{graphs}.
This can be restated as follows:

\begin{remark}
  \label{rem:unbounded-graph-collapse}
  The statement of Proposition~\ref{prp:bounded-graph-collapse}
  generalizes to arbitrary \kl{graphs}
  if and only if\,
  $\PTIME = \coNP$.
\end{remark}

\begin{claimproof}
  If $\PTIME = \coNP$,
  then the proof of Proposition~\ref{prp:bounded-graph-collapse}
  does not require the restriction to
  \kl{graphs} of \kl{bounded structural degree}.
  Indeed,
  the local simulations described there
  can then be performed by a polynomial-time algorithm
  that is equivalent to
  testing all possible \kl{certificate assignments} in parallel.

  Conversely,
  if the statement of Proposition~\ref{prp:bounded-graph-collapse}
  holds for arbitrary \kl{graphs},
  then in particular we have
  $\PiLP{0} = \PiLP{1}$,
  which entails
  $\PiLP{0}\On{\NODE} = \PiLP{1}\On{\NODE}$,
  and thus $\PTIME = \coNP$.
\end{claimproof}

Next,
we focus on the relationship between
the \kl{locally polynomial hierarchy} and its \kl{complement hierarchy}.
Using our previous results,
it is easy to show that the two hierarchies are completely distinct.

\begin{corollary}
  \label{cor:complement-classes}
  Classes on the same level of the \kl{locally polynomial hierarchy}
  are neither \kl{complement classes} of each other,
  nor are they closed under \kl{complementation},
  even when restricted to \kl{graphs} of \kl{bounded structural degree}.
  More precisely,
  $\SigmaLP{\Level}\On{\bGRAPH{\MaxDegree}} \nsubseteq
   \coPiLP{\Level}\On{\bGRAPH{\MaxDegree}}$
  if~$\Level$~is odd,
  and
  $\PiLP{\Level}\On{\bGRAPH{\MaxDegree}} \nsubseteq
   \coSigmaLP{\Level}\On{\bGRAPH{\MaxDegree}}$
  if~$\Level$~is even,
  for all $\Level \in \Naturals$ and $\MaxDegree \geq 4$.
  Moreover,
  $\SigmaLP{\Level}\On{\bGRAPH{\MaxDegree}} \neq
   \coSigmaLP{\Level}\On{\bGRAPH{\MaxDegree}}$
  and
  $\PiLP{\Level}\On{\bGRAPH{\MaxDegree}} \neq
   \coPiLP{\Level}\On{\bGRAPH{\MaxDegree}}$,
  for all $\Level \in \Naturals$ and $\MaxDegree \geq 4$.
\end{corollary}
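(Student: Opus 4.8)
The plan is to derive the entire corollary from three previously established facts: the separation results of Proposition~\ref{prp:lp-vs-nlp} and Theorem~\ref{thm:local-polynomial-hierarchy}, the collapse equalities of Proposition~\ref{prp:bounded-graph-collapse}, and simple duality (the fact that complementing a $\SigmaLP{\Level}$-property yields a $\coSigmaLP{\Level}$-property, and that $\coSigmaLP{\Level}$ is the complement class of $\SigmaLP{\Level}$, so $\Class \subseteq \coClass'$ iff $\coClass \subseteq \Class'$). The only genuinely new inputs will be the base cases for small~$\Level$ (i.e.,~$\Level = 0$ and $\Level = 1$), which follow from Corollary~\ref{cor:lp-complementation} and the incomparability statement in Proposition~\ref{prp:colp-vs-nlp}.

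First I would treat the non-inclusion statements. Fix an even integer $\Level \geq 2$ and $\MaxDegree \geq 4$. By Theorem~\ref{thm:local-polynomial-hierarchy} there is a \kl{graph property} in $\PiLP{\Level}\On{\bGRAPH{\MaxDegree}}$ that is not in $\SigmaLP{\Level - 1}\On{\bGRAPH{\MaxDegree}}$. By Proposition~\ref{prp:bounded-graph-collapse}, $\SigmaLP{\Level - 1}\On{\bGRAPH{\MaxDegree}} = \SigmaLP{\Level}\On{\bGRAPH{\MaxDegree}}$ (since $\Level - 1$ is odd), so this property lies outside $\SigmaLP{\Level}\On{\bGRAPH{\MaxDegree}}$. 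Now I would argue that $\coSigmaLP{\Level}\On{\bGRAPH{\MaxDegree}} \subseteq \SigmaLP{\Level}\On{\bGRAPH{\MaxDegree}}$: by Proposition~\ref{prp:bounded-graph-collapse} again, $\coSigmaLP{\Level}\On{\bGRAPH{\MaxDegree}} = \coSigmaLP{\Level - 1}\On{\bGRAPH{\MaxDegree}}$, which is $\coPiLP{\Level - 2}\On{\bGRAPH{\MaxDegree}}$ via the duality $\coSigmaLP{\Level-1} = \coPiLP{\Level-2}$-style identities following from the definitions, and this in turn embeds into $\SigmaLP{\Level}\On{\bGRAPH{\MaxDegree}}$ by the inclusion chain of Figure~\ref{fig:hierarchy-full} (the dashed/solid lines, which at this point are all proven). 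Hence our witness property would have to lie in $\SigmaLP{\Level}\On{\bGRAPH{\MaxDegree}}$ after all — contradiction — so $\PiLP{\Level}\On{\bGRAPH{\MaxDegree}} \nsubseteq \coSigmaLP{\Level}\On{\bGRAPH{\MaxDegree}}$. The odd-$\Level$ case ($\SigmaLP{\Level}\On{\bGRAPH{\MaxDegree}} \nsubseteq \coPiLP{\Level}\On{\bGRAPH{\MaxDegree}}$) is handled symmetrically, using the other inequality of Theorem~\ref{thm:local-polynomial-hierarchy} and the collapse $\PiLP{\Level}\On{\bGRAPH{\MaxDegree}} = \PiLP{\Level+1}\On{\bGRAPH{\MaxDegree}}$. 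The small base cases $\Level \in \{0,1\}$ are where Proposition~\ref{prp:colp-vs-nlp} enters directly: it gives $\coLP \Incomparable \NLP$ on $\bGRAPH{\MaxDegree}$ for $\MaxDegree \geq 3$, which instantly yields $\NLP\On{\bGRAPH{\MaxDegree}} \nsubseteq \coLP\On{\bGRAPH{\MaxDegree}} = \coPiLP{1}\On{\bGRAPH{\MaxDegree}}$ (for $\Level=1$) and, by duality, the $\Level=0$ statement $\LP\On{\bGRAPH{\MaxDegree}} \nsubseteq \coNLP\On{\bGRAPH{\MaxDegree}}$, i.e., $\PiLP{0} \nsubseteq \coSigmaLP{0}$ fails — wait, more precisely the $\Level=0$ clause reads $\PiLP{0}\On{\bGRAPH{\MaxDegree}} \nsubseteq \coSigmaLP{0}\On{\bGRAPH{\MaxDegree}}$, which since $\PiLP{0} = \LP$ and $\coSigmaLP{0} = \coLP$ is exactly $\LP\On{\bGRAPH{\MaxDegree}} \nsubseteq \coLP\On{\bGRAPH{\MaxDegree}}$, following from Corollary~\ref{cor:lp-complementation}.

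For the two inequality (non-closure) statements $\SigmaLP{\Level}\On{\bGRAPH{\MaxDegree}} \neq \coSigmaLP{\Level}\On{\bGRAPH{\MaxDegree}}$ and $\PiLP{\Level}\On{\bGRAPH{\MaxDegree}} \neq \coPiLP{\Level}\On{\bGRAPH{\MaxDegree}}$, I would observe that they are immediate consequences of the non-inclusions just proved: if a class equaled its complement class, then in particular the appropriate inclusion among the four classes on that level would hold, contradicting what we established. Concretely, $\SigmaLP{\Level} = \coSigmaLP{\Level}$ on $\bGRAPH{\MaxDegree}$ would force (via the already-proven inclusion lines of Figure~\ref{fig:hierarchy-full}) an inclusion that the non-closure part rules out; and for $\Level = 0$ this is again just Corollary~\ref{cor:lp-complementation}, which I would cite directly rather than re-deriving. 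The bookkeeping of which parity of $\Level$ pairs with which direction of inclusion is the fiddly part, so I would lay out the duality dictionary ($\coSigmaLP{\Level}$ versus $\PiLP{\Level}$, and how complementation interacts with the leading quantifier) once at the start and then apply it uniformly.

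The main obstacle, as far as I can see, is not any deep mathematics but rather getting the parity bookkeeping and the duality translations exactly right: the \kl{local-polynomial hierarchy} is asymmetric (acceptance by unanimity), so $\coSigmaLP{\Level}$ is genuinely a fourth kind of class rather than equal to $\PiLP{\Level}$, and one must be careful that each invocation of Proposition~\ref{prp:bounded-graph-collapse} is applied to a class whose leading quantifier really is the "wasteful" universal one. I would therefore double-check that the collapse equalities are used only in the legal direction (odd $\Level$ for $\SigmaLP{}$-type, even $\Level$ for $\PiLP{}$-type, and their complements correspondingly) and that every appeal to "the inclusion lines of Figure~\ref{fig:hierarchy-full}" refers only to inclusions already proven at this point in the paper (Proposition~\ref{prp:complementation} and duality for cross-hierarchy lines, definitions for same-hierarchy lines), not to the strictness we are currently establishing.
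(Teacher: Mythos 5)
Your proposal has a genuine gap in the main inductive step. The crux is the claim that $\coSigmaLP{\Level}\On{\bGRAPH{\MaxDegree}} \subseteq \SigmaLP{\Level}\On{\bGRAPH{\MaxDegree}}$, which you attempt to justify by the identity "$\coSigmaLP{\Level-1} = \coPiLP{\Level-2}$ following from the definitions." No such identity holds: $\coSigmaLP{\Level-1}$ and $\coPiLP{\Level-2}$ are the \kl{complement classes} of two \emph{different} classes, and would coincide only if $\SigmaLP{\Level-1} = \PiLP{\Level-2}$, which is precisely the kind of equality the whole section is busy disproving. Moreover, the intermediate claim itself is false: for $\Level = 2$, $\coSigmaLP{2}\On{\bGRAPH{\MaxDegree}} = \coSigmaLP{1}\On{\bGRAPH{\MaxDegree}} = \coNLP\On{\bGRAPH{\MaxDegree}}$ and $\SigmaLP{2}\On{\bGRAPH{\MaxDegree}} = \NLP\On{\bGRAPH{\MaxDegree}}$, and Proposition~\ref{prp:colp-vs-nlp} gives $\coLP\On{\bGRAPH{\MaxDegree}} \nsubseteq \NLP\On{\bGRAPH{\MaxDegree}}$, so a fortiori $\coNLP\On{\bGRAPH{\MaxDegree}} \nsubseteq \NLP\On{\bGRAPH{\MaxDegree}}$. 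So the witness property $\Property \in \PiLP{\Level} \setminus \SigmaLP{\Level}$ that you extract from Theorem~\ref{thm:local-polynomial-hierarchy} and Proposition~\ref{prp:bounded-graph-collapse} simply gives no information about whether $\Property \in \coSigmaLP{\Level}$.

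The deeper issue is that you have the logical dependency reversed. You try to prove the non-inclusion first and then read the non-closure off it, but the implication only runs the other way: if $\SigmaLP{\Level}\On{\bGRAPH{4}} \subseteq \coPiLP{\Level}\On{\bGRAPH{4}}$ held (for odd $\Level$), one could chain $\SigmaLP{\Level} \subseteq \coPiLP{\Level} = \coPiLP{\Level-1} \subseteq \coSigmaLP{\Level}$ using Proposition~\ref{prp:bounded-graph-collapse} and the basic hierarchy inclusion $\PiLP{\Level-1} \subseteq \SigmaLP{\Level}$, yielding closure under complementation. So it is the non-closure that must be established first. For that, Theorem~\ref{thm:local-polynomial-hierarchy} is too weak — it only gives strict inclusions within the hierarchy and says nothing about the \kl{complement classes}. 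The paper instead uses the \emph{incomparability} of $\mSigmaFO{\Level}\On{\PIC{\BitLen}}$ and $\mPiFO{\Level}\On{\PIC{\BitLen}}$ from Theorem~\ref{thm:mso-hierarchy}: since those two classes \emph{are} each other's \kl{complement class} (as $\FOL$ is closed under negation), incomparability immediately gives non-closure, which is then transferred to the \kl{local-polynomial hierarchy} via Theorem~\ref{thm:equivalence-lso-mso}, Lemma~\ref{lem:translation-picture-graph}, and Theorem~\ref{thm:local-hierarchy-equivalence}. Your handling of the base cases $\Level \in \{0, 1\}$ via Corollary~\ref{cor:lp-complementation} and Proposition~\ref{prp:colp-vs-nlp} is correct, but the inductive step needs to invoke the incomparability from Theorem~\ref{thm:mso-hierarchy} directly rather than the weaker strict inclusions.
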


\begin{proof}
  The statement for $\Level = 0$ reduces to
  $\LP\On{\bGRAPH{4}} \neq \coLP\On{\bGRAPH{4}}$,
  which holds by Corollary~\ref{cor:lp-complementation}.

  For $\Level \geq 1$,
  we first show that none of the classes is closed under \kl{complementation},
  building on the analogous result
  for the \kl{monadic second-order hierarchy} on $0$-bit \kl{pictures}.
  By Theorem~\ref{thm:mso-hierarchy},
  we know that
  $\mSigmaFO{\Level}\On{\PIC{0}}$
  and
  $\mPiFO{\Level}\On{\PIC{0}}$
  are incomparable for all $\Level \in \Positives$.
  Since these two classes are \kl{complement classes} of each other
  ($\FOL$~being closed under negation),
  this means that neither class is closed under \kl{complementation}.
  By Theorem~\ref{thm:equivalence-lso-mso},
  this implies that
  $\SigmaLFO{\Level}\On{\PIC{0}}$ is not closed under \kl{complementation}
  if $\Level$ is odd,
  and that
  $\PiLFO{\Level}\On{\PIC{0}}$ is not closed under \kl{complementation}
  if $\Level$ is even.
  We can transfer this result from $0$-bit \kl{pictures}
  to \kl{graphs} of $4$-\kl{bounded structural degree}
  by using Lemma~\ref{lem:translation-picture-graph}
  in the same way as in the proof of Theorem~\ref{thm:locally-polynomial-hierarchy}.
  This in turn allows us to conclude by Theorem~\ref{thm:local-hierarchy-equivalence}
  that
  $\SigmaLP{\Level}\On{\bGRAPH{4}}$
  is not closed under \kl{complementation} if $\Level$~is odd,
  and that
  $\PiLP{\Level}\On{\bGRAPH{4}}$
  is not closed under \kl{complementation} if $\Level$~is even.
  The analogous statement for the remaining cases
  follows by Proposition~\ref{prp:bounded-graph-collapse},
  which tells us that
  $\SigmaLP{\Level}\On{\bGRAPH{4}} =
   \SigmaLP{\Level - 1}\On{\bGRAPH{4}}$
  if~$\Level$~is even,
  and
  $\PiLP{\Level}\On{\bGRAPH{4}} =
   \PiLP{\Level - 1}\On{\bGRAPH{4}}$
  if~$\Level$~is odd.
  Hence,
  we have
  $\SigmaLP{\Level}\On{\bGRAPH{4}} \neq
   \coSigmaLP{\Level}\On{\bGRAPH{4}}$
  and
  $\PiLP{\Level}\On{\bGRAPH{4}} \neq
   \coPiLP{\Level}\On{\bGRAPH{4}}$
  for all $\Level \in \Naturals$.

  It remains to show for $\Level \geq 1$
  that the classes on level~$\Level$
  are not \kl{complement classes} of each other.
  If $\Level$~is odd,
  suppose for the sake of contradiction that
  $\SigmaLP{\Level}\On{\bGRAPH{4}} \subseteq
   \coPiLP{\Level}\On{\bGRAPH{4}}$.
  In combination with Proposition~\ref{prp:bounded-graph-collapse},
  this allows us to write the chain of inclusions
  \begin{equation*}
    \SigmaLP{\Level}\On{\bGRAPH{4}} \; \subseteq \;
    \coPiLP{\Level}\On{\bGRAPH{4}}
    \overset{\text{Prp.\ref{prp:bounded-graph-collapse}}}{=}
    \coPiLP{\Level - 1}\On{\bGRAPH{4}} \; \subseteq \;
    \coSigmaLP{\Level}\On{\bGRAPH{4}},
  \end{equation*}
  which contradicts the already established fact that
  $\SigmaLP{\Level}\On{\bGRAPH{4}}$
  is not closed under \kl{complementation}.
  Analogously,
  we can show that
  $\PiLP{\Level}\On{\bGRAPH{4}} \nsubseteq
   \coSigmaLP{\Level}\On{\bGRAPH{4}}$
  if~$\Level$~is even.
\end{proof}

Although the \kl{locally polynomial hierarchy}
is distinct from its \kl{complement hierarchy},
there are inclusions between the two.
This can be shown by generalizing
Examples~\ref{ex:not-all-selected} and~\ref{ex:not-3-colorable}
from Section~\ref{ssec:example-formulas}.
The strictness of these inclusions
is immediate by Corollary~\ref{cor:complement-classes}.

\begin{proposition}
  \label{prp:complementation}
  In the \kl{locally polynomial hierarchy},
  \kl{complementation} can be achieved
  at the cost of two or three additional quantifier alternations.
  More precisely,
  $\coSigmaLP{\Level} \subseteq \PiLP{\Level + 2}$
  if $\Level$~is even,
  and
  $\coPiLP{\Level} \subseteq \SigmaLP{\Level + 2}$
  if $\Level$~is odd,
  for all $\Level \in \Positives$.
\end{proposition}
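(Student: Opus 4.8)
The plan is to pass to the logical characterization of Section~\ref{sec:fagin} and then reuse the spanning-forest construction of Example~\ref{ex:not-all-selected}. By Theorem~\ref{thm:local-hierarchy-equivalence}, for every $\Level \in \Positives$ we have $\SigmaLP{\Level} = \SigmaLFO{\Level}\On{\GRAPH}$ and $\PiLP{\Level} = \PiLFO{\Level}\On{\GRAPH}$, so it suffices to prove the corresponding statements about the \kl{local second-order hierarchy} on \kl{graphs}: the \kl{complement} of any \kl{graph property} \kl{defined} by a $\SigmaLFOL{\Level}$-\kl{sentence} (with $\Level$ even) is \kl{definable} by a $\PiLFOL{\Level+2}$-\kl{sentence}, and symmetrically with $\Sigma$ and $\Pi$ exchanged when $\Level$ is odd. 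I would prove only the first case, the second being completely analogous.

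So let $\Property$ be \kl{defined} on \kl{graphs} by a $\SigmaLFOL{\Level}$-\kl{sentence} $\Formula[1] = \ExistsRel{\Vector{\SOVar}_1} \ForAllRel{\Vector{\SOVar}_2} \dots \ForAllRel{\Vector{\SOVar}_{\Level}} \, \ForAll{\FOVar[1]} \, \Formula[2]\Of{\FOVar[1]}$, where each $\Vector{\SOVar}_i$ is a block of \kl{second-order variables}, the last block $\Vector{\SOVar}_{\Level}$ is universally quantified (because $\Level$ is even), and $\Formula[2] \in \BFL$. Pushing negations through the prefix of \kl{second-order quantifiers} and through the outermost \kl{first-order quantifier}, the \kl{complement} $\Complement{\Property}$ is \kl{defined} by the \kl{sentence} $\ForAllRel{\Vector{\SOVar}_1} \ExistsRel{\Vector{\SOVar}_2} \dots \ExistsRel{\Vector{\SOVar}_{\Level}} \, \Exists{\FOVar[1]} \, \NOT \Formula[2]\Of{\FOVar[1]}$, where $\NOT \Formula[2] \in \BFL$ since $\BFL$ is closed under negation. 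This \kl{sentence} fails to lie in \kl{local second-order logic} only because of the \kl(quantifier){unbounded} \kl{existential} \kl{first-order quantifier} over~$\FOVar[1]$.

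To eliminate that quantifier I would invoke the spanning-forest construction of Example~\ref{ex:not-all-selected}. Read with the \kl{first-order quantifiers} of the schema $\PointsTo{\cdot}$ ranging over all \kl{elements} of the \kl{structural representation} rather than only over \kl{nodes} (a harmless generalization, since the \kl{structural representation} of a \kl{connected} \kl{graph} is \kl{connected}, with every \kl{labeling bit} \kl{adjacent} to its \kl{node}), that construction shows that for any $\BFL$-\kl{formula} $\Formula[3]\Of{\FOVar[1]}$ the \kl{sentence} $\Exists{\FOVar[1]} \, \Formula[3]\Of{\FOVar[1]}$ is equivalent, on \kl{structural representations} of \kl{graphs}, to the $\SigmaLFOL{3}$-\kl{sentence} $\ExistsRel{P} \, \ForAllRel{X} \, \ExistsRel{Y} \, \ForAll{\FOVar[1]} \, \PointsTo{\Formula[3]}\Of{\FOVar[1]}$. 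The correctness argument is verbatim that of Example~\ref{ex:not-all-selected}: via~$P$, \kl{Eve} declares a spanning forest of the \kl{structural representation} all of whose roots satisfy~$\Formula[3]$; via~$X$, \kl{Adam} challenges acyclicity of~$P$; via~$Y$, \kl{Eve} answers with a charge function; and the inner \kl{formula} holds at every \kl{element} exactly when $P$~is acyclic and following parent pointers always leads to an \kl{element} satisfying~$\Formula[3]$, which is possible iff such an \kl{element} exists. (The single-\kl{element} \kl{structure} is handled directly as in the proof of Theorem~\ref{thm:local-fagin}.) Applying this with $\Formula[3]$ taken to be $\NOT \Formula[2]$, and noting that $\PointsTo{\NOT \Formula[2]}$ is itself a $\BFL$-\kl{formula} by construction of the schema, turns the offending suffix $\Exists{\FOVar[1]} \, \NOT \Formula[2]\Of{\FOVar[1]}$ into $\ExistsRel{P} \, \ForAllRel{X} \, \ExistsRel{Y} \, \ForAll{\FOVar[1]} \, \PointsTo{\NOT \Formula[2]}\Of{\FOVar[1]}$.

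It then remains to count alternations. Substituting the rewriting into the definition of $\Complement{\Property}$ and observing that its new leading \kl{existential} \kl{second-order quantifier} $\ExistsRel{P}$ merges with the trailing \kl{existential} block $\ExistsRel{\Vector{\SOVar}_{\Level}}$ --- which is \kl{existential} precisely because $\Level$ is even --- one sees that $\Complement{\Property}$ is \kl{defined} by the \kl{sentence} $\ForAllRel{\Vector{\SOVar}_1} \ExistsRel{\Vector{\SOVar}_2} \dots \ExistsRel{\Vector{\SOVar}_{\Level}, P} \, \ForAllRel{X} \, \ExistsRel{Y} \, \ForAll{\FOVar[1]} \, \PointsTo{\NOT \Formula[2]}\Of{\FOVar[1]}$, which consists of $\Level+2$ alternating blocks of \kl{second-order quantifiers} starting with a \kl{universal} block, followed by a single \kl{universal} \kl{first-order quantifier} and a $\BFL$-\kl{formula}; hence it belongs to $\PiLFOL{\Level+2}$ and $\Complement{\Property} \in \PiLFO{\Level+2}\On{\GRAPH} = \PiLP{\Level+2}$. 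The odd case is identical: negating a $\PiLFOL{\Level}$-\kl{sentence} with $\Level$ odd again produces a prefix ending in an \kl{existential} block, the same forest rewriting absorbs $\ExistsRel{P}$ into it, and one lands in $\SigmaLFOL{\Level+2}$, i.e., in $\SigmaLP{\Level+2}$. The step I expect to be most delicate is exactly this merging of $\ExistsRel{P}$ with the last block of \kl{second-order quantifiers}: it is the only place the parity hypothesis is used, and it is what keeps the overhead at two alternations --- with the opposite parity $\ExistsRel{P}$ would abut a \kl{universal} block and one would pay a third alternation (the ``or three'' of the statement). A secondary, routine point is to verify that the schema $\PointsTo{\cdot}$ of Example~\ref{ex:not-all-selected} really goes through once its \kl(quantifier){bounded} \kl{first-order quantifiers} are allowed to range over \kl{labeling bits} as well as \kl{nodes}.
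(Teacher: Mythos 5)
Your proof is correct and follows the paper's own argument: both pass through Theorem~\ref{thm:local-hierarchy-equivalence}, negate the second-order prefix, and repackage the resulting unbounded $\Exists{\FOVar[1]}$ via the spanning-forest schema $\PointsTo{\cdot}$ of Example~\ref{ex:not-all-selected}, with $\ExistsRel{P}$ absorbed into the last (now existential, by parity) block of the negated prefix. You are even slightly more explicit than the paper on the one detail it elides — that the schema must be read over all \kl{elements} of the \kl{structural representation} rather than only \kl{nodes} (or, equivalently, that $\NOT\Formula[2]$ must be relativized to \kl{nodes} by also inspecting their adjacent \kl{labeling bits}).
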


\begin{proof}
  Let $\Level$ be even
  and $\Property$ be a \kl{graph property} in~$\SigmaLP{\Level}$.
  By Theorem~\ref{thm:local-hierarchy-equivalence},
  $\Property$ can be \kl{defined} by a $\SigmaLFOL{\Level}$-\kl{formula}
  of the form
  $\ExistsRel{\Vector{\SOVar}_1}
   \ForAllRel{\Vector{\SOVar}_2} \dots
   \ForAllRel{\Vector{\SOVar}_{\Level}} \,
   \ForAll{\FOVar[1]} \, \Formula[2]\Of{\FOVar[1]}$.
  Based on that,
  the \kl{complement}~$\Complement{\Property}$
  can be defined by the \kl{formula}
  $\ForAllRel{\Vector{\SOVar}_1}
   \ExistsRel{\Vector{\SOVar}_2} \dots
   \ExistsRel{\Vector{\SOVar}_{\Level}} \,
   \Fml{ExistsBadNode}$,
  where
  $\Fml{ExistsBadNode}$ is a $\SigmaLFOL{3}$-\kl{formula}
  with \kl{free} \kl{variables} in
  $\Vector{\SOVar}_1, \dots, \Vector{\SOVar}_{\Level}$
  that is equivalent to
  $\Exists{\FOVar[1]} \, \NOT \Formula[2]\Of{\FOVar[1]}$.
  The definition of $\Fml{ExistsBadNode}$
  is the same as in Example~\ref{ex:not-3-colorable},
  except that now we use $\NOT \Formula[2]\Of{\FOVar[1]}$
  instead of $\NOT \WellColored\Of{\FOVar[1]}$
  to instantiate the \kl{formula} schema
  $\PointsTo{\Formula[3]}\Of{\FOVar[1]}$.
  Hence,
  again by Theorem~\ref{thm:local-hierarchy-equivalence},
  $\Complement{\Property} \in \PiLP{\Level + 2}$.

  The proof for $\Level$ odd and $\Property \in \PiLP{\Level}$
  is completely analogous.
\end{proof}

\begin{corollary}
  \label{cor:complementation-strict-inclusions}
  The inclusions stated in Proposition~\ref{prp:complementation} are strict,
  even when restricted to \kl{graphs} of \kl{bounded structural degree}.
\end{corollary}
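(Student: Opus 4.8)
The plan is to derive strictness by contradiction, using only the defining containments of the \kl{local-polynomial hierarchy} together with Corollary~\ref{cor:complement-classes}, which asserts that on \kl{graphs} of \kl{bounded structural degree} no class of the hierarchy coincides with its \kl{complement class}. In particular, this step will not require the full strength of Theorem~\ref{thm:local-polynomial-hierarchy} or Proposition~\ref{prp:bounded-graph-collapse}; the inclusions themselves come for free from Proposition~\ref{prp:complementation}, so all that remains is to rule out equality.

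First I would fix $\MaxDegree \geq 4$ and an even $\Level \in \Positives$ and consider the inclusion $\coSigmaLP{\Level} \subseteq \PiLP{\Level + 2}$ of Proposition~\ref{prp:complementation}. Assuming for contradiction that $\coSigmaLP{\Level}\On{\bGRAPH{\MaxDegree}} = \PiLP{\Level + 2}\On{\bGRAPH{\MaxDegree}}$, I would invoke the always-valid chain $\SigmaLP{\Level} \subseteq \PiLP{\Level + 1} \subseteq \PiLP{\Level + 2}$ (a $\SigmaLP{\Level}$-arbiter is a $\PiLP{\Level + 1}$-arbiter that ignores the prepended universal \kl{certificate}, and likewise one level higher), and restrict it to $\bGRAPH{\MaxDegree}$ to obtain $\SigmaLP{\Level}\On{\bGRAPH{\MaxDegree}} \subseteq \coSigmaLP{\Level}\On{\bGRAPH{\MaxDegree}}$. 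The key observation is then that the $\On{\bGRAPH{\MaxDegree}}$-restriction turns ordinary \kl{complementation} into \kl{complementation} within $\bGRAPH{\MaxDegree}$, so that $\coSigmaLP{\Level}\On{\bGRAPH{\MaxDegree}}$ is exactly the family of sets obtained by complementing, inside $\bGRAPH{\MaxDegree}$, the members of $\SigmaLP{\Level}\On{\bGRAPH{\MaxDegree}}$. Hence $\SigmaLP{\Level}\On{\bGRAPH{\MaxDegree}} \subseteq \coSigmaLP{\Level}\On{\bGRAPH{\MaxDegree}}$ says that $\SigmaLP{\Level}\On{\bGRAPH{\MaxDegree}}$ is closed under \kl{complementation}; complementing once more forces $\SigmaLP{\Level}\On{\bGRAPH{\MaxDegree}} = \coSigmaLP{\Level}\On{\bGRAPH{\MaxDegree}}$, contradicting Corollary~\ref{cor:complement-classes}. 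Therefore the inclusion is strict on $\bGRAPH{\MaxDegree}$, and a fortiori on $\GRAPH$.

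Then I would treat the second inclusion $\coPiLP{\Level} \subseteq \SigmaLP{\Level + 2}$ (for odd $\Level \in \Positives$) symmetrically, starting from the chain $\PiLP{\Level} \subseteq \SigmaLP{\Level + 1} \subseteq \SigmaLP{\Level + 2}$: assuming equality on $\bGRAPH{\MaxDegree}$ would give $\PiLP{\Level}\On{\bGRAPH{\MaxDegree}} \subseteq \coPiLP{\Level}\On{\bGRAPH{\MaxDegree}}$, hence $\PiLP{\Level}\On{\bGRAPH{\MaxDegree}} = \coPiLP{\Level}\On{\bGRAPH{\MaxDegree}}$, again contradicting Corollary~\ref{cor:complement-classes}.

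I do not anticipate any genuine obstacle. The only point needing a little care is the elementary set-theoretic remark that a family of sets contained in its family of complements must coincide with it, together with the bookkeeping that, under the $\On{\bGRAPH{\MaxDegree}}$-restriction, the relevant complements are taken relative to $\bGRAPH{\MaxDegree}$; everything else is a short chase through the definitional inclusions of the hierarchy and a single appeal to Corollary~\ref{cor:complement-classes}.
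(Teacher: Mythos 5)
Your proof is correct and essentially matches the paper's: both derive strictness by combining Corollary~\ref{cor:complement-classes} with monotonicity of the hierarchy. The paper's version is marginally more direct—it invokes $\PiLP{\Level}\On{\bGRAPH{\MaxDegree}} \nsubseteq \coSigmaLP{\Level}\On{\bGRAPH{\MaxDegree}}$ and the inclusion $\PiLP{\Level} \subseteq \PiLP{\Level + 2}$, rather than routing through the non-closure clause of Corollary~\ref{cor:complement-classes} and the observation that a class contained in its own complement class must coincide with it—but the two arguments are interchangeable.
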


\begin{proof}
  By Corollary~\ref{cor:complement-classes},
  we know that
  $\PiLP{\Level}\On{\bGRAPH{\MaxDegree}} \nsubseteq
   \coSigmaLP{\Level}\On{\bGRAPH{\MaxDegree}}$
  for all even $\Level \in \Naturals$ and $\MaxDegree \geq 4$.
  This implies that
  $\PiLP{\Level + 2}\On{\bGRAPH{\MaxDegree}} \nsubseteq
   \coSigmaLP{\Level}\On{\bGRAPH{\MaxDegree}}$.
  Analogously,
  we obtain
  $\SigmaLP{\Level + 2}\On{\bGRAPH{\MaxDegree}} \nsubseteq
   \coPiLP{\Level}\On{\bGRAPH{\MaxDegree}}$
  for all odd $\Level \in \Naturals$ and $\MaxDegree \geq 4$.
\end{proof}

\bigskip

We end this section by identifying several natural \kl{graph properties}
that cannot be expressed
at any level of the \kl{locally polynomial hierarchy}.
Again,
the connection to logic proves valuable:
as a corollary of our previous results,
we obtain that on \kl{graphs} of \kl{bounded structural degree},
the \kl{locally polynomial hierarchy}
is (semantically) included in \kl{monadic second-order logic}.
Thus,
if a \kl{graph property} is not \kl{definable}
in \kl{monadic second-order logic}
on \kl{graphs} of \kl{bounded structural degree},
then it lies outside the \kl{locally polynomial hierarchy}.
In fact,
using the closure properties of the \kl{locally polynomial hierarchy},
in particular Proposition~\ref{prp:complementation},
it is easy to show that the reverse inclusion also holds,
i.e.,
the \kl{locally polynomial hierarchy}
is equivalent,
though not levelwise equivalent,
to \kl{monadic second-order logic}
on \kl{graphs} of \kl{bounded structural degree}.
But since we do not need this equivalence here,
we show only the first inclusion
(which holds even levelwise).

\begin{corollary}
  \label{cor:mso-subsumes-hierarchy-on-bounded-graphs}
  When restricted to \kl{graphs} of \kl{bounded structural degree},
  each level of the \kl{locally polynomial hierarchy}
  is subsumed by the corresponding level of
  the (standard) \kl{monadic second-order hierarchy}.
  More precisely,
  $\SigmaLP{\Level}\On{\bGRAPH{\MaxDegree}} \subseteq
   \mSigmaFO{\Level}\On{\bGRAPH{\MaxDegree}}$
  and
  $\PiLP{\Level}\On{\bGRAPH{\MaxDegree}} \subseteq
   \mPiFO{\Level}\On{\bGRAPH{\MaxDegree}}$
  for all $\Level, \MaxDegree \in \Naturals$.
\end{corollary}

\begin{proof}
  By Theorem~\ref{thm:local-hierarchy-equivalence}
  and Proposition~\ref{prp:equivalence-lso-lmso},
  when restricted to
  \kl{graphs} of $\MaxDegree$-\kl{bounded structural degree},
  each level of the \kl{locally polynomial hierarchy}
  is equivalent to the corresponding level of
  the \kl{local monadic second-order hierarchy},
  starting from level~$1$.
  Moreover,
  it is easy to verify that this also holds for level~$0$
  (given that we can exhaustively treat all constant-radius \kl{neighborhoods}).
  This immediately implies the claim,
  since \kl{local monadic second-order logic}
  can be seen as a syntactic fragment of
  standard \kl[monadic second-order logic]{monadic second-order~logic}.
\end{proof}

We now look at some examples of \kl{graph properties}
for which the previous corollary implies
that they lie outside the \kl{locally polynomial hierarchy}.
These include the following \kl(graph){properties},
which intuitively require counting the number of \kl{nodes}:
\begin{itemize}
  \item\AP $\intro*\PRIME$:
    the set of \kl{graphs}
    whose \kl(graph){cardinality} is a prime number.
  \item\AP $\intro*\SQUARE$:
    the set of \kl{graphs}
    whose \kl(graph){cardinality} is a perfect square.
  \item\AP $\intro*\HALFSELECTED$:
    the set of \kl{graphs}
    in which exactly half of the \kl{nodes} have the \kl{label}~$1$.
\end{itemize}
We also consider a \kl(graph){property}
that requires
mapping potentially distant parts of a \kl{graph} to each other:
\begin{itemize}
  \item\AP $\intro*\AUTOMORPHIC$:
    the set of \kl{graphs} that have a \intro{nontrivial automorphism},
    i.e., an isomorphism to themselves
    that is not the identity function.
\end{itemize}
As we will see below,
the inexpressibility of these \kl(graph){properties}
follows from classical results in automata theory.

\begin{corollary}
  \label{cor:properties-outside-hierarchy}
  There are \kl{graph properties},
  such as $\PRIME$, $\SQUARE$, $\HALFSELECTED$, and $\AUTOMORPHIC$,
  that lie outside the \kl{locally polynomial hierarchy},
  even when restricted to \kl{graphs} of \kl{bounded structural degree}.
  More precisely,
  $\Property \cap \bGRAPH{\MaxDegree} \notin
   \SigmaLP{\Level}\On{\bGRAPH{\MaxDegree}}$
  for all $\Level \in \Naturals$ and $\MaxDegree \geq 3$,
  where $\Property$~can be $\PRIME$, $\SQUARE$, $\HALFSELECTED$, or $\AUTOMORPHIC$.
\end{corollary}

\begin{proof}[Proof sketch]
  The claim follows directly from
  Corollary~\ref{cor:mso-subsumes-hierarchy-on-bounded-graphs}
  and the fact that
  $\PRIME$ and $\SQUARE$
  are not \kl{monadic second-order} \kl{definable}
  on~$\bGRAPH{2}$,
  while
  $\HALFSELECTED$ and $\AUTOMORPHIC$
  are not \kl{monadic second-order} \kl{definable}
  on~$\bGRAPH{3}$.
  These inexpressibility results
  are well known for arbitrary \kl{graphs},
  so we only sketch their proofs
  to verify that they remain valid
  when restricted to
  (\kl{connected}) \kl{graphs} of \kl{bounded structural degree},
  assuming the same \kl{structural representation} used throughout this paper.

  For each \kl(graph){property},
  its inexpressibility can be shown by
  a contradiction argument of the following form:
  Suppose the \kl(graph){property} is \kl{definable}
  by a \kl{monadic second-order} \kl{formula}~$\Formula[1]$
  on \kl{graphs} of \kl{bounded structural degree}.
  Then $\Formula[1]$
  can be converted into
  a \kl{monadic second-order} \kl{formula}~$\Formula[1]'$
  that \kl{defines} a nonregular language
  on (structural representations of) finite words.
  This is a contradiction,
  because the Büchi-Elgot-Trakhtenbrot theorem
  (see, e.g., \cite[Thm.~3.1]{DBLP:reference/hfl/Thomas97})
  states that
  a language is
  \kl{monadic second-order} \kl{definable} on finite words
  if and only if
  it is regular
  (i.e., recognizable by a finite-state automaton).
  We now give some more details for each \kl(graph){property}:
  \begin{itemize}
    \item $\PRIME$:
      If there were
      a \kl{monadic second-order} \kl{formula}~$\Formula[1]$ on~$\bGRAPH{2}$
      \kl{defining} $\PRIME \cap \bGRAPH{2}$,
      then we could convert~$\Formula[1]$ into
      a \kl{monadic second-order} \kl{formula}~$\Formula[1]'$ on finite words
      \kl{defining} the language
      $\SetBuilder{a^n}{\text{$n$ is a prime}}$.
      In $\Formula[1]'$
      we would simply interpret
      the $a$'s as (unlabeled) \kl{nodes} and
      the successor relation as an undirected \kl{edge} relation.
      However,
      by the pumping lemma for regular languages,
      this language is not regular
      (see, e.g., \mbox{\cite[\S\,4.1]{DBLP:books/daglib/0016921}}).
    \item $\SQUARE$:
      The proof for $\SQUARE$ is the same as for $\PRIME$,
      since
      $\SetBuilder{a^n}{\text{$n$ is a perfect square}}$
      is also a nonregular language.
    \item $\HALFSELECTED$:
      Similarly,
      we could convert
      a hypothetical \kl{monadic second-order} \kl{formula}
      for the \kl{graph property}
      $\HALFSELECTED \cap \bGRAPH{3}$
      into a \kl{formula} for the nonregular language
      $\SetBuilder{w \in \Set{a,b}^\KleeneStar}{\text{$w$ contains as many $a$'s as $b$'s}}$.
      But this case is a bit more tedious than the previous ones.
      The $a$'s
      (or equivalently the $b$'s)
      would have to be interpreted as
      \kl{nodes} \kl{labeled} with the string~$1$,
      so we would have to
      simulate additional \kl{elements} for the \kl{labeling bits}.
      (This is very similar to what we did in the proof of
      Lemma~\ref{lem:translation-picture-graph}.\ref{itm:backward-translation}.)
      The bound of~$3$ on the \kl{structural degree}
      accounts for these additional \kl{elements}
      (see Figure~\ref{fig:graph} and our definition of \kl{structural representations}).
    \item $\AUTOMORPHIC$:
      For the last \kl(graph){property},
      we simply observe that
      the construction sketched in \cite[Prp.~5.13]{DBLP:books/daglib/0030804}
      would allow us to convert
      a hypothetical \kl{formula} for
      $\AUTOMORPHIC \cap \bGRAPH{3}$
      into a \kl{formula} for the nonregular language
      $\SetBuilder{a^nbcd^n}{n \geq 2}$.
      \qedhere
  \end{itemize}
\end{proof}

%%% Local Variables:
%%% mode: latex
%%% TeX-master: "../lph-paper"
%%% End:

\section{Discussion}
\label{sec:discussion}

We have extended the \kl{polynomial hierarchy}
to the \textsc{local} model of distributed computing.
Some major results of complexity theory generalize well to this setting,
including Fagin's theorem and the Cook--Levin theorem.
Moreover,
we could go beyond what is known in the centralized setting
by showing that the \kl{locally polynomial hierarchy} is infinite.
Descriptive complexity was very helpful in this regard,
as it allowed us to build directly
on sophisticated results from logic and automata theory,
in particular the infiniteness of
the \kl{monadic second-order hierarchy} on \kl{pictures}.

It seems highly unlikely that this paper
will provide any new insights into
major open problems in complexity theory,
such as $\PTIME$~versus~$\NP$.
This is because our separation results rely on
the distributed nature of the \textsc{local} model.
They do not hold in cases where distributedness is irrelevant,
such as on \kl{single-node graphs},
or when \kl{certificates} can be replaced by local computation
(see Proposition~\ref{prp:bounded-graph-collapse}).
However,
our findings may provide a new perspective on
the concept of locality in distributed computing.

\subparagraph*{Measuring locality.}

Within the \textsc{local} model,
\kl{round-time} complexity is certainly the most natural and widely studied
measure of locality.
It tells us
the radius up to which each \kl{node} must see
in order to solve a given problem.
But,
as pointed out by Feuilloley~\cite[\S\,4.4]{DBLP:journals/dmtcs/Feuilloley21},
if we require the radius to be constant,
and compensate for this by introducing nondeterminism,
then \kl{certificate} size becomes a natural measure of locality.
Intuitively,
\kl{certificate} size tells us
how much global information about the \kl{graph}
each \kl{node} must receive from the prover~(\kl(certificate){Eve})
in order to \kl{verify} a given \kl(graph){property}.
Now,
if we go one step further and additionally require \kl{certificate} size
to depend only on a \kl{node}'s constant-radius \kl{neighborhood},
and compensate for this in turn by introducing quantifier alternation,
then the level of alternation arguably becomes our new measure of locality.
Its meaning is more abstract,
as it represents the number of moves in a two-player game,
but the longer the game,
the more global information the two players can prove or disprove.

Since the \kl{locally polynomial hierarchy} is infinite,
it provides, at least in principle,
a fine-grained measure of locality based on alternation.
What remains to be seen is how meaningful its different levels are,
given that the \kl(graph){properties} used to separate them
involve \kl{graph encodings} of \kl{pictures},
which make little sense from a distributed computing perspective.
In this paper,
we have seen some initial clues.
At the bottom of the \kl[locally polynomial hierarchy]{hierarchy},
the class $\PiLP{0} = \LP$
represents, by definition, purely local \kl(graph){properties}.
A canonical example of such a \kl(graph){property}
is \kl{Eulerianness},
which is $\LP$-\kl{complete}
(by Proposition~\ref{prp:eulerian-lp-complete}).
One level higher,
$\SigmaLP{1} = \NLP$ represents
\kl(graph){properties} that are almost, but not quite, local.
A canonical example of such a \kl(graph){property}
is $3$\nobreakdash-\kl{colorability},
which is $\NLP$-\kl{complete}
(by~Theorem~\ref{thm:three-colorable}).
In contrast,
the \kl{complements} of \kl{Eulerianness} and $3$-\kl{colorability}
are more global,
as neither of them lies in~$\SigmaLP{1}$
(by Corollaries~\ref{cor:non-three-colorable-not-in-nlp}
and~\ref{cor:noneulerian-hamiltonian-nonhamiltonian-not-in-nlp}).
We could only place
non-\kl{Eulerianness} in~$\SigmaLP{3}$
and non-$3$-\kl{colorability} in~$\PiLP{4}$
(by Proposition~\ref{prp:complementation}),
leaving open whether there are matching lower bounds.
The third level of the \kl[locally polynomial hierarchy]{hierarchy}
seems particularly noteworthy in that
$\SigmaLP{3}$~contains a number of natural \kl(graph){properties}
that can be expressed using spanning trees,
such as
\kl{Hamiltonicity},
\kl{acyclicity},
non-$2$-\kl{colorability},
and having an odd number of \kl{nodes}
(see Example~\ref{ex:hamiltonian} and the discussion that follows).
Again,
we leave open the exact complexity of these \kl(graph){properties},
but we have seen that
\kl{Hamiltonicity} does not lie in~$\SigmaLP{1}$
(by Corollary~\ref{cor:noneulerian-hamiltonian-nonhamiltonian-not-in-nlp}).
In addition,
we have identified some \kl{graph properties}
that lie outside the \kl[locally polynomial hierarchy]{hierarchy},
a fact that indicates their inherently global nature.
These include certain
\kl(graph){cardinality}-dependent \kl(graph){properties},
such as the number of \kl{nodes} being a prime or a perfect square,
as well as
the \kl(graph){property} of having a \kl{nontrivial automorphism}
(see~Corollary~\ref{cor:properties-outside-hierarchy}).

To gain a better intuition for
the higher levels of the \kl[locally polynomial hierarchy]{hierarchy},
the notions of \kl{hardness} and \kl{completeness}
under \kl{locally polynomial reductions}
could be helpful.
In the centralized setting,
Meyer and Stockmeyer~\cite{DBLP:conf/focs/MeyerS72}
generalized the Cook--Levin theorem
to classes of quantified Boolean formulas,
thus providing \kl{complete} problems
for all levels of the \kl{polynomial hierarchy}
(see, e.g.,~\cite[\S\,5.2.2]{DBLP:books/daglib/0023084}).
Although these problems are rather artificial in themselves,
they have been used to prove the \kl{completeness}
of more natural problems,
especially on the second and third levels of the \kl{polynomial hierarchy}
(see~\cite{Schaefer01}).
A similar strategy could be pursued in the distributed setting.
It should be straightforward
to further generalize our distributed version of the Cook--Levin theorem
(Theorem~\ref{thm:local-cook-levin})
to cover the entire \kl{locally polynomial hierarchy},
and based on that,
we may find more natural \kl{complete} \kl(graph){properties}
for higher levels of the \kl[locally polynomial hierarchy]{hierarchy}.
Given the $\NLP$-\kl{completeness} of $3$\nobreakdash-\kl{colorability},
a promising candidate would be
the generalization of $3$-\kl{colorability} to a family of multi-round games,
as defined by
Ajtai, Fagin, and Stockmeyer~\cite[\S\,11]{DBLP:journals/jcss/AjtaiFS00};
we saw one member of this family in Example~\ref{ex:3-round-3-colorable}.

While it is to be expected that many \kl{graph properties} of interest
are not \kl{complete} for any level of the \kl{locally polynomial hierarchy},
we may still be able to derive lower bounds for them
by proving their \kl{hardness}
for certain levels of the \kl[locally polynomial hierarchy]{hierarchy}.
For example,
although \kl{Hamiltonicity} is probably
not \kl{complete} for any level of the \kl{locally polynomial hierarchy},%
\footnote{%
  This is because
  from the work of
  Ajtai, Fagin, and Stockmeyer~\cite[\S\,11]{DBLP:journals/jcss/AjtaiFS00}
  we can conclude that
  each level of the \kl{locally polynomial hierarchy}
  contains a \kl{graph property}
  whose \kl{string-encoded} version is \kl{complete}
  for the corresponding level of the classical \kl{polynomial hierarchy}.
  For example,
  it is easy to see that
  $\PiLP{2}$ contains the \kl(graph){property} $\ROUNDCOLORABLE{2}{3}$,
  which holds for a given \kl{graph}~$\Graph$
  if every $3$-color assignment to the \kl{leaves} of~$\Graph$
  can be extended to a valid $3$-\kl{coloring} of~$\Graph$.
  Thus,
  if $\HAMILTONIAN$ were $\PiLP{2}$-\kl{hard},
  then by simulating a \kl{distributed Turing machine} with a centralized one,
  we could get a polynomial-time reduction
  from $\Encoding(\ROUNDCOLORABLE{2}{3})$ to $\Encoding(\HAMILTONIAN)$,
  where $\Encoding \colon \GRAPH \to \NODE$
  is some \kl{encoding} of \kl{graphs} as strings.
  But this would imply
  the collapse of the \kl{polynomial hierarchy} to~$\NP$,
  since
  $\Encoding(\ROUNDCOLORABLE{2}{3})$
  is $\PiP{2}$-\kl{complete},
  and $\Encoding(\HAMILTONIAN)$ lies in~$\NP$.
}
we were still able to show that it does not lie in~$\SigmaLP{1}$
by establishing its $\coLP$-\kl{hardness}.
More generally,
just as distributedness made it easier to separate the different levels
of the \kl[locally polynomial hierarchy]{hierarchy},
it can also make it easier
to prove unconditional lower bounds for individual \kl{graph properties}.

\subparagraph*{Beyond polynomial bounds.}

As the primary goal of this paper was to explore
the connections between
standard complexity theory and local distributed decision,
a natural starting point was to impose polynomial bounds
on the processing time and \kl{certificate} sizes of the \kl{nodes}.
This allowed us to build on classical results
using descriptive complexity
and to reinterpret
major open questions in standard complexity theory
as particularly challenging special cases of network computing.
However,
it could be argued that polynomial bounds are not the most canonical choice
if one wishes to use quantifier alternation purely as a measure of locality.
In that case,
the main concern is not to limit the individual processing power of the \kl{nodes},
but rather to restrict the \kl{certificates} in a way
that preserves the local nature of the \kl{arbitrating} algorithm.
As explained in Section~\ref{ssec:related-work},
the three alternation hierarchies based on~$\LD$
do not meet this requirement,
since they allow \kl{certificate} sizes
to depend on the entire input \kl{graph}.

\AP
It turns out that
we can generalize the \kl{locally polynomial hierarchy}
without compromising its potential as a measure of locality,
simply by replacing polynomial bounds with arbitrary bounds.
This leads us to define the class~$\intro*\LB$
(for~\emph{locally bounded time}),
which consists of all \kl{graph properties}
that can be decided by a \kl{distributed Turing machine}
operating under \kl{locally unique} \kl{identifiers}
and running in \kl{constant round time}
and arbitrary \kl{step time}
(i.e., \kl{step time} bounded by some arbitrary computable function).
Based on this,
we obtain the \intro{locally bounded hierarchy}
$\Set{\intro*\SigmaLB{\Level}\!, \intro*\PiLB{\Level}}_{\Level \in \Naturals}$,
which is defined analogously to the \kl{locally polynomial hierarchy},
except that the \kl{certificate assignments} are
$\Tuple{\CertifRadius, f}$-\kl(certificate){bounded}
for some arbitrary computable function
$f \colon \Naturals \to \Naturals$.

Most of our results carry over directly to this generalized setting.
This includes all separation results and,
consequently, the infiniteness of the hierarchy.
The reason is that our separations already hold
on \kl{graphs} of \kl{bounded structural degree},
where
the \kl[locally bounded hierarchy]{locally bounded} and
\kl{locally polynomial hierarchies}
are equivalent.
On arbitrary \kl{graphs},
the \kl{locally bounded hierarchy} even exhibits a “cleaner” structure
in that it forms a strict linear order,
while the \kl{locally polynomial hierarchy}
presumably does so only on \kl{graphs} of \kl{bounded structural degree}
(see Proposition~\ref{prp:bounded-graph-collapse}
and Remark~\ref{rem:unbounded-graph-collapse}).
Moreover,
if we generalize \kl{locally polynomial reductions}
to reductions computable in
\kl{constant round time} and arbitrary \kl{step time},
then all our \kl{hardness} and \kl{completeness} results
can be extended
to the corresponding classes of the \kl{locally bounded hierarchy}.
This even holds for our distributed version of the Cook--Levin theorem
(Theorem~\ref{thm:local-cook-levin}),
although it would have to be proved directly
instead of relying on descriptive complexity.%
\footnote{%
  A similar observation can be made in the centralized setting:
  the standard proof of the Cook--Levin theorem
  already shows how to construct a \kl{Boolean formula}
  that encodes the possible space-time diagrams
  of any given nondeterministic Turing machine
  whose running time is bounded by some known computable function.
  While not particularly useful for classical complexity theory,
  this implies, for instance,
  that \kl{Boolean satisfiability} is
  $\NEXPTIME$\nobreakdash-complete under exponential-time reductions.
  (Note, however,
  that exponential-time reductions are not closed under composition,
  and that they allow us to reduce
  any problem in $\twoEXPTIME$ to a problem in $\EXPTIME$.)
}

Indeed,
descriptive complexity is the only aspect of this paper
for which there does not seem to be a direct generalization
to the \kl{locally bounded hierarchy}.
This is quite striking,
considering how central Fagin's theorem (in its generalized form)
has been to our approach.
As a conceptual guide,
it motivated the introduction
of \kl{locally unique} \kl{identifiers}
and \kl(certificate){locally bounded} \kl{certificates}.
As a technical tool,
it first demonstrated
the robustness of the \kl{locally polynomial hierarchy},
then provided a shortcut
to the first \kl{completeness} result for $\NLP$,
and finally served as a bridge to the realm of logic and automata theory,
where we proved most of our separation results.
In a way,
we lose Fagin's theorem
when generalizing the \kl[locally polynomial hierarchy]{hierarchy},
but the insights gained from it remain fully applicable.

\subparagraph*{Toward the essence of locality.}

Having proposed quantifier alternation as a measure of locality,
the obvious next question is
how this new metric compares with existing ones.
The most direct comparison can be made with
\kl{certificate} size in the nondeterministic setting,
as this also measures the locality of \kl{graph properties}
(rather than construction problems).
Specifically,
we focus on the locally-checkable-proofs hierarchy
introduced by Göös and Suomela~\cite{DBLP:journals/toc/GoosS16},
which is the most general nondeterministic model
in the distributed-decision literature.
This hierarchy is based on a single-round game
in which \kl(certificate){Eve} plays alone but is granted more power:
she can choose arbitrarily large \kl{certificates}
that may refer to \kl{globally unique} \kl{identifiers}.
Instead of restricting \kl(certificate){Eve}'s \kl{certificates},
one measures
how large they must be
to \kl{verify} a given \kl(graph){property}.
For any function
$f \colon \Naturals \to \Naturals$,
the class~$\LCP{f}$ contains those \kl{graph properties}
that \kl(certificate){Eve} can prove to the \kl{nodes}
using \kl{certificates} of size at most~$f(n)$,
where $n$~is the total number of \kl{nodes} in the \kl{graph}.

\begin{figure}[tb]
  \centering
  \begin{tikzpicture}[%
    semithick,on grid,>=stealth',node distance=6.5ex,
    property/.style={rectangle,draw=none,minimum size=0},
    strict/.style={},
    class/.style={draw,rounded rectangle,minimum height=4ex},
    lb/.style={class,minimum width=10ex},
    lcp/.style={class,minimum width=20ex},
    ruler/.style={minimum size=0,inner sep=0,font=\footnotesize,label distance=1.5ex},
    measure/.style={},
    ]
  \def\nodeDist{6.5ex}
  \def\propertySep{2.5ex}
  \def\hierarchyDist{53ex}
  % LB hierarchy
  \node[lb] (s0) {$\LB\vphantom{\SigmaLB{0}}$};
  \node[lb] (s1) [above of=s0] {$\SigmaLB{1}$};
  \node[lb] (p2) [above of=s1] {$\PiLB{2}$};
  \node[lb] (s3) [above of=p2] {$\SigmaLB{3}$};
  \node[lb] (p4) [above of=s3] {$\PiLB{4}$};
  \node (dots) at ([yshift=3/4*\nodeDist]p4) {$\vdots$};
  \coordinate (outside) at ([yshift=2*\nodeDist]p4);
  \path[strict]
    (s3) edge (p4)
    (p2) edge (s3)
    (s1) edge (p2)
    (s0) edge (s1)
    ;
  % LCP hierarchy
  \node[lcp] (lcp0)     [right=\hierarchyDist of s0] {$\LCP{0}$};
  \node[lcp] (lcpConst) [right=\hierarchyDist of s1] {$\LCP{O(1)}$};
  \node[lcp] (lcpLog)   [right=\hierarchyDist of s3] {$\LCP{O(\log n)}$};
  \node[lcp] (lcpPoly)  [right=\hierarchyDist of outside,anchor=center] {$\LCP{\poly(n)}$};
  \path[strict]
    (lcpPoly)  edge (lcpLog)
    (lcpLog)   edge (lcpConst)
    (lcpConst) edge (lcp0)
    ;
  % Graph properties
  \node (eulerian) [property]
    at ($(s0.east)!0.5!(lcp0.west)$) {$\EULERIAN$};
  \node (3-colorable) [property]
    at ($(s1.east)!0.5!(lcpConst.west)$) {$\COLORABLE{3}$};
  \node (odd) [property]
    at ($(s3.east)!0.5!(lcpLog.west)$) {$\ODD$};
  \node (acyclic) [property]
    at ([yshift=-\propertySep]odd) {$\ACYCLIC$};
  \node (hamiltonian) [property]
    at ([yshift=-\propertySep]acyclic) {$\HAMILTONIAN$};
  \node (non-2-colorable) [property]
    at ([yshift=-\propertySep]hamiltonian) {$\NONCOLORABLE{2}$};
  \node (non-3-colorable) [property]
    at ($(p4.east)!0.5!(lcpLog.west|-p4)$) {$\NONCOLORABLE{3}$};
  \node (automorphic) [property]
    at ($(p4.east|-lcpPoly)!0.5!(lcpPoly.west)$) {$\AUTOMORPHIC$};
  \node (prime) [property]
    at ([yshift=-\propertySep]automorphic) {$\PRIME$};
  \path[->,shorten >=1ex]
    (eulerian) edge (s0)
               edge (lcp0)
    (3-colorable) edge (s1)
                  edge (lcpConst)
    (odd.east) edge (lcpLog.west)
    (acyclic.east) edge (lcpLog.184)
    (hamiltonian.east) edge (lcpLog.187)
    (non-2-colorable.east) edge (lcpLog.190)
    (non-3-colorable.east) edge[out=0,in=180] (lcpPoly.186)
    (automorphic) edge (lcpPoly)
    (prime.east) edge[out=0,in=180] (lcpLog.175)
    ;
  \path[->,shorten >=1ex,densely dashed]
    (odd.west) edge (s3.east)
    (acyclic.west) edge (s3.-9)
    (hamiltonian.west) edge (s3.-17)
    (non-2-colorable.west) edge (s3.-24)
    (non-3-colorable) edge (p4)
    ;
  \path[black!30,line width=0.6ex]
    ([xshift=-6ex]$(prime-|dots)!0.55!(dots)$) edge ([xshift=-5ex]$(prime)!0.55!(dots-|prime)$);
  % LB measure of locality
  \node (r4) [ruler,left=3ex of p4.west,label={[ruler]left:4}] {\large$-$};
  \node (r3) [ruler,label={[ruler]left:3}] at (s3-|r4) {\large$-$};
  \node (r2) [ruler,label={[ruler]left:2}] at (p2-|r4) {\large$-$};
  \node (r1) [ruler,label={[ruler]left:1}] at (s1-|r4) {\large$-$};
  \node (r0) [ruler,label={[ruler]left:0}] at (s0-|r4) {\large$-$};
  \path[ruler] ([yshift=-1.5ex]r0.center) edge ([yshift=3ex]r4.center);
  % LCP measure of locality
  \node (global) [measure,anchor=west] at ([xshift=2ex]lcpPoly.east) {global};
  \node (local)  [measure,anchor=west] at ([xshift=2ex]lcp0.east) {local};
  \path[measure,->] (local) edge (global.south-|local);
\end{tikzpicture}

%%% Local Variables:
%%% mode: latex
%%% TeX-master: "../lph-paper"
%%% End:
  \caption{
    \emph{(repeated from Figure~\ref{fig:lb-vs-lcp-overview})}
    An example-based comparison between
    the \kl{locally bounded hierarchy}
    $\Set{\SigmaLB{\Level}\!, \PiLB{\Level}}_{\Level \in \Naturals}$
    introduced in this paper
    and the locally-checkable-proofs hierarchy
    $\Set{\LCP{f}}_{f:\, \Naturals \to \Naturals}$
    introduced by Göös and Suomela~\cite{DBLP:journals/toc/GoosS16}.
    (The interpretation of the latter as a measure of locality
    is due to Feuilloley~\cite[\S\,4.4]{DBLP:journals/dmtcs/Feuilloley21}.)
    A vertical line between two complexity classes
    indicates that
    the lower class is strictly included in the higher one.
    An arrow from a \kl{graph property} to a complexity class
    indicates that
    the \kl(graph){property} is contained in the class.
    If the arrow is solid,
    there is also a matching lower bound,
    i.e., the \kl(graph){property} is not contained in any lower class.
    \kl(graph){Properties} above the thick horizontal line
    lie outside the \kl{locally bounded hierarchy}.
  }
  \label{fig:lb-vs-lcp}
\end{figure}

In Figure~\ref{fig:lb-vs-lcp},
we informally compare the \kl{locally bounded hierarchy}
with the locally-checkable-proofs hierarchy
by mapping various \kl{graph properties} onto both scales.
The two hierarchies roughly align
on four levels of locality:
\begin{itemize}
  \item \emph{Purely local}:
    \kl(graph){properties} in $\LB$ and $\LCP{0}$.
    The correspondence between these two classes is not surprising,
    given their very similar definitions.
  \item \emph{Almost local}:
    \kl(graph){properties} in $\SigmaLB{1}$ and $\LCP{O(1)}$.
    This alignment is also no coincidence,
    as both classes impose similar restrictions
    on \kl(certificate){Eve}'s \kl{certificates}
    (locally \kl(certificate){bounded} for the former,
    and constant size for the latter);
    on \kl{graphs} of \kl{bounded structural degree},
    these constraints are equivalent.
  \item \emph{Intermediate}:
    \kl(graph){properties} in $\SigmaLB{3}$ and $\LCP{O(\log n)}$.
    It is more surprising that
    many natural \kl{graph properties} fall into
    these two seemingly unrelated classes.
    The key reason is that
    both allow \kl(certificate){Eve}
    to provide a verifiable spanning tree,
    which can be leveraged in a variety of ways.
  \item \emph{Inherently global}:
    \kl(graph){properties}
    outside $\bigcup_{\Level \in \Naturals} \SigmaLB{\Level}$
    but within $\LCP{\poly(n)}$.
    As mentioned in Section~\ref{ssec:background} 
    (see~“\nameref{par:nondeterminism}”),
    quadratic-size \kl{certificates}
    with \kl{globally unique} \kl{identifiers}
    suffice to \kl{verify} all \kl{graph properties},
    including the inherently global ones.
\end{itemize}

However,
Figure~\ref{fig:lb-vs-lcp} also reveals
that the two hierarchies
are not perfectly aligned.
For instance,
having a prime number of \kl{nodes}
is an inherently global \kl(graph){property} in the \kl{locally bounded hierarchy},
but only an intermediate \kl(graph){property}
in the locally-checkable-proofs hierarchy.
This is because
any \kl(graph){cardinality}-dependent \kl(graph){property}
can be verified using a spanning tree
when logarithmic-size \kl{certificates} are allowed.
Conversely,
non-$3$-\kl{colorability} ranks relatively low
in the \kl{locally bounded hierarchy},
where \kl{complementation} costs at most three alternations,
but this \kl(graph){property} is nearly maximally global
in the locally-checkable-proofs hierarchy.
(Göös and Suomela established a near-tight lower bound of
$\Omega(n^2 / \log n)$;
see~\cite[Thm.~6.4]{DBLP:journals/toc/GoosS16}).

These divergences highlight a fundamental issue:
there is no precise, universally accepted definition of locality%
---the very concept we seek to measure. 
As suggested in Section~\ref{ssec:background},
locality relates to the amount of information
that a single \kl{node} must obtain about the rest of the \kl{graph}
to determine a given \kl(graph){property}.
But this quantity inevitably depends on
how the information is represented and interpreted,
and thus on the underlying formalism.
Each formalism may have idiosyncrasies
that make certain tasks difficult,
such as counting in the \kl{locally bounded hierarchy},
or \kl{complementation} in the locally-checkable-proofs hierarchy.
Therefore,
perhaps the most productive path forward is
to develop even more measures of locality
and compare them systematically.
The aspects on which they all agree
are most likely to capture the true essence of locality.

%%% Local Variables:
%%% mode: latex
%%% TeX-master: "../lph-paper"
%%% End:

\bibliography{references}

\end{document}